\newcommand{\yj}[1]{{\color{red} {{\bf YJ}: #1}}}
\newcommand{\yfnote}[1]{\footnote{{\color{blue}YF}: {#1}}}
\newcommand{\yf}[1]{{\color{blue} {{\bf YF}: #1}}}
\renewcommand{\Comment}[1]{{\color{OliveGreen} $\triangleright$ {#1}}}
\newcommand{\BayesianMyersonAuction}{\textsf{Bayesian Myerson Auction}}
\newcommand{\BayesianOptimalMechanism}{\textsf{Bayesian Optimal Mechanism}}
\newcommand{\BayesianOptimalSequentialPricing}{\textsf{Bayesian Optimal Sequential Pricing}}
\newcommand{\BayesianOptimalUniformReserve}{\textsf{Bayesian Optimal Uniform Reserve}}
\newcommand{\BayesianOptimalUniformPricing}{\textsf{Bayesian Optimal Uniform Pricing}}
\newcommand{\BayesianMonopolyReserves}{\textsf{Bayesian Monopoly Reserves}}
\newcommand{\BayesianEagerMonopolyReserves}{\textsf{Bayesian Eager Monopoly Reserves}}
\newcommand{\BayesianLazyMonopolyReserves}{\textsf{Bayesian Lazy Monopoly Reserves}}
\newcommand{\VCGAuction}{\textsf{VCG Auction}}
\newcommand{\EmpiricalUniformReserve}{\textsf{Empirical Uniform Reserve}}
\newcommand{\EmpiricalMyersonAuction}{\textsf{Empirical Myerson Auction}}
\newcommand{\BOM}{{\sf BOM}}
\newcommand{\BOSP}{{\sf BOSP}}
\newcommand{\BOUR}{{\sf SPA}_{\sf BOUR}}
\newcommand{\BOUP}{{\sf BOUP}}
\newcommand{\BMR}{{\sf VCG}_{\sf BMR}}
\newcommand{\BEMR}{{\sf VCG}_{\sf BEMR}}
\newcommand{\BLMR}{{\sf VCG}_{\sf BLMR}}
\newcommand{\IdentityPricing}{\textsf{Identity Pricing}}
\newcommand{\SecondPriceAuction}{\textsf{Second Price Auction}}
\newcommand{\oneDuplicateSecondPriceAuction}{\textsf{$1$-Duplicate Second Price Auction}}
\newcommand{\nDuplicateVCGAuction}{\textsf{$n$-Duplicate VCG Auction}}
\newcommand{\IP}{{\sf IP}}
\newcommand{\SPA}{{\sf SPA}}
\newcommand{\SPAone}{{\sf SPA}_{\oplus 1}}
\newcommand{\VCG}{{\sf VCG}}
\newcommand{\VCGn}{{\sf VCG}_{\oplus n}}
\newcommand{\EMA}{\textsf{EMA}}
\newcommand{\EUR}{\textsf{EUR}}
\newcommand{\distspace}{\mathbb{F}}
\newcommand{\regulardistspace}{\distspace_{\textnormal{\tt reg}}}
\newcommand{\quasiregulardistspace}{\distspace_{\textnormal{\tt Q-reg}}}
\newcommand{\mhrdistspace}{\distspace_{\textnormal{\tt MHR}}}
\newcommand{\quasimhrdistspace}{\distspace_{\textnormal{\tt Q-MHR}}}
\newcommand{\generaldistspace}{\distspace_{\textnormal{\tt gen}}}
\pgfplotsset{compat=1.18} 
\newcommand{\feasiblealloc}{\mathcal{X}}
\newcommand{\xhdr}[1]{\vspace{2mm} \noindent{\bf #1}}
\newcommand{\event}{\mathcal{E}}
\newcommand{\val}{v}
\newcommand{\vali}{\val_i}
\newcommand{\vals}{\mathbf{\val}}
\newcommand{\prior}{F}
\newcommand{\priori}{\prior_i}
\newcommand{\alloc}{x}
\newcommand{\price}{p}
\newcommand{\virtualval}{\varphi}
\newcommand{\cdf}{F}
\newcommand{\pdf}{\cdf'}
\newcommand{\reals}{\mathbb{R}}
\newcommand{\revcurve}{R}
\newcommand{\ironrevcurve}{\Bar{\revcurve}}
\newcommand{\quant}{q}
\newcommand{\primed}{^\dagger}
\newcommand{\doubleprimed}{^\ddagger}
\newcommand{\ironvirtualval}{\Bar\virtualval}
\newcommand{\Unif}{\text{U}}
\newcommand{\quants}{\boldsymbol{\quant}}
\newcommand{\priors}{\boldsymbol{\prior}}
\newcommand{\prices}{\boldsymbol{\price}}
\newcommand{\reserve}{r}
\newcommand{\order}{\sigma}
\newcommand{\optquant}{\quant^*}
\newcommand{\optreserve}{\reserve^*}
\newcommand{\optorder}{\order^*}
\newcommand{\optprice}{\price^*}
\newcommand{\optprices}{\prices^*}
\newcommand{\optrev}{\revcurve^*}
\newcommand{\Rev}[2][]{\text{\bf Rev}\ifthenelse{\not\equal{}{#1}}{_{#1}}{}\!\left[{\def\givenn{\middle|}#2}\right]}
\newcommand{\sample}{s}
\newcommand{\samples}{\boldsymbol{s}}
\newcommand{\firstsample}{\sample_{(1:n)}}
\newcommand{\firstprior}{\prior_{(1:n)}}
\newcommand{\mech}{{M}}
\newcommand{\naturals}{\mathbb{N}}
\newcommand{\hazardrate}{h}
\newcommand{\cumvirtualval}{\cevirtualval}
\newcommand{\cevirtualval}{\phi_{\tt CE}}
\newcommand{\cehazardrate}{h_{\tt CE}}
\newcommand{\hval}{\bar{v}}
\newcommand{\auxprior}{T}
\newcommand{\auxcdf}{\auxprior}
\newcommand{\auxpdf}{\auxprior'}
\newcommand{\valkn}{\val_{(k:n)}}
\newcommand{\cdfkn}{\cdf_{(k:n)}}
\newcommand{\cumhazardrate}{H}
\newcommand{\auxcumhazardrate}{U}
\newcommand{\auxdpdf}{\auxcdf''}
\newcommand{\auxfunc}{L}
\newcommand{\auxpriors}{\boldsymbol{\auxprior}}
\newcommand{\buyerorder}{\sigma}
\newcommand{\SP}{{\sf SP}}
\newcommand{\UP}{{\sf UP}}
\newcommand{\LambertFunc}{W_{-1}}
\newcommand{\allocs}{\boldsymbol{\alloc}}
\newcommand{\optallocs}{\allocs^*}
\newcommand{\optalloc}{\alloc^*}
\newcommand{\optreserves}{\reserves^*}
\newcommand{\reserves}{\boldsymbol{r}}
\newcommand{\threshold}{t}
\newcommand{\detquant}{\Delta_\quant}
\newcommand{\given}{\,\mid\,}
\newcommand{\prob}[2][]{\text{Pr}\ifthenelse{\not\equal{}{#1}}{_{#1}}{}\!\left[{\def\givenn{\middle|}#2}\right]}
\newcommand{\expect}[2][]{\mathbb{E}\ifthenelse{\not\equal{}{#1}}{_{#1}}{}\!\left[{\def\givenn{\middle|}#2}\right]}
\newcommand{\indicator}[2][]{\text{\bf 1}\ifthenelse{\not\equal{}{#1}}{_{#1}}{}\!\left[{\def\givenn{\middle|}#2}\right]}
\newcommand{\tparen}{\big}
\newcommand{\tprob}[2][]{\text{Pr}\ifthenelse{\not\equal{}{#1}}{_{#1}}{}\tparen[{\def\given{\tparen|}#2}\tparen]}
\newcommand{\texpect}[2][]{\mathbb{E}\ifthenelse{\not\equal{}{#1}}{_{#1}}{}\tparen[{\def\given{\tparen|}#2}\tparen]}
\newcommand{\sprob}[2][]{\text{Pr}\ifthenelse{\not\equal{}{#1}}{_{#1}}{}[#2]}
\newcommand{\sexpect}[2][]{\mathbb{E}\ifthenelse{\not\equal{}{#1}}{_{#1}}{}[#2]}
\newcommand{\plus}[1]{{\left( #1 \right)^+}}
\title{Beyond Regularity: Simple versus Optimal Mechanisms, Revisited}
\author{
Yiding Feng\thanks{Hong Kong University of Science and Technology. Email: {\tt ydfeng@ust.hk}}
\and
Yaonan Jin\thanks{Huawei TCS Lab. Email: {\tt jinyaonan@huawei.com}}
}
\date{}
\begin{document}

\maketitle
\begin{abstract}
A large proportion of the Bayesian mechanism design literature is restricted to the family of {\em regular} distributions $\regulardistspace$ \cite{M81} or the family of {\em monotone hazard rate (MHR)} distributions $\mhrdistspace$ \cite{BMP63}, which overshadows this beautiful and well-developed theory.
We (re-)introduce two generalizations, the family of {\em quasi-regular} distributions $\quasiregulardistspace$ and the family of {\em quasi-MHR} distributions $\quasimhrdistspace$.
All four families together form the following hierarchy:
\[
    \mhrdistspace \subsetneq (\regulardistspace \cap \quasimhrdistspace) \subsetneq \regulardistspace,\ \quasimhrdistspace \subsetneq (\regulardistspace \cup \quasimhrdistspace) \subsetneq \quasiregulardistspace.
\]
The significance of our new families is manifold.
First, their defining conditions are immediate relaxations of the regularity/MHR conditions (i.e., monotonicity of the virtual value functions and/or the hazard rate functions), which reflect economic intuition.
Second, they satisfy natural mathematical properties (about order statistics) that are violated by both original families $\regulardistspace$ and $\mhrdistspace$.
Third but foremost, numerous results \cite{BK96, HR09, CD15, DRY15, HR14, AHNPY19, JLTX20, JLQTX19, FLR19, GHZ19, JLX23, LM24} established before for regular/MHR distributions now can be generalized, with or even without quantitative losses.
\end{abstract}

\thispagestyle{empty}
\newpage

{\hypersetup{linkcolor=black}\tableofcontents}
\thispagestyle{empty}
\newpage
\setcounter{page}{1}

\newpage

\section{Introduction}
\label{sec:intro}

Bayesian mechanism design, a traditional branch of economics and game theory, has also become a core subdomain of theoretical computer science over the last two decades.
Specifically, the TCS community has brought/renewed several crucial principles of this discipline, \yfedit{such as} (i)~an emphasis on approximation guarantees, (ii)~simplicity of mechanism rules, and (iii)~robustness to information assumptions. (See the surveys \cite{HR09survey, CS14survey, R14survey, JLQTX19survey, GHZ19survey} and the textbook \cite{har16} for an overview.)
In the canonical {\em single-parameter revenue maximization} context \cite{M81}, among the most mainstream research topics include:\footnote{\label{footnote:truthful}All mechanisms considered in this paper are {\em truthful} (or more precisely, {\em dominant-strategy incentive-compatible}), so we freely interchange {\em values} and {\em bids}.}
\begin{itemize}
    \item {\bf Approximation by Bayesian Simple Mechanisms:}
    \yfedit{By a line of works initiated by Hartline and Roughgarden \cite{HR09}, Bayesian mechanisms with elementary rules already can be approximately optimal \cite{HR09, CHMS10, CEGMM10a, Y11, A14, DFK16, AHNPY19, CFHOV21, JLTX20, JLQTX19, JLQ19, JJLZ22, GPZ21, BGLPS21, JMZ22, PT22, BC23}.}
    
    For example, {\BayesianOptimalUniformPricing},\footnote{Namely, post a uniform price $\optprice = \optprice(\priors)$ to all buyers, which is optimized based on the Bayesian information $\priors$.} which arguably is the most straightforward and practical mechanism, achieves a tight $\approx 0.3817$-approximation to the optimal mechanism, {\BayesianMyersonAuction} \cite{AHNPY19, JLTX20, JLQTX19}.

    \item {\bf Approximation by Prior-Independent Mechanisms:}
    \yfedit{By a line of works initiated by Bulow and Klemperer \cite{BK96}, prior-independent mechanisms that need not utilize the Bayesian information (possibly after slight modifications of models) already can be approximately optimal \cite{BK96, HR09, DRS12, DRY15, RTY20, FLR19, JLQ19, FILS15, AB20, HJL20}.}
    
    For example, {\SecondPriceAuction} with $(n + 1)$ many i.i.d.\ buyers surpasses {\BayesianMyersonAuction} (that fully leverages the Bayesian information) with $n$ such buyers \cite{BK96}.
    
    \item {\bf Approximation with a Single Sample (or a Few Samples):}
    By a line of works initiated by Dhangwatnota, Roughgarden, and Yan \cite{DRY15}, a single or a few sample(s) from the underlying distributions are sufficient to design approximately optimal mechanisms \cite{DRY15, HMR18, FILS15, BGMM18, RTY20, DZ20, RTY20, ABB22, FHL21}.

    For example, given a single buyer and a single sample from his/her value distribution, {\IdentityPricing}\footnote{Namely, post a price $\price = \sample$ to the buyer, which is precisely the sampled value $\sample \sim \prior$.} achieves a tight $\frac{1}{2}$-approximation to {\BayesianMyersonAuction}; this is the best possible among deterministic mechanisms \cite{DRY15, HMR18}.

    \item {\bf The Sample Complexity of Revenue Maximization:}
    \yfedit{By a line of works initiated by Cole and Roughgarden \cite{CR14}, a number of $\poly(n \cdot \eps^{-1})$ samples are sufficient to learn an empirical $(1 - \eps)$-approximation of {\BayesianMyersonAuction} \cite{MR15, CGM15, MM16, RS16, DHP16,  GN17, S17, GHZ19, HT19, CHMY23, LSTW23, JLX23}.}
\end{itemize}
Despite being a beautiful and well-developed theory, a large proportion of the Bayesian mechanism design literature, including all the mentioned results, requires the value distributions to satisfy the (relatively weaker) {\em regularity} condition \cite{M81} or the (relatively stronger) {\em monotone hazard rate (MHR)} condition \cite{BMP63}.\footnote{In some sense, we have trapped into such cognitive inertia (which our work tries to break): that regularity/MHR are necessary conditions for those results. But, after all, they are just sufficient conditions.}
% (Without distributional assumptions, the above topics cannot have meaningful results.)
\yfedit{Rather, without any distributional assumption, all mentioned results and others would be impossible.}
% Although very standard, both conditions intrinsically violate the ``robustness to information assumptions'' principle.
\yfedit{Although the regularity/MHR conditions are very standard, their appropriate generalizations that still enable meaningful results (if possible) would better befit the ``robustness to information assumptions'' principle.}

On the other hand, irregular (and thus non-MHR) distributions are ordinary in practice.
For example, the {\em equal-revenue distribution}
% \yjnote{Is there any measure of probability distance for the example here?}
-- a paradigmatic regular distribution -- after a slight modification $\cdf_{1}(\val) = (1 - 1 / \val) \cdot \indicator{\val \geq 2}$ would be irregular.\footnote{\label{footnote:quasi-regular/-mhr:example}The quasi-regular (but irregular) distribution $\cdf_{1}(\val) = (1 - 1 / \val) \cdot \indicator{\val \geq 2}$ is the first order statistic of two regular distributions $\cdf_{11}(\val) = 1 - 1 / \val$ for $\val \geq 1$ and $\cdf_{12}(\val) = \indicator{\val \geq 2}$. Likewise, the quasi-MHR (but irregular and thus non-MHR) distribution $\cdf_{2}(\val) = (1 - e^{-\val}) \cdot \indicator{\val \geq 1}$ is the first order statistic of two MHR distributions $\cdf_{21}(\val) = 1 - e^{-\val}$ for $\val \geq 0$ and $\cdf_{22}(\val) = \indicator{\val \geq 1}$.} Namely, a random value $\val_{1} \sim \prior_{1}$ almost follows the equal-revenue distribution, except for a minimum possible value of $2$ (rather than $1$).
Likewise, the {\em exponential distribution} -- a paradigmatic MHR distribution -- after a slight modification $\cdf_{2}(\val) = (1 - e^{-\val}) \cdot \indicator{\val \geq 1}$ also would be irregular (and thus non-MHR).\textsuperscript{\ref{footnote:quasi-regular/-mhr:example}}
Unfortunately, all the mentioned results (and others) that rely on the regularity/MHR conditions cannot accommodate such ordinary distributions.

The above issues overshadow the whole theory of Bayesian mechanism design.
To the rescue, we shall go beyond the regularity/MHR conditions and seek their appropriate relaxations/surrogates. (Unfortunately, very few works have proceeded in this direction.\footnote{\label{footnote:bounded-irregularity}To our knowledge, there are (at most) three exceptions \cite{SS13, HMR18, HR14}.
However, the first two \cite{SS13} and \cite[Section~3.3]{HMR18} studied rather specialized settings and made other (strong) assumptions about their models\ignore{, which restrict their extensibility}.
The last one \cite[Appendix~D]{HR14} just gave a light discussion, and their relaxed/surrogate conditions fit nicely in our framework; we will elaborate on the relation between \cite[Appendix~D]{HR14} and our work in \Cref{sec:intro:contribution}. Besides, \cite{CR14} introduced the $\alpha$-strong regularity \cite{CR14}, which is stronger than regularity but weaker than MHR condition.})
In this way, we will strengthen the applicability of this theory and, conversely, better reveal the essence thereof.
However, before all else, we must stipulate several criteria for the new conditions.
\begin{enumerate}
    \item\label{criterion:generalization} The new conditions shall be \textbf{\em as general as possible} -- at least, they must contain the family of regular/MHR distributions -- but still involve ``a certain degree of regularity'' that \textbf{\em enables generalizations of previous results}, with or even without quantitative losses.
    
    \item\label{criterion:econ} The new conditions must involve \textbf{\em strong economic intuition}, akin to the original ones.\footnote{Recall that the regularity/MHR conditions refer to the monotonicity of the virtual value functions and the hazard rate functions, respectively; see \Cref{sec:intro:contribution,sec:prelim} for more details.}
    
    \item\label{criterion:math} Ideally, the new conditions shall also be \textbf{\em mathematically natural and elegant}.
\end{enumerate}
Seeking such conditions and (nontrivially) generalizing previous results are the themes of our work.

\subsection{Our Contributions}
\label{sec:intro:contribution}

In this subsection, we first review the regularity/MHR conditions and introduce their relaxations/ surrogates, the {\em quasi-regular/-MHR} conditions.
Afterward, we will elaborate on our generalizations of previous results, from the original distribution families to our new ones, for every research topic mentioned at the beginning of the introduction.

\subsection*{Distribution Hierarchy and Structural Results \texorpdfstring{(\Cref{sec:structural})}{}}

The regularity/MHR conditions are specified as the monotonicity of the {\em virtual value functions} and the {\em hazard rate functions}, respectively  \cite{M81, BMP63}; both definitions intrinsically involve strong economic intuition (Criterion~\ref{criterion:econ}).
In the same spirit, we specify the {\em quasi-regularity/-MHR} conditions as immediate relaxations/analogs of such monotonicity, as follows:
% \begin{flushleft}
\begin{itemize}
    \item {\bf Regularity:}
    A {\em regular} distribution $\prior$ has its {\em virtual value function} $\virtualval(\val) \triangleq \val - \frac{1 - \cdf(\val)}{\pdf(\val)}$ being increasing.
    Let $\regulardistspace$ denote this distribution family.

    \item {\bf Monotone Hazard Rate (MHR):} A {\em monotone hazard rate (MHR)} distribution $\prior$ has its {\em hazard rate function} $\hazardrate(\val) \triangleq \frac{\pdf(\val)}{1 - \cdf(\val)}$ being increasing.
    Let $\mhrdistspace$ denote this distribution family.
    
    \item {\bf Quasi Regularity:}
    A {\em quasi-regular} distribution $\prior$ has its {\em conditional expected virtual value function} $\cevirtualval(\val) \triangleq \expect[x \sim \prior]{\virtualval(x) \given x \leq \val} = -\val \cdot \frac{1 - \cdf(\val)}{\cdf(\val)}$ being increasing.
    Let $\quasiregulardistspace$ denote this distribution family.

    \item {\bf Quasi Monotone Hazard Rate (Quasi-MHR):}
    A {\em quasi monotone hazard rate (quasi-MHR)} distribution $\prior$ has its {\em conditional expected hazard rate function} $\cehazardrate(\val) \triangleq \frac{1}{\val}\int_{0}^{\val} \hazardrate(x) \cdot \d x = \frac{-\ln(1 - \cdf(\val))}{\val}$ being increasing.\footnote{The $\cehazardrate(\val)$ may be better called the {\em ``normalized cumulative'' hazard rate function}. Yet we prefer its current name since it incurs no ambiguity and is more consistent with the {\em ``conditional expected'' virtual value function} $\cevirtualval(\val)$.}
    Let $\quasimhrdistspace$ denote this distribution family.
\end{itemize}
% \end{flushleft}
So the quasi-regularity condition relaxes the regularity condition to a large extent, from {\em pointwise monotonicity} of the virtual value function $\virtualval(\val)$ to {\em on-average monotonicity}.
Consequently, a quasi-regular distribution may have (up to countably) many probability masses spreading its support. In contrast, a regular distribution has at most one probability mass (which must be at the support supremum).
The circumstances of the ``MHR to quasi-MHR'' relaxation are similar.

For more economic intuition (Criterion~\ref{criterion:econ}), recall that the regularity (resp.\ MHR) condition is equivalent to the concavity of the {\em revenue curve} (resp.\ the convexity of the {\em cumulative hazard rate function}).
In the same spirit, we will get similar geometric interpretations of the quasi-regularity/ -MHR conditions; see \Cref{sec:prelim,sec:structural} for more details.

All four families together form the following hierarchy (\Cref{fig:distribution-hierarchy}), where the $\generaldistspace$ denotes the family of general (single-dimensional) distributions.
\[
    \mhrdistspace \subsetneq (\regulardistspace \cap \quasimhrdistspace) \subsetneq \regulardistspace,\ \quasimhrdistspace \subsetneq (\regulardistspace \cup \quasimhrdistspace) \subsetneq \quasiregulardistspace \subsetneq \generaldistspace.
\]
Every inclusion here is strict (i.e., $\regulardistspace$ is neither a subset nor a superset of $\quasimhrdistspace$); see \Cref{sec:hierarchy:relation} for more details, e.g., examples in these families and their intersections, unions, and differences.

\tikzset{
    partial ellipse/.style args={#1:#2:#3}{
        insert path={+ (#1:#3) arc (#1:#2:#3)}
    }
}

\def\hierarchyheight{0.7}

\begin{figure}[t]
    \centering
    \begin{tikzpicture}
        \draw[thick] (-6.9, -5.2 * \hierarchyheight) rectangle ++(13.8, 9.6 * \hierarchyheight);
        \draw[thick, fill = blue, fill opacity = 0.05] (0, -0.5 * \hierarchyheight) ellipse (6.3cm and 4.2cm * \hierarchyheight);
        \draw[thick, fill = red, fill opacity = 0.2] (-1, 0) ellipse (4.5cm and 3cm * \hierarchyheight);
        \draw[thick, fill = green, fill opacity = 0.2] (1, 0) ellipse (4.5cm and 3cm * \hierarchyheight);
        \draw[thick, fill = yellow, fill opacity = 0.8] (0, 0) ellipse (2.7cm and 1.8cm * \hierarchyheight);
        \node at (-6.9 + 1, -5.2 * \hierarchyheight + 1 * \hierarchyheight) {$\generaldistspace$};
        \node at (0, -4 * \hierarchyheight) {$\quasiregulardistspace$};
        \node at (-4.5, 0) {$\regulardistspace$};
        \node at (4.5, 0) {$\quasimhrdistspace$};
        \node at (0, 0) {$\mhrdistspace$};
    \end{tikzpicture}
\caption{\label{fig:distribution-hierarchy}
A Venn diagram of all five families of regular ($\regulardistspace$), MHR ($\mhrdistspace$), quasi-regular ($\quasiregulardistspace$), quasi-MHR ($\quasimhrdistspace$), and general ($\generaldistspace$) distributions, which together form the following hierarchy: $\mhrdistspace \subsetneq (\regulardistspace \cap \quasimhrdistspace) \subsetneq \regulardistspace,\ \quasimhrdistspace \subsetneq (\regulardistspace \cup \quasimhrdistspace) \subsetneq \quasiregulardistspace \subsetneq \generaldistspace$.}
\end{figure}

Next, we examine all four families $\regulardistspace$, $\mhrdistspace$, $\quasiregulardistspace$, and $\quasimhrdistspace$ from the lens of {\em order statistics}.
These notions are the centerpiece of Bayesian mechanism design (and all other Bayesian models).
For example, {\sf First/Second Price Auction(s)} \cite{V61}, along with their generalizations \cite{C71, G73, M81, M00, CSS06, V07, EOS07} and methodologies motivated \cite{CKS16, R15, CH13, ST13, FFGL20, CKKKLLT15, CKST16, JL23poa, JL23pos}, have shaped the whole theory.

% of a distribution family is math-oriented,
% \href{https://en.wikipedia.org/wiki/Closure_(mathematics)#Closure_operator}{[closure]}

For both original families $\regulardistspace$ and $\mhrdistspace$, we prove (\Cref{thm:order:mhr,thm:order:regular}) that every order statistic $\prior_{(k:n)}$, $\forall k \in [n]$, of $n \geq 1$ many {\em i.i.d.}\ regular (resp.\ MHR) distributions $\priors = \{\prior\}^{\otimes n}$ also is regular (resp.\ MHR).
Unfortunately, this structural result fails once the distributions $\priors = \{\prior_{i}\}_{i \in [n]}$ become {\em asymmetric}; see \Cref{footnote:quasi-regular/-mhr:example} for counterexamples.
In contrast, if we consider the new families $\quasiregulardistspace$ and $\quasimhrdistspace$ instead, this structural result {\em does} hold for {\em asymmetric} distributions (\Cref{thm:order:quasi-mhr,thm:order:quasi-regular}).
Hence, by regarding ``i.i.d.\ order statistic'' and ``order statistic'' as two operations, we can examine the {\em closure property} of all four families $\regulardistspace$, $\mhrdistspace$, $\quasiregulardistspace$, and $\quasimhrdistspace$:

\begin{observation}[\Cref{thm:order:mhr,thm:order:regular}]
\label{observation:order:iid}
\begin{flushleft}
Both original families $\regulardistspace$ and $\mhrdistspace$ are {\em closed} under the more restricted ``i.i.d.\ order statistic'' operation but (\Cref{footnote:quasi-regular/-mhr:example}) are {\em not closed} under the more general ``order statistic'' operation.
% Both original families of regular distributions $\regulardistspace$ and MHR distributions $\mhrdistspace$ are {\em closed} under the more restricted ``i.i.d.\ order statistic'' operation. However, both are {\em not closed} under the more general ``order statistic'' operation.
\end{flushleft}
\end{observation}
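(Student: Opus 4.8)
The plan is to observe that \Cref{observation:order:iid} reduces to a short two-part assembly from \Cref{thm:order:regular,thm:order:mhr} (for the positive ``i.i.d.'' direction) and the explicit counterexamples recorded in \Cref{footnote:quasi-regular/-mhr:example} (for the negative ``asymmetric'' direction). For the positive direction — closure of $\regulardistspace$ and $\mhrdistspace$ under the ``i.i.d.\ order statistic'' operation — I would first note that, by definition, this operation takes a single distribution $\prior$ to $\prior_{(k:n)}$, the $k$-th order statistic of the i.i.d.\ product $\priors = \{\prior\}^{\otimes n}$, for some choice of $n \geq 1$ and $k \in [n]$. \Cref{thm:order:regular} asserts exactly that $\prior \in \regulardistspace$ forces $\prior_{(k:n)} \in \regulardistspace$, and \Cref{thm:order:mhr} the same for $\mhrdistspace$; since the operation's input and output are both single distributions, this is precisely the claimed closure.

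For the negative direction — that neither family is closed under the more general (asymmetric) ``order statistic'' operation — I would exhibit the $n = 2$, $k = 1$ counterexamples from \Cref{footnote:quasi-regular/-mhr:example}. For regularity: $\cdf_{11}(\val) = 1 - 1/\val$ on $[1,\infty)$ is regular (its virtual value function is identically $0$ on the support), and the point mass $\cdf_{12}(\val) = \indicator{\val \geq 2}$ is regular (its revenue--quantile curve is the linear map $\quant \mapsto 2\quant$); yet their first order statistic has c.d.f.\ $\cdf_{11} \cdot \cdf_{12} = (1 - 1/\val)\,\indicator{\val \geq 2} = \cdf_1$, which carries a probability mass $\tfrac12$ at $\val = 2$, the infimum — not the supremum — of its support $[2,\infty)$. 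Since a regular distribution carries at most one probability mass, necessarily at the support supremum (as recalled in the introduction), $\cdf_1 \notin \regulardistspace$. The MHR case is identical: $\cdf_{21}(\val) = 1 - e^{-\val}$ on $[0,\infty)$ has constant hazard rate and is MHR, $\cdf_{22}(\val) = \indicator{\val \geq 1}$ is a point mass and is MHR, while their first order statistic $\cdf_2 = (1 - e^{-\val})\,\indicator{\val \geq 1}$ has a probability mass at the support infimum $\val = 1$, hence is not even regular, so in particular $\cdf_2 \notin \mhrdistspace$.

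The only steps requiring genuine care — and thus where I expect whatever real work there is to sit — are (i) confirming that a degenerate point-mass distribution is legitimately regular and MHR under this paper's conventions, so that $\cdf_{12}$ and $\cdf_{22}$ are admissible building blocks, which I would verify via its (linear) revenue curve and its (extended-real-valued, vacuously convex) cumulative hazard rate function $-\ln(1 - \cdf(\val))$; and (ii) verifying the irregularity of $\cdf_1$ and $\cdf_2$, which I would do either through the ``at most one mass, at the support supremum'' characterization above, or directly: the revenue--quantile curve of $\cdf_1$ is flat at $1$ on $(0,\tfrac12]$ and linear $\quant \mapsto 2\quant$ on $[\tfrac12,1]$, so its slope jumps upward at $\quant = \tfrac12$, violating concavity; and the cumulative hazard rate of $\cdf_2$ equals $\val \cdot \indicator{\val \geq 1}$, which jumps from $0$ to $1$ at $\val = 1$, violating convexity. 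Everything else is bookkeeping once \Cref{thm:order:regular,thm:order:mhr} are granted.
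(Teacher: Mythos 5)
Your proposal is correct and follows exactly the route the paper intends: the positive half is a direct restatement of \Cref{thm:order:regular,thm:order:mhr}, and the negative half is the pair of $n = 2$, $k = 1$ counterexamples from \Cref{footnote:quasi-regular/-mhr:example}, with the small details you flag — that the degenerate point masses $\cdf_{12}, \cdf_{22}$ are themselves regular/MHR, and that the resulting first order statistics $\cdf_1, \cdf_2$ each carry an atom at the support infimum and hence fail concavity of the revenue curve / convexity of $H$ — worked out correctly. One cosmetic point worth noting in your write-up of the MHR counterexample: restricted to the support $[1,\infty)$, the cumulative hazard rate of $\cdf_2$ is $H(\val) = \val$, which is convex on that interval; the violation shows up either by extending $H$ to the ambient half-line $[0,\infty)$ (where the jump at $\val = 1$ breaks convexity, as you say) or, more robustly, by the argument you also give — the atom sits at the support infimum rather than supremum, so $\cdf_2$ is not even regular, and $\mhrdistspace \subseteq \regulardistspace$ finishes it. The latter sidesteps any ambiguity about the domain on which convexity is required and is the version I'd lead with.
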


\begin{observation}[\Cref{thm:order:quasi-mhr,thm:order:quasi-regular}]
\label{observation:order:asymmetric}
\begin{flushleft}
Both new families $\quasiregulardistspace$ and $\quasimhrdistspace$ are {\em closed} under the more general ``order statistic'' operation.
% Both new families of quasi-regular distributions $\quasiregulardistspace$ and quasi-MHR distributions $\quasimhrdistspace$ are {\em closed} under the more general ``order statistic'' operation.
\end{flushleft}
\end{observation}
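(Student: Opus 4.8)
The plan is to reduce both theorems to two elementary scalar inequalities about the tail probabilities of Poisson--binomial distributions, which we then prove by induction on $n$. Throughout, the $\cdf_i$ need not be identical.

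We may assume each $\priori$ admits a positive density on its support; the general case (quasi-regular and quasi-MHR distributions can carry atoms) follows by approximating each $\priori$ from within $\quasiregulardistspace$ (resp.\ $\quasimhrdistspace$) by smooth members of the same family and passing to the limit, since monotonicity of $\cevirtualval$ (resp.\ $\cehazardrate$) is inherited by pointwise limits and $\cdfkn$ depends continuously (indeed polynomially) on the $\cdf_i$. Differentiating the defining conditions, $\prior\in\quasiregulardistspace$ is equivalent to $\val\,\pdf(\val)\ge\cdf(\val)\bigl(1-\cdf(\val)\bigr)$ for all $\val$, and $\prior\in\quasimhrdistspace$ is equivalent to $\val\,\pdf(\val)\ge-\bigl(1-\cdf(\val)\bigr)\ln\bigl(1-\cdf(\val)\bigr)$ for all $\val$.

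Put $q_i(\val)\triangleq1-\cdf_i(\val)$ and, for independent $B_i\sim\mathrm{Ber}(q_i)$, let $\Phi_k(q_1,\dots,q_n)\triangleq\Pr[\,\#\{i:B_i=1\}\ge k\,]$, so that $1-\cdfkn(\val)=\Phi_k\bigl(q_1(\val),\dots,q_n(\val)\bigr)$ is the survival function of $\valkn$. Since $\partial_i\Phi_k=\Pr[\sum_{j\ne i}B_j=k-1]\ge0$, the density of $\cdfkn$ is $\cdfkn'(\val)=\sum_i\partial_i\Phi_k\cdot\cdf_i'(\val)$; substituting the two pointwise bounds above for each $\cdf_i$ and using $\partial_i\Phi_k\ge0$, both theorems follow once we establish, for every $n\ge1$, every $k$, and every $(q_1,\dots,q_n)\in[0,1]^n$,
\[ \textstyle\sum_i\partial_i\Phi_k\cdot q_i(1-q_i)\ \ge\ \Phi_k\,(1-\Phi_k)\qquad\text{and}\qquad\sum_i\partial_i\Phi_k\cdot q_i\ln q_i\ \le\ \Phi_k\ln\Phi_k. \]
Both are proved by induction on $n$ through the recursion $\Phi_k(q_1,\dots,q_n)=q_n\Phi_{k-1}(q_1,\dots,q_{n-1})+(1-q_n)\Phi_k(q_1,\dots,q_{n-1})$ (with $\Phi_0\equiv1$ and $\Phi_k\equiv0$ once $k$ exceeds the arity), applied at levels $k-1$ and $k$ simultaneously. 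Writing $p=q_n$, $S'=\sum_{j<n}B_j$, $e=\Pr[S'=k-1]$: for the first inequality the inductive step, after plugging in $\partial_n\Phi_k=e$, the recursion, and the two hypotheses, telescopes to the elementary $p(1-p)\,e\,(1-e)\ge0$; for the second, the same bookkeeping reduces the step to showing $\mathcal H(u)\ge(1-u)\,\mathcal H(p)/(1-p)$ for $p\le u\le1$, where $\mathcal H(t)=-t\ln t-(1-t)\ln(1-t)$ is the binary entropy and $u=\tfrac{p\Pr[S'\ge k-1]}{p\Pr[S'\ge k-1]+(1-p)\Pr[S'\ge k]}\ge p$ is the mixing weight arising in the step — and this holds because $(1-u)\,\mathcal H(p)/(1-p)$ is exactly the chord of the concave $\mathcal H$ through $(p,\mathcal H(p))$ and $(1,0)$, so it lies below $\mathcal H$ on $[p,1]$. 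This proves \Cref{thm:order:quasi-regular,thm:order:quasi-mhr}.

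The genuinely creative step is the reduction of the previous paragraph: once quasi-regularity and quasi-MHR are recast as pointwise lower bounds on $\val\,\pdf(\val)$, and the survival function of $\valkn$ is recognized as a coordinatewise-nondecreasing function $\Phi_k$ of the $n$ individual survival probabilities, the asymmetry of the $\cdf_i$ plays no role and only the two scalar inequalities remain; the real content there is spotting that each of them admits an induction on $n$ whose step collapses to something trivial ($p(1-p)e(1-e)\ge0$ in one case, concavity of binary entropy in the other). The only other point needing care is the reduction to the smooth case: one should verify that every distribution in $\quasiregulardistspace$ (resp.\ $\quasimhrdistspace$) is a pointwise limit of smooth distributions in the same family, so that the differential characterization and the limiting argument are legitimate.
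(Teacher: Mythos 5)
Your proof is correct, and it takes a genuinely different route from the paper's. The paper works with the equivalent \emph{tail-bound} characterizations (Condition~\ref{condition:quasi-regular:cdf} of Proposition~\ref{prop:q-regular equivalent definition} and Condition~\ref{condition:quasi-MHR:cdf} of Proposition~\ref{prop:q-MHR equivalent definition}) and does an induction on $n$ through the same recursion $\cdf_{(k:n)} = \cdf_{(k:n-1)}\cdf_n + \cdf_{(k-1:n-1)}(1-\cdf_n)$; for quasi-regular ($k\geq2$) the step reduces to checking a quadratic in $\val\primed$ at its endpoints, and for quasi-MHR the step reduces to the algebraic inequality $(AB)^\gamma+(A+B-AB)^\gamma\leq A^\gamma+B^\gamma$ for $A,B,\gamma\in[0,1]$. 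You instead differentiate the defining conditions once to get pointwise lower bounds on $\val\pdf(\val)$, recast $1-\cdfkn$ as the Poisson--binomial tail $\Phi_k$ of the individual survival probabilities, and isolate two clean structural inequalities, $\sum_i\partial_i\Phi_k\cdot q_i(1-q_i)\geq\Phi_k(1-\Phi_k)$ and $\sum_i\partial_i\Phi_k\cdot q_i\ln q_i\leq\Phi_k\ln\Phi_k$, whose inductive steps collapse to $p(1-p)e(1-e)\geq0$ and concavity of binary entropy respectively. I verified both reductions and both inductive steps and they are correct. This is arguably the more illuminating decomposition: the role of the asymmetric $\cdf_i$ is entirely absorbed into a coordinatewise-monotone map $\Phi_k$, and the two scalar inequalities are of independent interest. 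In fact they are infinitesimal forms of the paper's global inequalities: applying your second inequality at $(q_1^\gamma,\dots,q_n^\gamma)$ gives $\gamma g'(\gamma)\leq g(\gamma)\ln g(\gamma)$ for $g(\gamma)\triangleq\Phi_k(q_1^\gamma,\dots,q_n^\gamma)$, hence $\ln g(\gamma)/\gamma$ is decreasing and $g(\gamma)\geq g(1)^\gamma$ for $\gamma\leq1$, which is exactly the paper's tail-bound inequality for $\cdfkn$ (and similarly the first inequality integrates, via $\frac{d}{d\gamma}h(\gamma;q)=-\frac{1}{\gamma}h(1-h)$, to the quasi-regular tail bound).

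The one gap is the approximation step at the start. Your differential characterizations presuppose a density, but quasi-regular/quasi-MHR distributions may have atoms, and you assert without proof that any member of $\quasiregulardistspace$ (or $\quasimhrdistspace$) can be approximated from within the same family by smooth members. This is plausible but not automatic: naive mollification need not preserve the defining monotonicity of $\cevirtualval$ or $\cehazardrate$, and you would need a construction (e.g.\ smoothing the revenue curve while preserving the inscribed-triangle property, or smoothing the cumulative hazard rate while preserving star-shapedness from the origin) plus a continuity argument that $\cdfkn$ is preserved in the limit. The paper avoids this entirely by working with the tail-bound characterizations, which make sense for arbitrary distributions. The cleanest way to close the gap in your framework is the observation in the previous paragraph: once you have the two $\Phi_k$ inequalities, integrating them in $\gamma$ directly yields the tail-bound form of quasi-regularity/quasi-MHR for $\cdfkn$, with no density assumption on the $\cdf_i$ at any point. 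That way you keep the elegant $\Phi_k$ induction and lose the mollification caveat.
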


\noindent
These suggest that both new families $\quasiregulardistspace$ and $\quasimhrdistspace$ might be more natural and elegant, at least mathematically (Criterion~\ref{criterion:math}), than both original families $\regulardistspace$ and $\mhrdistspace$.

In sum, both new families $\quasiregulardistspace$ and $\quasimhrdistspace$ satisfy Criteria~\ref{criterion:econ} and \ref{criterion:math}.
Also, \Cref{observation:order:iid,observation:order:asymmetric} shall be of independent interest; indeed, we have already found many of their applications, which we will discuss soon after (Criterion~\ref{criterion:generalization}).

\xhdr{Comparison with \cite{HR14}.}
In fact, our work is {\em not} the first time in the literature that the quasi-regularity condition has appeared. (The quasi-MHR condition {\em does} appear for the first time. Also, we first name both conditions by ``quasi-regularity/-MHR''.)
The work \cite[Appendix~D]{HR14} introduced the {\em inscribed triangle property} of a distribution;\footnote{As mentioned on Roughgarden's homepage \href{https://timroughgarden.org/chron.html}{[https://timroughgarden.org/chron.html]}, the work \cite{HR14} is a more recent version of \cite{HR08} with a different emphasis and some new results.} it is not difficult to verify (\Cref{prop:q-regular equivalent definition}) that this property coincides with our quasi-regularity condition.

% The previous work \cite[Appendix~D]{HR14},\footnote{As mentioned on Roughgarden's homepage \href{https://timroughgarden.org/chron.html}{[https://timroughgarden.org/chron.html]}, the work \cite{HR14} is a more recent version of \cite{HR08} with a different emphasis and some new results.} based on the {\em revenue curve} (i.e., an equivalent representation of a distribution; see \Cref{sec:prelim} for more details), introduced the {\em inscribed triangle property}.
% It is not difficult to verify that this property is equivalent to our quasi-regularity condition (\Cref{prop:q-regular equivalent definition}).

Disappointingly and surprisingly, the community has omitted the quasi-regularity/-MHR conditions.
The work \cite[Appendix~D]{HR14} itself essentially only gave one specialized Bulow-Klemperer-type result in this regard:
{\SecondPriceAuction} with {\em two} i.i.d.\ quasi-regular buyers $\prior \otimes \prior \in \quasiregulardistspace$ surpasses {\BayesianMyersonAuction} with {\em one} such buyer $\prior \in \quasiregulardistspace$.
(Afterward, no follow-up work has ever investigated this condition.)
% (In the scope of Bayesian mechanism design, \cite[Appendix~D]{HR14} only gave a (specialized) Bulow-Klemperer-type result: {\SecondPriceAuction} extracts more revenue from two i.i.d.\ quasi-regular buyers $\prior \otimes \prior$, than {\BayesianMyersonAuction} from one such buyer $\prior$. Also, no follow-up work has proceeded to study the quasi-regularity condition.)
Nonetheless, as our work will show, the quasi-regular/-MHR conditions have numerous exciting applications (Criterion~\ref{criterion:generalization}).

\subsection*{Approximation by Bayesian Simple Mechanisms \texorpdfstring{(\Cref{sec:simple-mechanism})}{}}

To begin with, we present applications of the quasi-regular/-MHR conditions to the first mentioned topic, Approximation by Bayesian Simple Mechanisms.
For readability, we often consider the {\em single-item} setting first and the broader {\em downward-closed} settings afterward.

\xhdr{The Single-Item Setting.}
The following are among the most canonical single-item mechanisms.
\begin{itemize}
    \item {\BayesianOptimalMechanism} ({\BOM}):
    As the cornerstone of Bayesian mechanism design, Myerson characterized {\BayesianOptimalMechanism} for {\em any} single-parameter setting; thus, that mechanism is often called {\BayesianMyersonAuction}.
    In the single-item setting, the seller first determines buyers' individual {\em ironed virtual value function} $\ironvirtualval_{i}$, $\forall i \in [n]$ based on the Bayesian information $\priors$ and, then, the buyer with the {\em highest nonnegative} ironed virtual value (if any) wins and pays the threshold value for winning.
    (See \Cref{sec:prelim} for more details.)
    
    \item {\BayesianOptimalUniformReserve} ($\BOUR$):
    This is a variant of {\SecondPriceAuction}.
    Given a uniform reserve $\reserve$, if the highest value $\val_{(1:n)}$ is higher than the uniform reserve, i.e., $\val_{(1:n)} \geq \reserve$, then the corresponding buyer wins, paying the maximum of the reserve $\reserve$ and the second highest value $\val_{(2:n)}$; otherwise, the item is unsold.
    The seller would select the reserve $\reserve^{*} = \reserve^{*}(\priors)$ that optimizes the expected revenue based on the Bayesian information $\priors$. (Remarkably, {\BayesianOptimalMechanism} degenerates into {\BayesianOptimalUniformReserve} provided i.i.d.\ regular buyers \cite{M81}.)

    \item {\BayesianOptimalSequentialPricing} ({\BOSP}):
    Given an arrival order $\order \in \Pi_{n}$ and individual prices $\prices = (\price_{i})_{i \in [n]}$, in an online-algorithm style, every arriving buyer $i = \order(1), \order(2), \dots, \order(n)$ makes a take-it-or-leave-it decision at the price $\price_{i}$ if the item remains unsold.
    As before, the seller would Bayesian-optimize the order and the prices $(\optorder, \optprices) = (\order^{*}(\priors), \optprices(\priors))$.
    % \footnote{\yfedit{order-selection vs.\ random-order vs.\ adversarial order}}
    % based on the Bayesian information $\priors$.

    \item {\BayesianOptimalUniformPricing} ({\BOUP}):
    Everything is the same as for {\BOSP}, except that the individual prices must be uniform $\prices = (\price)^{\otimes n}$. (Therefore, different arrival orders $\order \in \Pi_{n}$ may change the winning buyer but not the seller's outcome revenue.)
    Once again, the seller would Bayesian-optimize this price $\optprice = \optprice(\priors)$.
\end{itemize}
All four mechanisms form the hierarchy $\BOUP \leq \BOUR,\ \BOSP \leq \BOM$ regarding revenues, with only one incomparable pair ``$\BOUR$ vs.\ {\BOSP}'' (\Cref{fig:simple-mechanism}).

\begin{figure}[t]
{\centering
\begin{tikzpicture}[scale = 2]
    \node(n1) at (2.25, 0.7) {{\BayesianOptimalMechanism} ({\BOM})};
    \node(n2) at (0, 0) {{\BayesianOptimalUniformReserve} ($\BOUR$)};
    \node(n3) at (4.5, 0) {{\BayesianOptimalSequentialPricing} ({\BOSP})};
    \node(n4) at (2.25, -0.7) {{\BayesianOptimalUniformPricing} ({\BOUP})};
    \draw[thick, <-, >=triangle 45] (n1.west) to node[anchor = -30] {$\star$} (n2.north);
    \draw[thick, <-, >=triangle 45] (n1.east) to (n3.north);
    \draw[thick, <-, >=triangle 45] (n1.south) to node[left] {$\star$} (n4.north);
    \draw[thick, <-, >=triangle 45] (n2.south) to (n4.west);
    \draw[thick, <-, >=triangle 45] (n3.south) to node[anchor = 150] {$\star$} (n4.east);
\end{tikzpicture}
\par}
\caption{\label{fig:simple-mechanism}
A Hasse diagram of the single-item mechanisms (i.e., an arrow ``$\mech_{1} \to \mech_{2}$'' means the latter $\mech_{2}$ surpasses the former $\mech_{1}$), hence the hierarchy $\BOUP \leq \BOUR,\ \BOSP \leq \BOM$, with only one incomparable pair ``$\BOUR$ vs.\ {\BOSP}''.}
\end{figure}
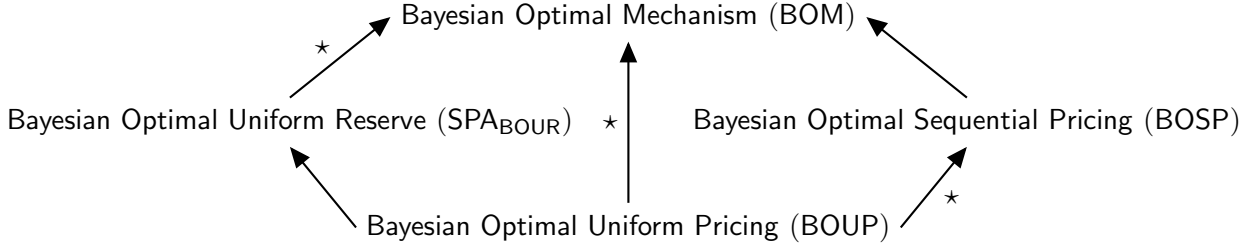

% Every comparable pair (``{\BOUP} vs.\ {\BOSP}'' etc) refers to a meaningful revenue gap ($\calC^{\BOUP}_{\BOSP}$ etc).
% (say ``{\BOUP} vs.\ {\BOM}'')
% (say $\calC_{\BOM}^{\BOUP}$).
Every comparable pair refers to a meaningful revenue gap.
However, two of these revenue gaps,
$\calC_{\BOM}^{\BOSP}$ and $\calC_{\BOUR}^{\BOUP}$,
% {\BayesianOptimalMechanism} vs.\ {\BayesianOptimalSequentialPricing} ($\calC_{\BOM}^{\BOSP}$) and {\BayesianOptimalUniformReserve} vs.\ {\BayesianOptimalUniformPricing} ($\calC_{\BOUR}^{\BOUP}$)
are quite robust to distributional assumptions;\footnote{Both revenue gaps are (absolute) constants even for asymmetric general buyers \cite{PT22, BC23, JLTX20}. Moreover, the former $\calC_{\BOM}^{\BOSP}$ admits an {\em approximation-preserving} reduction from general distributions to regular distributions, whether the buyers are symmetric or not \cite{CFPV19}.
The latter is $\calC_{\BOUR}^{\BOUP} = 1 - 1 / e \approx 0.6321$ for symmetric regular buyers, and is $\calC_{\BOUR}^{\BOUP} = 6 / \pi^{2} \approx 0.6079$ when either ``symmetric'' is relaxed to ``asymmetric'' or ``regular'' is relaxed to ``quasi-regular''/``general'' (or both).}
instead, we will concentrate on the other three revenue gaps $\calC_{\BOSP}^{\BOUP}$, $\calC_{\BOM}^{\BOUP}$, and $\calC_{\BOM}^{\BOUR}$ (marked by the symbol $\star$ in \Cref{fig:simple-mechanism}).
In brief, regarding these three revenue gaps, quasi-regular distributions $\quasiregulardistspace$ will behave more like {\em general} distributions $\generaldistspace$ for {\em symmetric} buyers, namely ``as general as possible'' (Criterion~\ref{criterion:generalization}), but more like {\em regular} distributions $\regulardistspace$ for {\em asymmetric} buyers, namely ``a certain degree of regularity'' that enables generalizations of previous results (Criterion~\ref{criterion:generalization}).

\begin{table}[t]
    \centering
    \begin{tabular}{|l||>{\arraybackslash}p{6.55cm}|>{\arraybackslash}p{6.55cm}|}
        \hline
        \rule{0pt}{13pt} & symmetric buyers & asymmetric buyers \\ [2pt]
        \hline
        \hline
        \rule{0pt}{13pt}$\regulardistspace$ & $\calC_{\BOSP}^{\BOUP} = 1 - \frac{1}{e} \approx 0.6321$ & $\calC_{\BOSP}^{\BOUP} \approx 0.3817$ \\ [2pt]
        \rule{0pt}{13pt} & $\calC_{\BOM}^{\BOUP} = 1 - \frac{1}{e} \approx 0.6321$ & $\calC_{\BOM}^{\BOUP} \approx 0.3817$ \\ [2pt]
        \rule{0pt}{13pt} & $\calC_{\BOM}^{\BOUR} = 1$ & $\calC_{\BOM}^{\BOUR} \in [0.3817, 0.4630]$ \\ [2pt]
        \hline
        \rule{0pt}{13pt}$\quasiregulardistspace$ & \parbox{4.75cm}{$\calC_{\BOSP}^{\BOUP} = \frac{1}{2}$} $(\downarrow)$ & \parbox{4.75cm}{$\calC_{\BOSP}^{\BOUP} \approx 0.3817$} $(\uparrow)$ \\ [2pt]
        \rule{0pt}{13pt} & \parbox{4.75cm}{$\calC_{\BOM}^{\BOUP} = \frac{1}{2}$} $(\downarrow)$ & \parbox{4.75cm}{$\calC_{\BOM}^{\BOUP} \in [0.2770, \underline{0.3817}]$} \underline{$(\uparrow)$} or $(\leftrightarrow)$ \\ [2pt]
        \rule{0pt}{13pt} & \parbox{4.75cm}{$\calC_{\BOM}^{\BOUR} \in [\frac{1}{2}, \underline{0.6822}]$} $(\downarrow)$ or \underline{$(\leftrightarrow)$} & \parbox{4.75cm}{$\calC_{\BOM}^{\BOUR} \in [0.2770, 0.4630]$} \underline{$(\uparrow)$} or $(\leftrightarrow)$ \\ [2pt]
        \hline
        \rule{0pt}{13pt}$\generaldistspace$ & $\calC_{\BOSP}^{\BOUP} = \frac{1}{2}$ & \parbox{4cm}{$\calC_{\BOSP}^{\BOUP} = 0$} (no guarantee) \\ [2pt]
        \rule{0pt}{13pt} & $\calC_{\BOM}^{\BOUP} = \frac{1}{2}$ & \parbox{4cm}{$\calC_{\BOM}^{\BOUP} = 0$} (no guarantee) \\ [2pt]
        \rule{0pt}{13pt} & $\calC_{\BOM}^{\BOUR} = \frac{1}{2}$ & \parbox{4cm}{$\calC_{\BOM}^{\BOUR} = 0$} (no guarantee) \\ [2pt]
        \hline
    \end{tabular}
    \caption{\label{tab:intro:simple-mechanism}
    A summary of three revenue gaps in various single-item settings: \\
    {\BayesianOptimalSequentialPricing} vs.\ {\BayesianOptimalUniformPricing} ($\calC_{\BOSP}^{\BOUP}$); \\
    {\BayesianOptimalMechanism} vs.\ {\BayesianOptimalUniformPricing} ($\calC_{\BOM}^{\BOUP}$); \\
    {\BayesianOptimalMechanism} vs.\ {\BayesianOptimalUniformReserve} ($\calC_{\BOM}^{\BOUR}$). \\
    Symbol $(\uparrow)$: the quasi-regular case $\quasiregulardistspace$ (upward-)collapses into the regular case $\regulardistspace$. \\
    Symbol $(\leftrightarrow)$: the quasi-regular case $\quasiregulardistspace$ separates from both other cases $\regulardistspace$ and $\generaldistspace$. \\
    Symbol $(\downarrow)$: the quasi-regular case $\quasiregulardistspace$ (downward-)collapses into the general case $\generaldistspace$. \\
    Symbol \underline{\quad}: an underlined \underline{bound} is conjectured to be tight; likewise for a \underline{collapse} or a \underline{separation}.}
\end{table}

First, we consider {\em asymmetric} buyers.
Previously, only for regular distributions $\regulardistspace$ the three studied revenue gaps $\calC_{\BOSP}^{\BOUP}$, $\calC_{\BOM}^{\BOUP}$, and $\calC_{\BOM}^{\BOUR}$ are known to be {\em constants} (\Cref{wisdom:BOSP_BOUP,wisdom:BOM_BOUP,wisdom:BOM_BOUR}).\footnote{For general distributions $\generaldistspace$ and $\forall n \geq 1$, they are $\calC_{\BOSP}^{\BOUP} = \calC_{\BOM}^{\BOUP} = \calC_{\BOM}^{\BOUR} = \frac{1}{n} \to 0$ \cite[Section~6]{AHNPY19}.}
Instead, we will show that they all remain constants under the relaxation to quasi-regular distributions $\quasiregulardistspace$ (\Cref{result:BOSP_BOUP,result:BOM_BOUP,result:BOM_BOUR}).
Remarkably, the generalization of $\calC_{\BOSP}^{\BOUP}$ is {\em approximation-preserving} (and we conjecture that the other two are also {\em approximation-preserving}).
More concretely:

\begin{wisdom}[{\cite[Theorem~1]{JLTX20}}]
\label{wisdom:BOSP_BOUP}
\begin{flushleft}
For asymmetric regular buyers, {\BayesianOptimalUniformPricing} achieves a tight $\calC_{\BOSP}^{\BOUP} \approx 0.3817$-approximation to {\BayesianOptimalSequentialPricing}.
% \begin{align*}
%     \calC_{\BOSP}^{\BOUP} = \calC_{\BOM}^{\BOUP} & \eqdef \left(2 + \int_{1}^{+\infty} \left(1 - e^{-\calQ}(x)\right) \cdot \d x\right)^{-1} \approx 0.3817, \\
%     \calQ(x) & \eqdef \ln\left(\frac{x^{2}}{x^{2} - 1}\right) - \frac{1}{2} \sum_{k = 1}^{\infty} \frac{1}{k^{2} \cdot x^{2k}},\ \forall x \geq 1.
% \end{align*}
\end{flushleft}
\end{wisdom}

\begin{result}[\Cref{thm:BOSP_BOUP}]
\label{result:BOSP_BOUP}
\begin{flushleft}
\Cref{wisdom:BOSP_BOUP} admits an {\em approximation-preserving} generalization, from regular buyers to quasi-regular buyers.
% For asymmetric quasi-regular buyer, $\calC_{\BOSP}^{\BOUP} \approx 0.3817$, which indicates a {\em collapse} to the ``asymmetric regular'' case (\Cref{wisdom:BOSP_BOUP}) and a {\em separation} from the ``asymmetric general'' case (i.e., no guarantee).
\end{flushleft}
\end{result}

\begin{wisdom}[{\cite[Theorem~3.1]{AHNPY19} \cite[Theorem~1]{JLTX20} \cite[Theorem~1]{JLQTX19}}]
\label{wisdom:BOM_BOUP}
\begin{flushleft}
For asymmetric regular buyers, {\BayesianOptimalUniformPricing} achieves a tight $\calC_{\BOM}^{\BOUP} \approx 0.3817$- approximation (which is exactly the same as in \Cref{wisdom:BOSP_BOUP}) to {\BayesianOptimalMechanism}.
\end{flushleft}
\end{wisdom}

\begin{result}[\Cref{thm:BOM_BOUP}]
\label{result:BOM_BOUP}
\begin{flushleft}
\Cref{wisdom:BOM_BOUP} admits a generalization, from regular buyers to quasi-regular buyers, except that the counterpart tight bound $\calC_{\BOM}^{\BOUP} \in [0.2770, 0.3817]$ remains open.
% For asymmetric quasi-regular buyer, $\calC_{\BOM}^{\BOUP} \in [0.2770, 0.3817]$, which indicates an {\em approximate collapse} to the ``asymmetric regular'' case (\Cref{wisdom:BOM_BOUP}) and a {\em separation} from the ``asymmetric general'' case (i.e., no guarantee).

\Comment{We conjecture that this generalization actually also is {\em approximation-preserving} $\calC_{\BOM}^{\BOUP} \approx 0.3817$.}

% \Comment{We conjecture that the upper bound is the tight bound $\calC_{\BOM}^{\BOUP} \approx 0.3817$. Suppose so, the {\em approximate collapse} to the ``asymmetric regular'' case (\Cref{wisdom:BOM_BOUP}) will be an {\em exact collapse}.}
\end{flushleft}
\end{result}

\begin{wisdom}[\cite{HR09,AHNPY19,JLTX20,JLQTX19}]
\label{wisdom:BOM_BOUR}
\begin{flushleft}
For asymmetric regular buyers, {\BayesianOptimalUniformReserve} achieves a tight $\calC_{\BOM}^{\BOUR} \in [0.3817, 0.4630]$-approximation (where the lower bound follows from \Cref{wisdom:BOSP_BOUP,wisdom:BOM_BOUP}) to {\BayesianOptimalMechanism}.
% For asymmetric quasi-regular buyers,  {\BayesianOptimalUniformPricing} achieves the same tight approximation to {\BayesianOptimalMechanism} as the asymmetric regular case.
\end{flushleft}
\end{wisdom}

\begin{result}[\Cref{thm:BOM_BOUR}]
\label{result:BOM_BOUR}
\begin{flushleft}
\Cref{wisdom:BOM_BOUR} admits a generalization, from regular buyers to quasi-regular buyers, except that the counterpart tight bound $\calC_{\BOM}^{\BOUP} \in [0.2770, 0.4630]$ remains open.

% For asymmetric quasi-regular buyer, $\calC_{\BOM}^{\BOUR} \in [0.2770, 0.4630]$, which indicates an {\em approximate collapse} to the ``asymmetric regular'' case (\Cref{wisdom:BOM_BOUR}) and a {\em separation} from the ``asymmetric general'' case (i.e., no guarantee).

\Comment{We conjecture that this generalization actually also is {\em approximation-preserving}, although (\Cref{wisdom:BOM_BOUR}) even the tight bound $\calC_{\BOM}^{\BOUP} \in [0.3817, 0.4630]$ for regular buyers remains open.}
\end{flushleft}
\end{result}

\noindent
To conclude, (Criterion~\ref{criterion:generalization}) quasi-regular distributions $\quasiregulardistspace$ involve ``a certain degree of regularity'' that enables generalizations of previous results; see \Cref{tab:intro:simple-mechanism} for a summary.

Next, we consider {\em symmetric} buyers.
For regular distributions $\regulardistspace$, {\BayesianOptimalUniformPricing} achieves a tight $1 - \frac{1}{e} \approx 0.6321$-approximation to every other mechanism \cite{M81,CHMS10,DFK16,har16}.
For general distributions $\generaldistspace$, its tight revenue guarantees become $\frac{6}{\pi^{2}} \approx 0.6079$ against $\BOUR$ \cite{JLTX20, JJLZ22} and $\frac{1}{2}$ against either $\BOSP$ or $\BOM$ \cite{CHMS10, DFK16, har16}; for every given $n \geq 1$, all three bounds share the same worst-case (irregular) instance $\priors = \{\prior\}^{\otimes n}$ for $\cdf(\val) = (1 - \frac{1}{\val})^{\frac{1}{n}}$.
We notice that this distribution $F$ {\em is} quasi-regular $\in \quasiregulardistspace$; (\Cref{prop:BOM_BOUP:iid}) thus regarding the revenue guarantees of {\BayesianOptimalUniformPricing}, the ``i.i.d.\ quasi-regular'' case collapses into the ``i.i.d.\ general'' case, but separates from the ``i.i.d.\ regular'' case.

\iffalse

Next, we consider {\em symmetric} buyers.
For regular distributions $\regulardistspace$, {\BayesianOptimalUniformPricing} achieves a tight $1 - (1 - \frac{1}{n})^{n} \to 1 - \frac{1}{e} \approx 0.6321$-approximation to every other mechanism \cite{M81, CHMS10, DFK16, har16}.
For general distributions $\generaldistspace$, {\BayesianOptimalUniformPricing} achieves a tight $\calC_{\BOUR}^{\BOUP} = \frac{6}{\pi^{2}} + o_{n}(1) \to \frac{6}{\pi^{2}} \approx 0.6079$-approximation to {\BayesianOptimalUniformReserve} \cite{JLTX20, JJLZ22} or a tight $\calC_{\BOSP}^{\BOUP} = \calC_{\BOM}^{\BOUP} = \frac{n}{2n - 1} \to \frac{1}{2}$-approximation to either {\BayesianOptimalSequentialPricing} or {\BayesianOptimalMechanism} \cite{CHMS10, DFK16, har16}; indeed, these three revenue gaps share the same worst-case (irregular) instance $\priors = \{\prior\}^{\otimes n}$ for $\cdf(\val) = (1 - \frac{1}{\val})^{\frac{1}{n}}$.
We observe that this distribution $F$ {\em is} quasi-regular $\in \quasiregulardistspace$ (\Cref{prop:BOM_BOUP:iid}); thus regarding all the three revenue gaps, the ``i.i.d.\ quasi-regular'' case collapses into the ``i.i.d.\ general'' case, but separates from the ``i.i.d.\ regular'' case.

\fi

% \yj{TBC}

The revenue guarantee of {\BayesianOptimalUniformReserve} against {\BOM} has slightly different circumstances.
Both mechanisms are equivalent in the ``i.i.d.\ regular'' case, but are separate in the ``i.i.d.\ general'' case (\Cref{wisdom:BOM_BOUR:iid}).
In contrast, our \Cref{result:BOM_BOUR:iid} suggests that quasi-regular distributions $\quasiregulardistspace$ (are likely to) constitute an {\em intermediate family} between $\regulardistspace$ and $\generaldistspace$.
More concretely:

\iffalse

The revenue gap $\calC_{\BOM}^{\BOUR}$ of {\BayesianOptimalMechanism} and {\BayesianOptimalUniformReserve} has slightly different circumstances.
Myerson's celebrated result shows that both mechanisms are equivalent for regular distributions $\regulardistspace$ but are separate for general distributions $\generaldistspace$ (\Cref{wisdom:BOM_BOUR:iid}).
In contrast, our \Cref{result:BOM_BOUR:iid} suggests that quasi-regular distributions $\quasiregulardistspace$ (are likely to) constitute an {\em intermediate family} between $\regulardistspace$ and $\generaldistspace$.
More concretely:

\fi

% For regular distributions $\regulardistspace$, {\BayesianOptimalUniformPricing} achieves a tight $1 - (1 - \frac{1}{n})^{n} \to 1 - \frac{1}{e} \approx 0.6321$-approximation to either {\BayesianOptimalMechanism} or {\BayesianOptimalSequentialPricing}

% This suggests that our new family $\quasiregulardistspace$ {\em is} ``as general as possible'' or, at least, is much more general than the original family $\regulardistspace$.

\begin{wisdom}[\cite{M81,CHMS10,har16}]
\label{wisdom:BOM_BOUR:iid}
\begin{flushleft}
For i.i.d.\ regular buyers or i.i.d.\ general buyers, {\BayesianOptimalUniformReserve} achieves a tight $\calC_{\BOM}^{\BOUR} = 1$ or $\calC_{\BOM}^{\BOUR} = \frac{1}{2}$-approximation  to {\BayesianOptimalMechanism}, respectively.
\end{flushleft}
\end{wisdom}

\begin{result}[\Cref{prop:BOM_BOUR:iid}]
\label{result:BOM_BOUR:iid}
\begin{flushleft}
For i.i.d.\ quasi-regular buyers, {\BayesianOptimalUniformReserve} achieves a tight $\calC_{\BOM}^{\BOUR} \in [\frac{1}{2}, 0.6822]$-approximation  to {\BayesianOptimalMechanism}.

\Comment{This shows a {\em separation} from the ``i.i.d.\ regular'' case.
%  ($\calC_{\BOM}^{\BOUR} = 1$)
We conjecture that the upper bound is tight $\calC_{\BOM}^{\BOUR} \approx 0.6822$ (or at least bounded away from $\frac{1}{2}$); suppose so, this shows a {\em separation} from the ``i.i.d.\ general'' case.}
% ($\calC_{\BOM}^{\BOUR} = \frac{1}{2}$)
\end{flushleft}
\end{result}

\noindent
To conclude, (Criterion~\ref{criterion:generalization}) quasi-regular distributions $\quasiregulardistspace$ are ``as general as possible'', or at least much more general than the original family $\regulardistspace$; see \Cref{tab:intro:simple-mechanism} for a summary.

% \yfedit{``general enough'' -- we mean, from the lens of worst-case analysis, the bounds for quasi-regular distributions can be as bad as for general distributions.}

\xhdr{The Downward-Closed Settings.}
Next, we move on to the broader downward-closed settings.
Unlike the single-item setting, (the ``downward-closed'' counterparts of) the three simple mechanisms studied before,
$\BOUR$, $\BOSP$, and $\BOUP$,
% {\BayesianOptimalUniformReserve}, {\BayesianOptimalSequentialPricing}, and {\BayesianOptimalUniformPricing},
all cannot achieve constant approximations to {\BayesianOptimalMechanism}, even when the buyers have {\em deterministic values}.\footnote{Super-constant lower-bound instances can be found from \cite[Section~4.1]{BIKK18} (by which $\calC_{\BOM}^{\BOSP} = \Omega(\frac{\log n}{\log \log n})$) and \cite[Example~5.4]{HR09} \cite[Theorem~5]{JJLZ22} (by which $\calC_{\BOM}^{\BOUR} = \Omega(\log n)$ and $\calC_{\BOM}^{\BOUP} = \Omega(\log n)$).}
The previous works \cite{HR09,DRY15} instead examined another simple mechanism called {\BayesianMonopolyReserves}, which relies on the notion of {\em monopoly reserves} $\optreserve_{i}$ (see \Cref{sec:prelim} for their definitions):
\begin{itemize}
    \item\label{mechanism:BMR}
    {\BayesianMonopolyReserves} ($\BMR$):
    This is a variant of {\VCGAuction} \cite{V61, C71, G73} and has two versions, the ``eager'' version and the ``lazy'' version.
    (Remarkably, either version coincides with {\BayesianOptimalMechanism} provided i.i.d.\ regular buyers \cite{M81,DRY15}.)

    \item {\BayesianEagerMonopolyReserves} ($\BEMR$):
    This mechanism takes three steps. (i)~Remove all buyers $i \in [n]$ with $\val_{i} < \optreserve_{i}$. (ii)~Run {\VCGAuction} on the remaining buyers to determine the winners. (iii)~Charge every winner $i$ the larger of $\optreserve_{i}$ and his/her VCG payment in step~(ii).

    \item {\BayesianLazyMonopolyReserves} ($\BLMR$):
    Everything is the same as for $\BEMR$, except for reversing step~(i) and step~(ii).
\end{itemize}
The works \cite{HR09,DRY15} obtained constant revenue guarantees of $\BEMR$ and $\BLMR$ for MHR distributions $\mhrdistspace$ (\Cref{wisdom:BOM_BMR}). (Such constant revenue guarantees are impossible for regular distributions $\regulardistspace$ \cite[Example~3.4]{HR09}.)
Rather, under the relaxation to quasi-MHR distribution $\quasimhrdistspace$, we will show that these revenue guarantees remain constant (\Cref{result:BOM_BMR}).
More concretely:

\begin{table}[t]
    \centering
    \begin{tabular}{|l||l|l|l|}
        \hline
        \rule{0pt}{13pt}$\mhrdistspace$ & $\calC_{\BOM}^{\BEMR} = \frac{1}{2}$ & $\calC_{\VCG}^{\BLMR} = \frac{1}{e} \approx 0.3679$ & $\calC_{\BOM}^{\VCGn} = \frac{1}{3}$ \\ [2pt]
        \hline
        \rule{0pt}{13pt}$\quasimhrdistspace$ & $\calC_{\BOM}^{\BEMR} = \frac{1}{e + 1} \approx 0.2689$ \quad $(\leftrightarrow)$ & $\calC_{\VCG}^{\BLMR} = \frac{1}{e + 1} \approx 0.2689$ \quad $(\leftrightarrow)$ & $\calC_{\BOM}^{\VCGn} = \frac{1}{3}$ \quad $(\uparrow)$ \\ [2pt]
        \hline
        \rule{0pt}{13pt}$\generaldistspace$ & \multicolumn{3}{c|}{$\calC_{\BOM}^{\BEMR} = \calC_{\VCG}^{\BLMR} = \calC_{\BOM}^{\VCGn} = 0$ (no guarantee)} \\ [2pt]
        \hline
    \end{tabular}
    \caption{\label{tab:intro:downward-closed}
    A summary of three revenue gaps in various downward-closed settings: \\
    {\BayesianOptimalMechanism} revenue vs.\ {\BayesianEagerMonopolyReserves} revenue ($\calC_{\BOM}^{\BEMR}$); \\
    {\VCGAuction} welfare vs.\ {\BayesianLazyMonopolyReserves} revenue ($\calC_{\VCG}^{\BLMR}$); \\
    {\BayesianOptimalMechanism} revenue vs.\ {\nDuplicateVCGAuction} revenue ($\calC_{\BOM}^{\VCGn}$). \\
    The symbols $(\uparrow)$, $(\leftrightarrow)$, $(\downarrow)$, and \underline{\quad} have similar meanings as in \Cref{tab:intro:simple-mechanism}.}
\end{table}

\begin{wisdom}[{\cite[Theorem~3.2]{HR09} \cite[Theorem~3.11]{DRY15}}]
\label{wisdom:BOM_BMR}
\begin{flushleft}
In any downward-closed setting with asymmetric MHR buyers, the revenue from {\BayesianEagerMonopolyReserves} achieves a tight $\calC_{\BOM}^{\BEMR} = \frac{1}{2}$-approximation to the {\em optimal revenue} from {\BayesianOptimalMechanism}, while the revenue from {\BayesianLazyMonopolyReserves} achieves a tight $\calC_{\VCG}^{\BLMR} = \frac{1}{e} \approx 0.3679$- approximation to the {\em optimal welfare} from {\VCGAuction} (thus at least a $\calC_{\BOM}^{\BLMR} \geq \frac{1}{e} \approx 0.3679$- approximation to the {\em optimal revenue} from {\BayesianOptimalMechanism}).
\end{flushleft}
\end{wisdom}

\begin{result}[\Cref{thm:BOM_BMR:quasi-mhr}]
\label{result:BOM_BMR}
\begin{flushleft}
In any downward-closed setting with asymmetric quasi-MHR buyers, the revenue from {\BayesianMonopolyReserves} (i.e., either the ``eager'' version or the ``lazy'' version) achieves a tight $\calC_{\BOM}^{\BEMR} = \calC_{\VCG}^{\BLMR} = \frac{1}{e + 1} \approx 0.2689$-approximation to the {\em optimal revenue} from {\BayesianOptimalMechanism} and/or the {\em optimal welfare} from {\VCGAuction}.
\end{flushleft}
\end{result}

\noindent
The crux of \Cref{wisdom:BOM_BMR,result:BOM_BMR} is the specific characteristics possessed by MHR distributions $\mhrdistspace$ (but not by {\em generic} regular distributions $\regulardistspace$). Also, quasi-MHR distributions $\quasimhrdistspace$ inherit many such characteristics, some with quantitative losses and others without (cf.\ \Cref{lem:quasi-mhr:structural results}).

\Cref{tab:intro:downward-closed} summarizes the above discussions (and a few results that we will mention later).

\iffalse

\begin{wisdom}[Folklore]
\begin{flushleft}
For a MHR buyer, the monopoly quantile $\optquant \geq \frac{1}{e} \approx 0.3679$, while the {\em optimal revenue} achieves a tight $\frac{1}{e} \approx 0.3679$-approximation to the {\em optimal welfare}.
% (The exponential distribution $\cdf(\val) = 1 - e^{-\val}$ for $\val \geq 0$ meets both equalities simultaneously.)
\end{flushleft}
\end{wisdom}

\begin{result}[\Cref{lem:quasi-mhr:structural results}]
\begin{flushleft}
For a quasi-MHR buyer, the monopoly quantile $\optquant \geq \frac{1}{e} \approx 0.3679$, while the {\em optimal revenue} achieves a tight $\frac{1}{3}$-approximation to the {\em optimal welfare}.
% (The distribution $\cdf(\val) = 1 - \frac{1}{e \val}$ for $\val \in [\frac{1}{e}, 1]$ and $\cdf(\val) = 1 - e^{-\val}$ for $\val > 1$ meets both equalities simultaneously.)
\end{flushleft}
\end{result}

\fi

\subsection*{Approximation by Prior-Independent Mechanisms \texorpdfstring{(\Cref{sec:duplicate})}{}}

Now we consider the second mentioned topic, Approximation by Prior Independent Mechanisms.
% Again, we study the {\em single-item} setting first and the broader {\em downward-closed} settings afterward.

\begin{table}[t]
    \centering
    \begin{tabular}{|l||>{\arraybackslash}p{6.55cm}|>{\arraybackslash}p{6.55cm}|}
        \hline
        \rule{0pt}{13pt} & symmetric buyers & asymmetric buyers \\ [2pt]
        \hline
        \hline
        \rule{0pt}{13pt}$\regulardistspace$ & \multicolumn{2}{c|}{\multirow{2}{*}{\rule{0pt}{14pt}$\calC_{\BOUR}^{\SPAone} = 1$}} \\ [2pt]
        \cline{1-1}
        \rule{0pt}{13pt}$\quasiregulardistspace$ & \multicolumn{2}{c|}{} \\ [2pt]
        \hline
        \rule{0pt}{13pt}$\generaldistspace$ & \multicolumn{2}{c|}{$\calC_{\BOUR}^{\SPAone} = 0$ (no guarantee)} \\ [2pt]
    %     \hline
    % \end{tabular}
    % \caption{{\oneDuplicateSecondPriceAuction} vs.\ {\BayesianOptimalUniformReserve}}
    % \vspace{.1in}
    % \begin{tabular}{|l||>{\arraybackslash}p{6.5cm}|>{\arraybackslash}p{6.5cm}|}
    %     \hline
    %     \rule{0pt}{13pt} & symmetric buyers & asymmetric buyers \\ [2pt]
        \hline
        \hline
        \rule{0pt}{13pt}$\regulardistspace$ & $\calC_{\BOM}^{\SPAone} = 1$ & $\calC_{\BOM}^{\SPAone} \in [0.3817, 0.6365]$ \\ [2pt]
        \hline
        \rule{0pt}{13pt}$\quasiregulardistspace$ & \parbox{4.75cm}{$\calC_{\BOM}^{\SPAone} \in [0.2770, \underline{0.6822}]$} $(\leftrightarrow)$ & \parbox{4.75cm}{$\calC_{\BOM}^{\SPAone} \in [0.2770, 0.6365]$} \underline{$(\uparrow)$} or $(\leftrightarrow)$ \\ [2pt]
        \hline
        \rule{0pt}{13pt}$\generaldistspace$ & \multicolumn{2}{c|}{$\calC_{\BOM}^{\SPAone} = 0$ (no guarantee)} \\ [2pt]
        \hline
    \end{tabular}
    \caption{\label{tab:intro:duplicate}
    A summary of two revenue gaps in various single-item settings: \\
    {\BayesianOptimalUniformReserve} vs.\ {\oneDuplicateSecondPriceAuction} ($\calC_{\BOUR}^{\SPAone}$); \\
    {\BayesianOptimalMechanism} vs.\ {\oneDuplicateSecondPriceAuction} ($\calC_{\BOM}^{\SPAone}$). \\
    The symbols $(\uparrow)$, $(\leftrightarrow)$, $(\downarrow)$, and \underline{\quad} have similar meanings as in \Cref{tab:intro:simple-mechanism,tab:intro:downward-closed}.}
\end{table}

\xhdr{The Single-Item Setting.}
The celebrated Bulow-Klemperer result \cite{BK96} and its extension by \cite{FLR19} (\Cref{wisdom:BK96,wisdom:FLR19}) studied a variant mechanism called {\oneDuplicateSecondPriceAuction}:
\begin{itemize}
    \item {\oneDuplicateSecondPriceAuction} ($\SPAone$):
    Run {\SecondPriceAuction} on a new $(n + 1)$-buyer scenario $\priors \times \prior_{i'}$; the new buyer $\prior_{i'} = \prior_{i}$ is duplicated from one original buyer $i \in [n]$ that is chosen {\em optimally} for revenue maximization.\footnote{Optimizing the buyer $i \in [n]$ to be duplicated also relies on the prior information $\priors$. To mitigate this assumption, see \cite[Section~1.1]{FLR19} for a detailed discussion.}
    (Moreover, provided with {\em i.i.d.}\ original buyers, there is no difference in choices of the buyer $i \in [n]$ to be duplicated.)
    
    % retains the original buyers and additionally duplicates exactly {\em one} original buyer $i \in [n]$
    % Each bidder and its duplicate have i.i.d.\ value distributions, are interchangeable within the original (downward-closed) set system, and cannot win simultaneously.
\end{itemize}

\begin{wisdom}[{\cite[Theorem~1]{BK96}}]
\label{wisdom:BK96}
\begin{flushleft}
For i.i.d.\ regular buyers, {\oneDuplicateSecondPriceAuction} surpasses {\BayesianOptimalMechanism} (without duplicates), namely $\calC_{\BOM}^{\SPAone} = 1$.

\Comment{Herein, {\BayesianOptimalMechanism} degenerates into {\BayesianOptimalUniformReserve} \cite{M81}.}
\end{flushleft}
\end{wisdom}

\begin{wisdom}[{\cite[Theorem~3.1]{FLR19}}]
\label{wisdom:FLR19}
\begin{flushleft}
For asymmetric regular buyers, {\oneDuplicateSecondPriceAuction} achieves a tight $\calC_{\BOM}^{\SPAone} \in [0.1080, 0.6931]$-approximation to {\BayesianOptimalMechanism}.
\end{flushleft}
\end{wisdom}

Rigorously speaking, there are two viewpoints of the benchmark in the Bulow-Klemperer result (\Cref{wisdom:BK96}), {\BayesianOptimalMechanism} vs.\ {\BayesianOptimalUniformReserve}.
Although they are equivalent for i.i.d.\ regular buyers \cite{M81}, ambiguity arises if we want to distinguish the ``more right'' benchmark.
For a long time, the community has preferred the former benchmark.
However, as our \Cref{result:BOUR_SPAone} suggests, the latter benchmark arguably is ``more right''.

\begin{result}[\Cref{thm:BOUR_SPAone}]
\label{result:BOUR_SPAone}
\begin{flushleft}
For asymmetric quasi-regular buyers, {\oneDuplicateSecondPriceAuction} surpasses {\BayesianOptimalUniformReserve}, namely $\calC_{\BOUR}^{\SPAone} = 1$.
\end{flushleft}
\end{result}

\noindent
So regarding the {\BayesianOptimalUniformReserve} benchmark, the Bulow-Klemperer result enjoys threefold generalizations:
(i)~from symmetric buyers to asymmetric buyers,
(ii)~from regular buyers to quasi-regular buyers, and
(iii)~most importantly, without quantitative losses $\calC_{\BOUR}^{\SPAone} = 1$.
Hence, \Cref{result:BOUR_SPAone} arguably better reveals the essence of the Bulow-Klemperer result.

% Rigorously speaking, there are two interpretations of t

% For i.i.d.\ regular buyers, {\BayesianOptimalMechanism} degenerates into {\BayesianOptimalUniformReserve} \cite{M81}. Thus, 

% Also, the ``more right'' viewpoint of {\BayesianOptimalUniformReserve}

Another benefit of our \Cref{result:BOUR_SPAone} (given \Cref{wisdom:BOM_BOUR,result:BOM_BOUR}) is an immediate improvement of the lower bound in \Cref{wisdom:FLR19}, to either $\gtrsim 0.3817$ for asymmetric regular buyers or $\gtrsim 0.2770$ for asymmetric quasi-regular buyers.
We also improve the upper bound $\leq \ln 2 \approx 0.6931$ to $\lesssim 0.6365$ by modifying the impossibility instance from \cite[Remark~5.7]{FLR19}. More concretely:
% The following \Cref{result:BOM_SPAone} summarizes both improvements.

\begin{result}[\Cref{thm:BOM_SPAone}]
\label{result:BOM_SPAone}
\begin{flushleft}
For asymmetric regular buyers or asymmetric quasi-regular buyers, {\oneDuplicateSecondPriceAuction} achieves a tight $\calC_{\BOM}^{\SPAone} \in [0.3817, 0.6365]$ or $\calC_{\BOM}^{\SPAone} \in [0.2770, 0.6365]$-approximation to {\BayesianOptimalMechanism}, respectively.

\Comment{We conjecture that both tight bounds are the same; suppose so, the ``asymmetric quasi-regular'' case collapses into the ``asymmetric regular'' case.}
% \yfedit{Clearly, an improvement of the lower bound in \Cref{wisdom:BOM_BOUR} or \Cref{result:BOM_BOUR} implies an improvement here.}
\end{flushleft}
\end{result}

\noindent
\Cref{tab:intro:duplicate} concludes the above discussions (and a few unmentioned results); see \Cref{sec:duplicate} for details.
Also, more Bulow-Klemperer-type results can be found from \cite{HR09,DRS12,DRY15,RTY20}.

\xhdr{The Downward-Closed Settings.}
Next, we move on to the broader downward-closed settings.
Now, a single duplicate buyer is insufficient for constant revenue guarantees. In contrast, the work \cite{HR09} studied another variant mechanism called {\nDuplicateVCGAuction}:
\begin{itemize}
    \item {\nDuplicateVCGAuction} ($\VCGn$):
    The new scenario duplicates {\em every} original buyer $i \in [n]$ and runs {\VCGAuction} on all the $2n$ buyers. As before, regarding allocation feasibility, every pair of original/duplicate buyers $i$ and $i'$ are interchangeable but cannot win simultaneously.
\end{itemize}
The work \cite{HR09} established a constant revenue guarantee of $\VCGn$ for MHR distributions $\mhrdistspace$ (\Cref{wisdom:BOM_VCGn}).\footnote{Such constant revenue guarantees are impossible for regular distributions $\regulardistspace$; it is not difficult to construct a counterexample, by combining the ideas behind \cite[Example~3.4 and Example~4.3]{HR09}.}
Instead, we will prove that, under the relaxation to quasi-MHR distribution $\quasimhrdistspace$, this constant revenue guarantee has no quantitative loss (\Cref{result:BOM_VCGn}).
More concretely:

\begin{wisdom}[{\cite[Theorem~4.2]{HR09}}]
\label{wisdom:BOM_VCGn}
% \label{wisdom:BOM_VCGn:regular}
\begin{flushleft}
In any downward-closed setting with asymmetric MHR buyers, {\nDuplicateVCGAuction} achieves a tight $\calC_{\BOM}^{\VCGn} = \frac{1}{3}$-approximation to {\BayesianOptimalMechanism} (without duplicates).
\end{flushleft}
\end{wisdom}

\begin{comment}

\begin{wisdom}[{\cite[Theorems~4.2 \& 4.4]{HR09}}]
\label{wisdom:BOM_VCGn}
% \label{wisdom:BOM_VCGn:regular}
\begin{flushleft}
In a matroid setting with asymmetric regular buyers (resp.\ a downward-closed setting with asymmetric MHR buyers), {\nDuplicateVCGAuction} achieves at least a $\calC_{\BOM}^{\VCGn} \geq \frac{1}{2}$-approximation\footnote{\yfedit{The best known $\frac{3}{4}$}} (resp.\ a tight $\calC_{\BOM}^{\VCGn} = \frac{1}{3}$-approximation) to {\BayesianOptimalMechanism} (without duplicates).
\end{flushleft}
\end{wisdom}

\end{comment}

\begin{result}[\Cref{thm:BOM_VCGn:quasi-mhr}]
\label{result:BOM_VCGn}
% \label{result:BOM_VCGn:quasi-mhr}
\begin{flushleft}
\Cref{wisdom:BOM_VCGn} admits an {\em approximation-preserving} generalization, from MHR buyers to quasi-MHR buyers.
\end{flushleft}
\end{result}

\Cref{tab:intro:downward-closed} summarizes all the above discussions (and our earlier results for the downward-closed settings).
As before, the crux of \Cref{wisdom:BOM_VCGn,result:BOM_VCGn} is the specific characteristics possessed by MHR distributions $\mhrdistspace$ and inherited by quasi-MHR distributions $\quasimhrdistspace$.

\subsection*{Approximation with a Single Sample \texorpdfstring{(\Cref{sec:single-sample})}{}}

Next, let us reexamine the third mentioned topic, Approximation with a Single Sample \cite{DRY15}.
Unlike the first two topics, now we concentrate solely on the {\em single-item} setting.
In particular, we will consider two kinds of {\em sample access}:\footnote{\label{footnote:sample}For comparison, we may regard different kinds of sample access as analogs of different {\em feedback models} in online learning \cite{CN06}, i.e., the ``full sample access'' is an analog of the {\em full feedback} model, while the ``first-order-statistic sample access'' is an analog of the {\em bandit-feedback} model.}
\begin{itemize}
    \item {\bf Full Sample Access:}
    Herein, {\em one sample} is one $n$-dimensional vector $\samples = (\sample_{i})_{i \in [n]} \sim \priors = \{\prior_{i}\}_{i \in [n]}$ drawn from the underlying $n$-dimensional product distribution.

    \item {\bf First-Order-Statistic Sample Access:}
    Herein, {\em one sample} retains the first-order statistic $\sample_{(1:n)} \sim \prior_{(1:n)}$ of the above full sample $\bs$, discarding all other order statistics $(\sample_{(k:n)})_{k \in [2:n]}$.
\end{itemize}
Throughout \Cref{sec:single-sample}, we will concentrate on the second sample access (mainly because it is {\em more robust to information assumptions} than the first sample access) and study ``{\em revenue maximization with a single sample}'' from the lens of a mechanism called {\IdentityPricing}:
\begin{itemize}
    \item\label{mechanism:IP}
    {\IdentityPricing} ({\IP}):
    Everything is the same as for {\BayesianOptimalUniformPricing}, except that the uniform price $\price = \sample_{(1:n)}$ is set as the single first-order-statistic sample $\sample_{(1:n)} \sim \prior_{(1:n)}$.
\end{itemize}
Indeed, the works \cite{DRY15, HMR18} have addressed the {\em single-buyer} case $n = 1$ of our model, showing the next \Cref{wisdom:DRY15-HMR18}.
(In other words, our ``first-order-statistic samples $\sample_{(1:n)}$'' model naturally generalizes the model by \cite{DRY15,HMR18} and their follow-up works.)

\begin{table}[t]
    \centering
    \begin{tabular}{|l||>{\arraybackslash}p{6.55cm}|>{\arraybackslash}p{6.55cm}|}
        \hline
        \rule{0pt}{13pt} & symmetric buyers & asymmetric buyers \\ [2pt]
        \hline
        \hline
        \rule{0pt}{13pt}$\regulardistspace$ & \multicolumn{2}{c|}{\multirow{2}{*}{\rule{0pt}{14pt}$\calC_{\BOUR}^{\IP} = \frac{1}{2}$}} \\ [2pt]
        \cline{1-1}
        \rule{0pt}{13pt}$\quasiregulardistspace$ & \multicolumn{2}{c|}{} \\ [2pt]
        \hline
        \rule{0pt}{13pt}$\generaldistspace$ & \multicolumn{2}{c|}{$\calC_{\BOUR}^{\IP} = 0$ (no guarantee)} \\ [2pt]
    %     \hline
    % \end{tabular}
    % \caption{{\oneDuplicateSecondPriceAuction} vs.\ {\BayesianOptimalUniformReserve}}
    % \vspace{.1in}
    % \begin{tabular}{|l||>{\arraybackslash}p{6.5cm}|>{\arraybackslash}p{6.5cm}|}
    %     \hline
    %     \rule{0pt}{13pt} & symmetric buyers & asymmetric buyers \\ [2pt]
        \hline
        \hline
        \rule{0pt}{13pt}$\regulardistspace$ & $\calC_{\BOM}^{\IP} = \frac{1}{2}$ & $\calC_{\BOM}^{\IP} \in [0.1908, 0.3750]$ \\ [2pt]
        \hline
        \rule{0pt}{13pt}$\quasiregulardistspace$ & \parbox{4.75cm}{$\calC_{\BOM}^{\IP} \in [0.1385, \underline{0.3978}]$} \underline{$(\leftrightarrow)$} & \parbox{4.75cm}{$\calC_{\BOM}^{\IP} \in [0.1385, 0.3750]$} \underline{$(\uparrow)$} or $(\leftrightarrow)$ \\ [2pt]
        \hline
        \rule{0pt}{13pt}$\generaldistspace$ & \multicolumn{2}{c|}{$\calC_{\BOM}^{\IP} = 0$ (no guarantee)} \\ [2pt]
        \hline
    \end{tabular}
    \caption{\label{tab:intro:single-sample}
    A summary of two revenue gaps in various single-item settings: \\
    {\BayesianOptimalUniformReserve} vs.\ {\IdentityPricing} ($\calC_{\BOUR}^{\IP}$); \\
    {\BayesianOptimalMechanism} vs.\ {\IdentityPricing} ($\calC_{\BOM}^{\IP}$). \\
    The symbols $(\uparrow)$, $(\leftrightarrow)$, $(\downarrow)$, and \underline{\quad} have similar meanings as in \Cref{tab:intro:simple-mechanism,tab:intro:downward-closed,tab:intro:duplicate}.}
\end{table}

\begin{wisdom}[{\cite[Lemma~3.6]{DRY15} \cite[Theorem~6.1]{HMR18}}]
\label{wisdom:DRY15-HMR18}
\begin{flushleft}
Given a single sample $\sample_{(1:n)} \equiv \sample$ from a single regular buyer,
% Given a single sample $\sample \sim \prior$ from a single regular buyer $\prior \in \regulardistspace$,
{\IdentityPricing} achieves a tight $\calC_{\BOM}^{\IP} = \frac{1}{2}$-approximation to {\BayesianOptimalMechanism}; this is the best possible among deterministic mechanisms.

\Comment{For a single (general) buyer, {\BayesianOptimalUniformPricing}, {\BayesianOptimalUniformReserve}, {\BayesianOptimalSequentialPricing}, and {\BayesianOptimalMechanism} all are equivalent.}
\end{flushleft}
\end{wisdom}

\noindent
Akin to the Bulow-Klemperer result (\Cref{wisdom:BK96}), ambiguity arises if we would distinguish among the four Bayesian mechanisms -- $\BOUP \leq \BOUR,\ \BOSP \leq \BOM$ -- the ``more right'' benchmark for \Cref{wisdom:DRY15-HMR18}.

Our \Cref{result:single-sample} suggests that $\BOUP$ and $\BOUR$ are the ``more right'' benchmarks; in between, $\BOUR$ is ``even more right'' since it surpasses $\BOUP$.
In such manner, \Cref{wisdom:DRY15-HMR18} enjoys threefold generalizations:
(i)~from symmetric to asymmetric buyers,
(ii)~from regular to quasi-regular buyers, and
(iii)~most importantly, without quantitative losses $\calC_{\BOUR}^{\IP} = \calC_{\BOUP}^{\IP} = \frac{1}{2}$.
Thus, arguably, our \Cref{result:single-sample} better reveals the essence of \Cref{wisdom:DRY15-HMR18}.

\begin{result}[\Cref{thm:BOUR_IP,thm:SPA_IP,thm:BOM_IP:iid:regular,thm:BOM_IP:asymmetric}]
\label{result:single-sample}
\begin{flushleft}
Given a single first-order-statistic sample $\sample_{(1:n)}$ from asymmetric quasi-regular buyers, {\IdentityPricing} achieves a tight $\calC_{\BOUR}^{\IP} = \calC_{\BOUP}^{\IP} = \frac{1}{2}$-approximation to either {\BayesianOptimalUniformReserve} or {\BayesianOptimalUniformPricing}; this is the best possible among deterministic mechanisms.

\Comment{{\IdentityPricing} (given \cite{M81}, \Cref{wisdom:BOM_BOUR}, and \Cref{result:BOM_BOUR}) also achieves constant approximations -- although may not be as good as $\frac{1}{2}$ -- to {\BayesianOptimalMechanism}; see \Cref{tab:intro:single-sample} for a summary.}
\end{flushleft}
\end{result}

\noindent
Our \Cref{result:single-sample} also shows that the first-order-statistic samples $\sample_{(1:n)}$ are the {\em primary information} to enable constant revenue guarantees, while the other order statistics $(\sample_{(k:n)})_{k \in [2:n]}$ are the {\em secondary information} to improve these constants.\footnote{In contrast, based on $(\sample_{(k:n)})_{k \in [2:n]}$ only (but not $\sample_{(1:n)}$), constant revenue guarantees clearly are impossible.}
To our knowledge, this fact was unknown before.

\Cref{tab:intro:single-sample} summarizes above discussions (and a few unmentioned results); see \Cref{sec:single-sample} for details.
Also, for results of similar flavor, cf.\ \cite{FILS15, BGMM18, RTY20, DZ20, RTY20, ABB22, FHL21}.

\subsection*{The Sample Complexity of Revenue Maximization \texorpdfstring{(\Cref{sec:sample-complexity})}{}}

% one sample refers to.\textsuperscript{\ref{footnote:sample}}

Below, we scrutinize the fourth mentioned topic, the Sample Complexity of Revenue Maximization, in the {\em single-item} setting.
Here, we address the more difficult task of {\em PAC-learnability of Bayesian mechanisms} (instead of constant revenue guarantees).
To this end, the first-order-statistic sample access $\sample_{(1:n)} \sim \prior_{(1:n)}$ is insufficient and, thus, must be replaced with the {\em full sample access} $\samples \sim \priors$. The following two mechanisms have received considerable attention in the literature:
\begin{itemize} 
    \item {\EmpiricalMyersonAuction} ({\EMA}):
    Learn a $(1 - \eps)$-approximation to {\BayesianMyersonAuction} (that fully leverages the Bayesian information) using a finite number of full samples $\samples \sim \priors$.
    
    \item {\EmpiricalUniformReserve} ({\EUR}):
    Everything is the same as for {\EMA}, except for replacing the target function with {\BayesianOptimalUniformReserve}.
\end{itemize}
The works \cite{GHZ19,JLX23} (essentially) settled the sample complexity of {\EmpiricalMyersonAuction} and {\EmpiricalUniformReserve}, respectively, for regular distributions $\regulardistspace$ and/or MHR distributions $\mhrdistspace$ (\Cref{wisdom:sample-complexity}).
Our contributions on this front are to extend their learnability results to our new families $\quasiregulardistspace$ and $\quasimhrdistspace$ (see \Cref{tab:intro:sample-complexity} for a summary).
More concretely:

\begin{table}[t]
    \centering
    \begin{tabular}{|l||>{\arraybackslash}p{4.5cm}|>{\arraybackslash}p{4.5cm}|}
        \hline
        \rule{0pt}{13pt} & {\EmpiricalMyersonAuction} & {\EmpiricalUniformReserve} \\ [2pt]
        \hline
        \hline
        \rule{0pt}{13pt}$\mhrdistspace$ & $\Tilde{\Omega}(n \eps^{-3 / 2})$ and $\Tilde{O}(n \eps^{-2})$ & $\Tilde{\Omega}(\eps^{-3 / 2})$ and $\Tilde{O}(\eps^{-2})$ \\ [2pt]
        \hline
        \rule{0pt}{13pt}$\quasimhrdistspace$ & $\Tilde{\Omega}(n \eps^{-3 / 2})$ and $\Tilde{O}(n \eps^{-2})$ & $\Tilde{\Omega}(\eps^{-3 / 2})$ and $\Tilde{O}(\eps^{-2})$ \\ [2pt]
        \hline
        \rule{0pt}{13pt}$\regulardistspace$ & $\Tilde{\Theta}(n \eps^{-3})$ & $\Tilde{\Theta}(\eps^{-3})$ \\ [2pt]
        \hline
        \rule{0pt}{13pt}$\quasiregulardistspace$ & {$\Tilde{\Omega}(n \eps^{-3})$} & $\Tilde{\Theta}(\eps^{-3})$ \\ [2pt]
        \hline
    \end{tabular}
    \caption{\label{tab:intro:sample-complexity}
    A summary of the sample complexity of {\EmpiricalMyersonAuction} ({\EMA}) and {\EmpiricalUniformReserve} ({\EUR}) in various single-item settings. Several tight bounds remain open.\ignore{The tight bounds for MHR and quasi-MHR distributions remain open.}}
\end{table}

\begin{wisdom}[{\cite[Theorem~1]{GHZ19} \cite[Theorem~1]{JLX23}}]
\label{wisdom:sample-complexity}
\begin{flushleft}
For asymmetric MHR buyers or asymmetric regular buyers, a $(1 - \eps)$-approximate {\EmpiricalMyersonAuction} can be learned using $\Tilde{O}(n \eps^{-2})$ or $\Tilde{O}(n \eps^{-3})$ many (full) samples, respectively, while a $(1 - \eps)$-approximate {\EmpiricalUniformReserve} can be learned using $\Tilde{O}(\eps^{-2})$ or $\Tilde{O}(\eps^{-3})$ many samples, respectively.

\Comment{Matching or best known lower bounds can be found from \cite{HMR18,GHZ19,JLX23}.}
\end{flushleft}
\end{wisdom}

\begin{result}[\Cref{thm:sample-complexity:EUR,thm:sample-complexity:EMA:q-MHR}]
\label{result:sample-complexity}
\begin{flushleft}
\Cref{wisdom:sample-complexity} can be generalized, {\em without quantitative losses},
% in the sample complexity,
either from MHR buyers to quasi-MHR buyers for both {\EmpiricalMyersonAuction} and {\EmpiricalUniformReserve},
or from regular buyers to quasi-regular buyers for {\EmpiricalUniformReserve}.

\Comment{We conjecture that {\EmpiricalMyersonAuction} also admits a sample-complexity-preserving generalization from regular buyers to quasi-regular buyers. Specifically, we provide evidence by showing that the analysis framework developed in \cite{GHZ19} for regular buyers can be extended to instances with a single quasi-regular buyer. See \Cref{lem:truncation:quasi-regular} and \Cref{thm:sample-complexity:EMA:q-regular} for details.}
\end{flushleft}
\end{result}

The crux of \Cref{wisdom:sample-complexity} includes ``revenue loss bounds of certain truncation schemes'' and ``tail bounds of order statistics'' for both original families $\regulardistspace$ and $\mhrdistspace$ \cite{CD15, DHP16, GHZ19, JLX23}.
Likewise, \Cref{result:sample-complexity} stems from extensions of such technical ingredients to our new families $\quasiregulardistspace$ and $\quasimhrdistspace$ (see \Cref{sec:sample-complexity:BOM} for details); which shall be of independent interest.

\iffalse

\begin{wisdom}[Informal; \cite{CD15,AHNPY19,JLX23}]
\label{wisdom:tail}
% \label{wisdom:BOM_VCGn:regular}
\begin{flushleft}
After normalization, the tail of a regular distribution $\prior \in \regulardistspace$ is no heavier than the tail of the {\em equal revenue distribution}, while the tail of an MHR distribution $\prior \in \mhrdistspace$ is no heavier than the tail of the {\em exponential distribution}.
\end{flushleft}
\end{wisdom}

\begin{result}[Informal; \Cref{lem:tail:quasi-regular,lem:tail:quasi-mhr}]
\label{result:tail}
\begin{flushleft}
\Cref{wisdom:tail} still holds, either when the regularity condition is relaxed to the quasi-regular condition or when the MHR condition is relaxed to the quasi-MHR condition.
\end{flushleft}
\end{result}

\fi

\subsection{Further Discussions}
\label{sec:intro:discussion}

In this work, we (re-)introduce the families of quasi-regular $\quasiregulardistspace$ and quasi-MHR $\quasimhrdistspace$ distributions, as natural relaxations/surrogates of the families of regular $\regulardistspace$ and MHR $\mhrdistspace$ distributions, and accordingly generalize previous results from \cite{BK96, HR09, CD15, DRY15, HR14, AHNPY19, JLTX20, JLQTX19, FLR19, GHZ19, JLX23}, with or even without quantitative losses.\footnote{In spirit, the relations of distribution families ``$\regulardistspace$ vs.\ $\quasiregulardistspace$ vs.\ $\generaldistspace$'' and ``$\mhrdistspace$ vs.\ $\quasimhrdistspace$ vs.\ $\generaldistspace$'' somehow are similar to the relations of (say) the complexity classes ``$\mathbf{P}$ vs.\ $\mathbf{NP}$-intermediate vs.\ $\mathbf{NP}$-complete''.} Further, several tight bounds in \Cref{tab:intro:simple-mechanism,tab:intro:downward-closed,tab:intro:duplicate,tab:intro:single-sample,tab:intro:sample-complexity} remain unknown, and it is interesting to close each of them.

% (try to) lay the foundation for this research direction.
Due to the space limit, we concentrate on {\em single-parameter revenue maximization} in this work.
However, our conceptual/technical contributions shall also benefit {\em multi-parameter revenue maximization} and other Bayesian models.
Indeed, we have already found several applications.

% new distribution families $\quasiregulardistspace$ and $\quasimhrdistspace$ and the related technical ingredients shall find (many) applications

\newcommand{\BayesianOptimalDeterministicMechanism}{\textsf{Bayesian Optimal Deterministic Mechanism}}
\newcommand{\BayesianOptimalRandomizedMechanism}{\textsf{Bayesian Optimal Randomized Mechanism}}
\newcommand{\BODM}{{\sf BODM}}
\newcommand{\BORM}{{\sf BORM}}

\xhdr{Unit-Demand Single-Buyer Revenue Maximization.}
This is one of the most basic models in {\em multi-parameter revenue maximization} \cite{CHK07,CHMS10,CMS15,CD15,CDPSY18,CDOPSY22,CDW21,JLQTX19,KSMSW19,JL24}, where a seller aims to sell one of $n \geq 2$ items to a {\em unit-demand} buyer.
Implications from our results for this model include the following.

First, for quasi-regular items, {\BayesianOptimalUniformPricing} achieves a tight $\calC_{\BODM}^{\BOUP} \approx 0.3817$-approximation or $\calC_{\BORM}^{\BOUP} \in [0.1385, 0.3817]$-approximation to {\BayesianOptimalDeterministicMechanism} ({\BODM}) or {\BayesianOptimalRandomizedMechanism} ({\BORM}), respectively.\footnote{$\calC_{\BODM}^{\BOUP} \approx 0.3817$ follows from a combination of \Cref{result:BOSP_BOUP}, that $\BOSP \geq \BODM$ (which is easy to verify through a coupling argument, \`{a} la \cite[Theorem~4]{CHMS10}), \cite[Lemma~1 and Lemma~6]{JLTX20}, and \cite[Main Lemma~3]{JLQTX19}. And $\calC_{\BORM}^{\BOUP} \in [0.1385, 0.3817]$ follows from a combination of \Cref{result:BOM_BOUP}, \cite[Lemma~3 and Lemma~4]{CMS15}, and that $\BORM \geq \BODM$ (by definition).}
Previously such constant revenue guarantees were known only for regular items \cite{CD15, AHNPY19, JLQTX19}.

Second, {\BayesianOptimalDeterministicMechanism} admits a $\mathbf{PTAS}$ for quasi-MHR distributions and a $\mathbf{QPTAS}$ for quasi-regular distributions.
Previously such algorithmic results were known only for MHR and regular distributions, respectively \cite{CD15}.\footnote{Roughly speaking, the crux of \cite{CD15}'s algorithms and proofs precisely is ``revenue loss bounds of certain truncation schemes'' and ``tail bounds of order statistics'' for regular/MHR distributions. Thus, the desired $\mathbf{PTAS}$ and $\mathbf{QPTAS}$ follow by replacing those technical ingredients with our counterparts for quasi-regular/-MHR distributions. For more details, the interested reader can refer to \cite[Sections~8 and 9 and Appendices~F, G, and H]{CD15}.}

% {\BayesianOptimalUniformPricing} achieves a tight $\calC_{\BODM}^{\BOUP} \approx 0.3817$-approximation to {\BayesianOptimalDeterministicMechanism} ({\BODM})\footnote{The lower bound follows from \Cref{result:BOSP_BOUP} and \cite[Proposition~1.3]{JL24} (which summarizes \cite[Appendix~B]{CHMS10}), while the upper bound follows from \cite[Lemma~1 and Lemma~6]{JLTX20} and \cite[Main Lemma~3]{JLQTX19}.} and a tight $\calC_{\BORM}^{\BOUP} \in [0.1385, 0.3817]$-approximation to {\BayesianOptimalRandomizedMechanism} ({\BORM}).\footnote{The lower bound follows from \Cref{result:BOM_BOUP} and \cite[Lemma~3 and Lemma~4]{CMS15}, while the upper bound follows as $\calC_{\BORM}^{\BOUP} \leq \calC_{\BODM}^{\BOUP}$.}

\xhdr{Revenue Maximization for Buyers with Private Budgets.} Budgeted utility is one of the most classic non-quasilinear utility models \cite{CG00,PV14,DM17,FHL19}. Compared with quasilinear buyers, the revenue-optimal mechanism for private budgeted buyers is much more complicated, even for single-buyer instances \cite{CG00,DM17}. Prior work \cite[Theorem~5.1]{FHL19} shows that posting a take-it-or-leave-it price is a $0.5$-approximation for single-buyer instances when the buyer has regular valuation distribution. The crux of their analysis is the tail bound of regular distributions. By showing that quasi-regular distributions admit the same tail bound (\Cref{prop:q-regular equivalent definition}), we can obtain an \emph{approximation-preserving} generalization of \cite[Theorem~5.1]{FHL19}, from regular valuation distribution to quasi-regular valuation distribution.

\xhdr{Cost Prophet Inequalities.}
{\em Prophet inequalities} are a central model in {\em decision making under uncertainty}.
The recent work \cite[Theorem~1.2]{LM24} investigated a {\em cost minimization} variant, and one of their main results is that the optimal online algorithm is $2$-competitive against the offline optimum for i.i.d.\ MHR distributions. (An asymptotically matching lower bound holds for the exponential distributions.)
Instead, we will establish in \Cref{sec:prophet} that this result admits an {\em approximation-preserving} generalization, from MHR distributions to quasi-MHR distributions.

\iffalse

\xhdr{Multi-Parameter Revenue Maximization.}
The equally canonical {\em multi-parameter revenue maximization} context, as before, includes the following four research topics:
\begin{itemize}
    \item {\bf Approximation by Bayesian Simple Mechanisms:}
    \cite{CHK07, CHMS10, CMS15, CD15, KW19, HN17, LY13, BILW20, HH15, Y15, RW18, CDW21, CM16, CZ17, Y18, JLQTX19, KSMSW19, DKL20, MS21, DFLSV22, COZ22, CC23, CCW23, JL24}

    \item {\bf Approximation by Prior-Independent Mechanisms:}
    \cite{RTY20, EFFTW17a, FFR18, BW19, CS21, DRWX24}

    \item {\bf Approximation with a Single Sample (or a Few Samples):} \cite{GK16, AKW19}

    \item {\bf The Sample Complexity of Revenue Maximization:}
    \cite{BSV16, MR16, CD17, BSV23, GW21, GHTZ21}
\end{itemize}
The regularity/MHR conditions also play a pivotal role in this context, so we are optimistic that our quasi-regularity/-MHR conditions will also find applications therein.
For example,

\xhdr{Beyond Bayesian Mechanism Design.}
For example, \cite{LM24}

\fi

% \newpage
\section{Preliminaries}
\label{sec:prelim}

\begin{comment}
% We study the \emph{single-item revenue-maximization problem with multiple buyers}. We first formalize the main model studied in this work -- a mechanism design setting with minimal prior information. We then introduce some basic concepts (e.g., revenue curve, regularity) that we heavily used in the rest of the paper.

\yj{Yiding, here are several examples of desmos figures.}

\color{red}
\begin{itemize}
    \item Example~1 (an irregular distribution)

    revenue curve
    
    \url{https://www.desmos.com/calculator/e4nq08w6ee}
    
    virtual value CDF (possibly illy-defined for an irregular distribution)
    
    \url{https://www.desmos.com/calculator/hoekdzh60a}

    \item Example~2 (a regular distribution obtained by ``ironing'' Example~1)

    revenue curve
    
    \url{https://www.desmos.com/calculator/f8ldmxxdor}
    
    virtual value CDF (possibly illy-defined for an irregular distribution)
    
    \url{https://www.desmos.com/calculator/oyqvcpx1dr}

    \item Example~3 (a triangular distribution)

    revenue curve
    
    \url{https://www.desmos.com/calculator/nilamxqz6w}
    
    virtual value CDF (possibly illy-defined for an irregular distribution)
    
    \url{https://www.desmos.com/calculator/k5ckd0vt9y}

    \item more revenue curves
    
    \url{https://www.desmos.com/calculator/mi8ncsylom}

    \url{https://www.desmos.com/calculator/fcgukisje4}

    \url{https://www.desmos.com/calculator/muoqgeactw}

    \url{https://www.desmos.com/calculator/ycrwazlvti}

    \item CHRF's

    \url{https://www.desmos.com/calculator/3pya2cetqp}

    \url{https://www.desmos.com/calculator/ls9npzl2ip}
\end{itemize}

\color{black}
\end{comment}

In this work, we study the \emph{single-parameter revenue maximization problem with multiple agents}. This section introduces the base model and key concepts upon which all of our results are based. Additional model components specific to individual results are described separately in the relevant sections.

\xhdr{Notations.} Denote by $\reals_+$ (resp.\ $\reals_-$) the set of all non-negative (resp.\ non-positive) real numbers. For any pair of $b \geq a \geq 0$, define the sets $[a] \triangleq \{1, 2, \dots, a\}$ and $[a:b] \triangleq\{a, a + 1, \dots, b\}$. Denote by $\indicator{\cdot}$ the indicator function. Denote by $\cdf$ and $\pdf$ the cumulative distribution function (CDF) and density distribution function (PDF) for a given distribution $\prior$. For any pair of $n \geq k \geq 1$, denote by $\prior_{(k:n)}$ the $k$-th order statistic distribution from $n$ distributions $\priors = \{\prior_i\}_{i\in[n]}$.\footnote{Following the literature, to simplify the notation, we refer to the $k$-th \emph{largest} (instead of \emph{smallest}) value as the $k$-th order statistic.}

% \subsection{Bayesian mechanism design}
% \label{sec:prelim:distribution}
\xhdr{Bayesian Mechanism Design.}
There is a seller who wants to sell a single item to a set of $n$ buyers. Given allocation $\alloc_i\in[0, 1]$ and payment $\price_i\in\reals_+$, the utility of buyer $i$ with private value $\val_i$ is $\val_i\alloc_i - \price_i$. The private value $\vali$ is drawn independently from buyer $i$'s valuation distribution (aka., prior) $\priori$. In most cases, we assume that each buyer $i$'s valuation distribution $\priori$ has its support $\supp(\priori) = [0, \hval_i]$ for some $\hval_i\in\reals_+$ or $\supp(\priori) = [0, \infty)$.\footnote{This assumption is without loss of generality, since for any distribution $\prior$ with general support $\supp(\prior)$, we can construct another distribution $\auxprior$ that satisfies our assumption and only differs with $\prior$ with zero measure.} The goal of the seller is to maximize the expected revenue, which is the summation of the expected payment from all buyers.

% Before presenting the classic mechanisms for the seller's revenue maximization task, we first introduce some important and useful concepts for studying this problem.

We consider both the \emph{single-item setting} and \emph{downward-closed settings} in this paper. In the single-item setting, an allocation profile $\allocs = \{\alloc_i\}_{i\in[n]}$ is feasible if and only if $\sum_{i\in[n]}\alloc_i \leq 1$. In the downward-closed settings, an allocation profile $\allocs$ is feasible if and only if $\allocs$ can be induced by a convex combination over a pre-specified downward-closed set system $\feasiblealloc\subseteq 2^{[n]}$.

\xhdr{Revenue Curve, Virtual Value, and Hazard Rate.} This part revisits the important concepts for the revenue analysis in the Bayesian mechanism design.

\begin{definition}[Revenue Curve and Ironed Revenue Curve \cite{BR89}]
    For a buyer with valuation distribution $\prior$, her \emph{revenue curve $\revcurve:[0, 1]\rightarrow \reals_+$} is defined as 
     $\revcurve(\quant) \triangleq \quant \cdot \cdf^{-1}(\quant)$ for each $\quant \in[0,1]$,\footnote{In particular, $\cdf^{-1}(\quant) \triangleq \sup \{\val:\cdf(\val) \leq \quant\}$ for every $\quant \in (0, 1]$, and $\revcurve(0) \triangleq \lim_{\quant \rightarrow 0^+} \revcurve(\quant)$.} and her \emph{ironed revenue curve $\ironrevcurve:[0, 1]\rightarrow \reals_+$} is the concave envelope of her revenue curve $\revcurve$.
\end{definition}
Based on the definition, both the revenue curve and ironed revenue curve are mappings from a given quantile $\quant\in[0, 1]$ to expected revenue. Specifically, the revenue curve (resp.\ ironed revenue curve) captures the maximum expected revenue of posting a deterministic (resp.\ randomized) price to sell a single item to a buyer with valuation distribution $\prior$ \cite{M81,BR89}, given ex ante allocation constraint~$\quant$.\footnote{An ex ante constraint $\quant\in[0, 1]$ requires the expected allocation to be equal to $\quant$, where the randomness considers both buyers' values and the mechanism.}

Fixing a single buyer with revenue curve $\revcurve$, we refer to $\optquant \triangleq \argmax_{\quant\in[0,1]}\revcurve(\quant)$ as the \emph{monopoly quantile}, $\optreserve \triangleq \prior^{-1}(\optquant)$ as the \emph{monopoly reserve}, and $\optrev \triangleq \revcurve(\optquant)$ as the \emph{monopoly revenue}.

\begin{definition}[]
    \label{def:virtual value hazard rate}
    For a buyer with prior $\prior$, her \emph{virtual value function $\virtualval:\supp(\prior)\rightarrow\reals$} and \emph{hazard rate function $\hazardrate:\supp(\prior)\rightarrow\reals_+$} are defined as 
    \begin{align*}
        \virtualval(\val) \triangleq 
        \val - \frac{1 - \cdf(\val)}{\pdf(\val)}
        \;\;
        \mbox{and}
        \;\;
        \hazardrate(\val) \triangleq 
        \frac{\pdf(\val)}{1 - \cdf(\val)}
    \end{align*}
    Moreover, her \emph{conditional expected virtual value function} $\cumvirtualval(\val):\supp(\prior)\rightarrow\reals_-$ and \emph{conditional expected hazard rate function} $\cehazardrate(\val):\supp(\prior)\rightarrow\reals_+$ are defined as
    \begin{align*}
        \cumvirtualval(\val) \triangleq 
        \expect[\val\primed\sim\prior]{\virtualval(\val\primed)\given \val\primed \leq \val} = -\val \cdot \frac{1-\cdf(\val)}{\cdf(\val)}
        \;\;
        \mbox{and}
        \;\;
        \cehazardrate(\val)\triangleq \int_{0}^{\val}\hazardrate(\val\primed)\cdot\d\val\primed = \frac{-\ln(1-\cdf(\val))}{\val}
    \end{align*}
\end{definition}

The relationship between the revenue curve and virtual value is well-established \cite{M81, BR89}: for any value $\val \in \supp(\prior)$, the corresponding virtual value $\virtualval(\val)$ is equal to the slope of the revenue curve at the associated quantile $\quant = 1 - \cdf(\val)$, i.e., $\virtualval(\val) = \revcurve'(\quant)$. Similarly, we define the \emph{ironed virtual value} $\ironvirtualval(\val) \triangleq \ironrevcurve'(\quant)$ for every value $\val\in\supp(\prior)$ and associated quantile $\quant = 1 - \cdf(\val)$. 

\begin{proposition}[{Revenue Equivalence \cite{M81}}]
\label{prop:revenue_equivalence}
\begin{flushleft}
Given any mechanism $\calM$, its expected revenue $\calM(\priors)$ for $n$ buyers with distributions $\priors = \{\prior_i\}_{i\in[n]}$ satisfies the following:
\begin{align*}
\calM(\priors)
= 
\sum_{i\in[n]}
\expect[\val_i\sim\prior_i]
{\virtualval_i(\val_i)\cdot\alloc_i(\val_i)}
\leq 
\sum_{i\in[n]}
\expect[\val_i\sim\prior_i]
{\ironvirtualval_i(\val_i)\cdot \alloc_i(\val_i)}
\end{align*}
where $\alloc_i(\val_i)$ is the interim allocation of buyer $i$ with value $\val_i$.\footnote{Namely, $\alloc_i(\val_i)$ is the expected allocation of buyer $i$ when his value is $\val_i$, where the randomness considers all other buyers and the mechanism.}
In particular, the equality in the second step holds if and only if ``each buyer $i \in [n]$ gets the same interim allocation $\alloc_i(\val_i\primed) = \alloc_i(\val_i\doubleprimed)$, for any two values $\val_{i}\primed$ and $\val_i\doubleprimed$ with the same ironed virtual value $\ironvirtualval_i(\val_i\primed) = \ironvirtualval_i(\val_i\doubleprimed)$.''
\end{flushleft}
\end{proposition}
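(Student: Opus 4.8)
The plan is to follow Myerson's two-move argument: first pin down $\calM(\priors)$ exactly as a sum of expected virtual surpluses, and then ``iron'' each summand working in quantile space. For the first move I would invoke the characterization of truthful single-parameter mechanisms: fixing a buyer $i$ and the reports of the other buyers, the allocation of $i$ is monotone non-decreasing in $\vali$ and the payment rule is determined up to the payment of type $0$, which individual rationality forces to be $0$. Averaging over the other buyers, the interim allocation $\alloc_i(\cdot)$ is monotone non-decreasing and the interim payment is $\price_i(\vali) = \vali\,\alloc_i(\vali) - \int_0^{\vali}\alloc_i(z)\,\dd z$. Taking expectations and swapping the order of integration in $\expect[\vali]{\int_0^{\vali}\alloc_i(z)\,\dd z} = \int_0^{\hval_i}\alloc_i(z)\,(1-\cdf_i(z))\,\dd z$ yields $\expect[\vali]{\price_i(\vali)} = \expect[\vali]{\virtualval_i(\vali)\,\alloc_i(\vali)}$ by the definition of $\virtualval_i$; summing over $i\in[n]$ gives the claimed equality, and this step uses no distributional assumption.

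For the inequality, I would pass to quantile space via $\quant = 1-\cdf_i(\vali)$, set $y_i(\quant)\triangleq\alloc_i(\cdf_i^{-1}(1-\quant))$, and use that $\virtualval_i(\vali)=\revcurve_i'(\quant)$ and $\ironvirtualval_i(\vali)=\ironrevcurve_i'(\quant)$, so that $\expect[\vali]{\virtualval_i(\vali)\alloc_i(\vali)}=\int_0^1\revcurve_i'(\quant)\,y_i(\quant)\,\dd\quant$ and similarly with $\ironrevcurve_i$ in place of $\revcurve_i$. Since $\ironrevcurve_i$ is the concave envelope of $\revcurve_i$ on $[0,1]$ we have $\ironrevcurve_i\geq\revcurve_i$ pointwise with equality at $\quant\in\{0,1\}$, and since $\alloc_i$ is monotone non-decreasing, $y_i$ is non-increasing, so $\dd(-y_i)$ is a non-negative measure on $[0,1]$. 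Integrating by parts, $\int_0^1\big(\ironrevcurve_i'(\quant)-\revcurve_i'(\quant)\big)y_i(\quant)\,\dd\quant=\big[(\ironrevcurve_i-\revcurve_i)\,y_i\big]_0^1+\int_0^1\big(\ironrevcurve_i(\quant)-\revcurve_i(\quant)\big)\,\dd(-y_i)(\quant)$; the boundary term vanishes and the remaining integral is a non-negative number, which is exactly the second step after summing over $i$.

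Finally, the equality case: the computation above shows that the second step is an equality iff $\int_0^1(\ironrevcurve_i-\revcurve_i)\,\dd(-y_i)=0$ for each $i$, and since both factors are non-negative this holds iff the measure $\dd(-y_i)$ is supported on the contact set $\{\quant:\ironrevcurve_i(\quant)=\revcurve_i(\quant)\}$ — equivalently, iff $y_i$, hence $\alloc_i$, is constant on every maximal quantile interval on which $\ironrevcurve_i$ is affine and strictly above $\revcurve_i$; these are precisely the intervals of values on which $\ironvirtualval_i$ is \emph{artificially} flattened, so translating back to value space yields the stated characterization. I expect the only real work to be the rigor underlying the integration by parts and this equality analysis when $\priori$ carries atoms or has gaps in its support — the relevant case for quasi-regular/-MHR distributions — where $\revcurve_i$ acquires kinks and jumps and $\alloc_i$ is defined only almost everywhere; here one argues with Lebesgue--Stieltjes integrals, uses that $\revcurve_i$ and $\ironrevcurve_i$ are of bounded variation and coincide at $\{0,1\}$, and observes that an atom of $\priori$ already forces $y_i$ to be constant across the corresponding quantile block and so is automatically consistent with the equality condition.
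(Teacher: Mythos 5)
The paper does not prove this proposition; it is stated as a classical fact and attributed to \cite{M81}. Your argument is the standard Myerson derivation (payment identity $\Rightarrow$ expected revenue equals expected virtual surplus, then quantile-space integration by parts against $\ironrevcurve - \revcurve$ to obtain the ironed inequality and equality characterization), and it is correct. Two small remarks. First, the direction of monotonicity in quantile space needs the substitution $\quant = 1 - \cdf_i(\vali)$ (so that $\quant$ is the sale probability), which makes $y_i$ non-increasing precisely because $\alloc_i$ is non-decreasing in value and the map $\quant \mapsto \cdf_i^{-1}(1-\quant)$ reverses orientation; this is consistent with the paper's later remark that $\virtualval_i(\vali) = \revcurve_i'(\quant)$ at $\quant = 1 - \cdf_i(\vali)$. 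Second, for the equality characterization, the statement in the proposition ("same interim allocation whenever $\ironvirtualval_i(\vali\primed) = \ironvirtualval_i(\vali\doubleprimed)$") is literally a statement about closed affine segments of $\ironrevcurve_i$, whereas your support argument pins down constancy only on the open contact-free interior; these agree once one notes that interim allocations are defined up to null sets and the endpoint discrepancy is a single quantile, so the two formulations coincide measure-theoretically. Your closing paragraph already flags exactly the bounded-variation/Lebesgue--Stieltjes care needed when $\priori$ has atoms, which is the right thing to worry about for quasi-regular and quasi-MHR distributions.
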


Given the revenue equivalence (\Cref{prop:revenue_equivalence}), the Bayesian optimal revenue can be achieved by the seller can be characterized as {\BayesianOptimalMechanism} defined as follows. 
Other mechanisms specific to individual results are described separately in the relevant sections.

\begin{definition}[{\BayesianOptimalMechanism}]
    In the \emph{\BayesianOptimalMechanism} (aka., \emph{\BayesianMyersonAuction}), let $S(\vals)$ be the feasible subset of buyers with the maximum non-negative ironed virtual surplus, i.e., $S(\vals) = \argmax_{S\in\feasiblealloc}\sum_{i\in S}\ironvirtualval(\val_i)$, then each buyer $i\in S(\vals)$
    % buyer $i\in[n]$ with the highest nonnegative ironed virtual value $\ironvirtualval(\val_i)$ 
    gets an item and pays the threshold value for keep getting allocated. 
    % If no buyer has positive ironed virtual value, the item then is not allocated. If there are multiple buyers with the highest nonnegative ironed virtual value, an arbitrary but fixed tie-breaking rule can be used.
\end{definition}

\xhdr{Distribution Families.}
As mentioned in the introduction, we study the following four important distribution families:

\begin{itemize}
    \item \textbf{{(Regularity)}} A distribution $\prior$ satisfies the \emph{regularity condition} if its virtual value function $\virtualval(\val)$ is weakly increasing in $\val\in\supp(\prior)$. 
    \item \textbf{{(MHR)}} A distribution $\prior$ satisfies the \emph{monotone hazard rate (MHR) condition} if its hazard rate function $\hazardrate(\val)$ is weakly decreasing in $\val\in\supp(\prior)$. 
    \item \textbf{{(Quasi-Regularity)}} A distribution $\prior$ satisfies the \emph{quasi-regularity condition} if its conditional expected virtual value $\cumvirtualval(\val)$ is weakly increasing in $\val\in\supp(\prior)$.
    \item \textbf{{(Quasi-MHR)}} A distribution $\prior$ satisfies the \emph{quasi-MHR condition} if its conditional expected hazard rate $\cehazardrate(\val)$ is weakly decreasing in $\val\in\supp(\prior)$.
\end{itemize}

\begin{comment}
\begin{definition}[Regularity]
\label{def:regualr}

\end{definition}

\begin{definition}[MHR]
\label{def:MHR}
A distribution $\prior$ satisfies the \emph{monotone hazard rate (MHR) condition} if its hazard rate function $\hazardrate(\val)$ is weakly decreasing in $\val\in\supp(\prior)$. 
\end{definition}

\begin{definition}[Quasi-Regularity]
\label{def:quasi-regular}
A distribution $\prior$ satisfies the \emph{quasi-regularity condition} if its conditional expected virtual value $\cumvirtualval(\val)$ is weakly increasing in $\val\in\supp(\prior)$.
\end{definition}

\begin{definition}[Quasi-MHR]
\label{def:quasi-MHR}
A distribution $\prior$ satisfies the \emph{quasi-MHR condition} if its conditional expected hazard rate $\cehazardrate(\val)$ is weakly decreasing in $\val\in\supp(\prior)$.
\end{definition}
\end{comment}

We refer to a distribution (resp.\ a buyer with valuation distribution) satisfying each distributional condition as a {regular, MHR, quasi-regular, or quasi-MHR} distribution (resp.\ buyer), respectively. We also denote by the class of regular distributions, MHR distributions, quasi-regular distributions as $\regulardistspace$, $\mhrdistspace$, $\quasiregulardistspace$, and $\quasimhrdistspace$, respectively. We characterize the distribution hierarchy of all four families in \Cref{prop:hierarchy} and provide a graphical illustration in \Cref{fig:distribution-hierarchy}.
% While both regularity and MHR are common and widely used in the mechanism design literature, the quasi-regularity is original but also crucial for our analysis. It can be verified that every MHR distribution is also regular \cite{har16} and every regular distribution is also quasi-regular.
Both regular and MHR distributions admit the following equivalent definition.

\begin{lemma}[Folklore, see \cite{har16}]
\label{lem:regular equivalent definition}
\label{lem:mhr equivalent definition}
    A distribution $\prior$ is regular if and only if its induced revenue curve $\revcurve$ is concave, i.e., $\revcurve \equiv\ironrevcurve$.
    A distribution $\prior$ is MHR if and only if the cumulative hazard rate function $\cumhazardrate(\val)\triangleq -\ln(1-\cdf(\val))$ is convex.
\end{lemma}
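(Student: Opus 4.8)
The plan is to obtain each of the two equivalences directly from the first-order identities already recorded above, so that the ``pointwise monotonicity'' in the definitions of regularity and MHR becomes a second-order (concavity/convexity) statement about an antiderivative. Concretely, I will use the relation $\virtualval(\val)=\revcurve'(\quant)$ at the associated quantile $\quant=1-\cdf(\val)$ for the first part, and the identity $\cumhazardrate(\val)=-\ln(1-\cdf(\val))=\int_0^{\val}\hazardrate(x)\,\dd x$ (valid since $\cdf(0)=0$, hence $\cumhazardrate(0)=0$) for the second.

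For the regularity characterization, I would first note that ``$\revcurve\equiv\ironrevcurve$'' is merely a restatement of ``$\revcurve$ is concave'', since a function on $[0,1]$ coincides with its concave envelope if and only if it is concave. The key observation is then that the quantile map $\val\mapsto\quant=1-\cdf(\val)$ is a weakly decreasing surjection of $\supp(\prior)$ onto $[0,1]$, under which the slope $\revcurve'$ pulls back to the virtual value $\virtualval$. Since this map is order-reversing, the condition that $\revcurve'$ be weakly decreasing in $\quant$ is equivalent to the condition that $\virtualval$ be weakly increasing in $\val$, which is exactly regularity. To avoid assuming differentiability of $\revcurve$ (e.g.\ when $\prior$ has atoms), I would run the argument through difference quotients: $\revcurve$ is concave if and only if $\bigl(\revcurve(\quant_2)-\revcurve(\quant_1)\bigr)/(\quant_2-\quant_1)$ is weakly decreasing in each of its two endpoints, and each such secant slope equals the expectation of $\virtualval$ conditioned on the value lying in the corresponding interval; monotonicity of all these conditional averages in their endpoints is equivalent to $\virtualval$ itself being weakly increasing. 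This also absorbs atoms of $\prior$ (which only create affine pieces of $\revcurve$, with $\virtualval$ read as the appropriate one-sided slope) and flat pieces of $\cdf$ (handled by taking ``the value at quantile $\quant$'' to be the right end of such a piece).

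For the MHR characterization the work is even lighter. Since $\cumhazardrate(\val)=\int_0^{\val}\hazardrate(x)\,\dd x$, the hazard rate $\hazardrate$ is (almost everywhere) the derivative of $\cumhazardrate$. Hence one direction is immediate --- if $\cumhazardrate$ is convex then $\cumhazardrate'=\hazardrate$ is weakly increasing wherever defined, i.e.\ $\prior$ is MHR --- and the other is the standard fact that the integral of a weakly increasing function is convex, so the MHR condition forces $\cumhazardrate$ to be convex. The only delicate point, parallel to the one above, is an atom of $\prior$, which makes $\cumhazardrate$ jump upward; this is consistent, since an MHR distribution can carry at most one atom, located at the supremum of its support, where such a jump sits at the right endpoint of the domain and does not contradict convexity.

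I do not expect a genuine obstacle here --- the lemma is folklore, attributed to \cite{har16} --- but if anything is ``the hard part'' it is purely the careful treatment of the non-smooth cases: checking that the quantile reparametrization is order-reversing and onto, and phrasing the concavity/convexity side via difference quotients (or one-sided derivatives) rather than via derivatives that might not exist.
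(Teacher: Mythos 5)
The paper does not actually give a proof of this lemma --- it is stated as folklore and cited to \cite{har16} --- so there is nothing in the paper to compare your argument against. That said, your proof is correct and is the standard textbook argument: for the first part, you pull the slope $\revcurve'(\quant)$ back along the order-reversing reparametrization $\quant = 1 - \cdf(\val)$ to obtain $\virtualval(\val)$, so that ``$\revcurve'$ weakly decreasing in $\quant$'' translates to ``$\virtualval$ weakly increasing in $\val$,'' and you correctly note that $\revcurve \equiv \ironrevcurve$ is just a restatement of concavity; for the second part, you read $\hazardrate$ as the derivative of $\cumhazardrate$ and invoke the equivalence between weak monotonicity of a derivative and convexity of its antiderivative. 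Your care about the non-smooth cases is also in the right place: phrasing concavity through secant slopes (which you correctly identify as conditional expectations of $\virtualval$) handles lack of classical differentiability, and both an atom (which creates an affine piece of $\revcurve$, and an upward jump of $\cumhazardrate$) and a support gap (a flat piece of $\cdf$) are compatible with your argument precisely because, under either distributional assumption, the atom can only sit at the supremum of the support, so the jump of $\cumhazardrate$ occurs at the right endpoint of the domain and does not break convexity. One small inaccuracy worth tidying if you were to write this up: in the presence of an atom the quantile map $\val \mapsto 1 - \cdf(\val)$ is \emph{not} surjective onto $[0,1]$ (the skipped interval of quantiles is exactly the affine piece of $\revcurve$); you handle the consequence correctly, but the word ``surjection'' should go.
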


\section{Distribution Hierarchy and Structural Results}
\label{sec:structural}

This section presents structural properties of regular, quasi-regular, MHR, and quasi-MHR distributions that might be of independent interest. Many of them are used for the revenue approximation analysis in the later sections.

\subsection{Basic Structural Results}
\label{subsec:basic structural results}

In this subsection, we present some basic structural properties of quasi-regular and quasi-MHR distributions.

First, we give two equivalent definitions of quasi-regularity in \Cref{prop:q-regular equivalent definition}. In this proposition, Condition~\ref{condition:quasi-regular:revenue curve} is referred to as the ``\emph{inscribed triangle property}'' in \cite[Appendix~D]{HR14}. A graphical illustration of this condition is provided in \Cref{fig:quasi-regular/mhr equivalent definition}. Both Conditions~\ref{condition:quasi-regular:revenue curve} and \ref{condition:quasi-regular:cdf} are tight in the \emph{worst-case} sense, even when moving from quasi-regular distributions $\quasiregulardistspace$ to regular distributions $\regulardistspace$. Specifically, for a fixed anchor quantile $\quant$ (resp.\ value $\val$) and its corresponding $\revcurve(\quant)$ (resp.\ $\cdf(\val)$), there exists a regular distribution (i.e., triangular distribution) $\prior \in \regulardistspace$ such that the inequality in the condition holds with equality for all quantiles $\quant\primed \in [\quant, 1]$ (resp.\ values $\val\primed \in [0, \val)$). As we will see in later sections, many approximation results allow for an approximation-preserving generalization from regular distributions $\regulardistspace$ to quasi-regular distributions $\quasiregulardistspace$, as their core relies on the inequalities in Condition~\ref{condition:quasi-regular:revenue curve} or \ref{condition:quasi-regular:cdf}.

\begin{figure}
    \centering
    \subfloat[Condition~\ref{condition:quasi-regular:revenue curve} in \Cref{prop:q-regular equivalent definition} for quasi-regular distributions $\quasiregulardistspace$.]{
    \begin{tikzpicture}[scale=1, transform shape]
\begin{axis}[
axis line style=gray,
axis lines=middle,
xlabel = $\quant\primed$,
xtick={0,  0.5, 1},
ytick={0, 0.75},
xticklabels={0, $\quant$, 1},
yticklabels={0, $\revcurve(\quant)$},
xmin=0,xmax=1.1,ymin=-0.0,ymax=1,
width=0.5\textwidth,
height=0.4\textwidth,
samples=50]

\addplot[domain=0:0.5, black!100!white, line width=0.5mm] (x, {4 * x - 5*x*x});

\addplot[domain=0.5:0.625, black!100!white, line width=0.5mm] (x, {-x + 1.25});

\addplot[domain=0.625:1, black!100!white, line width=0.5mm] (x, {x});

\addplot[dashed, line width=0.5mm] (0.5, 0.75) -- (1, 0);

\addplot[dotted, gray, line width=0.3mm] (0.5, 0.75) -- (0.5, 0);

\addplot[dotted, gray, line width=0.3mm] (0.5, 0.75) -- (0, 0.75);

\draw (0.82, 0.5) node {$\frac{1-\quant\primed}{1 - \quant}\revcurve(\quant)$};

\draw (0.78, 0.9) node {$\revcurve(\quant\primed)$};

\end{axis}

\end{tikzpicture}
    }
    ~~~~
    \subfloat[Condition~\ref{condition:quasi-MHR:cumhazardrate} in \Cref{prop:q-MHR equivalent definition} for quasi-MHR distributions $\quasimhrdistspace$.]{
    \begin{tikzpicture}[scale=1, transform shape]
\begin{axis}[
axis line style=gray,
axis lines=middle,
xlabel = $\val\primed$,
xtick={0, 2},
ytick={0, 0.69315},
xticklabels={0, $\val$},
yticklabels={0, $\cumhazardrate(\val)$},
xmin=0,xmax=3.5,ymin=-0.01,ymax=1.3,
width=0.5\textwidth,
height=0.4\textwidth,
samples=50]

\addplot[domain=0:1, black!100!white, line width=0.5mm] (x, {0});

\addplot[domain=1:2.7182818, black!100!white, line width=0.5mm] (x, {ln(x)});

\addplot[domain=2.7182818:3.5, black!100!white, line width=0.5mm] (x, {x/exp(1)});

\addplot[dashed, line width=0.5mm] (2, 0.69315) -- (0, 0);

\addplot[dotted, gray, line width=0.3mm] (2, 0.69315) -- (2, 0);

\addplot[dotted, gray, line width=0.3mm] (2, 0.69315) -- (0, 0.69315);

\draw (0.82, 0.47) node {$\frac{\val\primed}{\val}\cumhazardrate(\val)$};

\draw (3, 0.92) node {$\cumhazardrate(\val\primed)$};

\end{axis}

\end{tikzpicture}
    }
    \caption{Graphical illustration of equivalent definitions of quasi-regular distributions $\quasiregulardistspace$ and quasi-MHR distributions $\quasimhrdistspace$.} 
    \label{fig:quasi-regular/mhr equivalent definition}
\end{figure}

\begin{proposition}[Equivalent Definitions of Quasi-Regularity]
\label{prop:q-regular equivalent definition}
The following three conditions of a given distribution $\prior$ are equivalent:
\begin{enumerate}
    \item \emph{\textbf{(Quasi-Regularity)}} Distribution $\prior$ is quasi-regular, i.e., $\prior\in\quasiregulardistspace$.
    \item  \label{condition:quasi-regular:revenue curve} \emph{\textbf{(Revenue Curve Characterization)}} 
    In revenue curve $\revcurve$ induced by distribution $\prior$, for every anchor quantile $\quant \in[0, 1]$,
    \begin{align*}
    \forall \quant\primed \in [\quant, 1]:&\qquad 
    \revcurve(\quant\primed) \geq
    \frac{1-\quant\primed}{1-\quant}\revcurve(\quant) ~.
    \end{align*}
    % \intertext{
    \item \label{condition:quasi-regular:cdf}
    \emph{\textbf{(Probability Tail Bounds)}}
    For every anchor value $\val\in \supp(\prior)$,
    \begin{align*}
    ~\forall \val\primed \in [0,\val):& \qquad
        \cdf(\val\primed) \leq \frac{\val\primed}{\val\primed + \frac{1 - \cdf(\val)}{\cdf(\val)}\val}~.
    %     \qquad
    % ~\forall \val\primed \in [\val,\infty): \quad
    %     \cdf(\val\primed) \geq \frac{\val\primed}{\val\primed + \frac{1 - \cdf(\val)}{\cdf(\val)}\val}~.
    \end{align*}
\end{enumerate}
\end{proposition}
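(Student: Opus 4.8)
The plan is to establish the equivalence $1 \Leftrightarrow 2$ and then $1 \Leftrightarrow 3$, using the change of variables $\quant = 1 - \cdf(\val)$ that links quantiles to values throughout. For the first equivalence, recall from the preliminaries that $\cevirtualval(\val) = -\val \cdot \frac{1-\cdf(\val)}{\cdf(\val)}$ and, writing $\quant = 1 - \cdf(\val)$ so that $\val = \cdf^{-1}(1-\quant)$ and $\revcurve(\quant) = \quant \cdot \val$, a direct computation gives $\cevirtualval(\val) = -\frac{\val(1-\cdf(\val))}{\cdf(\val)} = -\frac{\quant \cdot \val}{1-\quant} = -\frac{\revcurve(\quant)}{1-\quant}$. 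Thus quasi-regularity --- monotonicity of $\cevirtualval(\val)$ in $\val$ --- is equivalent to $\frac{\revcurve(\quant)}{1-\quant}$ being \emph{decreasing} in $\quant$ (the sign flips because $\quant$ decreases as $\val$ increases). First I would verify this translation carefully (paying attention to probability masses, where $\cdf$ jumps and the correspondence between values and quantiles becomes many-to-one; here the convention $\cdf^{-1}(\quant) = \sup\{\val : \cdf(\val) \le \quant\}$ and left/right limits must be handled, which is the one genuinely fiddly point). Then ``$\frac{\revcurve(\quant)}{1-\quant}$ decreasing in $\quant$'' says exactly that for $\quant' \ge \quant$ we have $\frac{\revcurve(\quant')}{1-\quant'} \le \frac{\revcurve(\quant)}{1-\quant}$, i.e.\ $\revcurve(\quant') \ge \frac{1-\quant'}{1-\quant}\revcurve(\quant)$ --- wait, the inequality direction: since $1-\quant' \le 1-\quant \le 0$... no, $1-\quant' > 0$, so multiplying through by $1-\quant' \ge 0$ preserves the direction and we get Condition~\ref{condition:quasi-regular:revenue curve}. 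I should double-check this sign bookkeeping, as it is the place an error is most likely to hide.

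For the geometric reading (which also clarifies the ``inscribed triangle'' name), note that $\frac{1-\quant'}{1-\quant}\revcurve(\quant)$ is precisely the height at abscissa $\quant'$ of the straight line through the points $(\quant, \revcurve(\quant))$ and $(1, 0)$ --- the line from the anchor point on the revenue curve down to the right endpoint of the quantile axis. So Condition~\ref{condition:quasi-regular:revenue curve} states that the revenue curve lies weakly above every such chord-to-the-right-endpoint; equivalently, for every anchor quantile the triangle with vertices $(0,0)$, $(\quant, \revcurve(\quant))$, $(1,0)$ is inscribed under (below) the revenue curve on $[\quant, 1]$. This matches \cite[Appendix~D]{HR14} and confirms $\prop:q-regular equivalent definition$'s Condition~\ref{condition:quasi-regular:revenue curve} coincides with their property.

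For $2 \Leftrightarrow 3$, I would simply substitute. Fix an anchor value $\val$ with quantile $\quant = 1 - \cdf(\val)$, and a smaller value $\val' < \val$ with quantile $\quant' = 1 - \cdf(\val') \ge \quant$ (so $\val' \le \cdf^{-1}(1-\quant')$, and we can take $\val' = \cdf^{-1}(1-\quant')$ for the tight case). Then $\revcurve(\quant') = \quant' \cdot \val'$ and $\revcurve(\quant) = \quant \cdot \val$, so Condition~\ref{condition:quasi-regular:revenue curve} reads $\quant' \val' \ge \frac{1-\quant'}{1-\quant}\quant\val$, i.e.\ $(1-\cdf(\val'))\val' \ge \frac{\cdf(\val')}{\cdf(\val)}(1-\cdf(\val))\val$. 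Solving this for $\cdf(\val')$ --- collecting the $\cdf(\val')$ terms: $\val' \ge \cdf(\val')\val' + \cdf(\val')\frac{1-\cdf(\val)}{\cdf(\val)}\val = \cdf(\val')\big(\val' + \frac{1-\cdf(\val)}{\cdf(\val)}\val\big)$ --- gives exactly $\cdf(\val') \le \frac{\val'}{\val' + \frac{1-\cdf(\val)}{\cdf(\val)}\val}$, which is Condition~\ref{condition:quasi-regular:cdf}. The reverse direction runs the same algebra backwards. The only subtlety is again making sure the quantifier over $\quant' \in [\quant, 1]$ corresponds correctly to the quantifier over $\val' \in [0, \val)$ under the (possibly discontinuous, possibly flat) map $\cdf$; I expect this bookkeeping with the $\cdf^{-1}$ convention to be the main obstacle, while all the inequalities themselves are routine one-line manipulations.
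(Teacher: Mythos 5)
Your overall strategy is sound and essentially the same as the paper's (the paper derives Condition~2 by integrating $\virtualval = \revcurve'$ over quantiles; you instead use the closed form $\cevirtualval(\val) = -\val\cdot\frac{1-\cdf(\val)}{\cdf(\val)} = -\frac{\revcurve(\quant)}{1-\quant}$ directly, a slightly more economical route), and the $2\Leftrightarrow 3$ algebra is fine. But there is a real error in the sign bookkeeping that you yourself flagged as the dangerous spot and then did not actually resolve.

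You claim quasi-regularity is equivalent to $\frac{\revcurve(\quant)}{1-\quant}$ being \emph{decreasing} in $\quant$. That is wrong. There are \emph{two} sign flips: (i) $\quant = 1 - \cdf(\val)$ reverses monotonicity between $\val$ and $\quant$, and (ii) the leading minus in $\cevirtualval(\val) = -\frac{\revcurve(\quant)}{1-\quant}$ reverses it again. They cancel, so ``$\cevirtualval(\val)$ weakly increasing in $\val$'' is equivalent to ``$\frac{\revcurve(\quant)}{1-\quant}$ weakly \emph{increasing} in $\quant$.'' You applied only one of the two flips. You then compound this with a second error: from ``decreasing'' you write ``$\frac{\revcurve(\quant')}{1-\quant'}\le\frac{\revcurve(\quant)}{1-\quant}$, i.e.\ $\revcurve(\quant')\ge\frac{1-\quant'}{1-\quant}\revcurve(\quant)$,'' but multiplying the displayed $\le$ by $1-\quant'>0$ preserves $\le$ and yields the \emph{opposite} of Condition~\ref{condition:quasi-regular:revenue curve}; even your ``wait, check the direction'' aside lands on ``preserves the direction'' yet then states the conclusion with the wrong sense. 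The two mistakes cancel, so the final displayed statement matches Condition~\ref{condition:quasi-regular:revenue curve}, but the argument you actually wrote does not derive it. Once you replace ``decreasing'' by ``increasing,'' the chain $\frac{\revcurve(\quant')}{1-\quant'}\ge\frac{\revcurve(\quant)}{1-\quant}$, hence $\revcurve(\quant')\ge\frac{1-\quant'}{1-\quant}\revcurve(\quant)$ for $\quant'\ge\quant$, is correct and the rest of the proposal goes through as written.

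One small remark on the worry you raise about atoms and the $\cdf^{-1}$ convention: the paper's own proof also glides past this, and the identities $\cevirtualval(\val)=-\frac{\revcurve(\quant)}{1-\quant}$ and $\revcurve(\quant)=\val(1-\cdf(\val))$ at the matched pair $(\val,\quant)$ with $\quant=1-\cdf(\val)$ hold pointwise regardless, so no additional gap is introduced. The geometric reading as the inscribed-triangle property is correct and a nice way to confirm the match with \cite[Appendix~D]{HR14}.
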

\begin{proof}
% [Proof of \Cref{prop:q-regular equivalent definition}]
    By definition, distribution $\prior$ is quasi-regular if and only if its conditional expected virtual value function $\cumvirtualval(\val) = \expect[\val\primed\sim\prior]{\virtualval(\val\primed)\given \val\primed \leq \val}$ is weakly increasing in $\val\in\supp(\prior)$. In other words, for every value $\val\in\supp(\prior)$, and every value $\val\primed\in[0, \val)$:
    \begin{align*}
        \frac{1}{\cdf(\val\primed)}\displaystyle\int_{0}^{\val\primed} \virtualval(t) \cdot\d\cdf(t)
        \leq 
        \frac{1}{\cdf(\val)}\displaystyle\int_{0}^{\val} \virtualval(t) \cdot\d\cdf(t)
    \end{align*}
    Let $\quant \triangleq 1 - \cdf(\val)$ and $\quant\primed \triangleq 1 - \cdf(\val\primed)$. Invoking the relation between virtual value $\virtualval$ and revenue curve $\revcurve$, the above inequality is equivalent to 
    \begin{align*}
        \frac{1}{1 - \quant\primed}\displaystyle\int_{\quant\primed}^1 \revcurve'(s)\cdot \d s \leq 
        \frac{1}{1 - \quant}\displaystyle\int_{\quant}^1 \revcurve'(s)\cdot\d s
    \end{align*}
    which is equivalent to 
    \begin{align*}
    \revcurve(\quant\primed) \geq
    \frac{1-\quant\primed}{1-\quant}\revcurve(\quant) 
    \end{align*}
    where we use the assumption that $\revcurve(1) = 0$ (\Cref{sec:prelim}). Thus, quasi-regularity and Condition~\ref{condition:quasi-regular:revenue curve} are equivalent. Moreover, by definition, $\revcurve(\quant) = \val(1 - \cdf(\val))$ and $\revcurve(\quant\primed) = \val\primed(1 - \cdf(\val\primed))$. Hence, the above inequality is equivalent to
    \begin{align*}
        \val\primed(1 - \cdf(\val\primed)) \geq 
        \frac{\cdf(\val\primed)}{\cdf(\val)} \val(1 - \cdf(\val))
    \end{align*}
    After rearranging, it becomes 
    \begin{align*}
        \cdf(\val\primed) \leq \frac{\val\primed}{\val\primed + \frac{1 - \cdf(\val)}{\cdf(\val)}\val}
    \end{align*}
    which is exactly Condition~\ref{condition:quasi-regular:cdf} as desired.
\end{proof}

Similar to \Cref{prop:q-regular equivalent definition} for quasi-regularity, the quasi-MHR condition also admits two equivalent definitions summarized in \Cref{prop:q-MHR equivalent definition} below. In this proposition, a graphical illustration of Condition~\ref{condition:quasi-MHR:cumhazardrate} is provided in \Cref{fig:quasi-regular/mhr equivalent definition}. Both Conditions~\ref{condition:quasi-MHR:cumhazardrate} and \ref{condition:quasi-MHR:cdf} are tight in the \emph{worst-case} sense, even when moving from quasi-MHR distributions $\quasimhrdistspace$ to MHR distributions $\mhrdistspace$. Specifically, for a fixed anchor value $\val$ and its corresponding $\cumhazardrate(\val)$ (resp.\ $\cdf(\val)$), there exists a MHR distribution (i.e., exponential distribution) $\prior \in \mhrdistspace$ such that the inequality in the condition holds with equality for all values $\val\primed \in [0, \val)$. As we will see in later sections, many approximation results allow for an approximation-preserving generalization from MHR distributions $\mhrdistspace$ to quasi-MHR distributions $\quasimhrdistspace$, as their core relies on the inequalities in Condition~\ref{condition:quasi-MHR:cumhazardrate} or \ref{condition:quasi-MHR:cdf}.
\begin{proposition}[Equivalent Definition of Quasi-MHR]
\label{prop:q-MHR equivalent definition}
The following two conditions of a given distribution $\prior$ are equivalent:
\begin{enumerate}
    \item \textbf{\emph{(Quasi-MHR)}} Distribution $\prior$ is quasi-MHR, i.e., $\prior\in\quasimhrdistspace$.
    \item \label{condition:quasi-MHR:cumhazardrate}
    \textbf{\emph{(Cumulative Hazard Rate Characterization)}}
    In cumulative hazard rate function $\cumhazardrate$ induced by distribution $\prior$, for every value $\val \in \supp(\prior)$,
    \begin{align*}
        \forall \val\primed \in [0, \val):\qquad 
        \cumhazardrate(\val\primed) \leq \frac{\val\primed}{\val}\cumhazardrate(\val)~,
    \end{align*}
    \item \label{condition:quasi-MHR:cdf}
    \textbf{\emph{(Probability Tail Bounds)}}
    For every value $\val\in \supp(\prior)$,
    \begin{align*}
    \forall \val\primed \in [0,\val):& \qquad
    \cdf(\val\primed)
    \leq 1 - \left(1 - \cdf(\val)\right)^{\frac{\val\primed}{\val}}~.
    % \\
    % \forall \val\primed \in [\val,\infty):& \qquad
    % \cdf(\val\primed)
    % \geq 1 - \left(1 - \cdf(\val)\right)^{\frac{\val\primed}{\val}}
    \end{align*}
\end{enumerate}
\end{proposition}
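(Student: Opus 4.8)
The plan is to establish the cycle of equivalences $(1)\Leftrightarrow(2)\Leftrightarrow(3)$ by unfolding the definition of quasi-MHR and performing elementary, reversible rearrangements, exactly in the spirit of the proof of \Cref{prop:q-regular equivalent definition}. Throughout I will use the identity $\cumhazardrate(\val)=-\ln(1-\cdf(\val))$ together with the formula $\cehazardrate(\val)=\tfrac1\val\int_0^\val\hazardrate(x)\,\dd x=\tfrac{\cumhazardrate(\val)}{\val}$, and the normalization $\cdf(0)=0$ (hence $\cumhazardrate(0)=0$), which is without loss of generality under the support assumptions of \Cref{sec:prelim}.

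For $(1)\Leftrightarrow(2)$: by definition $\prior$ is quasi-MHR iff $\cehazardrate(\val)=\cumhazardrate(\val)/\val$ is weakly increasing on $\supp(\prior)$, i.e.\ for every $\val\in\supp(\prior)$ and every $\val\primed\in(0,\val)$ one has $\tfrac{\cumhazardrate(\val\primed)}{\val\primed}\le\tfrac{\cumhazardrate(\val)}{\val}$, while for $\val\primed=0$ the inequality $\cumhazardrate(0)=0\le\tfrac{0}{\val}\cumhazardrate(\val)$ holds trivially. Multiplying the monotonicity inequality through by $\val\primed>0$ turns it into $\cumhazardrate(\val\primed)\le\tfrac{\val\primed}{\val}\cumhazardrate(\val)$, which is precisely Condition~\ref{condition:quasi-MHR:cumhazardrate}; since every step is reversible, the two conditions are equivalent.

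For $(2)\Leftrightarrow(3)$: substitute $1-\cdf(\val)=e^{-\cumhazardrate(\val)}$ (equivalently $\cumhazardrate(\val)=-\ln(1-\cdf(\val))$) into Condition~\ref{condition:quasi-MHR:cumhazardrate}. Because $t\mapsto e^{-t}$ is strictly decreasing, the inequality $\cumhazardrate(\val\primed)\le\tfrac{\val\primed}{\val}\cumhazardrate(\val)$ is equivalent to $1-\cdf(\val\primed)=e^{-\cumhazardrate(\val\primed)}\ge e^{-\frac{\val\primed}{\val}\cumhazardrate(\val)}=\bigl(e^{-\cumhazardrate(\val)}\bigr)^{\val\primed/\val}=\bigl(1-\cdf(\val)\bigr)^{\val\primed/\val}$, which upon rearranging reads $\cdf(\val\primed)\le 1-\bigl(1-\cdf(\val)\bigr)^{\val\primed/\val}$, i.e.\ Condition~\ref{condition:quasi-MHR:cdf}. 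Again each implication is an equivalence, closing the cycle.

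There is no genuine obstacle here — the argument is a short sequence of monotone substitutions — so the only care needed is at the boundary of the support: at $\val\primed=0$ all three conditions are vacuous, and at the support supremum $\hval$ (where $\cdf(\hval)=1$ and $\cumhazardrate(\hval)=+\infty$) the inequalities are read in the natural limiting sense, consistent with the standing assumptions. To match the remark following the statement, one may additionally check tightness with the exponential distribution $\cdf(\val)=1-e^{-\lambda\val}$, for which $\cumhazardrate(\val\primed)=\tfrac{\val\primed}{\val}\cumhazardrate(\val)$ and $\cdf(\val\primed)=1-(1-\cdf(\val))^{\val\primed/\val}$ hold with equality for all $\val\primed\in[0,\val)$.
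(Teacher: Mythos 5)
Your proof is correct and follows essentially the same route as the paper's: unfold the definition of $\cehazardrate(\val)=\cumhazardrate(\val)/\val$ being weakly increasing, multiply through by $\val\primed$ to get Condition~\ref{condition:quasi-MHR:cumhazardrate}, then substitute $\cumhazardrate(\val)=-\ln(1-\cdf(\val))$ and exponentiate to get Condition~\ref{condition:quasi-MHR:cdf}. Your version is slightly more careful about the boundary case $\val\primed=0$ and about reversibility of each step, but these are presentational refinements rather than a different argument.
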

\begin{proof}
    By definition, distribution $\prior$ is quasi-regular if and only if its conditional expected hazard rate function $\cehazardrate(\val) = \frac{\cumhazardrate(\val)}{\val}$ is weakly increasing in $\val\in\supp(\prior)$. 
    Rearranging the terms shows the equivalence to Condition~\ref{condition:quasi-MHR:cumhazardrate} immediately. Moreover,
    recall that $\cumhazardrate(\val) = -\ln(1 - \cdf(\val)$. Hence,
    for every value $\val\in\supp(\prior)$, and every value $\val\primed\in[0, \val)$:
    \begin{align*}
        \frac{-\ln(1 - \cdf(\val\primed))}{\val\primed}
        \leq 
        \frac{-\ln(1 - \cdf(\val))}{\val}
    \end{align*}
    After rearranging the term, we obtain 
    \begin{align*}
    \cdf(\val\primed)
    \leq 1 - \left(1 - \cdf(\val)\right)^{\frac{\val\primed}{\val}}
    \end{align*}
    which finishes the proof.
\end{proof}

Equipped with \Cref{prop:q-MHR equivalent definition}, we establish three technical properties for quasi-MHR distributions, summarized in the following lemma. The analogy properties for MHR distributions is known in \cite{HR09,AGM09}. These properties are crucial for analyzing various revenue approximation results in the subsequent sections.

\begin{lemma}
\label{lem:quasi-mhr:structural results}
    For any quasi-MHR distribution $\prior\in\quasimhrdistspace$, the following three properties holds:
    \begin{enumerate}
        \item \label{property:quasi-mhr:monopoly quantile lower bound}\emph{\textbf{(Monopoly Quantile Lower Bound)}} Its monopoly quantile $\optquant$ is at least $\frac{1}{e}$. Moreover, for the exponential distribution (which is also MHR), its monopoly quantile is exactly $\frac{1}{e}$.
        \item \label{property:quasi-mhr:revenue approximate welfare}\emph{\textbf{(Revenue Approximates Welfare via Price Posting)}} For any threshold $\threshold\in\reals_+$, the expected revenue by posting a price weakly higher than $\threshold$ is at least a $\frac{1}{3}$-approximation to the expected welfare of posting price $\threshold$, i.e.,
        \begin{align*}
            \max_{\price \geq \threshold}~\price\cdot (1 - \cdf(\price)) \geq \frac{1}{3}\cdot \expect{\val\cdot \indicator{\val \geq \threshold}}
        \end{align*}
        If threshold $\threshold$ is weakly higher than the monopoly reserve $\optreserve$, the expected revenue by posting price $\threshold$ directly is at least a $\frac{1}{e+1}$-approximation to the expected welfare of posting price $\threshold$, i.e.,
        \begin{align*}
            \threshold \cdot (1 - \cdf(\threshold)) 
            \geq 
            \frac{1}{e + 1} \cdot \expect{\val\cdot \indicator{\val \geq \threshold}}
        \end{align*}
        Moreover, there exists quasi-regular distributions such that the approximation factors of both $\frac{1}{3}$ and $\frac{1}{e + 1}$ are tight.
        \item \label{property:quasi-mhr:revenue approximate welfare duplicating} \emph{\textbf{(Revenue Approximates Welfare via Buyer Duplicating)}} For any threshold $\threshold\in\reals_+$, the expected conditional virtual surplus of the first order statistic $\val_{(1:2)} = \max\{\val_1,\val_2\}$ from two i.i.d.\ samples $\val_1,\val_2$ from distribution $\prior$ is at least a $\frac{1}{3}$-approximation to the expected conditional social welfare of first order statistic $\val_{(1:2)}$, i.e.,
        \begin{align*}
            \expect[\val_{(1:2)}\sim \prior_{(1:2)}]{\virtualval(\val_{(1:2)})\given \val_{(1:2)} \geq \threshold}
            \geq 
            \frac{1}{3}\cdot \expect[\val_{(1:2)}\sim \prior_{(1:2)}]{\val_{(1:2)}\given \val_{(1:2)} \geq \threshold}
        \end{align*}
        Moreover, the approximation factor of $\frac{1}{3}$ is tight with threshold $\threshold = 0$ for the exponential distribution (which is also MHR).
    \end{enumerate}
\end{lemma}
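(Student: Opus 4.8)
\emph{Part~1 (monopoly quantile).} I would argue by contradiction: suppose $\optquant<1/e$. Anchoring the probability tail bound of \Cref{prop:q-MHR equivalent definition} at the monopoly reserve $\optreserve$ gives $1-\cdf(\val)\ge(\optquant)^{\val/\optreserve}$ for all $\val\le\optreserve$, so posting price $\val$ earns at least $\val\cdot(\optquant)^{\val/\optreserve}$. As $\optquant<1/e$, the maximizer of the right-hand side over $(0,\optreserve]$ is interior, at $\val=\optreserve/\ln(1/\optquant)$, with value $\optreserve/\bigl(e\ln(1/\optquant)\bigr)$. Since this cannot exceed the monopoly revenue $\optrev=\optreserve\optquant$, we get $e\,\optquant\ln(1/\optquant)\ge1$, contradicting $q\ln(1/q)\le1/e$ on $(0,1)$ (with equality only at $q=1/e$). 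The exponential distribution attains $\optquant=1/e$, giving tightness.

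\emph{Part~2 (revenue vs.\ welfare via pricing).} Write $g:=1-\cdf$ and decompose $\expect{\val\indicator{\val\ge\threshold}}=\threshold g(\threshold)+\int_\threshold^\infty g(\val)\,\d\val$. The plan is to control the tail integral by using, on each sub-range, whichever of two estimates is sharper: the \emph{monopoly} bound $g(\val)\le\optrev/\val$ (valid for every $\val$), and the \emph{quasi-MHR} bound $g(\val)\le g(\val_0)^{\val/\val_0}$ for $\val\ge\val_0$ (\Cref{prop:q-MHR equivalent definition}, any anchor $\val_0$). Let $\hat{\val}$ be the value with $g(\hat{\val})=1/e$ (take $\hat{\val}=\sup\supp(\prior)$ if none exists). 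Two structural facts do the heavy lifting: (i) quasi-MHR yields $\val\hazardrate(\val)=\val\cumhazardrate'(\val)\ge\cumhazardrate(\val)$, which is $\ge1$ on $[\hat{\val},\infty)$; hence $\val(1-\cdf(\val))$ is non-increasing there and $\optreserve\le\hat{\val}$; (ii) $\hat{\val}/e=\hat{\val}\,g(\hat{\val})\le\optrev$, so $\hat{\val}\le e\,\optrev$. Put $M:=\max_{\price\ge\threshold}\price(1-\cdf(\price))$. If $\threshold\ge\hat{\val}$, fact (i) gives $M=\threshold g(\threshold)$ and the quasi-MHR bound anchored at $\threshold$ gives $\int_\threshold^\infty g\le\threshold g(\threshold)/\cumhazardrate(\threshold)\le\threshold g(\threshold)$, so the welfare is $\le2M$. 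If $\threshold<\hat{\val}$, split at $\hat{\val}$: the quasi-MHR bound anchored at $\hat{\val}$ gives $\int_{\hat{\val}}^\infty g\le\hat{\val}/e\le M$, while $\threshold g(\threshold)+\int_\threshold^{\hat{\val}}g\le2M$ follows from $g\le\min(1,\optrev/\val)$ together with (ii) and the fact $M=\optrev$ whenever $\threshold\le\optreserve$. Either way the welfare is $\le3M$: this is the $\tfrac13$-bound. For the sharper $\tfrac1{e+1}$-bound when $\threshold\ge\optreserve$, it suffices to prove $\int_\threshold^\infty g\le e\cdot\threshold g(\threshold)$: when $\cumhazardrate(\threshold)\ge1/e$ this is immediate from the quasi-MHR bound anchored at $\threshold$; the complementary regime $\cumhazardrate(\threshold)<1/e$ forces $\threshold$ very close to $\optreserve$ (since $\threshold g(\threshold)\le\optrev$ but $g(\threshold)>e^{-1/e}$), and is closed by again falling back on $g\le\optrev/\val$ with (i)--(ii). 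Tightness of $\tfrac13$ and $\tfrac1{e+1}$ is witnessed by, respectively, an equal-revenue plateau $g(\val)=\optrev/\val$ on $[\optrev,e\optrev]$ capped by an exponential tail, and a point mass followed by $g(\val)=e^{-a\val/\threshold}$ with $a\downarrow1/e$.

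\emph{Part~3 (revenue vs.\ welfare via duplication).} Since $\cdf_{(1:2)}=\cdf^2$ and $\virtualval(\val)=\val-(1-\cdf(\val))/\pdf(\val)$, a one-line computation gives
\[
\expect[\val_{(1:2)}\sim\prior_{(1:2)}]{\virtualval(\val_{(1:2)})\indicator{\val_{(1:2)}\ge\threshold}}=\expect[\val_{(1:2)}\sim\prior_{(1:2)}]{\val_{(1:2)}\indicator{\val_{(1:2)}\ge\threshold}}-2\!\int_\threshold^\infty\!\cdf(\val)\bigl(1-\cdf(\val)\bigr)\d\val ,
\]
and $\expect[\val_{(1:2)}\sim\prior_{(1:2)}]{\val_{(1:2)}\indicator{\val_{(1:2)}\ge\threshold}}=\threshold\bigl(1-\cdf(\threshold)^2\bigr)+\int_\threshold^\infty\bigl(1-\cdf(\val)^2\bigr)\d\val$. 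Using the identity $1-F^2-3F(1-F)=(1-F)(1-2F)$, the claimed inequality becomes $\threshold\bigl(1-\cdf(\threshold)^2\bigr)+\int_\threshold^\infty(1-\cdf(\val))(1-2\cdf(\val))\,\d\val\ge0$, i.e.\ with $\beta:=1-\cdf(\threshold)$ and $g:=1-\cdf$,
\[
\threshold\beta(2-\beta)+2\!\int_\threshold^\infty\! g^2-\!\int_\threshold^\infty\! g\ \ge\ 0 .
\]
The quasi-MHR bound $g(\val)\le\beta^{\val/\threshold}$ (\Cref{prop:q-MHR equivalent definition}) is attained by the exponential tail $g(\val)=\beta^{\val/\threshold}$, which is moreover the slowest-decaying admissible survival function with $g(\threshold)=\beta$, so it minimizes the left-hand side; there it equals $\threshold\beta\bigl[(2-\beta)-(1-\beta)/(-\ln\beta)\bigr]$, which is $\ge0$ iff $-\ln\beta\ge(1-\beta)/(2-\beta)$, an elementary one-variable inequality (equality only at $\beta=1$). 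For the exponential distribution at $\threshold=0$ the displayed quantity is exactly $0$, so $\tfrac13$ is tight.

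\emph{Main obstacle.} The delicate point is Part~2: extracting the \emph{exact} constants $\tfrac13$ and $\tfrac1{e+1}$. Neither the monopoly bound $\val(1-\cdf(\val))\le\optrev$ nor the quasi-MHR tail bound alone does it --- the former is lossy for concentrated (large-$\cumhazardrate$) distributions, the latter for heavy-tailed ones --- so one must interleave them and carry out a careful case analysis on the location of $\threshold$ relative to $\optreserve$ and $\hat{\val}$, with the extremal instances above pinning down the constants. Parts~1 and~3, by contrast, each collapse to a single one-variable inequality once the right anchor for \Cref{prop:q-MHR equivalent definition} is chosen.
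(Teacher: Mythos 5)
Your argument for the monopoly-quantile bound is correct, and it is a genuinely more self-contained route than the paper's. The paper constructs the auxiliary exponential $\auxcdf(\val) = 1 - e^{-\cehazardrate(\optreserve)\cdot\val}$, shows it stochastically dominates $\prior$ on $[0,\optreserve]$ with equality at $\optreserve$, and then invokes \cite[Lemma~1]{AGM09} for the resulting MHR distribution. You instead anchor the tail bound of \Cref{prop:q-MHR equivalent definition} at $\optreserve$, compute the interior maximizer of $\val\cdot\optquant^{\val/\optreserve}$ in closed form, and close with the elementary inequality $q\ln(1/q)\le 1/e$. Both are sound; yours avoids the external citation.

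\textbf{Part 2.} Here the decomposition is where you diverge most from the paper, and there is a gap. The paper first observes that the distribution conditioned on $\{\val \ge \threshold\}$ is \emph{again} quasi-MHR (because $\tilde\cumhazardrate(\val) = \cumhazardrate(\val) - \cumhazardrate(\threshold)$ and $\tilde\cumhazardrate(\val)/\val$ inherits monotonicity), which collapses the whole $\tfrac13$-bound to a single statement: $\expect{\val} \le 3\,\optrev$. That statement is then proved by one clean stochastic-dominance comparison against the explicit distribution with $1-\auxcdf(\val) = 1/\val$ on $[1,e]$ and $e^{-\val/e}$ after, whose mean is exactly $3$. Your proposed case split on $\threshold$ versus $\hat\val$ and $\optreserve$ does not cover the regime $\optreserve < \threshold < \hat\val$, which is nonvacuous (e.g.\ $1-\cdf(\val) = e^{a(1-\val)}$ on $[1,\infty)$ with $a<1$ has $\optreserve = 1/a < 1 + 1/a = \hat\val$). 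In that regime you invoke ``$M = \optrev$ whenever $\threshold\le\optreserve$'' which is exactly the case you are not in, and the substitute bound $\int_\threshold^{\hat\val} g \le \optrev\ln(\hat\val/\threshold) \le \optrev$ is not obviously $\le M$ once $\threshold > \optreserve$. The simplest repair is to adopt the paper's conditioning reduction first, after which you are always in the $\threshold \le \optreserve$ subcase. The $\tfrac1{e+1}$ part of your sketch is likewise very rough in its second regime; the paper's argument there needs a genuinely two-case, distribution-by-distribution comparison, not a one-line fallback.

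\textbf{Part 3.} Your rearrangement to $\threshold\beta(2-\beta) + 2\int_\threshold^\infty g^2 - \int_\threshold^\infty g \ge 0$ matches the paper's equation~\eqref{eq:quasi-mhr:revenue approxmation critical inequality}, and your closed-form evaluation on the exponential and the resulting one-variable inequality $-\ln\beta \ge (1-\beta)/(2-\beta)$ are both correct. But the step ``the exponential is the slowest-decaying admissible survival function, so it minimizes the left-hand side'' is not a valid justification: the integrand $2g^2 - g$ is \emph{not monotone} in $g$ (it decreases for $g < 1/4$ and increases for $g > 1/4$), so pointwise maximality of the exponential tail does not translate into extremality of the integral. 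This is precisely why the paper does \emph{not} compare against the exponential anchored at $\threshold$ in general: it splits on whether $\cdf(\threshold) \ge 3/4$ (i.e.\ $\beta \le 1/4$), and when $\cdf(\threshold) < 3/4$ it anchors the auxiliary exponential at the quantile where $\cdf = 3/4$ (so $\auxcdf(\threshold) \ge \cdf(\threshold)$, not equality), then uses that $L(x) = (1-x)(2x-1)$ is unimodal with peak at $x = 3/4$ to get the pointwise comparison $L(\cdf) \le L(\auxcdf)$ on \emph{both} sides of that quantile, before finally invoking \cite[Lemma~4.1]{HR09} for the MHR case. Your extremality claim may well be true (it passes local perturbation checks), but as written it is an assertion, not a proof; you would need either a variational argument respecting the quasi-MHR constraint on both monotone pieces of the integrand, or simply to follow the paper's $3/4$-anchored comparison.
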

By setting threshold $\threshold = 0$, Property~\ref{property:quasi-mhr:revenue approximate welfare} in \Cref{lem:quasi-mhr:structural results} guarantees that the optimal revenue is a $\frac{1}{3}$-approximation to the expected welfare for quasi-MHR distributions $\quasimhrdistspace$. Note that for MHR distributions $\mhrdistspace$, the tight approximation between the optimal revenue and the expected welfare is $\frac{1}{e}$, whose upper bound is achieved by the exponential distribution. Moreover, due to the concavity of the revenue curve for MHR distributions, if threshold $\threshold \geq \optreserve$, posting price equal to threshold $\threshold$ maximizes the expected revenue among all prices $\price \geq \threshold$, and thus left-hand sides of two inequalities in Property~\ref{property:quasi-mhr:revenue approximate welfare} become equivalent.
In this sense, this property also suggests a strict but bounded separation between MHR distributions $\mhrdistspace$ and quasi-MHR distributions $\quasimhrdistspace$.

\begin{proof}[Proof of \Cref{lem:quasi-mhr:structural results}]
    We prove three properties in the lemma statement sequentially.

    \xhdr{Property~\ref{property:quasi-mhr:monopoly quantile lower bound}- Monopoly Quantile Lower Bound:} We prove this property by introducing an auxiliary distribution $\auxprior$ as follows. Let $\optreserve$ be the optimal reserve for the original quasi-MHR distribution $\prior$. Let auxiliary distribution~$\auxprior$ have cumulative density function $\auxcdf(\val) = 1 - e^{-\cehazardrate(\optreserve)\cdot \val}$ with support $\supp(\auxprior) = [0, \infty)$. 
    Note that
    \begin{align*}
        \auxcdf(\optreserve) = 1 - e^{-\cehazardrate(\optreserve)\cdot \optreserve}
        =
        1 - e^{\frac{\ln(1 - \cdf(\optreserve))}{{\optreserve}}\cdot \optreserve}
        =
        \cdf(\optreserve)
    \end{align*}
    where the first equality holds due to the construction of auxiliary distribution $\auxprior$, and the second equality holds due to the definition of conditional hazard rate function $\cehazardrate(\cdot)$. Similarly,
    for every value $\val\leq\optreserve$,
    \begin{align*}
        \auxcdf(\val) = 1 - e^{-\cehazardrate(\optreserve)\cdot \val}
        \geq 
        1 - e^{-\cehazardrate(\val)\cdot \val}
        =
        1 - e^{\frac{\ln(1 - \cdf(\val))}{{\val}}\cdot \val}
        =
        \cdf(\val)
    \end{align*}
    where the inequality holds since distribution $\prior$ is quasi-MHR and $\val\leq \optreserve$.  Putting two pieces together, we know that for every price $\price \in[0,\optreserve]$, 
    \begin{align*}
        \price(1 - \cdf(\price)) \geq \price(1 - \auxcdf(\price))
    \end{align*}
    and the inequality is binding at price $\price = \optreserve$. Hence, the monopoly reserve (resp.\ monopoly quantile) under auxiliary distribution $\auxprior$ is weakly larger (resp.\ weakly smaller) than the one under distribution $\prior$. Hence, it suffices to verify that the monopoly quantile under auxiliary MHR distribution $\auxprior$ is at least $\frac{1}{e}$, which has been already shown in \cite[Lemma~1]{AGM09}.

    \xhdr{Property~\ref{property:quasi-mhr:revenue approximate welfare}- Revenue Approximates Welfare via Price Posting:}
    Fix an arbitrary quasi-MHR distribution $\prior\in\quasimhrdistspace$. Without loss of generality, we normalize the monopoly revenue of distribution $\prior$ to be 1. 

    \xhdr{Property~\ref{property:quasi-mhr:revenue approximate welfare}.i ($\frac{1}{3}$-Approximation).}
    We first prove the $\frac{1}{3}$-approximation. Note that the left- and right-hand side of the inequality can be interpreted as the conditional optimal expected revenue and conditional expected welfare (conditioning on value $\val \geq \threshold$). By the definition, this conditional distribution is also quasi-MHR. Hence, it suffices to verify the $\frac{1}{3}$-approximation inequality for threshold $\threshold = 0$. Under our assumption that the monopoly revenue is 1, we only need to show the expected welfare $\expect[\val\sim\prior]{\val}$ is at most 3.

    We upper bound the expected welfare $\expect[\val\sim\prior]{\val}$ using the following argument. Since the monopoly revenue is 1, for every value $\val \leq e$, we have
    \begin{align*}
        1 - \cdf(\val) \leq \frac{1}{\val}
    \end{align*}
    Moreover, for every value $\val \geq e$, we have
    \begin{align*}
        1 - \cdf(\val)  = e^{- \cumhazardrate(\val)}
        \leq e^{- \frac{\cumhazardrate(e)}{e}\val} 
        \leq 
        e^{-\frac{1}{e}\val}
    \end{align*}
    where the first inequality holds since distribution $\prior$ is quasi-MHR, and the second distribution holds since $\cumhazardrate(e) = -\ln(1 - \cdf(e)) \geq 1$ as we argued above.

    Let auxiliary distribution $\auxprior$ have support $\supp(\auxprior) = [1, \infty)$, and cumulative density function $\auxcdf(\val) = 1 - \frac{1}{\val}$ for $\val\in [1, e]$, $\auxcdf(\val) = 1 - e^{-\frac{\val}{e}}$ for $\val \in[e, \infty)$. By construction, auxiliary distribution $\auxprior$ first-order stochastically dominants distribution $\prior$. Thus, the expected welfare $\expect[\val\sim\prior]{\val}$ can be upper bounded by $
        \expect[\val\sim\prior]{\val} \leq 
        \expect[\val\sim\auxprior]{\val}
        = 3$,
    where the equality holds due to the construction of auxiliary distribution $\auxprior$. Note that the auxiliary distribution $\auxprior$ is also quasi-MHR and shows the tightness of the $\frac{1}{3}$-approximation. 

    \xhdr{Property~\ref{property:quasi-mhr:revenue approximate welfare}.ii ($\frac{1}{e+1}$-Approximation).}
    Now we assume threshold $\threshold\geq \optreserve$ and prove the $\frac{1}{e + 1}$-approximation. Here we consider two cases separately.

    First, we verify the case where $\cumhazardrate(\threshold) \geq \frac{\threshold}{e}$. In this case, define auxiliary distribution $\auxprior$ with cumulative density function $\auxcdf(\val) = \cdf(\val)$ for $\val \in [0, \threshold)$ and $\auxcdf(\val) = 1 - e^{-\frac{\cumhazardrate(\threshold)}{\threshold}\val}$ for $\val \in [\threshold, \infty)$. As a sanity check, for every value $\val \geq \threshold$, we have
    \begin{align*}
        \cdf(\val) = 1 - e^{-\cumhazardrate(\val)} 
        \geq 1 - e^{-\frac{\cumhazardrate(\threshold)}{\threshold}\val}
        =
        \auxcdf(\val)
    \end{align*}
    where the inequality holds since distribution $\prior$ is quasi-regular. This ensures 
    \begin{align*}
        \expect[\val\sim\prior]{\val\cdot \indicator{\val \geq \threshold}}
        \leq 
        \expect[\val\sim\auxprior]{\val\cdot\indicator{{\val\geq \threshold}}}
    \end{align*}
    and consequently allows us to verify the approximation guarantee as follows.
    \begin{align*}
        &\threshold \cdot (1 - \cdf(\threshold)) 
        - 
        \frac{1}{e + 1} \cdot \expect[\val\sim\prior]{\val\cdot \indicator{\val \geq \threshold}}
        \\
        {}\geq {} &
        \threshold \cdot (1 - \cdf(\threshold)) 
        - 
        \frac{1}{e + 1} \cdot \expect[\val\sim\auxprior]{\val\cdot \indicator{\val \geq \threshold}}
        \\
        {} ={} &
        \threshold\cdot e^{-\cumhazardrate(\threshold)}
        -
        \frac{1}{e + 1} \cdot
        \displaystyle\int_\threshold^\infty \val e^{-\frac{\cumhazardrate(\threshold)}{\threshold}\val}
        \cdot\frac{\cumhazardrate(\threshold)}{\threshold}\val \cdot \d\val
        \\
        {} = {} &
        \threshold\cdot e^{-\cumhazardrate(\threshold)}
        -
        \frac{\threshold}{e + 1} 
        \frac{\cumhazardrate(\threshold) + 1}{\cumhazardrate(\threshold)}
        e^{-\cumhazardrate(\threshold)}
        \\
        \geq {} & 0
    \end{align*}
    where the last inequality holds since $\threshold \geq \optreserve \geq 1$ (due to assumption that monopoly revenue is 1) and thus $\cumhazardrate(\threshold) \geq \frac{\threshold}{e} \geq \frac{1}{e}$ (due to the case assumption).
    
    Next we verify the case where $\cumhazardrate(\threshold) < \frac{\threshold}{e}$. Let $a\in[1, e]$
    % $\triangleq e^{-\LambertFunc(\frac{\cumhazardrate(\threshold)}{\threshold})}$ 
    be the solution of equation $\frac{\ln(a)}{a} = \frac{\cumhazardrate(\threshold)}{\threshold}$. The case assumption that $\cumhazardrate(\threshold) < \frac{\threshold}{e}$ ensures the existence of $a$.
    Since the monopoly revenue is 1, threshold $\threshold\geq \optreserve \geq 1$ and the cumulative hazard rate function $\cumhazardrate(\threshold) \geq \ln(\threshold)$ which further implies threshold $\threshold \leq a\leq e$.
    Define auxiliary distribution $\auxprior$ with support $\supp(\auxprior) = [0, \infty)$ and cumulative density function $\auxcdf(\val) = 1 - e^{-\frac{\ln(a)}{a}\val}$ for $\val \in [0, a]$, $\auxcdf(\val) = 1 - \frac{1}{\val}$ for $\val \in[a, e]$, and $\auxcdf(\val) = 1 - e^{-\frac{\val}{e}}$ for $\val \in[e, \infty)$. As a sanity check, for every value $\val \in [\threshold, a]$, we have 
    \begin{align*}
        \cdf(\val) = 1 - e^{-\cumhazardrate(\val)}
        \overset{(a)}{\geq} 1 - e^{-\frac{\cumhazardrate(\threshold)}{\threshold}\val}
        \overset{(b)}{=}
        1 - e^{-\frac{\ln(a)}{a}\val}
        \overset{}{=} \auxcdf(\val)
    \end{align*}
    where inequality~(a) holds since distribution $\prior$ is quasi-MHR; and equality~(b) holds due to the definition of $a$. Since the monopoly revenue is 1, for every value $\val \in [a, e]$, we have 
    \begin{align*}
        \cdf(\val) \geq 1 - \frac{1}{\val} = \auxcdf(\val)
    \end{align*}
    Finally, for every value $\val\in[e, \infty)$, we have
    \begin{align*}
        \cdf(\val) = 1 - e^{-\cumhazardrate(\val)}
        \overset{(a)}{\geq} 1 - e^{-\frac{\cumhazardrate(e)}{e}\val}
        \overset{(b)}{\geq} 1 - e^{-\frac{\val}{e}} = \auxcdf(\val)
    \end{align*}
    where inequality~(a) holds since distribution $\prior$ is quasi-MHR; and inequality~(b) holds since $\cumhazardrate(e) \geq 1$ as the monopoly revenue is 1. Putting the three pieces together, we know $\cdf(\val) \geq \auxcdf(\val)$ for every value $\val \geq \threshold$ and thus
    \begin{align*}
        \expect[\val \sim\prior]{\val\cdot \indicator{\val  \geq \threshold}} 
        \leq  
        \expect[\val \sim\auxprior]{\val\cdot \indicator{\val \geq \threshold}}
        % \\
        \leq 
        \expect[\val \sim\auxprior]{\val\cdot \indicator{\val \geq 1}}
        =  
        \frac{1 + \ln(a)}{\ln(a)}\cdot e^{-\frac{\ln(a)}{a}}
        -
        \frac{\left(1 - \ln(a)\right)^2}{\ln(a)}
    \end{align*}
    where the second inequality holds since $\threshold \geq \optreserve \geq 1$ and the last equality holds by calculation.

    On the other hand, we claim that
    \begin{align*}
        \threshold\cdot (1 - \cdf(\threshold)) = \threshold\cdot (1 - \auxcdf(\threshold)) \geq 1\cdot (1 - \auxcdf(1)) = e^{-\frac{\ln(a)}{a}}
    \end{align*}
    where the first equality holds since $\cdf(\threshold) = 1 - e^{-{\cumhazardrate(\threshold)}} = 1 - e^{-\frac{\ln(a)}{a}\threshold} = \auxcdf(\threshold)$ which is implied by the definition of $a$ and the fact that $\threshold \leq a$ argued above.
    To see why the inequality holds, note that 
    \begin{align*}
        \frac{\threshold\cdot (1 - \auxcdf(\threshold))}{1\cdot (1 - \auxcdf(1))}
        =
        \threshold\cdot e^{-\frac{\ln(a)}{a}(\threshold-1)}
        =
        \threshold\cdot e^{-\frac{\cumhazardrate(\threshold)}{\threshold}(\threshold-1)}
        \geq 
        \threshold\cdot e^{-\frac{1}{e}(\threshold-1)}
        \geq 1
    \end{align*}
    where the first equality holds due to the construction of auxiliary distribution $\auxprior$ and the fact that $\threshold\leq a$ argued above; the second equality holds due to the definition of $a$; the first inequality holds due to the case assumption that $\cumhazardrate(\threshold) \leq \frac{\threshold}{e}$; and the second inequality holds since $1 \leq \threshold \leq a \leq e$ argued above.

    Combining the lower bound of $ \threshold\cdot (1 - \cdf(\threshold))$ and the upper bound of $\expect[\val \sim\prior]{\val\cdot \indicator{\val  \geq \threshold}} $ derived above, it suffices to verify that for every $a \in [1, e]$,
    \begin{align*}
        e + 1 \geq \frac{1 + \ln(a)}{\ln(a)}
        -
        \frac{\left(1 - \ln(a)\right)^2}{\ln(a)}\cdot e^{\frac{\ln(a)}{a}}
    \end{align*}
    Thus, we finish our argument by noting that the right-hand side is maximized at $e + 1$ with $a = e$ among all $a \in[1, e]$. 

    To see the tightness of the $\frac{1}{e + 1}$-approximation, consider the following quasi-MHR distribution~$\prior$ with support $\supp(\prior) = [1, \infty)$ and cumulative density function $\cdf(\val) = 1 - e^{-\frac{\val}{e}}$ for $\val\in[1,\infty)$. In this distribution, the monopoly reserve $\optreserve = 1$ has a probability mass of $1 - e^{-\frac{1}{e}}$. Consider threshold $\threshold = 1 + \varepsilon$ with sufficiently small $\varepsilon > 0$. By straightforward calculation, we know
    \begin{align*}
        \threshold\cdot (1 - \cdf(\threshold)) = e^{-\frac{1}{e}} - o_\varepsilon(1)
        \;\;
        \mbox{and}
        \;\;
        \expect[\val \sim\prior]{\val\cdot \indicator{\val  \geq \threshold}} = (e + 1) e^{-\frac{1}{e}} - o_\varepsilon(1)
    \end{align*}
    which shows the approximation tightness as desired.

    \xhdr{Property~\ref{property:quasi-mhr:revenue approximate welfare duplicating}- Revenue Approximates Welfare via Buyer Duplicating:}
    The inequality in this property can be re-written as 
    \begin{align*}
        \displaystyle\int_{\threshold}^{\infty} \left(\val - \frac{1 - \cdf(\val)}{\pdf(\val)}\right)\cdot 2\cdf(\val) \cdot \pdf(\val)\cdot \d\val
        \geq 
        \frac{1}{3}\displaystyle\int_{\threshold}^{\infty} \val \cdot 2\cdf(\val) \cdot \pdf(\val)\cdot \d\val
    \end{align*}
    After rearranging and integration by parts, it is equivalent to 
    \begin{align}
    \label{eq:quasi-mhr:revenue approxmation critical inequality}
        \threshold\cdot \left(1 - \left(\cdf(\threshold)\right)^2\right)
        -
        \int_{\threshold}^\infty (1 - \cdf(\val))
        \cdot \left(2\cdf(\val) - 1\right)\cdot \d\val
        \geq 0
    \end{align}
    To verify this inequality, our argument is divided into two cases depending on whether $\cdf(\threshold) \geq \frac{3}{4}$.

    First, suppose $\cdf(\threshold) \geq \frac{3}{4}$. In this case, we construct an auxiliary distribution $\auxprior$ as follows: let auxiliary distribution~$\auxprior$ have cumulative density function $\auxcdf(\val) = 1 - e^{-\cehazardrate(\threshold)\cdot \val}$ with support $\supp(\auxprior) = [0, \infty)$. Note that
    \begin{align*}
        \auxcdf(\threshold) = 1 - e^{-\cehazardrate(\threshold)\cdot \threshold}
        =
        1 - e^{\frac{\ln(1 - \cdf(\threshold))}{{\threshold}}\cdot \threshold}
        =
        \cdf(\threshold)
    \end{align*}
    where the first equality holds due to the construction of auxiliary distribution $\auxprior$, and the second equality holds due to the definition of conditional hazard rate function $\cehazardrate(\cdot)$. Similarly, for every value $\val \geq \threshold$,
    \begin{align*}
        \auxcdf(\val) = 1 - e^{-\cehazardrate(\threshold)\cdot \val}
        \leq 
        1 - e^{-\cehazardrate(\val)\cdot \val}
        =
        1 - e^{\frac{\ln(1 - \cdf(\val))}{{\val}}\cdot \val}
        =
        \cdf(\val)
    \end{align*}
    where the inequality holds since distribution $\prior$ is quasi-MHR and $\val\geq \threshold$. Putting two pieces together, we have
    \begin{align*}
        \threshold\cdot \left(1 - \left(\cdf(\threshold)\right)^2\right)
        &=
        \threshold\cdot \left(1 - \left(\auxcdf(\threshold)\right)^2\right)
        % \;\;\mbox{and}\;\;
        \\
        \int_{\threshold}^\infty (1 - \cdf(\val))
        \cdot \left(2\cdf(\val) - 1\right)\cdot \d\val
        &\leq 
        \int_{\threshold}^\infty (1 - \auxcdf(\val))
        \cdot \left(2\auxcdf(\val) - 1\right)\cdot \d\val
    \end{align*}
    where the second inequality holds due to the fact that $(1 - \cdf(\val))
         \left(2\cdf(\val) - 1\right)\leq (1 - \auxcdf(\val))
         \left(2\auxcdf(\val) - 1\right)$
    for every value $\val \geq t$, which is implied by $\cdf(\val) \geq \auxcdf(\val)$ and $\auxcdf(\val) \geq \auxcdf(\threshold) \geq \frac{3}{4}$.\footnote{Note $\frac{3}{4}$ is the unique maximizer of quadratic function $\auxfunc(x) = (1-x)(2x-1)$.} Hence, it suffices to verify inequality~\eqref{eq:quasi-mhr:revenue approxmation critical inequality} or equivalently Property~\ref{property:quasi-mhr:revenue approximate welfare duplicating} for auxiliary MHR distribution $\auxprior$, which has been already shown in \cite[Lemma 4.1]{HR09}.

    Second, suppose $\cdf(\threshold) < \frac{3}{4}$. Let $\val\primed\triangleq\cdf^{-1}(\frac{3}{4})$ be the value such that $\cdf(\val\primed) = \frac{3}{4}$. As a sanity check, we have $\val\primed > \threshold$. In this case, we construct an auxiliary distribution $\auxprior$ as follows: let auxiliary distribution~$\auxprior$ have cumulative density function $\auxcdf(\val) = 1 - e^{-\cehazardrate(\val\primed)\cdot \val}$ with support $\supp(\auxprior) = [0, \infty)$. Note that
    for every value $\val\leq\val\primed$,
    \begin{align*}
        \auxcdf(\val) = 1 - e^{-\cehazardrate(\val\primed)\cdot \val}
        \geq 
        1 - e^{-\cehazardrate(\val)\cdot \val}
        =
        1 - e^{\frac{\ln(1 - \cdf(\val))}{{\val}}\cdot \val}
        =
        \cdf(\val)
    \end{align*}
    where the inequality holds since distribution $\prior$ is quasi-MHR and $\val\leq \val\primed$. Similarly, for every value $\val \geq \threshold$,
    \begin{align*}
        \auxcdf(\val) = 1 - e^{-\cehazardrate(\threshold)\cdot \val}
        \leq 
        1 - e^{-\cehazardrate(\val)\cdot \val}
        =
        1 - e^{\frac{\ln(1 - \cdf(\val))}{{\val}}\cdot \val}
        =
        \cdf(\val)
    \end{align*}
    where the inequality holds since distribution $\prior$ is quasi-MHR and $\val \geq \val\primed$. Putting two pieces together, we have
    \begin{align*}
        \threshold\cdot \left(1 - \left(\cdf(\threshold)\right)^2\right)
        &\geq
        \threshold\cdot \left(1 - \left(\auxcdf(\threshold)\right)^2\right)
        % \;\;\mbox{and}\;\;
        \\
        \int_{\threshold}^\infty (1 - \cdf(\val))
        \cdot \left(2\cdf(\val) - 1\right)\cdot \d\val
        &\leq 
        \int_{\threshold}^\infty (1 - \auxcdf(\val))
        \cdot \left(2\auxcdf(\val) - 1\right)\cdot \d\val
    \end{align*}
    where the first inequality holds since $\cdf(\threshold) \leq \auxcdf(\threshold)$, and the second inequality holds due to the fact that $(1 - \cdf(\val))
         \left(2\cdf(\val) - 1\right)\leq (1 - \auxcdf(\val))
         \left(2\auxcdf(\val) - 1\right)$
    for every value $\val \geq 0$. Hence, it suffices to verify inequality~\eqref{eq:quasi-mhr:revenue approxmation critical inequality} or equivalently Property~\ref{property:quasi-mhr:revenue approximate welfare duplicating} for auxiliary MHR distribution $\auxprior$, which has been already shown in \cite[Lemma 4.1]{HR09}.

    Finally, to see the tightness of the bound. Consider exponential distribution $\prior$ with cumulative density function $\prior(\val) = 1-e^{-\val}$. Let threshold $\threshold = 0$. By direct computation, we have $\expect[\val_{(1:2)}\sim \prior_{(1:2)}]{\virtualval(\val_{(1:2)})} = \frac{1}{2}
    $
    and $\expect[\val_{(1:2)}\sim \prior_{(1:2)}]{\val_{(1:2)}}=\frac{3}{2}$, which completes the analysis as desired.
\end{proof}

\subsection{Relations among Distribution Families}
\label{sec:hierarchy:relation}

This subsection characterizes the relations among the regularity, quasi-regularity, MHR, and quasi-MHR conditions. A graphical illustration of \Cref{prop:hierarchy} below can be found in \Cref{fig:distribution-hierarchy}. 

\begin{proposition}[Distribution Hierarchy]
\label{prop:hierarchy}
% \begin{flushleft}
The four families $\regulardistspace$, $\quasiregulardistspace$, $\mhrdistspace$, and $\quasimhrdistspace$ of regular, quasi-regular, MHR, and quasi-MHR distributions satisfy the following relations:
\begin{enumerate}[label=(\roman*)]
    \item $\mhrdistspace\subsetneq (\regulardistspace\cap\quasimhrdistspace)$. In particular, consider distribution $\prior$ with support $\supp(\prior)=[1/e, \infty)$ and cumulative density function $\cdf(\val) = 1 - 1/(e\val)$ for value $\val\in[1/e, 1]$ and $\cdf(\val) = 1 - e^{-\val}$ for value $\val\in[1,\infty)$. Distribution $\prior$ is regular and quasi-MHR, but not MHR, i.e., $\prior \in ((\regulardistspace \cap \quasimhrdistspace) \setminus \mhrdistspace)$.
    
    \item $(\regulardistspace\cup\quasimhrdistspace) \subsetneq \quasiregulardistspace$. In particular, consider distribution $\prior$ with support $\supp(\prior) = [2, \infty)$ and cumulative density function $\cdf(\val) = 1 - 1/\val$ for value $\val\in[2, \infty)$. Distribution $\prior$ is quasi-regular but neither quasi-MHR nor regular, i.e., $\prior \in (\quasiregulardistspace \setminus (\regulardistspace \cup \quasimhrdistspace))$.
    
    \item $\quasimhrdistspace\backslash\regulardistspace\not=\emptyset$ and $\regulardistspace\backslash\quasimhrdistspace\not=\emptyset$.
    In particular, distribution $\prior(\val) = 1-e^{-\val}$ with support $\supp(\prior)=[1,\infty)$ is quasi-MHR but not regular, i.e., $\prior \in (\quasimhrdistspace\backslash\regulardistspace)$. Distribution  $\prior(\val) = 1-1/\val$ with support $\supp(\prior)=[1,\infty)$ is regular but not quasi-MHR, i.e., $\prior \in (\regulardistspace \setminus \quasimhrdistspace)$.
\end{enumerate}
% \end{flushleft}
\end{proposition}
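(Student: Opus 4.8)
The plan is to treat each of the three items as an \emph{inclusion} half and a \emph{strictness} half: I would first dispatch all the inclusions needed for the chain $\mhrdistspace\subseteq(\regulardistspace\cap\quasimhrdistspace)$ and $(\regulardistspace\cup\quasimhrdistspace)\subseteq\quasiregulardistspace$, and then read off strictness directly from the piecewise witnesses listed in the statement, which is then entirely a matter of evaluating $\virtualval$, $\hazardrate$, $\cevirtualval$, and $\cehazardrate$ on each piece.

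\textbf{Inclusions.} (1)~$\mhrdistspace\subseteq\regulardistspace$: folklore, since $\virtualval(\val)=\val-1/\hazardrate(\val)$ and a weakly increasing positive $\hazardrate$ forces a weakly increasing $\virtualval$. (2)~$\mhrdistspace\subseteq\quasimhrdistspace$: by \Cref{lem:mhr equivalent definition} MHR means $\cumhazardrate$ is convex, and $\cumhazardrate(0)=-\ln(1-\cdf(0))=0$ since the support starts at $0$, so $\cumhazardrate$ is star-shaped about the origin, i.e.\ $\cehazardrate(\val)=\cumhazardrate(\val)/\val$ is weakly increasing. (3)~$\regulardistspace\subseteq\quasiregulardistspace$: by \Cref{lem:regular equivalent definition} regular means $\revcurve$ is concave, and since $\revcurve(1)=0$ the curve lies weakly above the chord joining $(\quant,\revcurve(\quant))$ to $(1,0)$, i.e.\ $\revcurve(\quant\primed)\ge\frac{1-\quant\primed}{1-\quant}\revcurve(\quant)$ for $\quant\primed\in[\quant,1]$, which is precisely Condition~\ref{condition:quasi-regular:revenue curve} of \Cref{prop:q-regular equivalent definition}. (4)~$\quasimhrdistspace\subseteq\quasiregulardistspace$ is the one step with real content.

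\textbf{The main obstacle, $\quasimhrdistspace\subseteq\quasiregulardistspace$.} I would compare the probability-tail characterizations. Fix an anchor value $\val$ and set $a\triangleq\cdf(\val)\in(0,1)$ and $t\triangleq\val\primed/\val\in[0,1)$. Condition~\ref{condition:quasi-MHR:cdf} of \Cref{prop:q-MHR equivalent definition} gives $\cdf(\val\primed)\le 1-(1-a)^{t}$, while Condition~\ref{condition:quasi-regular:cdf} of \Cref{prop:q-regular equivalent definition} asks for $\cdf(\val\primed)\le\frac{ta}{ta+1-a}$, so it suffices to prove the scalar inequality $1-(1-a)^{t}\le\frac{ta}{ta+1-a}$ for all $a\in(0,1)$ and $t\in[0,1)$. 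Writing $b\triangleq 1-a$, this is equivalent to $g(b)\triangleq b^{\,t-1}\bigl(t+(1-t)b\bigr)\ge 1$ on $(0,1]$, which follows from $g(1)=1$ together with $g'(b)=t(1-t)\,b^{\,t-2}(b-1)\le 0$, so $g$ is nonincreasing on $(0,1]$. (A shorter, more ``paper-native'' route is the auxiliary-exponential device used in \Cref{lem:quasi-mhr:structural results}: anchor at $\val$, replace $\cdf$ on $[0,\val]$ by the exponential CDF $\auxcdf(x)=1-e^{-\cehazardrate(\val)x}$; monotonicity of $\cehazardrate$ gives $\auxcdf\ge\cdf$ on $[0,\val]$ with equality at $\val$, and since $\auxprior$ is MHR---hence, by (1) and (3), quasi-regular---its tail bound anchored at $\val$ carries over verbatim to $\cdf$.) Combining (1)--(4) already yields $\mhrdistspace\subseteq(\regulardistspace\cap\quasimhrdistspace)$ and $(\regulardistspace\cup\quasimhrdistspace)\subseteq\quasiregulardistspace$.

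\textbf{Strictness via the witnesses.} Item~(i), $\cdf(\val)=1-\tfrac{1}{e\val}$ on $[\tfrac1e,1]$ and $1-e^{-\val}$ on $[1,\infty)$: on the first piece $\virtualval\equiv 0$ and $\cumhazardrate(\val)=1+\ln\val$, so $\cehazardrate(\val)=\tfrac{1+\ln\val}{\val}$ is increasing; on the second piece $\virtualval(\val)=\val-1$ and $\cehazardrate\equiv 1$; the pieces glue continuously, so $\virtualval$ and $\cehazardrate$ are weakly increasing (regular and quasi-MHR), whereas $\hazardrate(\val)=1/\val$ on $[\tfrac1e,1]$ is strictly decreasing (not MHR). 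Item~(ii), $\cdf(\val)=1-1/\val$ on $[2,\infty)$: $\cevirtualval(\val)=-\val(1-\cdf(\val))/\cdf(\val)=-\val/(\val-1)$ is weakly increasing (quasi-regular); it has a probability mass $\tfrac12$ at $\val=2$, which is not its support supremum, so it is not regular by \Cref{lem:regular equivalent definition}; and $\cehazardrate(\val)=\ln(\val)/\val$ peaks at $\val=e$, so it is not monotone (not quasi-MHR). Item~(iii): $\cdf(\val)=1-e^{-\val}$ on $[1,\infty)$ has $\cehazardrate\equiv 1$ (quasi-MHR) but a mass at $\val=1\neq\infty$ (not regular); and $\cdf(\val)=1-1/\val$ on $[1,\infty)$ is the equal-revenue distribution with $\virtualval\equiv 0$ (regular) but $\cehazardrate(\val)=\ln(\val)/\val$ again nonmonotone (not quasi-MHR). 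Assembling the inclusions with these witnesses gives every strict containment---in particular $\regulardistspace\cap\quasimhrdistspace\subsetneq\regulardistspace$ and $\subsetneq\quasimhrdistspace$ come from item~(iii)---and hence the full hierarchy.
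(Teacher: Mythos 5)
Your proposal is correct and follows the same blueprint as the paper: the only nontrivial inclusion is $\quasimhrdistspace\subseteq\quasiregulardistspace$, which the paper dispatches by asserting that Condition~\ref{condition:quasi-MHR:cdf} of \Cref{prop:q-MHR equivalent definition} implies Condition~\ref{condition:quasi-regular:cdf} of \Cref{prop:q-regular equivalent definition}, and the strictness claims are read off from the same witnesses. You go further than the paper in one useful respect: where the paper merely states the tail-bound implication, you actually prove the underlying scalar inequality $1-(1-a)^{t}\le \tfrac{ta}{ta+1-a}$ via $g(b)=b^{t-1}(t+(1-t)b)\ge 1$ with $g(1)=1$ and $g'\le 0$ (which checks out), and you also give the alternative auxiliary-exponential argument; both are valid fill-ins for the paper's unproved assertion, and your explicit verifications of the three witness distributions match the paper's claims.
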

\begin{proof}
    All hierarchical relationships can be directly verified using the original definitions of the distribution classes, except $\quasimhrdistspace\subseteq \quasiregulardistspace$. To verify $\quasimhrdistspace\subseteq \quasiregulardistspace$, note that Condition~\ref{condition:quasi-MHR:cdf} in \Cref{prop:q-MHR equivalent definition} for quasi-MHR distributions implies Condition~\ref{condition:quasi-regular:cdf} in \Cref{prop:q-regular equivalent definition} for quasi-regular distributions. This finishes the proof of the proposition statement.
\end{proof}

\begin{comment}
\color{red}

\begin{itemize}
    \item $\mhrdistspace \ni \prior$: $\cdf(\val) = 1 - e^{-\val}$ for $\val \geq 0$.

    \item $((\regulardistspace \cap \quasimhrdistspace) \setminus \mhrdistspace) \ni \prior$: $\cdf(\val) = 1 - 1 / (e \val)$ for $\val \in [1 / e, 1]$ and $\cdf(\val) = 1 - e^{-\val}$ for $\val \geq 1$.

    \item $(\regulardistspace \setminus \quasimhrdistspace) \ni \cdf$: $\cdf(\val) = 1 - 1 / \val$ for $\val \geq 1$.

    \item $(\quasimhrdistspace \setminus \regulardistspace) \ni \cdf$: $F(\val) = (1 - e^{-\val}) \cdot \indicator{\val \geq 1}$

    \item $(\quasiregulardistspace \setminus (\regulardistspace \cup \quasimhrdistspace)) \ni \prior$: $\cdf(\val) = (1 - 1 / \val) \cdot \indicator{\val \geq 2}$.
\end{itemize}

\color{black}

\end{comment}

\subsection{Order Statistics of Symmetric Regular/MHR Distributions}
\label{sec:hierarchy:iid}

In this subsection, we characterize the order statistic distributions generated from symmetric regular or MHR distributions.

\xhdr{Symmetric Regular Distributions.} We first consider regular distributions.
% \label{subsec:order:regular}

\begin{theorem}[Order Statistics of Symmetric Regular Distributions]
\label{thm:order:regular}
\begin{flushleft}
Given $n \geq 1$ many i.i.d.\ regular values $\vali \sim \prior$ for $i \in [n]$, every order statistic $\val_{(k:n)} \sim \prior_{(k:n)}$ for $k \in [n]$ also is regular.
\end{flushleft}
\end{theorem}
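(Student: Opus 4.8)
The plan is to route everything through the revenue-curve characterization of regularity (\Cref{lem:regular equivalent definition}): since $\prior$ is regular, its revenue curve $\revcurve$ is concave, and it suffices to prove that the revenue curve $\revcurve_{(k:n)}$ of the $k$-th order statistic $\prior_{(k:n)}$ is concave as well. The case $k=n$ is easy and serves as a warm-up — there one computes $\virtualval_{(n:n)}(\val)=\tfrac{n-1}{n}\val+\tfrac1n\virtualval(\val)$, a convex combination of two weakly increasing functions of $\val$, hence weakly increasing.

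First I would set up a change of variables. Identify each value $\val$ with its $\prior$-quantile $\quant=1-\cdf(\val)$, so that (as recalled in \Cref{sec:prelim}) the posted price at quantile $\quant$ is $\revcurve(\quant)/\quant$ and $\virtualval(\val)=\revcurve'(\quant)$. The same value $\val$ has $\prior_{(k:n)}$-quantile $Q=\beta_k(\quant)$, where $\beta_k(\quant):=\prob{\mathrm{Bin}(n,\quant)\geq k}=\sum_{j=k}^{n}\binom nj\quant^j(1-\quant)^{n-j}$, because $\val_{(k:n)}\geq\val$ exactly when at least $k$ of the $n$ i.i.d.\ samples are $\geq\val$; hence $\revcurve_{(k:n)}(Q)=Q\cdot\val=\beta_k(\quant)\cdot\frac{\revcurve(\quant)}{\quant}$. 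Since $\beta_k$ is a strictly increasing bijection of $[0,1]$, concavity of $\revcurve_{(k:n)}$ is equivalent to nonpositivity of $\frac{\dd}{\dd\quant}\big[\frac{\dd\revcurve_{(k:n)}}{\dd Q}\big]$. Using the standard telescoping identity $\beta_k'(\quant)=n\binom{n-1}{k-1}\quant^{k-1}(1-\quant)^{n-k}$, a routine differentiation gives $\frac{\dd\revcurve_{(k:n)}}{\dd Q}=\frac{\revcurve(\quant)}{\quant}-c_k(\quant)\,D(\quant)$, where $D(\quant):=\frac{\revcurve(\quant)}{\quant}-\revcurve'(\quant)\geq 0$ (nonnegative since $\revcurve$ is concave with $\revcurve(0)=0$) and $c_k(\quant):=\frac{\beta_k(\quant)}{\quant\,\beta_k'(\quant)}=\frac{\prob{\mathrm{Bin}(n,\quant)\geq k}}{k\cdot\prob{\mathrm{Bin}(n,\quant)=k}}$. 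Differentiating once more, the only terms that survive are $\revcurve''\leq 0$, $c_k>0$, and $D\big(\tfrac{c_k-1}{\quant}-c_k'\big)$ with $D\geq 0$; so the whole statement reduces to the single scalar inequality
\[
  \quant\,c_k'(\quant)-c_k(\quant)+1\ \geq\ 0\qquad\text{for all }\quant\in(0,1),
\]
equivalently, to $\frac{c_k(\quant)-1}{\quant}$ being weakly increasing in $\quant$.

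The last step verifies this inequality, which is now a purely elementary fact about binomial tails. Substituting $r:=\frac{\quant}{1-\quant}$ (a strictly increasing reparametrization of $\quant$), one has $c_k(\quant)=\frac1k\sum_{i=0}^{n-k}\frac{\binom{n}{k+i}}{\binom nk}\,r^i$, and then $\frac{c_k(\quant)-1}{\quant}=\frac{(1+r)(c_k-1)}{r}$ expands, after collecting powers of $r$, into $-\frac{k-1}{k}\,r^{-1}+(\text{a constant})+(\text{a polynomial in }r\text{ with nonnegative coefficients})$. Each of the three pieces is weakly increasing in $r$ (the first since $-r^{-1}$ is increasing and $k\geq 1$), hence $\frac{c_k(\quant)-1}{\quant}$ is weakly increasing in $\quant$, which closes the argument. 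Throughout, I would assume without loss of generality (cf.\ the support remarks in \Cref{sec:prelim}) that $\prior$ admits a positive continuous density on its support so that the derivatives above exist; a general regular $\prior$ — which may carry an atom at its support supremum — follows by approximating it by such distributions and using that a pointwise limit of concave revenue curves is concave.

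The main obstacle is the reduction in the middle paragraph: the second differentiation is mechanical, but it must be organized so that the awkward cross-terms cancel and one is left with exactly the sign-definite contributions $\revcurve''\le 0$, $c_kR''$, and $D\big(\tfrac{c_k-1}{\quant}-c_k'\big)$ — getting that bookkeeping right (rather than any deep idea) is where the work lies. Once the problem is distilled to $\quant c_k'-c_k+1\geq 0$, the $r$-substitution makes it essentially immediate, and the special cases $k=1$ (where $c_1$ ranges over $[1,\infty)$) and $k=n$ (where $c_n\equiv\tfrac1n$) serve as quick sanity checks.
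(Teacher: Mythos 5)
Your proof is correct, and it takes a genuinely different route from the paper's. The paper establishes the theorem via a two-stage argument: a standalone lemma (\Cref{lem:order:regular:triangle}) shows that order statistics of i.i.d.\ triangular distributions are regular by differentiating the virtual value and checking sign-definiteness of a quadratic, and the general case is then reduced to this lemma by constructing, at each anchor quantile, an auxiliary triangular distribution that is tangent to $\revcurve$ and first-order stochastically dominates $\prior$, with \Cref{fact:order statistic dominance} transporting the bound to the order statistics. Your argument instead works entirely in quantile space: the identity $\revcurve_{(k:n)}(\beta_k(\quant)) = \beta_k(\quant)\cdot\revcurve(\quant)/\quant$ with $\beta_k(\quant)=\Pr[\mathrm{Bin}(n,\quant)\geq k]$ lets you take two derivatives, split the second derivative into the sign-definite pieces $c_k\revcurve''$ and $D\bigl[\tfrac{c_k-1}{\quant}-c_k'\bigr]$, and collapse concavity of $\revcurve_{(k:n)}$ into the purely combinatorial fact that $(c_k(\quant)-1)/\quant$ is weakly increasing, verified by coefficient counting after the $r=\quant/(1-\quant)$ substitution. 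The paper's route has the virtue of paralleling the MHR case (\Cref{thm:order:mhr}) exactly, with exponential distributions replacing triangular ones and $\cumhazardrate$ replacing $\revcurve$, and it sidesteps any smoothness hypothesis by phrasing everything through the one-sided-derivative criterion of \Cref{fact:concavity equivalent definition}; yours is more self-contained and makes the binomial structure of the problem explicit, at the cost of the approximation argument you correctly flag for regular $\prior$ carrying an atom. One small inaccuracy worth fixing: $\revcurve(0)$ need not equal $0$ for a regular distribution (for the equal-revenue distribution $\revcurve\equiv 1$, so $\revcurve(0)=1$); what you actually use is only $\revcurve(0)\geq 0$, from which concavity gives $D(\quant)=\revcurve(\quant)/\quant-\revcurve'(\quant)\geq\revcurve(0)/\quant\geq 0$ as needed.
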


At a high level, our argument for  \Cref{thm:order:regular} relies on the following two-step argument: first, we prove the theorem statement for triangular distributions (i.e., a subclass of distributions with closed-form formulation) in \Cref{lem:order:regular:triangle}. For the triangular distributions, we can explicitly characterize the virtual value of its order statistic in closed form. Thus, applying an induction argument enables us to prove \Cref{lem:order:regular:triangle}. In the second step, we present a reduction argument from general distributions to triangular distributions.

\begin{lemma}
\label{lem:order:regular:triangle}
\begin{flushleft}
    Given $a \geq 0$, $b\geq 0$, and $n \geq 1$ many i.i.d.\ regular values $\vali\sim \auxprior$ with cumulative density function $\auxcdf(\val) = 
    \frac{\val - b}{\val - b + a}
    $ and support $\supp(\auxcdf) = [b, \infty)$, every order statistics $\val_{(k:n)} \sim \auxprior_{(k:n)}$ for $k \in [n]$ also is regular.
\end{flushleft}
\end{lemma}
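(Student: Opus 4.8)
Regularity is preserved by increasing affine transformations of the value: under $\val\mapsto\alpha\val+\beta$ with $\alpha>0$ the virtual value becomes $\val\mapsto\beta+\alpha\cdot\virtualval\big(\tfrac{\val-\beta}{\alpha}\big)$, which is weakly increasing iff $\virtualval$ is, and order statistics commute with such a map since it preserves order. Because $\auxprior$ is exactly the law of $b+a\,Y$ for $Y$ with CDF $y\mapsto\tfrac{y}{y+1}$ on $[0,\infty)$, and because $a=0$ makes $\auxprior$ a point mass at $b$ (trivially regular), I would assume $a>0$ and reduce to the canonical case $a=1$, $b=0$, i.e.\ to $\auxcdf(\val)=\tfrac{\val}{\val+1}$ on $[0,\infty)$; this distribution has constant virtual value $-1$, so its revenue curve is the segment joining $(0,1)$ and $(1,0)$ (hence ``triangular'').

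The core step is to derive a closed form for the virtual value $\virtualval_{(k:n)}$ of the $k$-th largest order statistic of $n$ i.i.d.\ copies of $\auxcdf$, then check it is weakly increasing. Write $\quant:=1-\auxcdf(\val)=\tfrac{1}{1+\val}$, so that $1-\quant=\quant\val$ and $\auxpdf(\val)=\quant^{2}$. Then $1-\auxprior_{(k:n)}(\val)=\sum_{j=k}^{n}\binom{n}{j}\quant^{j}(1-\quant)^{n-j}$, and — viewing the $k$-th largest among $n$ i.i.d.\ values as the $(n-k+1)$-th smallest — the density of $\auxprior_{(k:n)}$ equals $\tfrac{n!}{(n-k)!\,(k-1)!}\,\auxcdf(\val)^{n-k}\big(1-\auxcdf(\val)\big)^{k-1}\auxpdf(\val)=\tfrac{n!}{(n-k)!\,(k-1)!}(1-\quant)^{n-k}\quant^{k+1}$. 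Substituting $1-\quant=\quant\val$ makes both binomial expressions collapse, and one obtains
\[
 \virtualval_{(k:n)}(\val)\;=\;\val\;-\;\frac{(n-k)!\,(k-1)!}{n!}\cdot\frac{(1+\val)\,P(\val)}{\val^{\,n-k}},
 \qquad
 P(\val)\;:=\;\sum_{m=0}^{n-k}\binom{n}{m}\val^{m}.
\]
(Sanity checks: $k=n$ gives $P\equiv1$ and $\virtualval_{(n:n)}(\val)=\tfrac{(n-1)\val-1}{n}$; $k=1$ gives $P(\val)=(1+\val)^{n}-\val^{n}$ and the familiar formula for the maximum.) To finish, set $Q(\val):=(1+\val)P(\val)$; all coefficients of $Q$ are nonnegative, $\deg Q=n-k+1$, and its leading coefficient is $\binom{n}{n-k}=\binom{n}{k}$. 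Expanding $Q(\val)/\val^{\,n-k}$ as a Laurent polynomial, every monomial except the leading term $\binom{n}{k}\val$ has a nonpositive exponent, hence a derivative that is $\le0$ for $\val>0$; therefore $\tfrac{\dd}{\dd\val}\big[Q(\val)/\val^{\,n-k}\big]\le\binom{n}{k}$ on $(0,\infty)$, and
\[
 \virtualval_{(k:n)}'(\val)\;\ge\;1-\frac{(n-k)!\,(k-1)!}{n!}\binom{n}{k}\;=\;1-\frac{1}{k}\;\ge\;0.
\]
So $\auxprior_{(k:n)}$ is regular, and undoing the affine reduction gives the claim for all $a,b\ge0$ and $k\in[n]$.

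The main obstacle is the bookkeeping behind the closed form: one must keep careful track of the ``$k$-th largest $=(n{-}k{+}1)$-th smallest'' convention and of the quantile normalization $\quant=1-\auxcdf(\val)$, and it is precisely the identity $1-\quant=\quant\val$ — available only because the triangular revenue curve is linear — that lets the two partial binomial sums telescope into the single factor $(1+\val)P(\val)/\val^{\,n-k}$. Once this formula is in hand, the monotonicity step is a one-line positivity estimate with no case analysis. (An alternative, matching the ``induction argument'' mentioned above, is to induct on $n$ with base case $n=k$, where $\auxprior_{(k:k)}$ is the minimum of $k$ i.i.d.\ triangular values and is checked directly to have virtual value $\tfrac{(k-1)\val-1}{k}$; but the direct computation seems cleaner.)
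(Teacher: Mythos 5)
Your proof is correct, and it takes a genuinely different route from the paper's. The paper also reduces the general $(a,b)$ case to $b=0$ (leaving $a$ arbitrary), but then proceeds by direct calculus: it writes $\virtualval_{(k)}'(\val)\ge 0$ as $2\bigl(\auxpdf_{(k)}\bigr)^2 + (1-\auxcdf_{(k)})\,\auxdpdf_{(k)}\ge 0$, invokes the logarithmic-derivative identity $\auxdpdf_{(k)}(\val) = \auxpdf_{(k)}(\val)\bigl(\tfrac{n-k}{\val}-\tfrac{n+1}{\val+a}\bigr)$ that is special to the triangular CDF, splits into the regimes $\val\le\tfrac{n-k}{k+1}a$ and $\val>\tfrac{n-k}{k+1}a$, and in the nontrivial regime introduces an auxiliary function $\auxfunc(\val)$ with $\auxfunc(\infty)=0$ and reduces $\auxfunc'\le 0$ to a negative-definite quadratic in $\val$. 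You instead push the affine reduction one step further to the canonical $(a,b)=(1,0)$, exploit the other algebraic shortcut of the triangular form, namely $1-q=q\val$ with $q=1-\auxcdf(\val)$, to collapse the two partial binomial sums in $1-\auxcdf_{(k:n)}$ and $\auxpdf_{(k:n)}$ into the single closed form $\virtualval_{(k:n)}(\val)=\val-\tfrac{(n-k)!(k-1)!}{n!}\cdot\tfrac{(1+\val)P(\val)}{\val^{n-k}}$, and then obtain monotonicity in one line from nonnegativity of the coefficients of $Q=(1+\val)P$. Both arguments lean on the special structure of the triangular distribution, but your version avoids the case split and the second derivative entirely and gives the stronger quantitative bound $\virtualval_{(k:n)}'\ge 1-1/k$. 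The paper's form has the advantage of mirroring the companion MHR lemma (\Cref{lem:order:mhr:exponential}), which is proved with the same auxiliary-function template; if one wanted a unified presentation of the two lemmas, the paper's route is the one to follow. One small remark: the text preceding the lemma promises "an induction argument," but the paper's actual proof, like yours, is a direct computation with no induction on $n$; your final parenthetical is right to treat the inductive route as an unexploited alternative rather than what the paper does.
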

\begin{proof}
    We prove the lemma statement with a two-step argument. In the first step, we prove the statement for the special case of $a \geq 0$, $b = 0$. In the second step, we show how the statement for general $(a, b)$ can be reduced to the special case argued in step 1. In the remainder of the analysis, with slight abuse of notations, we shorthand $\auxcdf_{(k)}$ for $\auxcdf_{(k:n)}$.

    First, we prove the lemma statement for distribution $\auxprior(\val) = \frac{\val}{\val + a}$ with support $\supp(\auxprior) = [1, \infty)$. Let $\virtualval_{(k)}$ be the virtual value function of distribution $\auxprior_{(k)}$. It suffices to show that $\virtualval_{(k)}(\val)$ is weakly increasing, or equivalently, its derivative $\virtualval_{(k)}'(\val)$ is non-negative. Note that 
    \begin{align*}
        \virtualval_{(k)}'(\val) 
        &{}=
        1 - \frac{\left(\auxpdf_{(k)}(\val)\right)^2 - \left(1-\auxcdf_{(k)}(\val)\right)\auxdpdf_{(k)}(\val)}{\left(\auxpdf_{(k)}(\val)\right)^2}
    \end{align*}
    After rearranging, it suffices to show 
    \begin{align*}
        2\left(\auxpdf_{(k)}(\val)\right)^2 + \left(1-\auxcdf_{(k)}(\val)\right)\auxdpdf_{(k)}(\val)
        \geq 0
    \end{align*}
    Invoking the fact that $\auxdpdf_{(k)}(\val) = \auxpdf_{(k)}(\val)(\frac{n-k}{\val} - \frac{n + 1}{\val + a})$ (which holds due to the specific form of $\auxprior(\val) = \frac{\val}{\val + a}$), the above inequality becomes 
    \begin{align*}
        2\auxpdf_{(k)}(\val) \geq \left(\frac{n + 1}{\val + a} - \frac{n-k}{\val}\right)
        \left(1-\auxcdf_{(k)}(\val)\right)
    \end{align*}
    which is satisfied trivially when $\val \leq \frac{n-k}{k + 1}a$. 
    
    To verify the inequality for $\val > \frac{n - k}{k + 1}a$, we define auxiliary function $\auxfunc(\val) \triangleq 2\auxpdf_{(k)}(\val)(\frac{n + 1}{\val + a} - \frac{n-k}{\val})^{-1} - 1+\auxcdf_{(k)}(\val)$. Since $\lim_{\val\rightarrow\infty}\auxfunc(\val) = 0$, it suffices to show that $\auxfunc'(\val) \leq 0$ for all $\val > \frac{n - k}{k + 1}a$. Note that 
    \begin{align*}
        \auxfunc'(\val) = \frac{2\left(\frac{n+1}{(\val + a)^2} - \frac{n - k}{\val^2}\right)}{\left(\frac{n + 1}{\val + a} - \frac{n - k}{\val}\right)^2}\auxpdf_{(k)}(\val) + 
        \frac{2}{\frac{n + 1}{\val + a} - \frac{n - k}{\val}}\auxdpdf_{(k)}(\val) + \auxpdf_{(k)}(\val)
        =
        \frac{2\left(\frac{n+1}{(\val + a)^2} - \frac{n - k}{\val^2}\right)}{\left(\frac{n + 1}{\val + a} - \frac{n - k}{\val}\right)^2}\auxpdf_{(k)}(\val) 
        - \auxpdf_{(k)}(\val)
    \end{align*}
    where the first equality holds by algebra and the second equality holds since $\auxdpdf_{(k)}(\val) = \auxpdf_{(k)}(\val)(\frac{n-k}{\val} - \frac{n + 1}{\val + a})$. It is now sufficient to verify that for every $\val > \frac{n - k}{k + 1}$,
    \begin{align*}
        2\left(\frac{n+1}{(\val + a)^2} - \frac{n - k}{\val^2}\right) \leq 
        \left(\frac{n + 1}{\val + a} - \frac{n - k}{\val}\right)^2
    \end{align*}
    After rearranging, it becomes 
    \begin{align*}
        -(k - 1)(k + 1)\val^2 + 2(k - 1)(n - k)a\val - (n - k)(n - k + 2)a^2 \leq 0
    \end{align*}
    where the left hand side is a quadratic function with the maximum value equal to $\frac{-2(n - k)(n + 1)a^2}{k + 1}\leq 0$ attained at $\val = \frac{n - k}{k + 1}a$. This finished the proof of the lemma statement for the special case of $a \geq 0, b = 0$.

    Now consider arbitrary $a \geq 0, b \geq 0$ and distribution $\auxprior(\val) = \frac{\val - b}{\val - b + a}$ with support $\supp(\auxprior) = [b, \infty)$. It suffices to verify that for every $\val\in[b,\infty)$ and $\val\primed \in [\val, \infty)$,
    \begin{align*}
        \val - \frac{1-\auxcdf_{(k)}(\val)}{\auxpdf_{(k)}(\val)}\leq 
        \val\primed -\frac{1-\auxcdf_{(k)}(\val\primed)}{\auxpdf_{(k)}(\val\primed)}
    \end{align*}
    Let $\hat\auxprior_{(k)}$ be the distribution of $k$-th order statistic induced by distribution $\hat\auxprior(\val) = \frac{\val}{\val+a}$. The above inequality can be rewritten as
    \begin{align*}
        \val - b - \frac{1-\hat\auxcdf_{(k)}(\val - b)}{\hat\auxpdf_{(k)}(\val - b)}\leq 
        \val\primed - b - \frac{1-\hat\auxcdf_{(k)}(\val\primed - b)}{\hat\auxpdf_{(k)}(\val\primed -b)}
    \end{align*}
    whose left-hand side and right-hand side are the virtual value of distribution $\hat\auxprior_{(k)}$ at values $\val - b$ and $\val\primed - b$, respectively. As we shown previously, the virtual value of distribution $\hat\auxprior_{(k)}$ is weakly increasing and thus the proof of the lemma statement has been completed as desired.
\end{proof}

\newcommand{\rightderivative}{\partial_+}
\newcommand{\leftderivative}{\partial_-}
\newcommand{\auxrevcurve}{P}

We also need the following two facts for our analysis.

\begin{fact}[Equivalent Condition for Concavity \cite{JLQTX19}]
\label{fact:concavity equivalent definition}
    Suppose $\revcurve(\quant)$ is a left- and right-differentiable function on $[a, b]$, then $\revcurve(\quant)$ is concave on $[a, b]$ if and only if
    \begin{align*}
        \forall \quant\in(a, b),~\quant\primed\in[\quant, b):\qquad
        \revcurve(\quant\primed) \leq \revcurve(\quant) + \rightderivative\revcurve(\quant)\cdot \left(\quant\primed - \quant\right) 
    \end{align*}
    where $\rightderivative\revcurve(\quant)$ is the right derivative of function $\revcurve$ at $\quant$.
\end{fact}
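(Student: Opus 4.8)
The plan is to prove the two directions of the equivalence separately. For the forward direction, I would invoke the textbook fact that a concave function lies weakly below each of its supporting lines: if $\revcurve$ is concave on $[a,b]$, then $\rightderivative\revcurve(\quant)$ exists at every $\quant\in(a,b)$, and for $\quant<t<\quant\primed$ the monotonicity of chord slopes of a concave function gives $\tfrac{\revcurve(t)-\revcurve(\quant)}{t-\quant}\ge\tfrac{\revcurve(\quant\primed)-\revcurve(\quant)}{\quant\primed-\quant}$; letting $t\to\quant^{+}$ and rearranging yields $\revcurve(\quant\primed)\le\revcurve(\quant)+\rightderivative\revcurve(\quant)\,(\quant\primed-\quant)$, with the case $\quant\primed=\quant$ being trivial.

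The converse is the substantive direction, and I would argue it by contradiction. Assume the tangent-line inequality holds for all $\quant\in(a,b)$, $\quant\primed\in[\quant,b)$, but that $\revcurve$ is not concave on $[a,b]$. Since $\revcurve$ is continuous (being both left- and right-differentiable), I can then pick $x,y\in(a,b)$ with $x<y$ so that $\revcurve$ lies strictly below, at some interior point of $(x,y)$, the chord $\ell$ joining $(x,\revcurve(x))$ and $(y,\revcurve(y))$; write $c$ for the slope of $\ell$ and set $\psi\triangleq\revcurve-\ell$ on $[x,y]$, so $\psi$ is continuous with $\psi(x)=\psi(y)=0$ and $\psi<0$ somewhere. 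Next I would take a connected component $(\alpha,\beta)$ of the open set $\{t\in(x,y):\psi(t)<0\}$, so that $\psi<0$ on $(\alpha,\beta)$ while $\psi(\alpha)=\psi(\beta)=0$ (by continuity and maximality, and since $\psi$ vanishes at $x,y$). The crux is then to apply the hypothesis at each anchor $\quant=t\in(\alpha,\beta)$ with $\quant\primed=\beta$, which is legitimate because $t\in(a,b)$ and $t<\beta\le y<b$: subtracting the identity $\ell(\beta)=\ell(t)+c(\beta-t)$ from $\revcurve(\beta)\le\revcurve(t)+\rightderivative\revcurve(t)(\beta-t)$ gives $0=\psi(\beta)\le\psi(t)+(\rightderivative\revcurve(t)-c)(\beta-t)$, hence $\rightderivative\psi(t)=\rightderivative\revcurve(t)-c\ge\tfrac{-\psi(t)}{\beta-t}>0$ for every $t\in(\alpha,\beta)$. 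Finally, since a continuous function on $[\alpha,\beta]$ whose right derivative is nonnegative on $(\alpha,\beta)$ is nondecreasing, $\psi(t)\ge\psi(\alpha)=0$ throughout $[\alpha,\beta]$, contradicting $\psi<0$ on $(\alpha,\beta)$; this forces $\revcurve$ to be concave.

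The hard part is this converse, and the main obstacle to avoid is the natural but stalling route of first trying to show that $\rightderivative\revcurve$ is nonincreasing and then deducing concavity by a standard one-sided mean value argument. That route breaks down because the hypothesis yields only \emph{upper} bounds on $\revcurve$ at larger arguments, so there is no handle for bounding $\rightderivative\revcurve$ from below. The localization to a ``negative component'' $(\alpha,\beta)$ above sidesteps monotonicity of $\rightderivative\revcurve$ entirely: there, the hypothesis (applied with endpoint $\beta$) already forces $\rightderivative\psi>0$, which pushes $\psi$ back up to $0$ and contradicts its negativity. The only analytic ingredient beyond this is the elementary lemma that continuity together with a nonnegative one-sided derivative implies monotonicity, which I would either cite or prove in two lines via the auxiliary function $t\mapsto\psi(t)+\varepsilon t$.
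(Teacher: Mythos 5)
The paper itself gives no proof of this Fact---it is stated with a citation to \cite{JLQTX19}---so there is no in-paper argument to compare against. On its own merits, your proof is correct and self-contained. The forward direction via monotonicity of chord slopes is standard. For the converse, the decomposition $\psi=\revcurve-\ell$ on a maximal negative component $(\alpha,\beta)$ of $\{\psi<0\}$ is exactly the right localization: it guarantees $\psi(\alpha)=\psi(\beta)=0$, lets you apply the hypothesis with anchor $t\in(\alpha,\beta)$ and endpoint $\quant\primed=\beta$ (which indeed lies in $[\quant,b)$ since $\beta\le y<b$), and produces $\rightderivative\psi(t)\ge -\psi(t)/(\beta-t)>0$, after which the one-sided-derivative monotonicity lemma pushes $\psi$ up to $\psi(\alpha)=0$ and yields the contradiction. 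Two small points you have correctly handled implicitly and are worth keeping in mind if this were written out: (i) continuity of $\revcurve$ (a consequence of one-sided differentiability) is what lets you restrict the non-concavity witnesses to interior points $x,y\in(a,b)$, since concavity on $(a,b)$ plus continuity yields concavity on $[a,b]$; and (ii) the monotonicity lemma requires continuity at $\alpha$ in addition to $\rightderivative\psi\ge 0$ on the open interval $(\alpha,\beta)$, which you have. Your diagnosis of why the ``show $\rightderivative\revcurve$ is nonincreasing first'' route stalls is also accurate---the hypothesis never provides the needed leftward comparison $\revcurve(\quant_1)\le\revcurve(\quant_2)+\rightderivative\revcurve(\quant_2)(\quant_1-\quant_2)$ for $\quant_1<\quant_2$---so the component argument is a genuinely necessary piece of the converse, not an overcomplication.
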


\begin{fact}[\cite{JLQTX19}]
\label{fact:order statistic dominance}
    If distribution $\auxprior$ first-order stochastically dominants distribution $\prior$, then the corresponding $k$-th order statistic distributions $\auxprior_{(k:n)}$ first-order stochastically dominants distribution $\prior_{(k:n)}$ for every $n\in\naturals$ and $k\in[n]$.
\end{fact}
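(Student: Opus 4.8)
The plan is to prove this via the classical monotone (quantile) coupling, which sidesteps any explicit computation with binomial sums. First I would recall the standard reformulations of first-order stochastic dominance: $\auxprior$ dominates $\prior$ iff $\auxcdf(\val) \le \cdf(\val)$ for every $\val$, and this is in turn equivalent to $Q_{\auxprior}(u) \ge Q_{\prior}(u)$ for every $u \in (0,1)$, where $Q_{\prior}(u) \triangleq \inf\{\val : \cdf(\val) \ge u\}$ is the (left-continuous) quantile function of $\prior$ and $Q_{\auxprior}$ is defined analogously. The inclusion $\{\val : \auxcdf(\val) \ge u\} \subseteq \{\val : \cdf(\val) \ge u\}$ gives the quantile inequality in one line, and a standard check (handling atoms and flat stretches of the CDFs) shows $Q_{\prior}(U) \sim \prior$ whenever $U \sim \Unif[0,1]$.

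Next I would build the coupling. Fix $n \in \naturals$ and $k \in [n]$, draw i.i.d.\ uniforms $U_1, \dots, U_n \sim \Unif[0,1]$, and set $X_i \triangleq Q_{\prior}(U_i)$ and $Y_i \triangleq Q_{\auxprior}(U_i)$ for $i \in [n]$. Then $(X_1, \dots, X_n)$ is an i.i.d.\ sample from $\prior$, $(Y_1, \dots, Y_n)$ is an i.i.d.\ sample from $\auxprior$, and by the previous paragraph $Y_i \ge X_i$ holds surely for every $i$. Now I would invoke the elementary fact that, for each fixed $k$, the map sending $(z_1, \dots, z_n) \in \reals^n$ to its $k$-th largest coordinate is monotone nondecreasing in each argument; applying it coordinatewise to the two vectors above yields $Y_{(k:n)} \ge X_{(k:n)}$ surely, where $X_{(k:n)}$ and $Y_{(k:n)}$ denote the $k$-th largest among the $X_i$'s and the $Y_i$'s. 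Taking probabilities gives $\prob{Y_{(k:n)} \le \val} \le \prob{X_{(k:n)} \le \val}$ for all $\val$; since $X_{(k:n)} \sim \prior_{(k:n)}$ and $Y_{(k:n)} \sim \auxprior_{(k:n)}$, this is precisely $\auxcdf_{(k:n)}(\val) \le \cdf_{(k:n)}(\val)$, i.e., $\auxprior_{(k:n)}$ first-order stochastically dominates $\prior_{(k:n)}$.

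I do not expect a genuine obstacle here — this is a textbook fact — so the only thing warranting attention is the treatment of the generalized inverse in the degenerate cases (atoms, plateaus of the CDF, and the support endpoints, which under the paper's convention are $0$ and $\hval$ or $+\infty$), all handled uniformly by the quantile-function formulation. As an alternative one could avoid coupling altogether: write $\cdf_{(k:n)}(\val) = \sum_{j=0}^{k-1} \binom{n}{j} (1-\cdf(\val))^{j} \cdf(\val)^{\,n-j}$, note that the right-hand side depends on $\val$ only through $\cdf(\val) \in [0,1]$, and check that it is nondecreasing in this variable (equivalently, that $\prob{\mathrm{Bin}(n,q) \le k-1}$ is nonincreasing in $q$, a one-line telescoping-derivative computation); then $\auxcdf(\val) \le \cdf(\val)$ immediately yields $\auxcdf_{(k:n)}(\val) \le \cdf_{(k:n)}(\val)$. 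The coupling argument is cleaner and more robust to atoms, so I would present that one.
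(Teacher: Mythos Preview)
Your coupling argument is correct and is the standard textbook proof of this fact. Note, however, that the paper does not supply its own proof here: the statement is recorded as a \emph{Fact} with a citation to \cite{JLQTX19} and is used as a black box in the proofs of \Cref{thm:order:regular}, \Cref{thm:order:mhr}, and \Cref{thm:BOM_BMR:quasi-mhr}. So there is nothing to compare against beyond observing that your argument (either the quantile coupling or the binomial-CDF monotonicity alternative you sketch) cleanly justifies the fact the paper takes for granted.
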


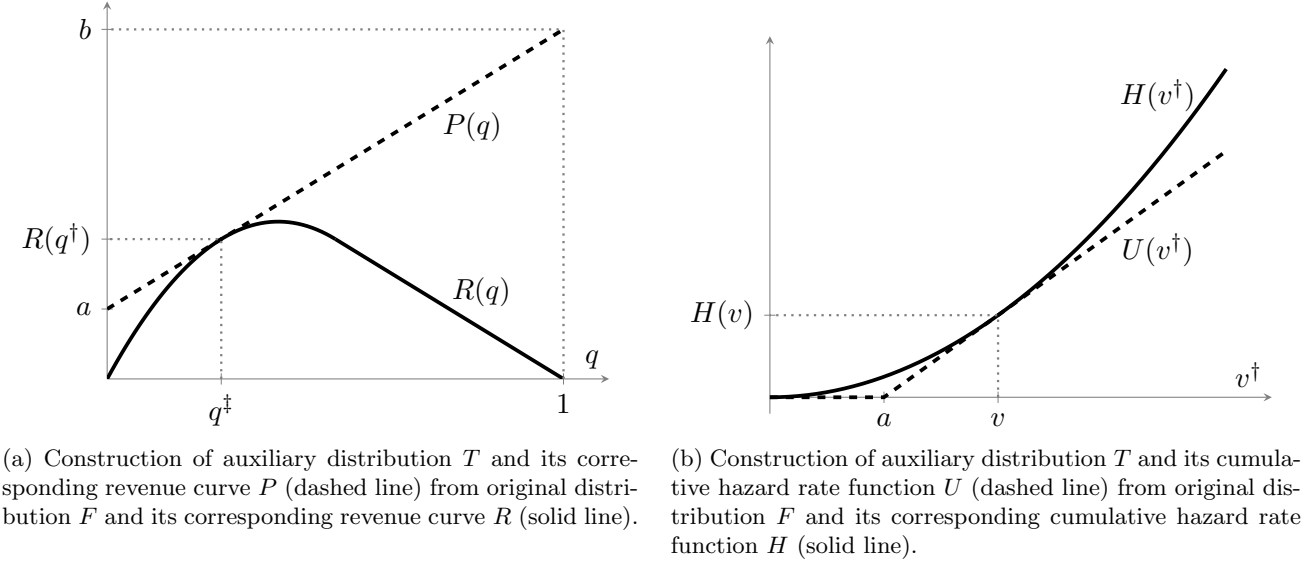
\begin{figure}
    \centering
    \subfloat[Construction of auxiliary distribution $\auxprior$ and its corresponding revenue curve $\auxrevcurve$ (dashed line) from original distribution $\prior$ and its corresponding revenue curve $\revcurve$ (solid line).]{
\begin{tikzpicture}[scale=1, transform shape]
\begin{axis}[
axis line style=gray,
axis lines=middle,
xlabel = $\quant$,
xtick={0,  0.25, 1},
ytick={0, 0.25, 0.5, 1.25},
xticklabels={0, $\quant\doubleprimed$, 1},
yticklabels={0, $a$, $\revcurve(\quant\primed)$, $b$},
xmin=0,xmax=1.1,ymin=-0.0,ymax=1.35,
width=0.5\textwidth,
height=0.4\textwidth,
samples=50]

\addplot[domain=0:0.5, black!100!white, line width=0.5mm] (x, {3 * x - 4*x*x});
\addplot[line width=0.5mm] (0.5, 0.5) -- (1, 0);

\addplot[line width=0.5mm, dashed] (0, 0.25) -- (1, 1.25);

\addplot[dotted, gray, line width=0.3mm] (0.25, 0) -- (0.25, 0.5) -- (0, 0.5);

\addplot[dotted, gray, line width=0.3mm] (1, 0) -- (1, 1.25) -- (0, 1.25);

\draw (0.82, 0.31) node {$\revcurve(\quant)$};

\draw (0.8, 0.9) node {$\auxrevcurve(\quant)$};

\end{axis}

\end{tikzpicture}
\label{fig:regular order analysis}
}
~~~~
    \subfloat[Construction of auxiliary distribution $\auxprior$ and its cumulative hazard rate function $\auxcumhazardrate$ (dashed line) from original distribution $\prior$ and its corresponding cumulative hazard rate function $\cumhazardrate$ (solid line).]{
\begin{tikzpicture}[scale=1, transform shape]
\begin{axis}[
axis line style=gray,
axis lines=middle,
xlabel = $\val\primed$,
xtick={0,  0.25, 0.5},
ytick={0, 0.25},
xticklabels={0, $a$, $\val$},
yticklabels={0, $\cumhazardrate(\val)$},
xmin=0,xmax=1.1,ymin=-0.05,ymax=1.1,
width=0.5\textwidth,
height=0.4\textwidth,
samples=50]

\addplot[domain=0:1, black!100!white, line width=0.5mm] (x, {x * x});

\addplot[line width=0.5mm, dashed] (0, 0) -- (0.25, 0) -- (1, 0.75);

\addplot[dotted, gray, line width=0.3mm] (0.5, 0) -- (0.5, 0.25) -- (0, 0.25);

\draw (0.85, 0.45) node {$\auxcumhazardrate(\val\primed)$};

\draw (0.85, 0.92) node {$\cumhazardrate(\val\primed)$};

\end{axis}

\end{tikzpicture}
\label{fig:mhr order analysis}
}
    \caption{Graphical illustration of the analysis for \Cref{thm:order:regular} and \Cref{thm:order:mhr}.} 
    \label{fig:regular/mhr order analysis}
\end{figure}

Putting all the pieces together, we are ready to prove \Cref{thm:order:regular}.

\begin{proof}[Proof of \Cref{thm:order:regular}]
Fix an arbitrary $n \geq 1$ and $k \in [n]$. By \Cref{lem:regular equivalent definition}, it suffices to verify the concavity of revenue curve $\revcurve_{(k:n)}$ induced by order statistic distribution $\cdfkn$. With slight abuse of notation, in the remainder of the analysis, we shorthand $\revcurve_{(k:n)}$ and $\cdfkn$ as $\revcurve_{(k)}$ and $\cdf_{(k)}$, respectively.

Invoking \Cref{fact:concavity equivalent definition}, it suffices to verify that for every pair of quantiles $\quant\in(0, 1)$ and $\quant\primed\in[\quant, 1)$,
\begin{align}
\label{eq:order:reqular:revenue curve concavity condition}
    \revcurve_{(k)}(\quant\primed) \leq \revcurve_{(k)}(\quant) + \rightderivative\revcurve_{(k)}(\quant) \cdot 
    \left(\quant\primed - \quant\right)
\end{align}
where $\rightderivative\revcurve_{(k)}(\quant)$ is the right derivative of $\revcurve_{(k)}$ at quantile $\quant$. Let $\val \triangleq \frac{\revcurve_{(k)}(\quant)}{\quant}$, and $\quant\doubleprimed \triangleq 1 - \cdf(\val)$. Define $(a, b)$ as 
\begin{align*}
    a \triangleq \revcurve(\quant\doubleprimed) - \rightderivative\revcurve(\quant\doubleprimed)\cdot \quant\doubleprimed,
    \;\;\mbox{and}\;\;
    b \triangleq \revcurve(\quant\doubleprimed) + \rightderivative\revcurve(\quant\doubleprimed)\cdot \left(1 - \quant\doubleprimed\right)
\end{align*}
Since distribution $\prior$ is regular, revenue curve $\revcurve$ is concave (\Cref{lem:regular equivalent definition}). Invoking \Cref{fact:concavity equivalent definition}, we have $a \geq 0$ and $b \geq 0$. Now consider an auxiliary distribution $\auxprior$ defined as $\auxcdf(\val) = \frac{\val - b}{\val - b + a}$. By construction, auxiliary distribution (i) is regular, (ii) first-order stochastically dominates distribution $\prior$, and (iii) its induced revenue curve $\auxrevcurve$ satisfies $\rightderivative\revcurve(\quant\doubleprimed) = \rightderivative\auxrevcurve(\quant\doubleprimed)$ and $\revcurve(\quant\doubleprimed) = \auxrevcurve(\quant\doubleprimed)$. See \Cref{fig:regular order analysis} for a graphical illustration.

With auxiliary distribution $\auxprior$, order statistic distribution $\auxprior_{(k)}$, and induced revenue curve $\auxrevcurve_{(k)}$, we are ready to verify condition~\eqref{eq:order:reqular:revenue curve concavity condition}:
\begin{align*}
    \revcurve_{(k)}(\quant)
    +
    \rightderivative\revcurve_{(k)}(\quant)\left(\quant\primed - \quant\right)
    \overset{(a)}{=}
    \auxrevcurve_{(k)}(\quant)
    +
    \rightderivative\auxrevcurve_{(k)}(\quant)\left(\quant\primed - \quant\right)
    \overset{(b)}{\geq}
    \auxrevcurve_{(k)}(\quant\primed)
    \overset{(c)}{\geq}
    \revcurve_{(k)}(\quant\primed)
\end{align*}
where equality~(a) holds since $\revcurve(\quant\doubleprimed) = \revcurve(\quant\doubleprimed),\rightderivative\revcurve(\quant\doubleprimed) = \rightderivative\auxrevcurve(\quant\doubleprimed)$ by construction;
inequality~(b) holds since auxiliary order statistic distribution $\auxcdf_{(k)}$ is regular (\Cref{lem:order:regular:triangle}) and \Cref{lem:regular equivalent definition}; and inequality~(c) holds due to \Cref{fact:order statistic dominance} and the construction so that auxiliary distribution $\auxprior$ first-order stochastically dominants distribution $\prior$. This finishes the proof of \Cref{thm:order:regular}.
\end{proof}

\xhdr{Symmetric MHR Distributions.} We next consider MHR distributions.
% \label{subsec:order:mhr}

\begin{theorem}[Order Statistics of Symmetric MHR Distributions]
\label{thm:order:mhr}
\begin{flushleft}
Given $n \geq 1$ many i.i.d.\ MHR values $v_{i} \sim F$ for $i \in [n]$, every order statistic $v_{(k:n)} \sim F_{(k:n)}$ for $k \in [n]$ also is MHR.
% \yf{This theorem is not correct for non-identical MHR distributions. Counterexample can be found at \url{https://www.desmos.com/calculator/1sc6oymwcw}}
\end{flushleft}
\end{theorem}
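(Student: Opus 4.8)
I would follow the same two-step template as the proof of \Cref{thm:order:regular}, with the cumulative hazard rate function $\cumhazardrate(\val)=-\ln(1-\cdf(\val))$ playing the role that the revenue curve played there: by \Cref{lem:mhr equivalent definition}, $\cdfkn$ is MHR iff $\cumhazardrate_{(k:n)}(\val)\triangleq-\ln(1-\cdfkn(\val))$ is convex. The clean starting point is a composition identity. Since $\val_{(k:n)}>\val$ exactly when at least $k$ of the $n$ i.i.d.\ samples exceed $\val$, we have $1-\cdfkn(\val)=G_k\bigl(1-\cdf(\val)\bigr)$ with $G_k(\quant)\triangleq\prob{\mathrm{Bin}(n,\quant)\geq k}$, hence
\[
    \cumhazardrate_{(k:n)}(\val)\;=\;-\ln G_k\bigl(e^{-\cumhazardrate(\val)}\bigr)\;=\;\Phi_k\bigl(\cumhazardrate(\val)\bigr),\qquad \Phi_k(t)\triangleq-\ln G_k(e^{-t}).
\]
Here $\Phi_k$ is exactly the cumulative hazard rate function of the $k$-th largest of $n$ i.i.d.\ standard exponentials; it is increasing with $\Phi_k(0)=0$, and $\cumhazardrate$ is convex (and increasing) by the MHR hypothesis. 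So $\cumhazardrate_{(k:n)}=\Phi_k\circ\cumhazardrate$ is convex -- a convex increasing function composed with a convex function -- as soon as $\Phi_k$ itself is convex. By \Cref{lem:mhr equivalent definition} that is all that is needed.

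\textbf{The base case ($\Phi_k$ convex).} This is the analog of \Cref{lem:order:regular:triangle}: it asserts that the $k$-th largest of $n$ i.i.d.\ $\mathrm{Exp}(1)$ variables is MHR. The slick route is the R\'enyi representation of exponential order statistics: that $k$-th largest variable is distributed as an independent sum $\mathrm{Exp}(n)+\mathrm{Exp}(n-1)+\dots+\mathrm{Exp}(k)$, a hypoexponential random variable, whose density is a convolution of (log-concave) exponential densities, hence log-concave; a distribution with log-concave density is MHR, so $\Phi_k$ is convex. (Alternatively, one proves this by induction on the number of exponential summands, convolving one $\mathrm{Exp}$ factor at a time.) The same argument, composing with $\Phi_k$, shows that every distribution whose cumulative hazard rate function is a clamped affine map $t\mapsto(\mu+\lambda t)^{+}$ has MHR order statistics -- the family that appears in the reduction below.

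\textbf{Reduction in the style of the figure.} Although the composition identity already closes the proof, it can equivalently be run as the FOSD argument of \Cref{thm:order:regular}, which is what \Cref{fig:mhr order analysis} depicts. Fix a value $\val$ in the interior of $\supp(\prior)$ and let $\auxprior$ be the auxiliary distribution with cumulative hazard rate $\auxcumhazardrate(\val')\triangleq\bigl(\cumhazardrate(\val)+\rightderivative\cumhazardrate(\val)\cdot(\val'-\val)\bigr)^{+}$, the tangent line to $\cumhazardrate$ at $\val$ clamped at $0$. Convexity and nonnegativity of $\cumhazardrate$ give $\auxcumhazardrate\leq\cumhazardrate$ pointwise, so $\auxprior$ first-order stochastically dominates $\prior$; by \Cref{fact:order statistic dominance}, $\auxprior_{(k:n)}$ dominates $\cdfkn$, whence $\auxcumhazardrate_{(k:n)}\leq\cumhazardrate_{(k:n)}$ pointwise. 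On the other hand $\auxcumhazardrate$ matches $\cumhazardrate$ in value and right-derivative at $\val$, and since $\cdfkn$ is a fixed smooth strictly increasing function of $\cdf(\val)$ on $(0,1)$, the same matching passes to $\auxcumhazardrate_{(k:n)}$ versus $\cumhazardrate_{(k:n)}$ at $\val$. As $\auxprior$ is a clamped-affine-cumulative-hazard distribution, $\auxcumhazardrate_{(k:n)}$ is convex by the base case, so it lies above its tangent at $\val$; combining with $\cumhazardrate_{(k:n)}\geq\auxcumhazardrate_{(k:n)}$ and the value/derivative matching, $\cumhazardrate_{(k:n)}$ lies above that same tangent line. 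Since $\val$ was arbitrary, $\cumhazardrate_{(k:n)}$ is convex (for values below $\supp(\prior)$ it is identically $0$).

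\textbf{Main obstacle.} The real content is the base case -- establishing log-concavity of the hypoexponential density, equivalently the convexity of $\Phi_k$; everything else is bookkeeping. The one nuisance is degenerate points: an MHR distribution with bounded support carries an atom at $\hval$ (where $\cumhazardrate=+\infty$), $\cumhazardrate\equiv0$ below $\supp(\prior)$, and at the support endpoints only one-sided derivatives exist. These are handled exactly as in \Cref{thm:order:regular} -- work on the support interior, use the one-sided-derivative formulation of convexity (cf.\ \Cref{fact:concavity equivalent definition}), and if desired first approximate $\prior$ by smooth MHR distributions on $[0,\infty)$.
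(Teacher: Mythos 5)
Your proof is correct but takes a genuinely different, and arguably cleaner, route than the paper's. The paper's proof combines (i) a base case (\Cref{lem:order:mhr:exponential}) established by explicit differentiation of the hazard rate of exponential order statistics, and (ii) a first-order-stochastic-dominance reduction: constructing a shifted exponential $\auxprior$ whose cumulative hazard rate is tangent to $\cumhazardrate$ at a fixed point and dominates $\prior$, then pushing a tangent-line inequality to $\cumhazardrate_{(k:n)}$ via \Cref{fact:order statistic dominance} and \Cref{fact:concavity equivalent definition}. Your key observation --- the identity $\cumhazardrate_{(k:n)} = \Phi_k \circ \cumhazardrate$ with $\Phi_k$ the cumulative hazard rate of the $k$-th largest of $n$ i.i.d.\ standard exponentials --- short-circuits the reduction entirely: once $\Phi_k$ is convex and increasing, the theorem is just the rule that a convex increasing function of a convex function is convex. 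And you obtain convexity of $\Phi_k$ not by brute-force derivative calculus as in \Cref{lem:order:mhr:exponential}, but by the R\'enyi representation (the $k$-th largest of $n$ exponentials is a hypoexponential, i.e.\ a sum of independent exponentials with distinct rates) combined with the closure of log-concavity under convolution and the classical fact that a log-concave density implies MHR. The trade-off is that your argument leans on two imported but standard results (Pr\'ekopa's convolution theorem and log-concave density implying MHR), whereas the paper's derivative computation is entirely self-contained; on the other hand, your composition identity is structurally illuminating, makes the FOSD-tangent machinery optional rather than essential (as you correctly note), and isolates the i.i.d.\ hypothesis in the factorization $1 - \cdfkn = G_k(1-\cdf)$, which is exactly why the symmetric and asymmetric closure theorems need different proofs.
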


The proof of \Cref{thm:order:mhr} follows a similar strategy as the one for \Cref{thm:order:regular}. At a high level, our argument relies on the following two-step argument: first, we prove the theorem statement for exponential distributions in \Cref{lem:order:mhr:exponential}. For the exponential distributions, we can explicitly characterize the hazard rate of its order statistic in closed form. Thus, applying an induction argument enables us to prove \Cref{lem:order:mhr:exponential}. In the second step, we present a reduction argument from general distributions to exponential distributions.

\begin{lemma}
\label{lem:order:mhr:exponential}
\begin{flushleft}
    Given $\lambda \geq 0$, $a\geq 0$, and $n \geq 1$ many i.i.d.\ MHR values $\vali\sim \auxprior$ with cumulative density function $\auxcdf(\val) = 
    1 - e^{-\lambda(\val - a)}
    $ and support $\supp(\auxcdf) = [a, \infty)$, every order statistics $\val_{(k:n)} \sim \auxprior_{(k:n)}$ for $k \in [n]$ also is MHR.
\end{flushleft}
\end{lemma}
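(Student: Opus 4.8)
The plan is to mirror the two-step structure of the proof of \Cref{lem:order:regular:triangle}: first strip off the nuisance parameters $\lambda$ and $a$ by an affine reduction, then establish the MHR property for the single canonical exponential by writing the hazard rate of its $k$-th order statistic in closed form. For Step~1, assume $\lambda>0$ (the case $\lambda=0$ being degenerate). The map $\val\mapsto \frac{1}{\lambda}\val + a$ is increasing, and order statistics commute with increasing transformations; moreover the CDF $\auxcdf(\val)=1-e^{-\lambda(\val-a)}$ on $[a,\infty)$ is exactly the law of $\frac{1}{\lambda}U+a$ for $U$ a standard exponential. Hence $\valkn$ has the same distribution as $\frac{1}{\lambda}\,U_{(k:n)}+a$, where $U_{(k:n)}$ is the $k$-th order statistic of $n$ i.i.d.\ standard exponentials, and the hazard rate of $\frac{1}{\lambda}U_{(k:n)}+a$ at the point $\frac{1}{\lambda}u+a$ is $\lambda$ times the hazard rate of $U_{(k:n)}$ at $u$ --- weakly increasing exactly when the latter is. So it suffices to treat $\lambda=1$, $a=0$, i.e.\ $\auxcdf(\val)=1-e^{-\val}$ on $[0,\infty)$.

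For Step~2, fix $k\in[n]$ and write $\auxcdf_{(k)},\auxpdf_{(k)}$ for the CDF and density of the $k$-th (i.e.\ $k$-th \emph{largest}) order statistic; set $\quant\triangleq e^{-\val}=1-\auxcdf(\val)$. The standard order-statistic formulas give
\[
\auxpdf_{(k)}(\val)=\frac{n!}{(k-1)!\,(n-k)!}\,\quant^{k}(1-\quant)^{n-k},
\qquad
1-\auxcdf_{(k)}(\val)=\sum_{j=k}^{n}\binom{n}{j}\,\quant^{j}(1-\quant)^{n-j},
\]
the latter because $\valkn>\val$ iff at least $k$ of the $n$ samples exceed $\val$. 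Cancelling the common factor $\quant^{k}(1-\quant)^{n-k}$ from the numerator and denominator of the hazard rate and substituting $\rho\triangleq \frac{\quant}{1-\quant}=\frac{1}{e^{\val}-1}$ yields
\[
\hazardrate_{(k)}(\val)=\frac{\auxpdf_{(k)}(\val)}{1-\auxcdf_{(k)}(\val)}
=\frac{k\binom{n}{k}}{\displaystyle\sum_{j=k}^{n}\binom{n}{j}\,\rho^{\,j-k}}
=\frac{k\binom{n}{k}}{\binom{n}{k}+\binom{n}{k+1}\rho+\cdots+\binom{n}{n}\rho^{\,n-k}} .
\]

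Step~3 is then immediate: the denominator is a polynomial in $\rho$ with nonnegative coefficients, hence weakly increasing on $\rho\geq 0$, while $\rho=1/(e^{\val}-1)$ is strictly decreasing in $\val$; therefore $\hazardrate_{(k)}(\val)$ is weakly increasing in $\val$, i.e.\ $\auxcdf_{(k)}$ is MHR, which is the claim.

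I do not expect a real obstacle here: unlike the triangular/regular case, the memorylessness of the exponential collapses the hazard rate of every order statistic to the elementary rational function above, so the computation is short. The only points requiring care are honoring the ``$k$-th largest'' convention when writing $1-\auxcdf_{(k)}$ and tracking the orientation reversal (as $\val$ grows, $\rho$ shrinks). If an induction is preferred in place of the closed-form Step~2, one can instead invoke the R\'enyi representation $\valkn \overset{\mathrm d}{=} \val_{(k+1:n)}+\mathrm{Exp}(k)$ with independent summands and induct on $n-k$, the inductive step being that convolving a log-concave density with an exponential density preserves log-concavity, hence MHR; but the direct computation is self-contained and shorter.
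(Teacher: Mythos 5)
Your proof is correct, and it takes a genuinely different route through Step~2 than the paper does. The paper proves MHR of the order statistic by direct calculus: it writes $\hazardrate_{(k)}'(\val)$ in terms of $\auxpdf_{(k)}$ and $\auxdpdf_{(k)}$, plugs in the identity $\auxdpdf_{(k)}(\val) = \lambda\auxpdf_{(k)}(\val)\frac{ne^{-\lambda\val}-k}{1 - e^{-\lambda \val}}$, splits into the regime $\val \le \frac{1}{\lambda}\ln(n/k)$ where the target inequality is trivial and the complementary regime, and in the latter introduces an auxiliary function $\auxfunc$ and shows $\auxfunc' \le 0$ together with $\lim_{\val\to\infty}\auxfunc(\val)=0$. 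Your argument bypasses all of this: cancelling the common $\quant^k(1-\quant)^{n-k}$ from the order-statistic density and tail and substituting $\rho = \quant/(1-\quant)$ collapses the hazard rate to the rational function $k\binom{n}{k}\big/\sum_{j=k}^{n}\binom{n}{j}\rho^{j-k}$, whose monotonicity is immediate since the denominator is a polynomial in $\rho\ge 0$ with nonnegative coefficients and $\rho = 1/(e^\val-1)$ is decreasing. The algebra checks out (including the $(k\text{-th largest})$ convention: $\valkn > \val$ iff at least $k$ samples exceed $\val$, giving the stated tail sum), and the affine reduction in Step~1 is the same in spirit as the paper's translation step, though you normalize both $\lambda$ and $a$ at once where the paper keeps $\lambda$ free. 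Your approach is shorter and more transparent, since it reads the monotonicity off a closed form rather than verifying a derivative inequality; it exploits exactly the special structure (memorylessness of the exponential) that makes this case so much simpler than the triangular case in \Cref{lem:order:regular:triangle}. Two minor presentation notes: the paper has $\supp(\auxcdf) = [a,\infty)$, so your $\rho$ ranges over $(0,\infty)$ as $\val$ ranges over $(0,\infty)$ in the normalized case, which is the regime you need; and for $\lambda = 0$ you rightly set aside the degenerate point-mass case, which is vacuously MHR.
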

\begin{proof}
    We prove the lemma statement with a two-step argument. In the first step, we prove the statement for the special case of $\lambda \geq 0$, $a = 0$. In the second step, we show how the statement for general $(\lambda, a)$ can be reduced to the special case argued in step 1. In the remainder of the analysis, with slight abuse of notations, we shorthand $\auxcdf_{(k)}$ for $\auxcdf_{(k:n)}$.

    First, we prove the lemma statement for distribution $\auxprior(\val) = 1-e^{-\lambda \val}$ with support $\supp(\auxprior) = [0, \infty)$. Let $\hazardrate_{(k)}$ be the virtual value function of distribution $\auxprior_{(k)}$. It suffices to show that $\hazardrate_{(k)}(\val)$ is weakly increasing, or equivalently, its derivative $\hazardrate_{(k)}'(\val)$ is non-negative. Note that 
    \begin{align*}
        \hazardrate_{(k)}'(\val) 
        &{}=
        \frac{\left(\auxpdf_{(k)}(\val)\right)^2 + \left(1-\auxcdf_{(k)}(\val)\right)\auxdpdf_{(k)}(\val)}{\left(1 - \auxcdf_{(k)}(\val)\right)^2}
    \end{align*}
    After rearranging, it suffices to show 
    \begin{align*}
        \left(\auxpdf_{(k)}(\val)\right)^2 + \left(1-\auxcdf_{(k)}(\val)\right)\auxdpdf_{(k)}(\val)
        \geq 0
    \end{align*}
    Invoking the fact that $\auxdpdf_{(k)}(\val) = \lambda\auxpdf_{(k)}(\val)\frac{ne^{-\lambda\val}-k}{1 - e^{-\lambda \val}}$ (which holds due to the specific form of $\auxprior(\val) = 1-e^{-\lambda\val}$), the above inequality becomes 
    \begin{align*}
        \auxpdf_{(k)}(\val) \geq \lambda\frac{k-ne^{-\lambda\val}}{1 - e^{-\lambda \val}}
        \left(1-\auxcdf_{(k)}(\val)\right)
    \end{align*}
    which is satisfied trivially when $\val \leq \frac{1}{\lambda}\ln(\frac{n}{k})$. 
    
    To verify the inequality for $\val > \frac{1}{\lambda}\ln(\frac{n}{k})$, we define auxiliary function $\auxfunc(\val) \triangleq \frac{1}{\lambda}\auxpdf_{(k)}(\val)\frac{1 - e^{-\lambda \val}}{k-ne^{-\lambda\val}} - 1+\auxcdf_{(k)}(\val)$. Since $\lim_{\val\rightarrow\infty}\auxfunc(\val) = 0$, it suffices to show that $\auxfunc'(\val) \leq 0$ for all $\val > \frac{1}{\lambda}\ln(\frac{n}{k})$. Note that 
    \begin{align*}
        \auxfunc'(\val) 
        &{} = \frac{1}{\lambda}\frac{1 - e^{-\lambda \val}}{k-ne^{-\lambda\val}}\auxdpdf_{(k)}(\val)
        +\frac{e^{-\lambda\val}\left(k-ne^{-\lambda\val}\right) - \left(1 - e^{-\lambda \val}\right)ne^{-\lambda\val}}{\left(k-ne^{-\lambda\val}\right)^2}\auxpdf_{(k)}(\val)
        +\auxpdf(\val)
        \\
        &{}=
        \frac{e^{-\lambda\val}\left(k-ne^{-\lambda\val}\right) - \left(1 - e^{-\lambda \val}\right)ne^{-\lambda\val}}{\left(k-ne^{-\lambda\val}\right)^2}\auxpdf_{(k)}(\val)
        =
        -\frac{(n-k)e^{-\lambda\val}}{\left(k-ne^{-\lambda\val}\right)^2}\auxpdf_{(k)}(\val)\leq 0
    \end{align*}
    where the first, third equalities holds by algebra and the second equality holds since $\auxdpdf_{(k)}(\val) = \lambda\auxpdf_{(k)}(\val)\frac{ne^{-\lambda\val}-k}{1 - e^{-\lambda \val}}$. This finished the proof of the lemma statement for the special case of $\lambda \geq 0, a = 0$.

    Now consider arbitrary $\lambda \geq 0, a \geq 0$ and distribution $\auxcdf(\val) = 
    1 - e^{-\lambda(\val - a)}
    $ with support $\supp(\auxcdf) = [a, \infty)$. It suffices to verify that for every $\val\in[b,\infty)$ and $\val\primed \in [\val, \infty)$,
    \begin{align*}
        \frac{\auxpdf_{(k)}(\val)}{1-\auxcdf_{(k)}(\val)}\leq 
        \frac{\auxpdf_{(k)}(\val\primed)}{1-\auxcdf_{(k)}(\val\primed)}
    \end{align*}
    Let $\hat\auxprior_{(k)}$ be the distribution of $k$-th order statistic induced by distribution $\hat\auxprior(\val) = 1-e^{-\lambda\val}$. The above inequality can be rewritten as
    \begin{align*}
        \frac{\hat\auxpdf_{(k)}(\val - a)}{1-\hat\auxcdf_{(k)}(\val - a)}\leq 
        \frac{\hat\auxpdf_{(k)}(\val\primed - a)}{1-\hat\auxcdf_{(k)}(\val\primed - a)}
    \end{align*}
    whose left-hand side and right-hand side are the hazard rate of distribution $\hat\auxprior_{(k)}$ at values $\val - a$ and $\val\primed - a$, respectively. As we shown previously, the hazard rate of distribution $\hat\auxprior_{(k)}$ is weakly increasing and thus the proof of the lemma statement has been completed as desired.
\end{proof}

\begin{proof}[Proof of \Cref{thm:order:mhr}]
Fix an arbitrary $n \geq 1$ and $k \in [n]$. By \Cref{lem:mhr equivalent definition}, it suffices to verify the convexity of the cumulative hazard rate function $\cumhazardrate_{(k:n)}(\val) \triangleq-\ln(1-\cdf_{(k,n)}(\val))$ induced by order statistic distribution $\cdfkn$. With slight abuse of notation, in the remainder of the analysis, we shorthand $\cumhazardrate_{(k:n)}$ and $\cdfkn$ as $\cumhazardrate_{(k)}$ and $\cdf_{(k)}$, respectively.

Invoking \Cref{fact:concavity equivalent definition}, it suffices to verify that for every pair of values $\val\in\supp(\cdf_{(k)})$ and $\val\primed\in[\val, \infty)$,
\begin{align}
\label{eq:order:mhr:cumulative hazard rate convex condition}
    \cumhazardrate_{(k)}(\val\primed) \geq \cumhazardrate_{(k)}(\val) + \rightderivative\cumhazardrate_{(k)}(\val) \cdot 
    \left(\val\primed - \val\right)
\end{align}
where $\rightderivative\cumhazardrate_{(k)}(\val)$ is the right derivative of $\cumhazardrate_{(k)}$ at value $\val$. Define $(\lambda, a)$ as 
\begin{align*}
    \lambda \triangleq \rightderivative\cumhazardrate(\val),
    \;\;\mbox{and}\;\;
    a \triangleq \frac{\rightderivative\cumhazardrate(\val)\cdot \val - \cumhazardrate(\val)}{\rightderivative\cumhazardrate(\val)}
\end{align*}
Since distribution $\prior$ is MHR, cumulative hazard rate curve $\cumhazardrate$ is convex (\Cref{lem:mhr equivalent definition}). Invoking \Cref{fact:concavity equivalent definition}, we have $\lambda \geq 0$ and $a \geq 0$. Now consider an auxiliary distribution $\auxprior$ defined as $\auxcdf(\val) = 1-e^{-\lambda(\val - a)}$. By construction, auxiliary distribution $\auxprior$ (i) is MHR, (ii) first-order stochastically dominates distribution $\prior$, and (iii) its induced cumulative hazard rate curve $\auxcumhazardrate$ satisfies $\rightderivative\cumhazardrate(\val) = \rightderivative\auxcumhazardrate(\val)$ and $\cumhazardrate(\val) = \auxcumhazardrate(\val)$. See \Cref{fig:mhr order analysis} for a graphical illustration.

With auxiliary distribution $\auxprior$, order statistic distribution $\auxprior_{(k)}$, and induced cumulative hazard rate curve $\auxcumhazardrate_{(k)}$, we are ready to verify condition~\eqref{eq:order:mhr:cumulative hazard rate convex condition}:
\begin{align*}
    \cumhazardrate_{(k)}(\val)
    +
    \rightderivative\cumhazardrate_{(k)}(\val)\left(\val\primed - \val\right)
    \overset{(a)}{=}
    \auxcumhazardrate_{(k)}(\val)
    +
    \rightderivative\auxcumhazardrate_{(k)}(\val)\left(\val\primed - \val\right)
    \overset{(b)}{\leq}
    \auxcumhazardrate_{(k)}(\val\primed)
    \overset{(c)}{\leq}
    \cumhazardrate_{(k)}(\val\primed)
\end{align*}
where equality~(a) holds since $\rightderivative\cumhazardrate(\val) = \rightderivative\auxcumhazardrate(\val),\cumhazardrate(\val) = \auxcumhazardrate(\val)$ by construction;
inequality~(b) holds since auxiliary order statistic distribution $\auxcdf_{(k)}$ is MHR (\Cref{lem:order:mhr:exponential}) and \Cref{lem:mhr equivalent definition}; and inequality~(c) holds due to \Cref{fact:order statistic dominance} and the construction so that auxiliary distribution $\auxprior$ first-order stochastically dominants distribution $\prior$. This finishes the proof of \Cref{thm:order:mhr}.
\end{proof}

\subsection{Order Statistics of Asymmetric Quasi-Regular/-MHR Distributions}
\label{sec:hierarchy:asymmetric}

In this subsection, we characterize the order statistic distributions generated from asymmetric quasi-regular or quasi-MHR distributions.

\xhdr{Asymmetric Quasi-Regular distributions} We first consider quasi-regular distributions.
% \label{subsec:order:quasi-regular}

\begin{theorem}[Order Statistics of Asymmetric Quasi-Regular Distributions]
\label{thm:order:quasi-regular}
\begin{flushleft}
Given $n \geq 1$ many independent (but possibly asymmetric) quasi-regular values $\val_{i} \sim \priori$ for $i \in [n]$, every order statistic $\valkn \sim \cdfkn$ for $k \in [n]$ also is quasi-regular.
\end{flushleft}
\end{theorem}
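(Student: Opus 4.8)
The plan is to use the probability tail bound characterization of quasi-regularity from \Cref{prop:q-regular equivalent definition}, specifically Condition~\ref{condition:quasi-regular:cdf}, and show that it is preserved under taking order statistics of independent (asymmetric) distributions. Recall that a distribution $\prior$ is quasi-regular if and only if for every anchor value $\val \in \supp(\prior)$ and every $\val\primed \in [0, \val)$, we have $\cdf(\val\primed) \leq \frac{\val\primed}{\val\primed + \frac{1 - \cdf(\val)}{\cdf(\val)}\val}$. Rearranging, this says exactly that $\frac{1 - \cdf(\val\primed)}{\cdf(\val\primed)} \cdot \val\primed \geq \frac{1 - \cdf(\val)}{\cdf(\val)}\cdot \val$ for $\val\primed \leq \val$; i.e., the function $\val \mapsto \val \cdot \frac{1 - \cdf(\val)}{\cdf(\val)} = -\cevirtualval(\val)$ is weakly decreasing, which is just the original definition. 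So the cleanest reformulation is: $\prior$ is quasi-regular iff $g_\prior(\val) \triangleq \val \cdot \frac{1 - \cdf(\val)}{\cdf(\val)}$ is weakly decreasing on $\supp(\prior)$, equivalently $\frac{\cdf(\val)}{\val(1-\cdf(\val))}$ is weakly increasing (equivalently $\frac{1-\cdf(\val)}{\cdf(\val)} + 1 = \frac{1}{\cdf(\val)}$ compared appropriately — but the product form is what matters).

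First I would fix $k$ and $n$ and write $\cdfkn(\val) = \sum_{j \geq k} e_j$ where $e_j$ is the probability that exactly $j$ of the $n$ independent values exceed $\val$; more usefully, I would use the standard identity $1 - \cdfkn(\val) = \Pr[\text{at most } k-1 \text{ of the } \val_i \text{ exceed } \val]$ and $\cdfkn(\val) = \Pr[\text{at least } k \text{ exceed } \val]$. The goal is to show $\val \mapsto \val \cdot \frac{1 - \cdfkn(\val)}{\cdfkn(\val)}$ is weakly decreasing given that each $\val \mapsto \val \cdot \frac{1-\cdf_i(\val)}{\cdf_i(\val)}$ is weakly decreasing. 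The natural approach is a reduction to a cleaner building block, mirroring how \Cref{thm:order:regular} reduced to triangular distributions: I expect the right move here is to reduce to the case where each $\prior_i$ is itself a ``quasi-regular extreme point'', and the relevant extreme points for the tail-bound condition are the equal-revenue-type distributions $\cdf_i(\val) = 1 - c_i/\val$ on $[c_i \cdot q_i / (1-q_i) \text{-ish}, \infty)$ — i.e. distributions for which $g_{\prior_i}(\val)$ is \emph{constant} (equal to some $c_i$) on an interval and then jumps. Concretely, given an anchor value $\val$ and target $\val\primed < \val$, for each $i$ I would replace $\prior_i$ by an auxiliary distribution $\auxprior_i$ that first-order stochastically dominates $\prior_i$, agrees with it in the relevant comparison at $\val$ (so $\cdfkn(\val) = \auxcdfkn(\val)$ and the tail ratios line up), and for which $g_{\auxprior_i}$ is constant on $[\val\primed, \val]$; then invoke \Cref{fact:order statistic dominance} to push the inequality through, reducing to proving the claim when every $\cdf_i(\val) = 1 - c_i/(\val + d_i)$-shaped on the relevant range.

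The reduced claim is then a concrete inequality about order statistics of ``generalized equal-revenue'' distributions, and I would prove it by a direct computation or by induction on $n$, analogous to \Cref{lem:order:regular:triangle} — though here, because we only need the \emph{on-average} (quasi) condition rather than pointwise monotonicity of a virtual value, I am hopeful the computation is more forgiving. An alternative, possibly slicker route that avoids the reduction altogether: work directly with the function $\Phi_i(\val) \triangleq \cdf_i(\val)/\big(\val(1-\cdf_i(\val))\big)$, which is weakly increasing by quasi-regularity, write $1/\big(\val \cdot \tfrac{1-\cdfkn(\val)}{\cdfkn(\val)}\big)$ as an expression in the $\cdf_i$'s, and show it is weakly increasing using a coupling/majorization argument on the Poisson-binomial structure of the $e_j$'s. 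I would try both and keep whichever is cleaner.

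The main obstacle I anticipate is the asymmetry: unlike \Cref{thm:order:regular,thm:order:mhr}, there is no single base distribution to induct on, so the ``reduce to extreme points'' step has to be done coordinate-by-coordinate while keeping the order-statistic tail ratio controlled, and I need the auxiliary replacements in different coordinates to be simultaneously valid (they are, because FOSD is preserved coordinatewise under order statistics by \Cref{fact:order statistic dominance}, but bookkeeping the anchor-value agreement across all $n$ coordinates at once is delicate). The secondary difficulty is that quasi-regular distributions can have many mass points, so $\cdf_i$ need not be continuous or strictly increasing; I would handle this by first proving the statement for the dense subclass of ``nice'' (continuous, full-support) quasi-regular distributions and then taking limits, noting that the tail-bound inequality in Condition~\ref{condition:quasi-regular:cdf} is closed under pointwise limits of CDFs.
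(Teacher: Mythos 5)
Your high-level instincts are right in two places: the probability tail-bound characterization (Condition~\ref{condition:quasi-regular:cdf} of \Cref{prop:q-regular equivalent definition}) is indeed the right lens, and the coordinatewise replacement by anchor-dependent equal-revenue distributions is sound as a reduction --- the auxiliary $\auxcdf_i(\val\primed) = \val\primed/(\val\primed + \tfrac{1-\cdf_i(\val)}{\cdf_i(\val)}\val)$ matches $\cdf_i$ at the anchor $\val$, dominates $\cdf_i$ pointwise below $\val$, and these properties propagate to $\cdf_{(k:n)}$ via \Cref{fact:order statistic dominance} together with the fact that $\cdf_{(k:n)}(\val)$ is a polynomial in the marginal values $\{\cdf_i(\val)\}_i$. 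Your mass-point worry is also unnecessary: the tail-bound form is a CDF-level inequality and needs no densities. However, the paper does \emph{not} go this route. It proves the statement by a direct induction on $n$, peeling off one buyer at a time via the one-step recursion $\cdf_{(k:n)} = \cdf_{(k:n-1)}\cdot\cdf_n + \cdf_{(k-1:n-1)}\cdot(1-\cdf_n)$, applying the tail bound separately to $\cdf_{(k:n-1)}$, $\cdf_{(k-1:n-1)}$, $\cdf_n$ (valid by the inductive hypothesis), and then discharging the resulting algebraic inequality by observing it is a concave quadratic in $\val\primed$, so it suffices to check the endpoints $\val\primed = 0$ and $\val\primed = \val$.

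The gap in your proposal is that the reduction does not actually eliminate the hard core. After replacing each $\prior_i$ by an equal-revenue-type $\auxprior_i$, you must still \emph{prove} the tail bound for $\auxcdf_{(k:n)}$, i.e., that order statistics of asymmetric equal-revenue distributions are quasi-regular. You write that you ``expect the computation is more forgiving,'' but this reduced claim is not a free ride: there is no asymmetric analogue of \Cref{thm:order:regular} to cite (regularity is genuinely \emph{not} preserved under asymmetric order statistics, per the counterexamples in Footnote~\ref{footnote:quasi-regular/-mhr:example}), so the reduced claim is itself an instance of the theorem you are trying to prove, restricted to a special subclass. For $k=1$ it is a one-line elementary-symmetric-polynomial identity; for $k\geq 2$ you would need to control an expression like $\val\primed\cdot\frac{1-\auxcdf_{(k:n)}(\val\primed)}{\auxcdf_{(k:n)}(\val\primed)}$ where $\auxcdf_{(k:n)}$ is a ratio of Poisson-binomial sums, and you never carry this out. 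Your alternative ``majorization / coupling on the Poisson-binomial structure'' route is likewise only a direction, not an argument. So the proposal is a plausible and possibly-completable strategy, but it stops precisely where the real work begins; the paper's recursion-based induction is the step you are missing, and it avoids the extreme-point detour entirely.
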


\begin{proof}
We prove the theorem statement by an induction argument over $n\in\naturals$. In the base case where $n = 1$, the theorem statement hold trivially. Now consider an arbitrary $n \geq 2$, and $k \in [n]$. By \Cref{prop:q-regular equivalent definition}, it suffices to verify Condition~\ref{condition:quasi-regular:cdf} in the proposition statement, i.e., for every value $\val\in \supp(\cdfkn)$,
\begin{align}
\label{eq:order:quasi-regular critical inequality}
    \forall \val\primed \in [0, \val):\qquad \cdfkn(\val\primed) \leq \frac{\val\primed}{\val\primed + \frac{1 - \cdfkn(\val)}{\cdfkn(\val)}\val}
\end{align}
We prove inequality~\eqref{eq:order:quasi-regular critical inequality} for $k = 1$ and $k \geq 2$ separately. 

\xhdr{Case (a): $k = 1$.}
Recall the fact that 
\begin{align*}
    \cdf_{(1:n)}(\val) = \cdf_{(1:n-1)}(\val) \cdot \cdf_n(\val)
\end{align*}
Invoking the induction hypothesis for $n - 1$ and the assumption in the theorem statement, we know that both distributions $\cdf_{(1:n-1)}$ and $\cdf_n$ are quasi-regular. In the remainder of the analysis of Case~(a), with slight abuse of notation, we shorthand $\cdf_{(k:n)}$, $\cdf_{n}$, $\cdf_{(k:n)}\primed$, $\cdf_{n}\primed$ for $\cdf_{(k:n)}(\val)$, $\cdf_{n}(\val)$, $\cdf_{(k:n)}(\val\primed)$, $\cdf_{n}(\val\primed)$, respectively. Invoking Condition~\ref{condition:quasi-regular:cdf} of \Cref{prop:q-regular equivalent definition}, we obtain
\begin{align*}
    \cdf_{(1:n-1)}\primed
    \leq
    \frac{\val\primed}{\val\primed + \frac{1 - \cdf_{(1:n-1)}}{\cdf_{(1:n-1)}}\val}~,
    ~~~~
    \cdf_{n}\primed
    \leq 
    \frac{\val\primed}{\val\primed + \frac{1 - \cdf_{n}}{\cdf_{n}}\val}
\end{align*}
and thus
\begin{align*}
    \cdf_{(1:n)}\primed \leq \frac{\val\primed}{\val\primed + \frac{1 - \cdf_{(1:n-1)}}{\cdf_{(1:n-1)}}\val} \frac{\val\primed}{\val\primed + \frac{1 - \cdf_{n}}{\cdf_{n}}\val}
\end{align*}
To prove inequality~\eqref{eq:order:quasi-regular critical inequality}, it suffices to verify that for every $\val \in \supp(\cdf_{(1:n)})$ and every $\val\primed\in[0,\val)$:
\begin{align*}
    \frac{\val\primed}{\val\primed + \frac{1 - \cdf_{(1:n-1)}}{\cdf_{(1:n-1)}}\val} \frac{\val\primed}{\val\primed + \frac{1 - \cdf_{n}}{\cdf_{n}}\val} \leq \frac{\val\primed}{\val\primed + \frac{1 - \cdf_{(1:n)}}{\cdf_{(1:n)}}\val}
\end{align*}
After rearranging, it is equivalent to 
\begin{align*}
    \val\primed\left(
    \cdf_{n}\cdf_{(1:n-1)} - 
    \cdf_{n}\cdf_{(1:n)} -
    \cdf_{(1:n-1)}\cdf_{(1:n)} +
    \cdf_{n}\cdf_{(1:n-1)}\cdf_{(1:n)}
    \right)
    \leq 
    \val \left(1 - \cdf_{(1:n-1)}\right)\left(1 - \cdf_{n}\right) \cdf_{(1:n)}
\end{align*}
Invoking the fact that $\cdf_{n}(\val)\cdf_{(1:n-1)}(\val) = \cdf_{(1:n)}(\val)$, we obtain
\begin{align*}
    \val\primed \left(1 - \cdf_{(1:n-1)}\right)\left(1 - \cdf_{n}\right)
    \leq \val \left(1 - \cdf_{(1:n-1)}\right)\left(1 - \cdf_{n}\right)
\end{align*}
which is satisfied since $\val\primed \leq \val$.

\xhdr{Case (b): $k \geq 2$.} The analysis is similar to the previous case.
Recall the fact that 
\begin{align*}
    \cdf_{(k:n)}(\val) = \cdf_{(k:n-1)}(\val) \cdot \cdf_n(\val) + \cdf_{(k-1:n-1)}(\val)(1 - \cdf_n(\val))
\end{align*}
Invoking the induction hypothesis for $n - 1$ and the assumption in the theorem statement, we know that distributions $\cdf_{(k:n-1)}$, $\cdf_{(k - 1: n - 1)}$, and $\cdf_{n}$ are all quasi-regular. In the remainder of the analysis of Case~(b), with slight abuse of notation, we shorthand $\cdf_{(k:n)}$, $\cdf_{n}$, $\cdf_{(k:n)}\primed$, $\cdf_{n}\primed$ for $\cdf_{(k:n)}(\val)$, $\cdf_{n}(\val)$, $\cdf_{(k:n)}(\val\primed)$, $\cdf_{n}(\val\primed)$, respectively. Invoking Condition~\ref{condition:quasi-regular:cdf} of \Cref{prop:q-regular equivalent definition}, we obtain
\begin{align*}
    \cdf_{(k - 1:n-1)}\primed
    \leq
    \frac{\val\primed}{\val\primed + \frac{1 - \cdf_{(k-1:n-1)}}{\cdf_{(k-1:n-1)}}\val}~,
    ~
    \cdf_{(k:n-1)}\primed
    \leq
    \frac{\val\primed}{\val\primed + \frac{1 - \cdf_{(k:n-1)}}{\cdf_{(k:n-1)}}\val}~,
    ~
    \cdf_{n}\primed
    \leq 
    \frac{\val\primed}{\val\primed + \frac{1 - \cdf_{n}}{\cdf_{n}}\val}
\end{align*}
Combining with the fact that $\cdf_{(k:n-1)}(\val) \geq \cdf_{(k-1:n-1)}(\val)$, we obtain
\begin{align*}
    \cdf_{(1:n)}\primed\leq 
    \frac{\val\primed}{\val\primed + \frac{1 - \cdf_{(k:n-1)}}{\cdf_{(k:n-1)}}\val}
    \frac{\val\primed}{\val\primed + \frac{1 - \cdf_{n}}{\cdf_{n}}\val}
    +
    \frac{\val\primed}{\val\primed + \frac{1 - \cdf_{(k-1:n-1)}}{\cdf_{(k-1:n-1)}}\val}
    \frac{\frac{1 - \cdf_{n}}{\cdf_{n}}\val}{\val\primed + \frac{1 - \cdf_{n}}{\cdf_{n}}\val}
\end{align*}
To prove inequality~\eqref{eq:order:quasi-regular critical inequality}, it suffices to verify that for every $\val \in \supp(\cdf_{(1:n)})$ and every $\val\primed\in[0,\val)$:
\begin{align*}
    \frac{\val\primed}{\val\primed + \frac{1 - \cdf_{(k:n-1)}}{\cdf_{(k:n-1)}}\val}
    \frac{\val\primed}{\val\primed + \frac{1 - \cdf_{n}}{\cdf_{n}}\val}
    +
    \frac{\val\primed}{\val\primed + \frac{1 - \cdf_{(k-1:n-1)}}{\cdf_{(k-1:n-1)}}\val}
    \frac{\frac{1 - \cdf_{n}}{\cdf_{n}}\val}{\val\primed + \frac{1 - \cdf_{n}}{\cdf_{n}}\val}
    \leq 
    \frac{\val\primed}{\val\primed + \frac{1 - \cdf_{(k:n)}}{\cdf_{(k:n)}}\val}
\end{align*}
After rearranging, it is equivalent to 
\begin{align*}
    &\left(\val\primed\right)^2\left(\frac{1 - \cdf_{(k:n)}}{\cdf_{(k:n)}}
    -\frac{1 - \cdf_{(k:n-1)}}{\cdf_{(k:n-1)}}\right)
    +
    \val^2
    \frac{1 - \cdf_{(k:n-1)}}{\cdf_{(k:n-1)}}\frac{1 - \cdf_{n}}{\cdf_{n}}
    \left(\frac{1 - \cdf_{(k:n)}}{\cdf_{(k:n)}} - 
    \frac{1 - \cdf_{(k-1:n-1)}}{\cdf_{(k-1:n-1)}}\right)
    \\
    &\hspace{10pt}+
    \val\val\primed 
    \left(
    \frac{1 - \cdf_{(k-1:n-1)}}{\cdf_{(k-1:n-1)}}
    \left(
    \frac{1 - \cdf_{(k:n)}}{\cdf_{(k:n)}}
    -\frac{1 - \cdf_{(k:n-1)}}{\cdf_{(k:n-1)}}
    -\frac{1 - \cdf_{n}}{\cdf_{n}}
    \right)
    +
    \frac{1 - \cdf_{(k:n)}}{\cdf_{(k:n)}}\frac{1 - \cdf_{n}}{\cdf_{n}}
    \right)
    \leq 0
\end{align*}
whose left-hand side is a quadratic function of $\val\primed$. Invoking the fact that $\cdf_{(k:n-1)}(\val) \geq \cdf_{(k:n)}(\val)$, it suffices to verify the inequality at $\val\primed = 0$ and $\val\primed= \val$. Specifically, for $\val\primed = 0$, the above inequality becomes 
\begin{align*}
    \val^2
    \frac{1 - \cdf_{(k:n-1)}}{\cdf_{(k:n-1)}}\frac{1 - \cdf_{n}}{\cdf_{n}}
    \left(\frac{1 - \cdf_{(k:n)}}{\cdf_{(k:n)}} - 
    \frac{1 - \cdf_{(k-1:n-1)}}{\cdf_{(k-1:n-1)}}\right)
    \leq 0
\end{align*}
which is satisfied due to the fact that $\cdf_{(k:n)}(\val) \geq \cdf_{(k-1:n-1)}(\val)$. On the other hand, for $\val\primed = \val$, after rearranging, the above inequality becomes 
\begin{align*}
    \cdf_n\cdf_{(k:n-1)}
    +
    (1-\cdf_n)\cdf_{(k-1:n-1)}
    -
    \cdf_{(k:n)}\leq 0
\end{align*}
whose left-hand side is equal to zero for all values $\val\in\supp(\cdfkn)$.

\smallskip\noindent
Putting all the pieces together, we finish the inductive step for given $n$, and thus \Cref{thm:order:quasi-regular} is proven by induction as desired.
\end{proof}

\xhdr{Asymmetric Quasi-MHR distributions} We next consider quasi-MHR distributions.
% \label{subsec:order:quasi-mhr}

\begin{theorem}[Order Statistics of Asymmetric Quasi-MHR Distributions]
\label{thm:order:quasi-mhr}
\begin{flushleft}
Given $n \geq 1$ many independent (but possibly asymmetric) quasi-MHR values $v_{i} \sim F_{i}$ for $i \in [n]$, every order statistic $v_{(k:n)} \sim F_{(k:n)}$ for $k \in [n]$ also is quasi-MHR.
\end{flushleft}
\end{theorem}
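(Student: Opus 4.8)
The plan is to mirror the inductive argument used for Theorem~\ref{thm:order:quasi-regular}, replacing the probability-tail characterization of quasi-regularity with the one for quasi-MHR, namely Condition~\ref{condition:quasi-MHR:cdf} of Proposition~\ref{prop:q-MHR equivalent definition}: a distribution $\prior$ is quasi-MHR iff for every anchor $\val\in\supp(\prior)$ and every $\val'\in[0,\val)$, writing $\theta\triangleq\val'/\val\in[0,1)$, one has $\cdf(\val')\le 1-(1-\cdf(\val))^{\theta}$. I would induct on $n$, the base case $n=1$ being immediate. For the inductive step fix $k\in[n]$, an anchor $\val\in\supp(\cdf_{(k:n)})$ and $\val'<\val$ with $\theta=\val'/\val$, and split into $k=1$ and $k\ge 2$ as in the quasi-regular proof, using the recurrences $\cdf_{(1:n)}=\cdf_{(1:n-1)}\cdot\cdf_n$ and, for $k\ge 2$, $\cdf_{(k:n)}=\cdf_n\cdot\cdf_{(k:n-1)}+(1-\cdf_n)\cdot\cdf_{(k-1:n-1)}$, together with the induction hypothesis (so each $(n-1)$-st-order statistic appearing is quasi-MHR) and the quasi-MHR assumption on $\cdf_n$.

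For $k=1$, applying the quasi-MHR tail bound to $\cdf_{(1:n-1)}$ and $\cdf_n$ at anchor $\val$ gives $\cdf_{(1:n)}(\val')\le\big(1-(1-\cdf_{(1:n-1)}(\val))^{\theta}\big)\big(1-(1-\cdf_n(\val))^{\theta}\big)$, so it suffices to prove the elementary inequality
\[
(x+y-xy)^{\theta}+(xy)^{\theta}\ \le\ x^{\theta}+y^{\theta}\qquad\text{for }x,y\in[0,1],\ \theta\in(0,1],
\]
applied with $x=1-\cdf_{(1:n-1)}(\val)$, $y=1-\cdf_n(\val)$ (noting $1-\cdf_{(1:n)}(\val)=x+y-xy$). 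This holds by concavity of $t\mapsto t^{\theta}$: the pair $\{x+y-xy,\,xy\}$ has the same sum as $\{x,y\}$ but is more spread out, and assuming $x\le y$ one has $x-xy=(x+y-xy)-y=:\delta\ge 0$, whence $x^{\theta}-(xy)^{\theta}=\int_{xy}^{xy+\delta}\theta t^{\theta-1}\,\dd t\ge\int_{y}^{y+\delta}\theta t^{\theta-1}\,\dd t=(x+y-xy)^{\theta}-y^{\theta}$ since $t^{\theta-1}$ is nonincreasing.

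For $k\ge 2$ I would use the standard stochastic-dominance facts $\cdf_{(k:n-1)}\ge\cdf_{(k-1:n-1)}$ and $\cdf_{(k:n-1)}\ge\cdf_{(k:n)}$ (pointwise). Since $c\alpha+(1-c)\beta$ is nondecreasing in $\alpha,\beta$ and, because $\cdf_{(k:n-1)}(\val')\ge\cdf_{(k-1:n-1)}(\val')$, also nondecreasing in $c$, substituting the quasi-MHR tail bounds at anchor $\val$ for all three distributions reduces the goal, after setting $X\triangleq 1-\cdf_{(k:n-1)}(\val)\le Y\triangleq 1-\cdf_{(k-1:n-1)}(\val)$ and $Z\triangleq 1-\cdf_n(\val)$ (so $1-\cdf_{(k:n)}(\val)=(1-Z)X+ZY$), to the scalar inequality
\[
\big((1-Z)X+ZY\big)^{\theta}\ \le\ (1-Z^{\theta})X^{\theta}+Z^{\theta}Y^{\theta}\qquad\text{for }0\le X\le Y,\ Z,\theta\in[0,1].
\]
Letting $\psi(Z)$ denote the right side minus the left side, one has $\psi(0)=\psi(1)=0$, and a short computation shows $\psi'(Z)\ge 0$ iff $Z$ lies in an initial segment $[0,Z_0]$ of $[0,1]$ (the threshold being obtained by raising the derivative inequality to the power $1/(\theta-1)<0$); hence $\psi$ is nondecreasing then nonincreasing on $[0,1]$, so $\psi\ge\min(\psi(0),\psi(1))=0$. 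Combining the two cases closes the induction; the routine bookkeeping (anchors lying outside a support, the degenerate cases $\theta=0$ or $X=Y$, and boundary indices such as $\cdf_{(n:n-1)}\equiv 0$) is handled exactly as in the proof of Theorem~\ref{thm:order:quasi-regular}.

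I expect the main obstacle to be the $k\ge 2$ case: correctly isolating the scalar inequality above from the mixture recurrence via the monotonicity-in-$c$ step (which genuinely uses $\cdf_{(k:n-1)}\ge\cdf_{(k-1:n-1)}$), and then verifying the unimodality of $\psi$. The $k=1$ inequality, by contrast, is a clean one-line consequence of concavity/majorization, and the overall inductive scaffolding is already provided by Theorem~\ref{thm:order:quasi-regular}.
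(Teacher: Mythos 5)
Your proposal is correct and follows the same scaffolding as the paper's proof (induction on $n$, the probability-tail characterization from Condition~\ref{condition:quasi-MHR:cdf} of \Cref{prop:q-MHR equivalent definition}, the same CDF recurrences, and the same $k=1$ scalar inequality $(xy)^{\theta}+(x+y-xy)^{\theta}\le x^{\theta}+y^{\theta}$), with two genuine departures. First, you actually prove the $k=1$ inequality via concavity of $t\mapsto t^{\theta}$; the paper asserts it without proof. Second, for $k\ge 2$ you establish the reduced scalar inequality directly by treating $\psi(Z)=(1-Z^{\theta})X^{\theta}+Z^{\theta}Y^{\theta}-\bigl((1-Z)X+ZY\bigr)^{\theta}$ as a function of $Z=1-\cdf_n(\val)$ and proving unimodality on $[0,1]$ (the condition $\psi'(Z)\ge 0$ linearizes in $Z$ after raising to the negative power $1/(\theta-1)$), whereas the paper fixes $Z$ and argues monotonicity in $\cdf_{(k-1:n-1)}$ to reduce to the $k=1$ inequality. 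Your route is the cleaner of the two: the paper asserts its rearranged left-hand side is \emph{decreasing} in $\cdf_{(k-1:n-1)}$, but the derivative is $\theta Z\bigl[(YZ)^{\theta-1}-(X(1-Z)+YZ)^{\theta-1}\bigr]\ge 0$ since $\theta-1<0$, so it is in fact \emph{increasing}; the theorem is still salvaged because the left-hand side is exactly zero at the other endpoint $\cdf_{(k-1:n-1)}=\cdf_{(k:n-1)}$ (i.e.\ $Y=X$), so either direction of monotonicity closes the case, but your argument sidesteps this sign question entirely. One small caveat in your write-up: when $a-b(Y-X)\le 0$ the linearized condition holds on all of $[0,1]$ rather than a proper initial segment, but by subadditivity of $t\mapsto t^{\theta}$ this degenerate case forces $Y^{\theta}-X^{\theta}=(Y-X)^{\theta}$ and hence $\psi\equiv 0$, so the conclusion is unaffected.
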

\begin{proof}
We prove the theorem statement by an induction argument over $n\in\naturals$. In the base case where $n = 1$, the theorem statement hold trivially. Now consider an arbitrary $n \geq 2$, and $k \in [n]$. By \Cref{prop:q-MHR equivalent definition}, it suffices to verify Condition~\ref{condition:quasi-MHR:cdf} in the proposition statement, i.e., for every value $\val\in \supp(\cdfkn)$,
\begin{align}
\label{eq:order:quasi-MHR critical inequality}
    \forall \val\primed \in [0, \val):\qquad \cdfkn(\val\primed) \leq 1 - \left(1 - \cdf(\val)\right)^{\frac{\val\primed}{\val}}
\end{align}
We prove inequality~\eqref{eq:order:quasi-MHR critical inequality} for $k = 1$ and $k \geq 2$ separately. 

\xhdr{Case (a): $k = 1$.}
Recall the fact that 
\begin{align*}
    \cdf_{(1:n)}(\val) = \cdf_{(1:n-1)}(\val) \cdot \cdf_n(\val)
\end{align*}
Invoking the induction hypothesis for $n - 1$ and the assumption in the theorem statement, we know that both distributions $\cdf_{(1:n-1)}$ and $\cdf_n$ are quasi-MHR. In the remainder of the analysis of Case~(a), with slight abuse of notation, we shorthand $\cdf_{(k:n)}$, $\cdf_{n}$, $\cdf_{(k:n)}\primed$, $\cdf_{n}\primed$ for $\cdf_{(k:n)}(\val)$, $\cdf_{n}(\val)$, $\cdf_{(k:n)}(\val\primed)$, $\cdf_{n}(\val\primed)$, respectively. Invoking Condition~\ref{condition:quasi-MHR:cdf} of \Cref{prop:q-MHR equivalent definition}, we obtain
\begin{align*}
    \cdf_{(1:n-1)}\primed
    \leq
    1 - \left(1 - \cdf_{(1:n-1)}\right)^{\frac{\val\primed}{\val}}~,
    ~~~~
    \cdf_{n}\primed
    \leq 
    1 - \left(1 - \cdf_{n}\right)^{\frac{\val\primed}{\val}}
\end{align*}
and thus
\begin{align*}
    \cdf_{(1:n)}\primed &\leq 
    \left(1 - \left(1 - \cdf_{(1:n-1)}\right)^{\frac{\val\primed}{\val}}
    \right)
    \left(
    1 - \left(1 - \cdf_{n}\right)^{\frac{\val\primed}{\val}}
    \right)
    \\
    &{}=
    1 - \left(1 - \cdf_{(1:n-1)}\right)^{\frac{\val\primed}{\val}}
    - \left(1 - \cdf_{n}\right)^{\frac{\val\primed}{\val}}
    +
    \left(\left(1 - \cdf_{(1:n-1)}\right)
    \left(1 - \cdf_{n}\right)\right)^{\frac{\val\primed}{\val}}
\end{align*}
To prove inequality~\eqref{eq:order:quasi-MHR critical inequality}, it suffices to verify that for every $\val \in \supp(\cdf_{(1:n)})$ and every $\val\primed\in[0,\val)$:
\begin{align*}
    1 - \left(1 - \cdf_{(1:n-1)}\right)^{\frac{\val\primed}{\val}}
    - \left(1 - \cdf_{n}\right)^{\frac{\val\primed}{\val}}
    +
    \left(\left(1 - \cdf_{(1:n-1)}\right)
    \left(1 - \cdf_{n}\right)\right)^{\frac{\val\primed}{\val}}
    \leq
    1 - \left(1 - \cdf_{(1:n)}\right)^{\frac{\val\primed}{\val}}
\end{align*}
Invoking the fact that  $\cdf_{n}(\val)\cdf_{(1:n-1)}(\val) = \cdf_{(1:n)}(\val)$ and rearranging, we obtain 
\begin{align*}
    \left(\left(1 - \cdf_{(1:n-1)}\right)
    \left(1 - \cdf_{n}\right)\right)^{\frac{\val\primed}{\val}}
    +
    \left(1 -  \cdf_{(1:n-1)}\cdf_n\right)^{\frac{\val\primed}{\val}}
    \leq 
    \left(1 - \cdf_{(1:n-1)}\right)^{\frac{\val\primed}{\val}}
    + \left(1 - \cdf_{n}\right)^{\frac{\val\primed}{\val}}
\end{align*}
which is satisfied since inequality $(AB)^\gamma + (A + B - AB)^\gamma - A^\gamma + B^\gamma \leq 0$ holds for all $A, B,\gamma\in[0, 1]$ and we let $A = 1 - \cdf_{(1:n-1)}$, $B = 1 - \cdf_n$, $\gamma = \frac{\val\primed}{\val}$.

\xhdr{Case (b): $k \geq 2$.} The analysis is similar to the previous case.
Recall the fact that 
\begin{align*}
    \cdf_{(k:n)}(\val) = \cdf_{(k:n-1)}(\val) \cdot \cdf_n(\val) + \cdf_{(k-1:n-1)}(\val)(1 - \cdf_n(\val))
\end{align*}
Invoking the induction hypothesis for $n - 1$ and the assumption in the theorem statement, we know that distributions $\cdf_{(k:n-1)}$, $\cdf_{(k - 1: n - 1)}$, and $\cdf_{n}$ are all quasi-MHR. In the remainder of the analysis of Case~(b), with slight abuse of notation, we shorthand $\cdf_{(k:n)}$, $\cdf_{n}$, $\cdf_{(k:n)}\primed$, $\cdf_{n}\primed$ for $\cdf_{(k:n)}(\val)$, $\cdf_{n}(\val)$, $\cdf_{(k:n)}(\val\primed)$, $\cdf_{n}(\val\primed)$, respectively. Invoking Condition~\ref{condition:quasi-MHR:cdf} of \Cref{prop:q-MHR equivalent definition}, we obtain
\begin{align*}
    \cdf_{(k - 1:n-1)}\primed
    \leq
    1 - \left(1 - \cdf_{(k - 1:n-1)}\right)^{\frac{\val\primed}{\val}}~,
    ~
    \cdf_{(k:n-1)}\primed
    \leq
    1 - \left(1 - \cdf_{(k:n-1)}\right)^{\frac{\val\primed}{\val}}~,
    ~
    \cdf_{n}\primed
    \leq 
    1 - \left(1 - \cdf_{n}\right)^{\frac{\val\primed}{\val}}
\end{align*}
Combining with the fact that $\cdf_{(k:n-1)}(\val) \geq \cdf_{(k-1:n-1)}(\val)$, we obtain
\begin{align*}
    \cdf_{(1:n)}\primed\leq 
    \left(
    1 - \left(1 - \cdf_{(k:n-1)}\right)^{\frac{\val\primed}{\val}}
    \right)
    \left(
    1 - \left(1 - \cdf_{n}\right)^{\frac{\val\primed}{\val}}
    \right)
    +
    \left(
    1 - \left(1 - \cdf_{(k - 1:n-1)}\right)^{\frac{\val\primed}{\val}}
    \right)
    \left(1 - \cdf_{n}\right)^{\frac{\val\primed}{\val}}
\end{align*}
To prove inequality~\eqref{eq:order:quasi-MHR critical inequality}, it suffices to verify that for every $\val \in \supp(\cdf_{(1:n)})$ and every $\val\primed\in[0,\val)$:
\begin{align*}
    \left(
    1 - \left(1 - \cdf_{(k:n-1)}\right)^{\frac{\val\primed}{\val}}
    \right)
    \left(
    1 - \left(1 - \cdf_{n}\right)^{\frac{\val\primed}{\val}}
    \right)
    +
    \left( 
    1 - \left(1 - \cdf_{(k - 1:n-1)}\right)^{\frac{\val\primed}{\val}}
    \right)
    \left(1 - \cdf_{n}\right)^{\frac{\val\primed}{\val}}
    \leq 
    1 - \left(1 - \cdf_{(k:n)}\right)^{\frac{\val\primed}{\val}}
\end{align*}
Invoking the fact that  $\cdf_{(k:n)}(\val) = \cdf_{(k:n-1)}(\val) \cdot \cdf_n(\val) + \cdf_{(k-1:n-1)}(\val)(1 - \cdf_n(\val))$ and rearranging, we obtain 
\begin{align*}
    &\left(
    (1 - \cdf_{(k:n-1)})
    (1 - \cdf_n)
    \right)^{\frac{\val\primed}{\val}}
    +
    \left(1 - \cdf_{(k:n-1)}\cdf_n - \cdf_{(k-1:n-1)} + \cdf_{(k-1:n-1)}\cdf_n\right)^{\frac{\val\primed}{\val}}
    \\ 
    &\hspace{180pt}-
    (1 - \cdf_{(k:n-1)})^{\frac{\val\primed}{\val}}
    -
    \left(
    (1 - \cdf_{(k-1:n-1)})
    (1 - \cdf_n)
    \right)^{\frac{\val\primed}{\val}}
    \leq 0
\end{align*}
Note the left-hand side is decreasing in $\cdf_{(k-1:n-1)}$, it suffices to verify the inequality at $\cdf_{(k-1:n-1)} = 0$, i.e., 
\begin{align*}
    \left(\left(1 - \cdf_{(k:n-1)}\right)
    \left(1 - \cdf_{n}\right)\right)^{\frac{\val\primed}{\val}}
    +
    \left(1 -  \cdf_{(k:n-1)}\cdf_n\right)^{\frac{\val\primed}{\val}}
    \leq 
    \left(1 - \cdf_{(k:n-1)}\right)^{\frac{\val\primed}{\val}}
    + \left(1 - \cdf_{n}\right)^{\frac{\val\primed}{\val}}
\end{align*}
which is satisfied since inequality $(AB)^\gamma + (A + B - AB)^\gamma - A^\gamma + B^\gamma \leq 0$ holds for all $A, B,\gamma\in[0, 1]$ and we let $A = 1 - \cdf_{(k:n-1)}$, $B = 1 - \cdf_n$, $\gamma = \frac{\val\primed}{\val}$.

\smallskip\noindent
Putting all the pieces together, we finish the inductive step for given $n$, and thus \Cref{thm:order:quasi-mhr} is proven by induction as desired.
\end{proof}

% \newpage

\section{Revenue Approximation by Bayesian Simple Mechanisms}
\label{sec:simple-mechanism}

% \afterpage{
\begin{table}[ht]
    {\centering\small
    \begin{tabular}{|l|>{\raggedright\arraybackslash}p{6.85cm}|>{\arraybackslash}p{6.7cm}|}
        \hline
        \rule{0pt}{13pt} & asymmetric buyers & symmetric buyers \\ [2pt]
        \hline
        \rule{0pt}{13pt}$\regulardistspace$ & \multirow{2}{*}{\rule{0pt}{17pt}\parbox{6.85cm}{$\mathrm{TB}^{\ast} \approx 0.3817$ \cite{AHNPY19,JLTX20,JLQTX19} {\bf [\Cref{thm:BOSP_BOUP,thm:BOM_BOUP}]}}} & $\mathrm{TB} = 1 - 1 / e \approx 0.6321$ \cite{CHMS10,DFK16} \\ [2pt]
        \cline{1-1}\cline{3-3}
        % \hline
        \rule{0pt}{13pt}$\quasiregulardistspace$ & & \multirow{2}{*}{\rule{0pt}{15pt}$\mathrm{TB}^{\star} = \frac{1}{2}$ \cite{CHMS10,DFK16,har16}} \\ [2pt]
        \cline{1-2}
        \rule{0pt}{13pt}$\generaldistspace$ & no revenue guarantee & \\ [2pt]
        \hline
        \multicolumn{3}{l}{\parbox{14.35cm}{\rule{0pt}{13pt}$\ast$ For asymmetric quasi-regular buyers, \Cref{conj:BOM_BOUP} asserts that $\calC_{\BOM}^{\BOUP} = \calC_{\BOSP}^{\BOUP} \approx 0.3817$; regardless of this conjecture, $\calC_{\BOM}^{\BOUP} \in [0.2770, 0.3817]$ (\Cref{thm:BOM_BOUP}).}} \\ [2pt]
        \multicolumn{3}{l}{\parbox{14.35cm}{\rule{0pt}{13pt}$\star$ We can easily check that the $\frac{1}{2}$-approximation instances in \cite{CHMS10,DFK16,har16} satisfies the quasi-regularity condition.}} 
    \end{tabular}
    \par}
    \caption{{\BayesianOptimalMechanism} (resp.\ {\BayesianOptimalSequentialPricing}) vs.\ {\BayesianOptimalUniformPricing}, assuming the correctness of \Cref{conj:BOM_BOUP}. We mark our results in {\bf bold}.}
    \label{tab:simple-mechanism:BOUP}
    \vspace{.1in}
    {\centering\small
    \begin{tabular}{|l|>{\raggedright\arraybackslash}p{6.85cm}|>{\arraybackslash}p{6.7cm}|}
        \hline
        \rule{0pt}{13pt} & asymmetric buyers & symmetric buyers \\ [2pt]
        \hline
        \rule{0pt}{13pt}$\regulardistspace$ & \multirow{2}{*}{\rule{0pt}{17pt}\parbox{6.85cm}{$\mathrm{TB}^{\ast} \in [0.3817, 0.4630]$ \parbox{3.35cm}{\white{.}} \cite{HR09,JLTX20,JLQTX19} {\bf [\Cref{thm:BOM_BOUR}]}}} & $\mathrm{TB} = 1$ \cite{M81} \\ [2pt]
        \cline{1-1}\cline{3-3}
        % \hline
        \rule{0pt}{13pt}$\quasiregulardistspace$ & & $\mathrm{TB}^{\star} \approx 0.6822$ {\bf [\Cref{exp:BOM_BOUR:iid}]} \\ [2pt]
        \hline
        \rule{0pt}{13pt}$\generaldistspace$ & no revenue guarantee & $\mathrm{TB} = \frac{1}{2}$ \cite{CHMS10,DFK16,har16} \\ [2pt]
        \hline
        \multicolumn{3}{l}{\parbox{14.35cm}{\rule{0pt}{13pt}$\ast$ For asymmetric quasi-regular vs.\ regular buyers, \Cref{conj:BOM_BOUR} asserts that both revenue gaps are the same; regardless of this conjecture, they are $\calC_{\BOM}^{\BOUR} \in [0.2770, 0.4630]$ (\Cref{thm:BOM_BOUR}) vs.\ $\calC_{\BOM}^{\BOUR} \in [0.3817, 0.4630]$ \cite{JLTX20,JLQTX19}, respectively.}} \\ [2pt]
        \multicolumn{3}{l}{\parbox{14.35cm}{\rule{0pt}{13pt}$\star$ For i.i.d.\ quasi-regular, \Cref{conj:BOM_BOUR} asserts that the tight revenue gap $\calC_{\BOM}^{\BOUR} \approx 0.6822$ is the unique positive solution to the equation $\frac{x}{1 - x} + \ln(1 - x) = 1$; regardless of this conjecture, this is $\calC_{\BOM}^{\BOUR} \in [\frac{1}{2}, 0.6822]$ by \Cref{exp:BOM_BOUR:iid} and implication of \cite{CHMS10,DFK16,har16}.}}
    \end{tabular}
    \par}
    \caption{{\BayesianOptimalMechanism} vs.\ {\BayesianOptimalUniformReserve}, assuming the correctness of \Cref{conj:BOM_BOUR}. We mark our results in {\bf bold}.}
    \label{tab:simple-mechanism:BOUR}
    \ignore{\vspace{.1in}
    {\centering\small
    \begin{tabular}{|l|>{\raggedright\arraybackslash}p{6.85cm}|>{\arraybackslash}p{6.7cm}|}
        \hline
        \rule{0pt}{13pt} & asymmetric buyers & symmetric buyers \\ [2pt]
        \hline
        \rule{0pt}{13pt}$\regulardistspace$ & & $\mathrm{TB} = 1 - 1 / e \approx 0.6321$ \cite{CHMS10,DFK16} \\ [2pt]
        \cline{1-1}\cline{3-3}
        % \hline
        \rule{0pt}{13pt}$\quasiregulardistspace$ & \multicolumn{2}{l|}{$\mathrm{TB}^{\ast} = 6 / \pi^{2} \approx 0.6079$ \cite{JLTX20}} \\ [2pt]
        \cline{1-1}
        \rule{0pt}{13pt}$\generaldistspace$ & \multicolumn{2}{l|}{} \\ [2pt]
        \hline
        \multicolumn{3}{l}{\parbox{14.35cm}{\rule{0pt}{13pt}$\ast$ To show a matching upper bound $\calC_{\BOUR}^{\BOUP} \leq 6 / \pi^{2} \approx 0.6079$, \cite{JLTX20} provided an asymmetric regular instance \cite[Example~3]{JLTX20} and a symmetric instance \cite[Example~2]{JLTX20}; we can easily check that the latter one satisfies the quasi-regularity condition.}}
    \end{tabular}
    \par}
    \caption{{\BayesianOptimalUniformReserve} vs.\ {\BayesianOptimalUniformPricing}, the revenue guarantees in various single-item settings. (All results in this table were established in previous works.)}
    \vspace{.1in}
    {\centering\small
    \begin{tabular}{|l|>{\raggedright\arraybackslash}p{6.85cm}|>{\arraybackslash}p{6.7cm}|}
        \hline
        \rule{0pt}{13pt} & asymmetric buyers & symmetric buyers \\ [2pt]
        \hline
        \rule{0pt}{13pt}$\regulardistspace$ & \multirow{3}{*}{\rule{0pt}{26pt}\parbox{4.2cm}{$\mathrm{LB} \gtrsim 0.7258$ \cite{PT22,BC23}
        $\mathrm{UB} \lesssim 0.7454$ [implication]}} & \\ [2pt]
        \cline{1-1}
        \rule{0pt}{13pt}$\quasiregulardistspace$ & &  $\mathrm{TB} \approx 0.7454$ \cite{K86,CFHOV21} \\ [2pt]
        \cline{1-1}
        \rule{0pt}{13pt}$\generaldistspace$ & & \\ [2pt]
        \hline
    \end{tabular}
    \par}
    \caption{{\BayesianOptimalMechanism} vs.\ {\BayesianOptimalSequentialPricing}, the revenue guarantees in various single-item settings; there are reductions from general buyers $\generaldistspace$ to regular buyers $\regulardistspace$ \cite{CHMS10,Y11,CFPV19}. (All results in this table were established in previous works.)}}
\end{table}
% \clearpage}

In this section, we revisit the celebrated literature about revenue approximations by Bayesian simple mechanisms. In \Cref{sec:simple-mechanism:single-item}, we study the revenue approximation of {\BayesianOptimalUniformPricing} and {\BayesianOptimalUniformReserve} for quasi-regular buyers in the single-item setting. In \Cref{sec:simple-mechanism:downward-closed}, we study the revenue approximation of {\BayesianMonopolyReserves} for quasi-MHR buyers in the downward-closed settings.

\subsection{The Single-Item Setting}
\label{sec:simple-mechanism:single-item}

In this subsection, we establish the revenue approximation (aka., revenue gap) between simple and non-discrimination mechanisms -- {\BayesianOptimalUniformReserve} and {\BayesianOptimalUniformPricing} -- with more complicated, discriminatory but revenue-(approximately)-optimal {\BayesianOptimalMechanism} ({\BayesianOptimalSequentialPricing}) in the single-item setting. We impose our focus on quasi-regular distributions for both symmetric and asymmetric buyers. All results in this subsection are summarized in \Cref{tab:simple-mechanism:BOUP,tab:simple-mechanism:BOUR}.

\xhdr{Asymmetric Buyers.}
We first consider asymmetric buyers with independent (but not necessarily identical) quasi-regular distributions. Recall that for asymmetric buyers, {\BayesianOptimalUniformReserve} and {\BayesianOptimalUniformPricing} admit no bounded revenue approximation under general distributions, while constant revenue approximations exist under regular distributions. For quasi-regular distributions, we establish constant revenue approximations (\Cref{thm:BOSP_BOUP,thm:BOM_BOUR,thm:BOM_BOUP}). Our results suggest that in term of approximation guarantees, the quasi-regularity is closer to regularity when buyers are asymmetric. 

To compare {\BayesianOptimalSequentialPricing} and {\BayesianOptimalUniformPricing},  \Cref{thm:BOSP_BOUP} gives a (nontrivial) {\em approximation-preserving} generalization of \cite[Theorem~1]{JLTX20} from regular buyers to quasi-regular buyers.

\begin{theorem}[{\BOSP} vs.\ {\BOUP}]
\label{thm:BOSP_BOUP}
\begin{flushleft}
% Given $n \geq 1$ many independent (but possibly asymmetric) quasi-regular or regular buyers,
For asymmetric quasi-regular buyers, {\BayesianOptimalUniformPricing} achieves a tight $\calC_{\BOSP}^{\BOUP} \approx 0.3817$-approximation to {\BayesianOptimalSequentialPricing}.\footnote{\label{footnote:upper_bound:regular}The upper bound holds even for asymmetric regular buyers.}
\begin{align*}
    \calC_{\BOSP}^{\BOUP} & \eqdef \left(2 + \int_{1}^{+\infty} \left(1 - e^{-\calQ}(x)\right) \cdot \d x\right)^{-1} \approx 0.3817, \\
    \calQ(x) & \eqdef \ln\left(\frac{x^{2}}{x^{2} - 1}\right) - \frac{1}{2} \sum_{k = 1}^{\infty} \frac{1}{k^{2} \cdot x^{2k}},\ \forall x \geq 1.
\end{align*}
\end{flushleft}
\end{theorem}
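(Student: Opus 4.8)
The plan is to get the impossibility (tightness) side for free and to recast the achievability side of \Cref{wisdom:BOSP_BOUP} (that is, \cite[Theorem~1]{JLTX20}) so that every use of regularity is replaced by one of the two quasi-regular characterizations in \Cref{prop:q-regular equivalent definition}. Since $\regulardistspace\subsetneq\quasiregulardistspace$, the asymmetric-regular instance witnessing $\calC_{\BOSP}^{\BOUP}\le 0.3817$ in \cite{JLTX20} is already an asymmetric quasi-regular instance (this is precisely the content of the theorem's footnote), so the whole task reduces to the lower bound: for every asymmetric quasi-regular profile $\priors=\{\prior_i\}_{i\in[n]}$, $\Rev[\priors]{\BOUP}\ge\calC_{\BOSP}^{\BOUP}\cdot\Rev[\priors]{\BOSP}$.

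First I would normalize so that the optimal uniform price is $\optprice=1$, whence $\Rev[\priors]{\BOUP}=1-\prod_i\cdf_i(1)$; note that $\prod_i\cdf_i=\cdf_{(1:n)}$ is quasi-regular by \Cref{thm:order:quasi-regular}, so $\Rev[\priors]{\BOUP}$ is the monopoly revenue of a quasi-regular distribution whose revenue curve obeys the inscribed-triangle inequality of Condition~\ref{condition:quasi-regular:revenue curve}. Next I would carry out the \cite{JLTX20} decomposition of $\Rev[\priors]{\BOSP}$: write the sequential revenue as a sum of per-buyer terms, each the product of the revenue extracted from buyer $i$ and the probability that the item is still unsold when $i$ arrives; partition the buyers according to whether their optimal sequential price lies below or above the uniform price $1$; and bound each group. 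In this chain the ``revenue'' factors $\revcurve_i(\cdot)$ are controlled by Condition~\ref{condition:quasi-regular:revenue curve} in place of concavity of $\revcurve_i$, while the ``survival'' factors $\prod_{j:\,\order(j)<\order(i)}\cdf_j(\cdot)$ are controlled by the probability-tail bound of Condition~\ref{condition:quasi-regular:cdf}; \Cref{thm:order:quasi-regular} guarantees that every order statistic or product surfacing along the way stays quasi-regular and is therefore subject to the same two inequalities. Extremizing the resulting bound over its free parameters would reproduce exactly the triangular worst case of \cite{JLTX20}, hence the same constant $\calC_{\BOSP}^{\BOUP}=(2+\int_1^{\infty}(1-e^{-\calQ(x)})\,\d x)^{-1}$, making the generalization approximation-preserving.

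An essentially equivalent but perhaps cleaner route is an explicit reduction to triangular buyers. For buyer $i$ with pivot quantile $\optquant_i$, Condition~\ref{condition:quasi-regular:revenue curve} says $\revcurve_i$ lies weakly above the segment from $(\optquant_i,\revcurve_i(\optquant_i))$ to $(1,0)$, and Condition~\ref{condition:quasi-regular:cdf} gives the matching pointwise bound $\cdf_i(\val')\le\val'/\bigl(\val'+\tfrac{1-\cdf_i(\val)}{\cdf_i(\val)}\val\bigr)$ for values $\val'$ below the pivot; the distribution attaining equality in these is the triangular (hence regular) one. Replacing $\cdf_i$ by it, buyer by buyer, leaves $\Rev[\priors]{\BOUP}$ at price $1$ unchanged, and one checks --- this being exactly where the inequalities of \Cref{prop:q-regular equivalent definition} enter --- that it does not increase $\Rev[\priors]{\BOSP}$, using that $\Rev[\priors]{\BOSP}$ is a linear functional of the interim allocations (\Cref{prop:revenue_equivalence}) and that the swap makes each buyer stochastically smaller where it matters (cf.\ \Cref{fact:order statistic dominance}); one is then left with a triangular instance to which the original \cite{JLTX20} argument applies verbatim.

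The main obstacle is the careful treatment of atoms: unlike a regular distribution, a quasi-regular distribution may carry countably many probability masses throughout its support, so its revenue curve is only ``on-average'' concave, one must work with left/right derivatives (\Cref{fact:concavity equivalent definition}), ties must be broken consistently in both $\BOUP$ and $\BOSP$, and the triangular swap must be performed without the masses of different buyers interfering. The crux is to verify, first, that the regular-case worst instance is genuinely triangular (so the reduction loses nothing), and second, that the entire \cite{JLTX20} derivation appeals only to the inequalities in Conditions~\ref{condition:quasi-regular:revenue curve} and \ref{condition:quasi-regular:cdf}, never to the full pointwise monotonicity of $\virtualval_i$; once both are confirmed, the $\approx 0.3817$ constant together with its closed form through the series $\calQ$ transfers unchanged.
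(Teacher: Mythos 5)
Your second route is essentially the paper's argument, but you have the monotonicity directions of the reduction backwards, and as stated it does not establish the lower bound. You want a regular instance $\auxpriors$ on which the ratio $\Rev[\auxpriors]{\BOUP}/\Rev[\auxpriors]{\BOSP}$ is \emph{no larger} than for $\priors$; then the known regular bound transfers. This requires $\Rev[\auxpriors]{\BOUP}\le\Rev[\priors]{\BOUP}$ together with $\Rev[\auxpriors]{\BOSP}\ge\Rev[\priors]{\BOSP}$. You instead claim the swap leaves $\BOUP$ unchanged and makes $\BOSP$ weakly smaller, which yields only $\Rev[\priors]{\BOUP}\ge\calC_{\BOSP}^{\BOUP}\cdot\Rev[\auxpriors]{\BOSP}$ --- weaker than the target $\Rev[\priors]{\BOUP}\ge\calC_{\BOSP}^{\BOUP}\cdot\Rev[\priors]{\BOSP}$, since $\Rev[\auxpriors]{\BOSP}\le\Rev[\priors]{\BOSP}$ under your construction.

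The correct pivot is buyer-dependent and set at the sequential prices, not at the uniform price. Fix a deterministic optimal $\SP^{(\sigma,\prices)}$ for $\priors$ and, for each buyer $i$, take the triangular $\auxprior_i$ with support $[0,\price_i]$ and $\auxcdf_i(\val)=\val/\big(\val+\tfrac{1-\cdf_i(\price_i)}{\cdf_i(\price_i)}\price_i\big)$ on $[0,\price_i)$, with an atom closing the mass at $\price_i$. Then $\prob{\val_i\ge\price_i}$ is preserved, so $\SP^{(\sigma,\prices)}(\auxpriors)=\SP^{(\sigma,\prices)}(\priors)$ and hence $\Rev[\auxpriors]{\BOSP}\ge\Rev[\priors]{\BOSP}$; and Condition~\ref{condition:quasi-regular:cdf} of \Cref{prop:q-regular equivalent definition} gives $\auxcdf_i\ge\cdf_i$ pointwise, so $\UP^{(\tilde\price)}(\auxpriors)\le\UP^{(\tilde\price)}(\priors)$ for \emph{every} $\tilde\price$, hence $\Rev[\auxpriors]{\BOUP}\le\Rev[\priors]{\BOUP}$. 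That is exactly \Cref{lem:BOSP_BOUP:reduction_to_triangle}. A side point: once you are doing this clean black-box reduction you do not need \Cref{thm:order:quasi-regular} at all (the closure of $\quasiregulardistspace$ under order statistics plays no role in the paper's proof of this theorem, though it matters elsewhere), nor do you need to re-trace the decomposition in \cite{JLTX20} --- you invoke it verbatim on the triangular instance. Your concern about atoms is legitimate in general, but the reduction sidesteps it cleanly because (i) the only statistics that matter for $\SP^{(\sigma,\prices)}$ are the acceptance probabilities at the fixed prices $\price_i$, which are preserved by construction, and (ii) stochastic dominance is a pointwise CDF comparison that is insensitive to whether masses are present.
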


We note that the upper bound of $\calC_{\BOSP}^{\BOUP}$ is achieved by asymmetric regular buyers. In fact, our analysis utilizes a reduction (\Cref{lem:BOSP_BOUP:reduction_to_triangle}) from instances with general quasi-regular distributions to instances with triangular distributions, which are regular.

\begin{lemma}
\label{lem:BOSP_BOUP:reduction_to_triangle}
Fix any sequential posted pricing mechanism $\SP^{(\sigma,\prices)}$ that approaches buyers with deterministic price profile $\prices$ under order $\buyerorder$, and any uniform pricing mechanism $\UP^{(\tilde\price)}$ with deterministic uniform price $\tilde\price$.
Given $n \geq 1$ many independent (but possibly asymmetric) buyers with quasi-regular distributions $\priors = \{\prior_i\}_{i\in[n]}$, there exists $n$ buyers with regular (i.e., triangular) distributions $\auxpriors = \{\auxprior_i\}_{i\in[n]}$ such that
\begin{enumerate}
    \item the expected revenues of sequential posted pricing mechanism $\SP^{(\sigma,\prices)}$ are the same for two groups of buyers, i.e., $\SP^{(\sigma,\prices)}(\auxpriors) = \SP^{(\sigma,\prices)}(\priors)$; and 
    \item the revenue of uniform pricing mechanism $\UP^{(\tilde\price)}$ is smaller for buyers with regular distributions $\auxpriors$, i.e., $\UP^{(\tilde\price)}(\auxpriors) \leq \UP^{(\tilde\price)}(\priors)$.
\end{enumerate}
\end{lemma}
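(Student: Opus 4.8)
I would prove the reduction by working buyer-by-buyer, replacing each quasi-regular $\prior_i$ with a triangular distribution $\auxprior_i$ while holding the relevant revenue quantities fixed or monotone. The key observation is that both target mechanisms — the sequential posted pricing $\SP^{(\sigma,\prices)}$ and the uniform pricing $\UP^{(\tilde\price)}$ — interact with each buyer $i$ only through two scalars: the ``sale probability'' $1 - \cdf_i(\price_i)$ at that buyer's own sequential price $\price_i$, and the quantity $\cdf_i(\tilde\price)$ at the common uniform price $\tilde\price$. (In $\SP^{(\sigma,\prices)}$, buyer $i$ contributes expected revenue $\price_i \cdot (1-\cdf_i(\price_i))$ times the probability the item survives to $i$'s turn, which depends on the earlier buyers only through {\em their} sale probabilities; and in $\UP^{(\tilde\price)}$ the revenue is $\tilde\price(1 - \prod_i \cdf_i(\tilde\price))$.) So it suffices to construct, for each $i$, a triangular $\auxprior_i$ with $\auxcdf_i(\price_i) = \cdf_i(\price_i)$ (preserving the sequential contribution exactly, hence item~1) and $\auxcdf_i(\tilde\price) \geq \cdf_i(\tilde\price)$ (so that $\prod_i \auxcdf_i(\tilde\price) \geq \prod_i \cdf_i(\tilde\price)$, giving item~2).

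First I would fix buyer $i$ and set $\quant \eqdef 1 - \cdf_i(\price_i)$, so $\price_i = \revcurve_i(\quant)/\quant$ in quantile space (using $\revcurve_i(\quant) = \price_i \cdot \quant$ by definition of the revenue curve). I then define $\auxprior_i$ to be the triangular distribution whose revenue curve $\auxrevcurve_i$ is the straight line through the origin and the point $(\quant, \revcurve_i(\quant))$ extended — or more precisely the line determined so that $\auxrevcurve_i$ passes through $(\quant,\revcurve_i(\quant))$ and vanishes at $\quant=1$, exactly as in the construction in the proof of \Cref{thm:order:regular} (\Cref{fig:regular order analysis}). Concretely this is $\auxcdf_i(\val) = \frac{\val - b_i}{\val - b_i + a_i}$ for suitable $a_i \geq 0$, $b_i \geq 0$ chosen from the values $\revcurve_i(\quant)$ and the right-derivative (or a supporting line) at that point. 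This makes $\auxprior_i$ regular and forces $\auxcdf_i(\price_i) = \cdf_i(\price_i)$, establishing item~1 since every term in the revenue expansion of $\SP^{(\sigma,\prices)}$ is unchanged.

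For item~2 the point is the probability tail bound of \Cref{prop:q-regular equivalent definition}, Condition~\ref{condition:quasi-regular:cdf}: quasi-regularity of $\prior_i$ says exactly that for every $\val' < \val$, $\cdf_i(\val') \le \val'/(\val' + \frac{1-\cdf_i(\val)}{\cdf_i(\val)}\val)$, and the right-hand side is precisely $\auxcdf_i(\val')$ for the triangular distribution anchored at $\val$ (with $\price_i$ in the role of $\val$). There are two cases according to whether $\tilde\price < \price_i$ or $\tilde\price \geq \price_i$. If $\tilde\price < \price_i$, Condition~\ref{condition:quasi-regular:cdf} with anchor $\val = \price_i$ and $\val' = \tilde\price$ gives $\cdf_i(\tilde\price) \le \auxcdf_i(\tilde\price)$ directly. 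If $\tilde\price \geq \price_i$, I instead note the anchored triangular distribution first-order stochastically dominates $\prior_i$ on $[\price_i,\infty)$ (a reformulation of the inscribed-triangle/Condition~\ref{condition:quasi-regular:revenue curve} property, as in \Cref{thm:order:regular}'s proof), which again yields $\cdf_i(\tilde\price) \leq \auxcdf_i(\tilde\price)$. Either way $\auxcdf_i(\tilde\price) \geq \cdf_i(\tilde\price)$ for all $i$, so $\UP^{(\tilde\price)}(\auxpriors) = \tilde\price\,(1 - \prod_i \auxcdf_i(\tilde\price)) \leq \tilde\price\,(1 - \prod_i \cdf_i(\tilde\price)) = \UP^{(\tilde\price)}(\priors)$.

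\textbf{The main obstacle.} The delicate point is choosing the triangular distribution consistently so that it simultaneously (a) matches $\cdf_i$ exactly at $\price_i$ and (b) dominates $\cdf_i$ at $\tilde\price$ — these are the two ``anchors'' of the problem, and a single triangle has only the freedom of $(a_i,b_i)$. Matching at $\price_i$ pins both parameters (given the ex-ante quantile there), so I must then verify domination at $\tilde\price$ comes for free from quasi-regularity rather than being an extra constraint; this is exactly where Condition~\ref{condition:quasi-regular:cdf} of \Cref{prop:q-regular equivalent definition} does the work, but one has to be careful about the direction of the inequality when $\tilde\price$ lies above versus below the anchor $\price_i$, and about the degenerate case where $\cdf_i(\price_i) \in \{0,1\}$ (i.e.\ $\quant \in \{0,1\}$), which should be handled separately by taking $\auxprior_i$ to be a point mass. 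Once the per-buyer construction is in place, the product structure of $\UP$ and the survival-probability structure of $\SP$ make the aggregation immediate.
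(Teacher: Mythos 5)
Your high-level plan matches the paper's proof exactly: replace each quasi-regular $\prior_i$ buyer by buyer with a triangular $\auxprior_i$ that pins the sale probability $1 - \cdf_i(\price_i)$ at the sequential price (for item~1) and dominates $\cdf_i$ pointwise (for item~2), then invoke Condition~\ref{condition:quasi-regular:cdf} of \Cref{prop:q-regular equivalent definition}. But the concrete construction you write down — the unbounded triangle $\auxcdf_i(\val) = \frac{\val - b_i}{\val - b_i + a_i}$ on $[b_i,\infty)$, which is the one used in \Cref{lem:order:regular:triangle} — is not the right object here, and this is where your argument breaks.

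The paper instead \emph{truncates} the triangle at $\price_i$: $\auxprior_i$ has support $[0,\price_i]$ with $\auxcdf_i(\val) = \frac{\val}{\val + \frac{1-\cdf_i(\price_i)}{\cdf_i(\price_i)}\price_i}$ for $\val<\price_i$ and a point mass of size $1-\cdf_i(\price_i)$ at $\price_i$ (revenue curve a tent peaking at $(1-\cdf_i(\price_i),\ \revcurve_i(1-\cdf_i(\price_i)))$, which is still concave hence regular). With this choice, for \emph{every} $\val\in[0,\price_i)$ Condition~\ref{condition:quasi-regular:cdf} gives $\cdf_i(\val)\le\auxcdf_i(\val)$, and for $\val\ge\price_i$ we get $\auxcdf_i(\val)=1\ge\cdf_i(\val)$ for free, so $\prior_i$ FOSD $\auxprior_i$ globally, and item~2 follows without any case split on the position of $\tilde\price$.

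Your case split at $\tilde\price\ge\price_i$ is forced on you precisely because the unbounded triangle has $\auxcdf_i(\tilde\price)<1$ there, and this is exactly where quasi-regularity gives you nothing: Condition~\ref{condition:quasi-regular:cdf} bounds $\cdf_i(\val')$ only for $\val'$ \emph{below} the anchor $\price_i$, not above. The tail of a quasi-regular distribution can be arbitrarily light — e.g.\ $\cdf_i$ may jump to $1$ just above $\price_i$ while the unbounded triangle stays strictly below $1$ — so the inequality $\cdf_i(\tilde\price)\le\auxcdf_i(\tilde\price)$ you want can actually fail for $\tilde\price>\price_i$. Your proposed patch, that ``the anchored triangular distribution first-order stochastically dominates $\prior_i$ on $[\price_i,\infty)$,'' is (a) the wrong direction — if $\auxprior_i$ FOSD $\prior_i$ then $\auxcdf_i\le\cdf_i$, the opposite of what you conclude — and (b) not a consequence of quasi-regularity in either direction, as the example above shows. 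So you need the truncated tent, not the unbounded triangle; once you make that substitution the rest of your argument, including the handling of the degenerate $\cdf_i(\price_i)\in\{0,1\}$ cases, goes through and recovers the paper's proof.
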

\begin{proof}
For each buyer $i\in[n]$, construct distribution $\auxprior_i$ with support $\supp(\auxprior_i) = [0, \price_i]$ and cumulative density function $\auxcdf_i(\cdot)$ as  
\begin{align*}
    \auxcdf_i(\val) \triangleq \left\{
    \begin{array}{ll}
    \frac{\val}{\val + \frac{1-\cdf_i(\price_i)}{\cdf_i(\price_i)}\price_i}     & ~\text{if $\val \in [0, \price_i)$} \\
    1     & ~\text{if $\val = \price_i$}
    \end{array}
    \right.
\end{align*}
By construction, distribution $\auxprior_i$ is regular and $\prob[\val\sim\prior_i]{\val \geq \price_i} = \prob[\val\sim\auxprior_i]{\val \geq \price_i}$. Hence, the expected revenue of sequential posted pricing mechanism $\SP^{(\sigma,\prices)}$ are the same for two groups of buyers.

Next we compare the expected revenue of uniform pricing mechanism $\UP^{(\tilde\price)}$ for two groups of buyers. Since distribution $\prior_i$ is quasi-regular, invoking condition 3 in \Cref{prop:q-regular equivalent definition}, we have that for every value $\val \in [0, \price_i)$:
\begin{align*}
    \cdf_i(\val) \leq \frac{\val}{\val + \frac{1-\cdf_i(\price_i)}{\cdf_i(\price_i)}\price_i} = \auxcdf_i(\val)
\end{align*}
For every value $\val \in [\price_i, \infty)$, we also have 
\begin{align*}
    \cdf_i(\val) \leq 1 = \auxcdf_i(\val)
\end{align*}
Thus, $\prior_i$ is first order stochastic dominates distribution $\auxprior_i$. 
Consequently, the expected revenue of uniform pricing mechanism $\UP^{(\tilde\price)}$ is weakly smaller for buyers with regular distributions $\auxpriors$.
\end{proof}

Now we are ready to proof \Cref{thm:BOSP_BOUP}.
\begin{proof}[Proof of \Cref{thm:BOSP_BOUP}]
Fix an arbitrary instance with asymmetric quasi-regular buyers. Recall that any randomized sequential posted pricing mechanism (and its expected revenue) can be viewed as a convex combination of deterministic sequential posted pricing mechanism (and their expected revenue). Hence, it is without loss of generality to assume {\BayesianOptimalSequentialPricing} is deterministic. Similarly, it is without loss of generality to assume {\BayesianOptimalUniformPricing} is deterministic. Invoking \Cref{lem:BOSP_BOUP:reduction_to_triangle}, the approximation guarantee for this group of quasi-regular buyers is lower bounded by the approximation guarantee for a group of regular buyers. Therefore, we conclude the proof by invoking \cite[Theorem~1]{JLTX20} for asymmetric regular buyers.
\end{proof}

We have the following two straightforward implications (\Cref{thm:BOM_BOUP,thm:BOM_BOUR}) of \Cref{thm:BOSP_BOUP}.

\begin{theorem}[{\BOM} vs.\ {\BOUP}]
% [{\BayesianOptimalMechanism} vs.\ {\BayesianOptimalUniformPricing}]
\label{thm:BOM_BOUP}
\begin{flushleft}
% Given $n \geq 1$ many independent (but possibly asymmetric) quasi-regular or regular buyers,
For asymmetric quasi-regular buyers, {\BayesianOptimalUniformPricing} achieves a tight $\calC_{\BOM}^{\BOUP} \in [0.2770, 0.3817]$-approximation to {\BayesianOptimalMechanism}.\textsuperscript{\textnormal{\ref{footnote:upper_bound:regular}}}
\end{flushleft}
\end{theorem}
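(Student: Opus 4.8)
The plan is to derive both the upper bound and the lower bound essentially as corollaries of the machinery already in place, mirroring how the regular case is handled in \cite{AHNPY19,JLTX20,JLQTX19}. For the \emph{upper bound} $\calC_{\BOM}^{\BOUP} \leq 0.3817$: since $\BOSP \leq \BOM$ always holds (sequential posted pricing is a feasible mechanism), any instance witnessing tightness of $\calC_{\BOSP}^{\BOUP} \approx 0.3817$ in \Cref{thm:BOSP_BOUP} — which, as noted there, can be taken to consist of asymmetric regular buyers — immediately gives $\BOUP(\priors) \lesssim 0.3817 \cdot \BOSP(\priors) \leq 0.3817 \cdot \BOM(\priors)$, so the same upper bound transfers verbatim. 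This is why footnote~\ref{footnote:upper_bound:regular} applies here too.

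For the \emph{lower bound} $\calC_{\BOM}^{\BOUP} \geq 0.2770$: the natural route is \emph{not} to go through $\BOSP$ (which would only give $0.3817$ against $\BOSP$, but $\BOSP$ can be strictly below $\BOM$), but rather to bound $\BOUP$ against $\BOM$ directly. First I would invoke revenue equivalence (\Cref{prop:revenue_equivalence}) to write $\BOM(\priors) = \sum_{i} \expect[\vali\sim\priori]{\ironvirtualval_i(\vali) \cdot x_i(\vali)}$, and then decompose the optimal ironed-virtual-surplus into the usual two regimes (the ``high'' regime where some buyer's value exceeds a carefully chosen uniform threshold, and the ``core''/low regime), exactly as in the regular-case analysis. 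The key point is that the quasi-regularity of each $\priori$ is precisely what is needed to control the ``core'' contribution: Condition~\ref{condition:quasi-regular:cdf} of \Cref{prop:q-regular equivalent definition} (the probability tail bound, tight for triangular distributions) lets one upper-bound $\BOM$'s revenue coming from values below the threshold in terms of the revenue curve's value at the anchor quantile, which in turn is captured by a uniform price. One then optimizes the threshold; since the worst-case behavior of quasi-regular distributions on any bounded sub-support is realized by triangular (hence regular) distributions, the resulting bound should be no worse than in the regular asymmetric case — but because {\BOM} is a stronger benchmark than {\BOSP}, the known constant degrades from $0.3817$ to $0.2770$, matching the bound claimed (this $0.2770$ presumably arises from the same optimization that appears in the {\BOM}-vs-simple-pricing analyses, e.g.\ \cite{AHNPY19,JLQTX19}, now fed through the quasi-regular tail bound instead of the regular one).

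The main obstacle I anticipate is the ``core'' estimate: for genuinely regular distributions one uses concavity of the revenue curve (equivalently $\revcurve \equiv \ironrevcurve$), but a quasi-regular distribution can have many interior probability masses and a strictly non-concave revenue curve, so $\ironvirtualval$ need not equal $\virtualval$ and the ironed revenue curve governs {\BOM}. The fix is to observe that ironing only \emph{increases} the benchmark pointwise, so it suffices to bound $\BOM$ by the \emph{ironed} revenue curve $\ironrevcurve_i$ at each buyer; and Condition~\ref{condition:quasi-regular:revenue curve} of \Cref{prop:q-regular equivalent definition} — which says $\revcurve(\quant') \geq \frac{1-\quant'}{1-\quant}\revcurve(\quant)$ for all $\quant' \geq \quant$ — is inherited by the concave envelope $\ironrevcurve$, since the linear lower bound $\frac{1-\quant'}{1-\quant}\revcurve(\quant)$ through $(\quant,\revcurve(\quant))$ and $(1,0)$ is itself concave and lies below $\revcurve$, hence below $\ironrevcurve$. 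That inheritance is the one genuinely new lemma-level observation needed; everything downstream then reduces to the same calculus optimization as the regular case, so I would state it explicitly and then cite \cite[Theorem~3.1]{AHNPY19}, \cite[Theorem~1]{JLTX20}, \cite[Theorem~1]{JLQTX19} for the remaining bookkeeping. Tightness of the interval endpoints is left open, consistent with the theorem statement, so no matching-instance construction is required beyond the $0.3817$ regular witness already available.
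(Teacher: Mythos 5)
Your upper-bound argument is correct and matches the paper: the tight $\calC_{\BOSP}^{\BOUP}$ instance can be taken regular (hence quasi-regular), and $\BOSP \leq \BOM$ transfers the ratio.

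For the lower bound, you explicitly reject the route through $\BOSP$ on the grounds that ``$\BOSP$ can be strictly below $\BOM$,'' and then sketch a from-scratch core/tail analysis of $\BOUP$ against $\BOM$. This is where you diverge from the paper and, more importantly, where you miss the key point. The paper's proof is a two-line chain:
\begin{align*}
\calC_{\BOM}^{\BOUP} \;\geq\; \calC_{\BOSP}^{\BOUP} \cdot \calC_{\BOM}^{\BOSP} \;\gtrsim\; 0.3817 \times 0.7258 \;\approx\; 0.2770,
\end{align*}
where $\calC_{\BOSP}^{\BOUP} \approx 0.3817$ is \Cref{thm:BOSP_BOUP} (already proven for asymmetric quasi-regular buyers) and $\calC_{\BOM}^{\BOSP} \gtrsim 0.7258$ is the state-of-the-art sequential-pricing-vs.-Myerson guarantee of \cite{PT22,BC23}, which holds for \emph{general} distributions and hence in particular for quasi-regular ones. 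The fact that $\BOSP$ sits strictly below $\BOM$ is not an obstacle; it is exactly quantified by the known constant $\calC_{\BOM}^{\BOSP}$, and the two ratios compose. Your guess that ``$0.2770$ presumably arises from the same optimization \ldots\ fed through the quasi-regular tail bound'' is therefore off the mark: $0.2770$ is just the product $0.3817 \times 0.7258$, not a fresh optimization.

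Your proposed direct analysis is not obviously wrong, but it is substantially heavier machinery than needed, and several of its steps are under-specified (the ``core'' bound, the choice of threshold, and why the constant lands at $0.2770$ rather than something else). One small remark on the lemma you flag as the ``genuinely new observation'': the ironed revenue curve $\ironrevcurve$ is concave, and a concave revenue curve corresponds to a \emph{regular} (hence quasi-regular) distribution by \Cref{lem:regular equivalent definition}, so the inheritance of quasi-regularity by $\ironrevcurve$ is immediate and needs no argument about linear minorants lying below $\revcurve$; your stated derivation actually gives the weaker $\ironrevcurve(\quant') \geq \frac{1-\quant'}{1-\quant}\revcurve(\quant)$ rather than the desired $\ironrevcurve(\quant') \geq \frac{1-\quant'}{1-\quant}\ironrevcurve(\quant)$. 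In any case, this whole machinery is moot once you notice the chaining argument.
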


\begin{proof}
The state-of-the-art revenue approximation guarantee of {\BayesianOptimalSequentialPricing} against {\BayesianOptimalMechanism} is $\calC_{\BOM}^{\BOSP} \gtrsim 0.7258$ \cite{PT22,BC23}. Thus the lower-bound part follows as $\calC_{\BOM}^{\BOUP} \geq \calC_{\BOSP}^{\BOUP} \cdot \calC_{\BOM}^{\BOSP} \gtrsim 0.2770$. The upper-bound part is an implication of \Cref{thm:BOSP_BOUP}, namely $\calC_{\BOM}^{\BOUP} \leq \calC_{\BOSP}^{\BOUP} \approx 0.3817$.
\end{proof}

\begin{theorem}[{\BOM} vs.\ $\BOUR$]
% [{\BayesianOptimalMechanism} vs.\ {\BayesianOptimalUniformReserve}]
\label{thm:BOM_BOUR}
\begin{flushleft}
% Given $n \geq 1$ many independent (but possibly asymmetric) quasi-regular or regular buyers,
For asymmetric quasi-regular buyers, {\BayesianOptimalUniformReserve} achieves a tight $\calC_{\BOM}^{\BOUR} \in [0.2770, 0.4630]$-approximation to {\BayesianOptimalMechanism}.\textsuperscript{\textnormal{\ref{footnote:upper_bound:regular}}}
\end{flushleft}
\end{theorem}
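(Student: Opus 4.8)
The plan is to obtain \Cref{thm:BOM_BOUR} as a routine two-sided consequence of results already established, using the inclusion $\regulardistspace \subseteq \quasiregulardistspace$ recorded in \Cref{prop:hierarchy}. No new machinery is needed: the substantive content lives in \Cref{thm:BOSP_BOUP} and its reduction \Cref{lem:BOSP_BOUP:reduction_to_triangle}, which \Cref{thm:BOM_BOUP} already leverages. So the proof of \Cref{thm:BOM_BOUR} just stitches the lower bound to \Cref{thm:BOM_BOUP} and the upper bound to the known regular hard instance.

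For the lower-bound half $\calC_{\BOM}^{\BOUR} \gtrsim 0.2770$, I would first recall the elementary pointwise domination $\BOUR \geq \BOUP$ (see \Cref{fig:simple-mechanism}): using the revenue-optimal uniform price $\price$ as a reserve in {\SecondPriceAuction} already guarantees revenue at least $\price \cdot (1 - \prod_{i} \cdf_i(\price))$, which is precisely the revenue of {\BayesianOptimalUniformPricing} at price $\price$, and passing to optima preserves the inequality. Consequently, for every asymmetric quasi-regular instance the ratio $\BOUR / \BOM$ is at least $\BOUP / \BOM$, so taking infima over instances gives $\calC_{\BOM}^{\BOUR} \geq \calC_{\BOM}^{\BOUP} \gtrsim 0.2770$, where the last step is exactly \Cref{thm:BOM_BOUP}.

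For the upper-bound half $\calC_{\BOM}^{\BOUR} \leq 0.4630$, I would simply note that the worst-case family of asymmetric \emph{regular} instances witnessing $\calC_{\BOM}^{\BOUR} \leq 0.4630$ for regular buyers (the instances behind \Cref{wisdom:BOM_BOUR}, from \cite{JLTX20,JLQTX19}) is, via $\regulardistspace \subseteq \quasiregulardistspace$, also a family of asymmetric quasi-regular instances. Since the tight ratio for quasi-regular buyers is an infimum over a superset of these instances, it is at most the tight ratio for regular buyers, hence $\le 0.4630$; this is also the footnoted remark that the upper bound holds already for regular buyers.

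There is no real obstacle inside this theorem itself — it is a bookkeeping consequence of \Cref{thm:BOM_BOUP} and monotonicity of mechanism revenues. The genuine difficulty, which I would only flag in a remark, is that the interval $[0.2770, 0.4630]$ is not closed: pinning down the exact value of $\calC_{\BOM}^{\BOUR}$ would require either a sharper accounting of {\BayesianOptimalUniformReserve}'s revenue against {\BayesianOptimalMechanism} (to push the lower bound up, e.g.\ toward the conjectured approximation-preserving collapse $\calC_{\BOM}^{\BOUR} \in [0.3817, 0.4630]$) or a more refined impossibility instance — and even the regular-case tight bound remains open.
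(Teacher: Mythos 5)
Your proposal is correct and follows essentially the same two-step reduction as the paper: the lower bound via $\calC_{\BOM}^{\BOUR} \geq \calC_{\BOM}^{\BOUP} \gtrsim 0.2770$ from \Cref{thm:BOM_BOUP}, and the upper bound by inheriting the $\lesssim 0.4630$ regular-buyer hard instance of \cite{JLTX20} through the inclusion $\regulardistspace \subseteq \quasiregulardistspace$. Your extra sentence unpacking why $\BOUR \geq \BOUP$ is a harmless elaboration of what the paper treats as folklore.
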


\begin{proof}
The lower-bound part is an implication of \Cref{thm:BOM_BOUP}, namely $\calC_{\BOM}^{\BOUR} \geq \calC_{\BOM}^{\BOUP} \gtrsim 0.2770$. The upper-bound part $\calC_{\BOM}^{\BOUR} \lesssim 0.4630$ follows from \cite[Theorem~3]{JLTX20} which studies the upper bound for regular buyers.
\end{proof}

\xhdr{Symmetric Buyers.} We now consider symmetric buyers. Recall that under general distributions, both {\BayesianOptimalUniformReserve} and {\BayesianOptimalUniformPricing} admit constant revenue approximations; and those revenue approximations can be further improved under regular distributions. For quasi-regular distributions, we establish revenue approximation upper bounds (\Cref{prop:BOM_BOUP:iid,prop:BOM_BOUR:iid}) which are strictly worse than tight bounds for regular distributions. Our results suggest that in term of absolute approximation guarantees, quasi-regular distributions are closer to general (aka., irregular) distributions when buyers are symmetric.

We first present the revenue approximation for {\BayesianOptimalUniformReserve}. Its lower bound part follows from \cite{CHMS10,DFK16,har16} which study the same approximation guarantee for general distributions, and the upper bound utilizes \Cref{exp:BOM_BOUR:iid}.

\begin{theorem}
\label{prop:BOM_BOUR:iid}
    For symmetric quasi-regular buyers, {\BayesianOptimalUniformReserve} achieves a tight $\calC_{\BOM}^{\BOUR} \in [0.5, 1 + (\LambertFunc(-\frac{1}{e^2}))^{-1}]$-approximation to {\BayesianOptimalMechanism}, where the upper bound of $1 + (\LambertFunc(-\frac{1}{e^2}))^{-1}\approx 0.6822$.\footnote{In mathematics, the Lambert W function, also called the omega function or product logarithm, is the converse relation of the function $y(x) = x\cdot e^x$. The non-principal branch $x = \LambertFunc(y)$ is the inverse relation when $x \leq -1$.}
\end{theorem}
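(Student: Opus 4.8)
The statement splits into a positive (lower-bound) half, $\calC_{\BOM}^{\BOUR}\ge \tfrac12$, and an impossibility (upper-bound) half, $\calC_{\BOM}^{\BOUR}\le 1+(\LambertFunc(-e^{-2}))^{-1}$. The lower-bound half I would simply inherit from the more general setting: every quasi-regular distribution is in particular a general distribution, and for i.i.d.\ general buyers $\BOUR$ is already a $\tfrac12$-approximation to $\BOM$ (equivalently, $\BOUR\ge\BOUP$ and for symmetric general buyers $\BOUP\ge\tfrac12\,\BOM$), so the same bound holds a fortiori over the subclass of i.i.d.\ quasi-regular instances; this is the same move used for the asymmetric case in \Cref{thm:BOM_BOUR} and needs only \cite{CHMS10,DFK16,har16}.

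The content lies in the upper-bound half, which is witnessed by an explicit family of i.i.d.\ instances (the construction recorded as \Cref{exp:BOM_BOUR:iid}). The plan is to take $n\to\infty$ i.i.d.\ buyers whose value distribution $\prior$ is assembled from an equal-revenue piece together with a single atom, governed by one free parameter $\quant$ (morally, the monopoly quantile), and to argue three things. First, $\prior$ is quasi-regular: this is cleanest via the revenue-curve characterization of \Cref{prop:q-regular equivalent definition} (Condition~\ref{condition:quasi-regular:revenue curve}), i.e.\ that $\revcurve(\quant)/(1-\quant)$ is nondecreasing, which one verifies directly on the piecewise formula for $\prior$. Second, the atom makes $\prior$ irregular, so that the ironed revenue curve turns the whole support into one ironed interval; consequently, as $n\to\infty$, $\BOM$ is forced into a welfare-type quantity whose closed form carries a logarithmic term inherited from the equal-revenue piece. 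Third, $\BOUR$, being second-price with a single anonymous reserve, cannot exploit the ironing, and its $n\to\infty$ value — optimized over the reserve, using the Poisson approximation for the number of bidders above a fixed reserve — admits a separate closed form. Dividing, $\BOUR/\BOM$ becomes an explicit one-variable function of $\quant$; setting its derivative to zero yields the transcendental equation $\tfrac{1}{\quant}+\ln\quant=2$, equivalently $\tfrac{x}{1-x}+\ln(1-x)=1$ for $x=\BOUR/\BOM$, whose unique solution in $(0,1)$ is $x=1+(\LambertFunc(-e^{-2}))^{-1}\approx 0.6822$.

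I expect the main obstacle to be the third step: pinning down $\lim_{n\to\infty}\BOUR$ with the reserve optimized, since one must combine the Poisson limit of the exceedance count with the atom in $\prior$ and then carry out the reserve optimization in closed form (by contrast, $\BOM\to$~expected welfare follows from standard ``bounded support plus abundant i.i.d.\ competition'' reasoning). A secondary remark is that for the theorem as stated this single family need only serve as a \emph{witness} for the upper bound, so one need not show it is extremal; upgrading ``$\le 0.6822$'' to equality by a matching general lower bound is left as \Cref{conj:BOM_BOUR}.
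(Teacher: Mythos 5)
Your plan coincides with the paper's proof: the lower bound $\tfrac12$ is imported from the i.i.d.\ general case via \cite{CHMS10,DFK16,har16}, and the upper bound is witnessed by exactly the i.i.d.\ family of \Cref{exp:BOM_BOUR:iid} (equal-revenue tail plus a single atom parameterized by $a$), taking $n\to\infty$ and optimizing $a$, which lands on the transcendental condition you describe. Two small corrections to your anticipated computation: the exponential factor arises on the $\BOUR$ side, not the $\BOM$ side — the ironed revenue curve is a single line segment so $\BOM = 1 + a - \tfrac1n$ exactly with no transcendental term, whereas $\SPA_{\reserve=a}=\expect{\val_{(2:n)}}\to ae^{-1/a}+1$; and the reserve optimization you flagged as the main obstacle is dispatched by observing that the (un-ironed) virtual value is negative for all $\val>a$, so only $\reserve=a$ and $\reserve\to\infty$ need be compared.
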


\begin{example}[Symmetric Quasi-Regular Instances for $\BOM$ vs.\ $\BOUR$]
\label{exp:BOM_BOUR:iid}
Fix any $a \geq 0$.
There are $n$ symmetric quasi-regular buyers with independent and identical valuation distributions
$\priors = \{\prior\}^{\otimes n}$, where distribution $\prior$ has cumulative density function $\cdf(\val) = \frac{\val}{\val + \frac{1}{n}}$ and support $\supp(\prior) = [a, \infty)$.\footnote{Namely, distribution $\prior$ has probability mass $\frac{a}{a + {1}/{n}}$ at value $a$.} 
In this instance, $\frac{\BOUR}{\BOM} = \frac{ae^{-{1}/{a}} + 1}{a +1} - o_n(1)$, which is minimized at $1 + (\LambertFunc(-\frac{1}{e^2}))^{-1}\approx 0.6822$ by setting $a  = -(\LambertFunc(-\frac{1}{e^2}) + 2)^{-1} \approx 0.8725$ and letting $n$ approach infinity.
\ignore{\yfnote{\url{https://www.desmos.com/calculator/tpwzdk0zqi}. The ratio is also the solution to the equation $\frac{x}{1 - x} + \ln(1 - x) = 1$.}}
\end{example}

\begin{proof}[Proof of \Cref{prop:BOM_BOUR:iid}]
    The lower bound part follows from \cite{CHMS10,DFK16,har16}. 
    
    For the upper bound part, we analyze \Cref{exp:BOM_BOUR:iid}. The quasi-regularity of the constructed distribution can be easily checked by algebra. Also see a graphical illustration in \Cref{fig:BOM_BOUR:iid}. Below we verify the approximation ratio stated in this example.

    We first compute the expected revenue of {\BayesianOptimalMechanism}. Let $\auxprior_{(n:n)}$ be the $n$-th order statistic of $n$ i.i.d.\ uniform distribution with support $[0, 1]$. As a sanity check, the quantile of largest value $\val_{(1:n)} \sim \prior$ drawn i.i.d.\ from distribution $\prior$ is exactly $\quant_{(n:n)}\sim \auxprior_{(n:n)}$. Let $\revcurve$ and $\ironrevcurve$ be the revenue curve and ironed revenue curve induced by distribution $\prior$. Invoking \Cref{prop:revenue_equivalence}, the expected revenue of {\BayesianOptimalMechanism} can be expressed as 
    \begin{align*}
        \BOM &{}= \expect
        {\ironrevcurve'(\quant_{(n:n)})}
        % \\
        % &{}
        =
        \expect{\ironrevcurve'(0)\cdot \indicator{\quant_{(0:0)} = 0}}
        +
        \expect{\ironrevcurve'(\quant_{(n:n)})\cdot \indicator{\quant_{(0:0)}>0}}
        % \\
        % &{}
        % \overset{(*)}{=}
        =
        1 + a - \frac{1}{n}
    \end{align*}
    where all expectations are taken over $\quant_{(n:n)}\sim \auxprior_{(n:n)}$. The last equality holds by the facts that $\expect{\ironrevcurve'(0)\cdot \indicator{\quant_{(0:0)}=0}} = 1$ (implied by $\revcurve(0) = \frac{1}{n}$) and $\ironrevcurve'(\quant) = a - \frac{1}{n}$ for every $\quant\in(0, 1)$.

    We next compute the expected revenue of {\BayesianOptimalUniformReserve}. Combining \Cref{prop:revenue_equivalence} and the structure of revenue curve $\revcurve$ (i.e., virtual value $\virtualval(\val)$ is negative for every value $\val \in(a, \infty)$), it suffices to consider {\SecondPriceAuction} with uniform reserve $\reserve = \infty$\footnote{Formally, we consider a finite uniform reserve $\reserve$ and then let it approach infinity.} or uniform reserve $\reserve = a$. Namely, we have
    \begin{align*}
        \BOUR = \max\{\SPA_{\reserve=\infty},\SPA_{\reserve=a}\}
    \end{align*}
    where $\BOUR$, $\SPA_{\reserve=\infty}$, and $\SPA_{\reserve=a}$ are the expected revenue of {\BayesianOptimalMechanism}, {\SecondPriceAuction} with uniform reserve $\reserve = \infty$, and {\SecondPriceAuction} with uniform reserve $\reserve = a$, respectively. Following the argument similar to the one for analyzing $\BOM$, we know $\SPA_{\reserve=\infty} = 1$. To compute $\SPA_{\reserve=a}$, note that for $n\geq 2$,
    \begin{align*}
        \SPA_{\reserve = a} = \expect[\val_{(2:n)}\sim \prior_{(2:n)}]{\val_{(2:n)}}
        =
        \displaystyle\int_{0}^{\infty} 1 - \cdf_{(2:n)}(\val)\cdot\d\val
        =
        a - \frac{(na + 1)^n(na-n+1)-(na)^n(na + 1)}{n(na + 1)^n}
    \end{align*}
    where $\val_{(2:n)}$, $\cdf_{(2:n)}$ are the second order statistic of $n$ i.i.d.\ samples from distribution $\prior$ and its cumulative density function, respectively. The last equality holds by algebra. Moreover, we can verify by algebra that $\SPA_{\reserve = a} \geq 1 - o_n(1)$ for every $a \geq 0$. Putting all the pieces together, we obtain
    \begin{align*}
        \frac{\BOUR}{\BOM} = 
        \frac{    a - \frac{(na + 1)^n(na-n+1)-(na)^n(na + 1)}{n(na + 1)^n}}{1 + a - \frac{1}{n}}
        =
        \frac{ae^{-\frac{1}{a}} + 1}{a +1} - o_n(1)
    \end{align*}
    By considering the first-order condition of $\frac{ae^{-\frac{1}{a}} + 1}{a + 1}$, we know that the ratio is minimized at $1 + (\LambertFunc(-\frac{1}{e^2}))^{-1}\approx 0.6822$ by setting $a  = -(\LambertFunc(-\frac{1}{e^2}) + 2)^{-1} \approx 0.8725$.
\end{proof}

\begin{figure}[ht]
    \centering
    \subfloat[\Cref{exp:BOM_BOUR:iid} with $n = 5$ and $a = -(\LambertFunc(-\frac{1}{e^2}) + 2)^{-1} \approx 0.8725$. The dashed gray line verifies the quasi-regularity.]{
    \begin{tikzpicture}[scale=1, transform shape]
\begin{axis}[
axis line style=gray,
axis lines=middle,
xlabel = $\quant$,
ylabel = $\revcurve(\quant)$,
xtick={0, 1},
ytick={0, 0.2, 0.8725},
xticklabels={0, 1},
yticklabels={0, $\frac{1}{n}$, $a$},
xmin=0,xmax=1.1,ymin=-0.0,ymax=1.,
width=0.5\textwidth,
height=0.4\textwidth,
samples=1000]

\addplot[domain=0.0001:0.18648, black!100!white, line width=0.5mm] (x, {(1 - x)/5});

\addplot[domain=0.18648:1, black!100!white, line width=0.5mm] (x, {0.8725*x});

\addplot[dotted, gray, line width=0.3mm] (1, 0) -- (1, 0.8725) -- (0, 0.8725);

\addplot[domain=0.0001:1, black!100!white, dashed, gray, line width=0.3mm] (x, {(1 - x)/5});

\end{axis}

\end{tikzpicture}
\label{fig:BOM_BOUR:iid}
}
    \subfloat[\Cref{exp:BOM_BOUP:iid} with $n = 5$.]{
    \begin{tikzpicture}[scale=1, transform shape]
\begin{axis}[
axis line style=gray,
axis lines=middle,
xlabel = $\quant$,
ylabel = $\revcurve(\quant)$,
xtick={0, 1},
ytick={0, 0.2, 1},
xticklabels={0, 1},
yticklabels={0, $\frac{1}{n}$, 1},
xmin=0,xmax=1.1,ymin=-0.0,ymax=1.2,
width=0.5\textwidth,
height=0.4\textwidth,
samples=500]

\addplot[domain=0.0001:1, black!100!white, line width=0.5mm] (x, {x/(1-(1-x)^5)});

\addplot[dotted, gray, line width=0.3mm] (1, 0) -- (1, 1) -- (0, 1);

\end{axis}

\end{tikzpicture}
\label{fig:BOM_BOUP:iid}
}
\\
    \subfloat[\Cref{exp:single_sample:iid:MA} with $n = 5$, $a =  0.6016$ and $\varepsilon = 0.01$. The dashed gray line verifies the quasi-regularity.]{
    \begin{tikzpicture}[scale=1, transform shape]
\begin{axis}[
axis line style=gray,
axis lines=middle,
xlabel = $\quant$,
ylabel = $\revcurve(\quant)$,
xtick={0, 1},
ytick={0, 0.12032, 1},
xticklabels={0, 1},
yticklabels={0, $\frac{a}{n}$, $1$},
xmin=0,xmax=1.1,ymin=-0.0,ymax=1.2,
width=0.5\textwidth,
height=0.4\textwidth,
samples=500]

\addplot[domain=0.0001:0.106548, black!100!white, line width=0.5mm] (x, {x * (1-x) * 0.6016 / 5 / x});

\addplot[domain=0.106548:1, black!100!white, line width=0.5mm] (x, {x * (1+0.01 * (1-x))});

\addplot[dotted, gray, line width=0.3mm] (1, 0) -- (1, 1) -- (0, 1);

\addplot[domain=0.0001:1, black!100!white, dashed, gray, line width=0.3mm] (x, {x * (1-x) * 0.6016 / 5 / x});

\end{axis}

\end{tikzpicture}
\label{fig:BOM_IP:iid}
}
    \caption{Revenue curves of symmetric quasi-regular buyers in \Cref{exp:BOM_BOUP:iid,exp:BOM_BOUR:iid,exp:single_sample:iid:MA}. The quasi-regularity can be verified by checking Condition~\ref{condition:quasi-regular:revenue curve} in \Cref{prop:q-regular equivalent definition}.}
    \label{fig:BOUR and BOUP:iid}
\end{figure}
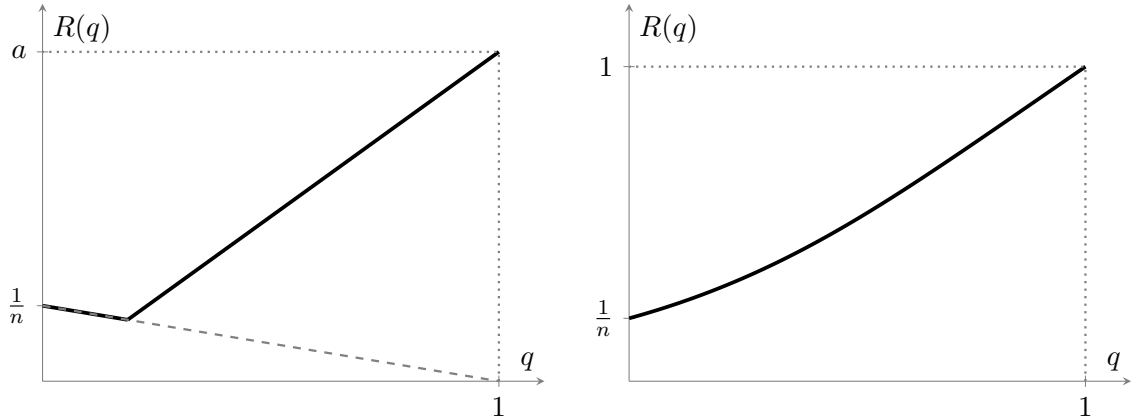
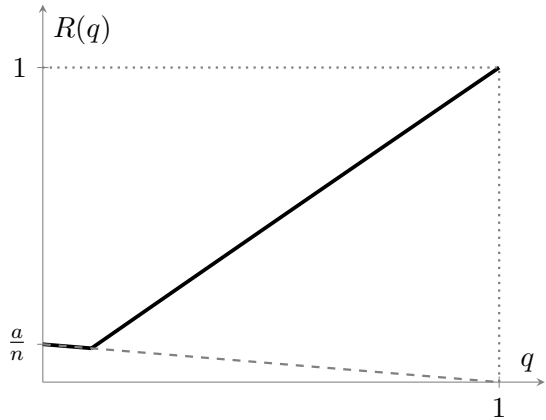

We next present the revenue approximation for {\BayesianOptimalUniformPricing}. It is mainly adopted from \cite{CHMS10,DFK16,har16} which study the same approximation guarantee for general distributions, and the upper bound part utilizes \Cref{exp:BOM_BOUP:iid}.
\begin{theorem}[adopted from \cite{CHMS10,DFK16,har16,JLTX20}]
\label{prop:BOM_BOUP:iid}
    For symmetric quasi-regular buyers, {\BayesianOptimalUniformPricing} achieves a tight $\calC_{\BOM}^{\BOUP} = \frac{1}{2}$, $\calC_{\BOSP}^{\BOUP} = \frac{1}{2}$, $\calC_{\BOUR}^{\BOUP} = \frac{6}{\pi^2}\approx 0.6079$-approximation to {\BayesianOptimalMechanism}, {\BayesianOptimalSequentialPricing}, and {\BayesianOptimalUniformReserve}, respectively.
\end{theorem}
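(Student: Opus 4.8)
The plan is to reuse the known tight bounds for symmetric \emph{general} buyers from \cite{CHMS10,DFK16,har16,JLTX20} for the lower bounds, and to exhibit an explicit family of symmetric \emph{quasi-regular} instances matching each of them for the upper bounds; the only instance we actually need to construct is the one for $\calC_{\BOM}^{\BOUP}$ (given in \Cref{exp:BOM_BOUP:iid}), since $\BOUP\leq\BOSP\leq\BOM$ and $\BOUP\leq\BOUR\leq\BOM$ will let the three upper bounds be deduced from one another together with the general-buyer worst-case instance for $\calC_{\BOUR}^{\BOUP}$. First I would record the easy direction: since $\quasiregulardistspace\subseteq\generaldistspace$, the approximation ratios for quasi-regular buyers are at least as large as those for general buyers, so $\calC_{\BOM}^{\BOUP}\geq\frac12$, $\calC_{\BOSP}^{\BOUP}\geq\frac12$, and $\calC_{\BOUR}^{\BOUP}\geq\frac6{\pi^2}$ follow immediately from \cite{CHMS10,DFK16,har16,JLTX20}. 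It remains to prove matching upper bounds by constructing quasi-regular worst-case instances.

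Next I would analyze \Cref{exp:BOM_BOUP:iid}: take $\priors=\{\prior\}^{\otimes n}$ with $\cdf(\val)=\val/\bigl(1-(1-\val)^n\bigr)$-type revenue curve, i.e.\ the distribution whose revenue curve is $\revcurve(\quant)=\quant/\bigl(1-(1-\quant)^n\bigr)$; equivalently $\val=\cdf^{-1}(1-\quant)$ with $\val(\quant)=\quant/(1-(1-\quant)^n)$. The key computations are (i)~verify quasi-regularity via Condition~\ref{condition:quasi-regular:revenue curve} of \Cref{prop:q-regular equivalent definition}, namely that $\revcurve(\quant\primed)\geq\frac{1-\quant\primed}{1-\quant}\revcurve(\quant)$ for all $\quant\primed\geq\quant$ — for this family the revenue curve is increasing and concave-ish enough that the inscribed-triangle inequality can be checked by elementary algebra (the function $\revcurve(\quant)/(1-\quant)$ is increasing); (ii)~compute $\BOUP=\max_\val \val(1-\cdf(\val))=\max_\quant\quant\cdot\val(\quant)\cdot\frac{?}{}$ — actually $\BOUP(\price)=\price(1-\prod_i\cdf_i(\price))=\price(1-\cdf(\price)^n)$, and for the equal-revenue-type construction this is designed so that $\BOUP=1$ at every price, or at least $\BOUP\to 1$; (iii)~compute $\BOM=\expect{\optrev'(\quant_{(n:n)})}$ via \Cref{prop:revenue_equivalence}, which for this family gives $\BOM\to 2$ as $n\to\infty$ (mirroring the classical $\frac{n}{2n-1}$ bound). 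The ratio $\BOUP/\BOM\to\frac12$ then yields $\calC_{\BOM}^{\BOUP}\leq\frac12$, and since $\BOSP\leq\BOM$ we also get $\calC_{\BOSP}^{\BOUP}\leq\calC_{\BOM}^{\BOUP}\leq\frac12$; combined with the lower bounds these are tight.

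For $\calC_{\BOUR}^{\BOUP}=\frac6{\pi^2}$ I would invoke the observation, already flagged in the introduction, that the worst-case symmetric instance of \cite{JLTX20,JJLZ22} achieving $\frac6{\pi^2}$ — namely $\cdf(\val)=(1-1/\val)^{1/n}$ — \emph{is} quasi-regular (this is essentially \Cref{prop:BOM_BOUP:iid}'s own content, and can be checked directly against Condition~\ref{condition:quasi-regular:revenue curve}, since its revenue curve $\revcurve(\quant)=\quant/(1-(1-\quant)^n)^{?}$... more precisely the revenue curve of $\cdf(\val)=(1-1/\val)^{1/n}$ has the required inscribed-triangle property). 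Hence $\calC_{\BOUR}^{\BOUP}\leq\frac6{\pi^2}$, matching the lower bound. The main obstacle I anticipate is the third computation above — verifying that $\BOM\to 2$ (equivalently the exact finite-$n$ value) for the constructed family requires carefully handling the point mass at the support infimum and the integral $\expect{\optrev'(\quant_{(n:n)})}$ over the $n$-th order statistic of a uniform, including the atom contribution $\expect{\optrev'(0)\cdot\indicator{\quant_{(0:0)}=0}}$ as in the proof of \Cref{prop:BOM_BOUR:iid}; and separately, confirming the quasi-regularity inequality uniformly in $n$ and in both anchor and query quantiles is the kind of elementary-but-fiddly step where sign errors hide. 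Once those two calculations are pinned down, the rest is bookkeeping with the mechanism hierarchy $\BOUP\leq\BOUR,\BOSP\leq\BOM$.
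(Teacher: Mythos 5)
Your overall plan matches the paper's: lower bounds are inherited because $\quasiregulardistspace\subseteq\generaldistspace$, and all three upper bounds come from a single instance, namely \Cref{exp:BOM_BOUP:iid}. In fact the ``separate'' instance $\cdf(\val)=(1-1/\val)^{1/n}$ you invoke for $\calC_{\BOUR}^{\BOUP}$ is the same distribution as in \Cref{exp:BOM_BOUP:iid} (just rewrite $(\tfrac{\val-1}{\val})^{1/n}$), so there is nothing new to construct: the paper reads $\BOUR=\pi^2/6$ and $\BOUP=1$ for this distribution off \cite[Section~4.2]{JLTX20} and divides.

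The one genuine gap is the step ``since $\BOSP\leq\BOM$ we also get $\calC_{\BOSP}^{\BOUP}\leq\calC_{\BOM}^{\BOUP}\leq\frac12$'' --- the inequality is backwards. Because $\BOSP\leq\BOM$ on every instance, we have $\BOUP/\BOSP\geq\BOUP/\BOM$, hence $\calC_{\BOSP}^{\BOUP}\geq\calC_{\BOM}^{\BOUP}$; an upper bound on the ratio against the larger benchmark does \emph{not} transfer to the smaller benchmark automatically. To close the gap you must verify that $\BOSP$ itself reaches $2-1/n$ on the constructed instance, which is what the paper's sentence ``this optimal revenue can be achieved by {\BayesianOptimalSequentialPricing}'' asserts. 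The verification is easy: post a price $H\to\infty$ to each of the first $n-1$ buyers, each yielding expected revenue $H(1-\cdf(H))\to 1/n$ with sale probability $\to 0$, then post price $1$ to the last buyer; the total is $(n-1)\cdot\tfrac1n+1=2-\tfrac1n=\BOM$, so $\BOUP/\BOSP=1/(2-1/n)\to\tfrac12$ directly. Finally, a small terminological inaccuracy in your listed obstacle: this distribution has $\cdf(1)=0$, so there is no point mass at the support infimum; the quantity that needs the ``atom at $q=0$'' treatment you mention is the positive limit $\revcurve(0^+)=\lim_{\val\to\infty}\val(1-\cdf(\val))=1/n$, not a probability mass of $\prior$.
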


\begin{example}[Symmetric Quasi-Regular Instance for $\BOM$, $\BOSP$, $\BOUR$ vs.\ $\BOUP$]
\label{exp:BOM_BOUP:iid}
There are $n$ symmetric quasi-regular buyers with independent and identical valuation distributions
$\priors = \{\prior\}^{\otimes n}$, where distribution $\prior$ has cumulative density function $\cdf(\val) = (\frac{\val-1}{\val})^{\frac{1}{n}}$ and support $\supp(\prior) = [1, \infty)$. In this instance, $\frac{\BOUP}{\BOM} = \frac{\BOUP}{\BOSP} = \frac{1}{2} - o_n(1)$, and $\frac{\BOUP}{\BOUR} = \frac{6}{\pi^2} - o_n(1) \approx 0.6079$.
\end{example}

\begin{proof}[Proof of \Cref{prop:BOM_BOUP:iid}]
    The lower bound part follows from \cite{CHMS10,DFK16,har16}. For the upper bound part, we analyze \Cref{exp:BOM_BOUP:iid}. The quasi-regularity of the constructed distribution can be easily checked by algebra. Also see a graphical illustration in \Cref{fig:BOM_BOUP:iid}. Next we verify the approximation ratio stated in the theorem. Using the same argument as in \Cref{prop:BOM_BOUR:iid}, we know that the expected revenue of {\BayesianOptimalMechanism} is $2 - \frac{1}{n}$. Moreover, this optimal revenue can be achieved by {\BayesianOptimalSequentialPricing}. For {\BayesianOptimalUniformReserve} and {\BayesianOptimalUniformPricing}, \cite[Section 4.2]{JLTX20} shows that the expected revenues are 1 and $\frac{\pi^2}{6}$, respectively. Combining these results completes the upper bound analysis.
\end{proof}

\xhdr{Tightness of \Cref{thm:BOM_BOUR,thm:BOM_BOUP,prop:BOM_BOUR:iid}.} We conclude this subsection by discussing our conjectures about the tight approximation bounds, and leave them as future directions. 

As we observed in \Cref{thm:BOSP_BOUP}, the revenue approximations of {\BayesianOptimalUniformPricing} against {\BayesianOptimalSequentialPricing} are the same for asymmetric quasi-regular buyers and regular buyers. We conjecture that there is also no separation between quasi-regularity and regularity in terms of the revenue approximation of {\BayesianOptimalUniformPricing} against {\BayesianOptimalMechanism} for asymmetric buyers. In particular, the tight analysis in \cite{JLQTX19} might be extended from regular buyers to quasi-regular buyers.
\begin{conjecture}[$\BOM$ vs.\ $\BOUP$]
\label{conj:BOM_BOUP}
For asymmetric quasi-regular buyers, the tight approximation ratio between {\BayesianOptimalUniformPricing} and {\BayesianOptimalMechanism} is the same as the tight approximation ratio ${\calC_{\BOM}^{\BOUP}}\approx 0.3817$ for asymmetric regular buyers \cite[Theorem~1]{JLQTX19}.
\end{conjecture}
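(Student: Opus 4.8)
The statement is a conjecture, so what follows is a proof \emph{strategy} rather than a proof. Note first that only one direction is open: the bound $\calC_{\BOM}^{\BOUP} \leq 0.3817$ for asymmetric quasi-regular buyers already follows from \Cref{thm:BOM_BOUP} (indeed the extremal instances witnessing it are regular, hence quasi-regular). So the plan is to establish the matching lower bound, i.e.\ that every instance $\priors = \{\prior_i\}_{i\in[n]}$ with quasi-regular $\prior_i$ satisfies $\Rev[\priors]{\BOUP} \geq \calC_{\BOM}^{\BOUP}\cdot\Rev[\priors]{\BOM}$ with $\calC_{\BOM}^{\BOUP}$ the transcendental constant of \cite[Theorem~1]{JLQTX19}.

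The natural route is to re-run the tight analysis of \cite{JLQTX19} while carefully separating the two ``revenue-curve faces'' of each buyer. By revenue equivalence (\Cref{prop:revenue_equivalence}), $\Rev[\priors]{\BOM}$ depends on the buyers only through their \emph{ironed} revenue curves $\ironrevcurve_i$, whereas $\Rev[\priors]{\BOUP}$ depends only on the raw CDFs, namely $\Rev[\priors]{\BOUP(\price)} = \price\,(1 - \prod_{i}\cdf_i(\price))$, i.e.\ through the \emph{unironed} revenue curves $\revcurve_i$. For regular buyers $\revcurve_i \equiv \ironrevcurve_i$, which is exactly what \cite{JLQTX19} exploits; the quasi-regular case forces us to track both. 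Two facts keep this manageable: (i) $\ironrevcurve_i$, being concave with $\ironrevcurve_i(1)=0$, automatically satisfies the inscribed-triangle inequality $\ironrevcurve_i(\quant')\geq\frac{1-\quant'}{1-\quant}\ironrevcurve_i(\quant)$ for $\quant'\geq\quant$ (this is just ``concave function above a chord through $(1,0)$''); and (ii) quasi-regularity of $\prior_i$ says, by \Cref{prop:q-regular equivalent definition}, that $\revcurve_i$ satisfies the \emph{same} inequality. Thus $\revcurve_i$ and $\ironrevcurve_i$ live in one constraint family, with the additional coupling that $\ironrevcurve_i$ is the concave envelope of $\revcurve_i$ (so they must agree on a dense enough set of quantiles).

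Concretely, I would set up the factor-revealing program of \cite{JLQTX19} with the pair $(\revcurve_i,\ironrevcurve_i)$ as the per-buyer variables, subject to: $\ironrevcurve_i$ concave with $\ironrevcurve_i(1)=0$; $\ironrevcurve_i$ equal to the concave envelope of $\revcurve_i$; and $\revcurve_i$ satisfying the inscribed-triangle inequality. The objective is $\inf \Rev{\BOUP(\revcurve)}/\Rev{\BOM(\ironrevcurve)}$, and the regular sub-family $\revcurve_i\equiv\ironrevcurve_i$ realizes the value $\calC_{\BOM}^{\BOUP}$ by \cite{JLQTX19}; one must show the extra freedom cannot lower it. For fixed $\ironrevcurve_i$'s (hence fixed $\Rev{\BOM}$) the adversary minimizes $\Rev{\BOUP}$ by pushing each $\revcurve_i$ as far below $\ironrevcurve_i$ as the inscribed-triangle and concave-envelope constraints permit; the key structural observation is that on any ironing interval $[\quant_1,\quant_2]$ of buyer $i$ (where $\ironrevcurve_i$ is linear and $\revcurve_i$ touches it at $\quant_1,\quant_2$), the inscribed-triangle bound already pins $\revcurve_i = \ironrevcurve_i$ on $[\quant_2,1]$ and otherwise confines $\revcurve_i$ to a controlled envelope of $\ironrevcurve_i$. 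I would then argue that each such interval is dominated, for the purpose of the worst-case ratio, by a triangular (or triangular-plus-mass) surrogate that is regular and is already accounted for in the \cite{JLQTX19} program, after which the remaining optimization is (a mild extension of) the one they solve and evaluates to $\calC_{\BOM}^{\BOUP}$.

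The main obstacle is precisely this last domination step. One cannot invoke the triangularization of \Cref{lem:BOSP_BOUP:reduction_to_triangle} as a black box: that reduction holds $\Rev{\SP}$ fixed while only \emph{decreasing} $\Rev{\UP}$, so it controls $\BOSP$ versus $\BOUP$ but says nothing about $\BOM$, which can move in either direction when a buyer is triangularized. Hence the quasi-regular extremum must be analyzed directly, and the crux is to verify that the genuinely non-regular features --- interior ironing intervals and support mass points --- cannot drive the factor-revealing program below $\calC_{\BOM}^{\BOUP}$; this amounts to re-solving, or at least re-bounding, the transcendental optimization of \cite{JLQTX19} under the relaxed constraint ``$\ironrevcurve_i$ concave and $\revcurve_i$ inscribed-triangle'' in place of ``$\revcurve_i$ concave''. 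Partial evidence that this succeeds is \Cref{thm:BOSP_BOUP} (the $\BOSP$-versus-$\BOUP$ ratio is genuinely unchanged under the quasi-regular relaxation) together with the fact that the extremal regular instances of \cite{JLQTX19} already satisfy $\Rev{\BOSP}\approx\Rev{\BOM}$, so the conjectured common value $\approx 0.3817$ is consistent from both directions.
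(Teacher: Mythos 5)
The statement you are addressing is explicitly a \emph{conjecture} (\Cref{conj:BOM_BOUP}); the paper offers no proof of it, only the same one-sentence suggestion you pursue (``the tight analysis in \cite{JLQTX19} might be extended from regular buyers to quasi-regular buyers''). So there is no paper proof to compare against, and you correctly frame your text as a strategy rather than a proof. Your framing of the open direction is right: since $\regulardistspace\subsetneq\quasiregulardistspace$, the worst-case ratio over quasi-regular instances is automatically $\leq 0.3817$, and only the instance-by-instance lower bound $\Rev{\BOUP}\geq 0.3817\cdot\Rev{\BOM}$ remains. Your diagnosis of why \Cref{lem:BOSP_BOUP:reduction_to_triangle} cannot be reused as a black box is also exactly the right obstruction: that reduction fixes the specified sequential-pricing revenue and only decreases the uniform-pricing revenue, so it says nothing about how $\Rev{\BOM}$ (which is governed by the \emph{ironed} revenue curve, by \Cref{prop:revenue_equivalence}) moves under triangularization --- it can go up or down. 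This is precisely why the paper proves \Cref{thm:BOSP_BOUP} but only gets $\calC_{\BOM}^{\BOUP}\geq\calC_{\BOSP}^{\BOUP}\cdot\calC_{\BOM}^{\BOSP}\gtrsim 0.2770$ as a corollary (\Cref{thm:BOM_BOUP}), leaving this conjecture open.

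That said, the genuine gap is the one you name yourself and do not close: the ``domination step'' where, after fixing the ironed revenue curves $\ironrevcurve_i$ (hence $\Rev{\BOM}$), you must argue that the adversary's freedom to sag the raw curves $\revcurve_i$ below $\ironrevcurve_i$ --- subject only to the inscribed-triangle constraint of \Cref{prop:q-regular equivalent definition} and the concave-envelope coupling --- cannot push $\Rev{\BOUP}$ below the value of \cite{JLQTX19}'s factor-revealing program. Two points deserve more care if you pursue this. First, the coupling constraint is delicate: $\ironrevcurve_i$ is not an arbitrary concave function lying above $\revcurve_i$ but its \emph{concave hull}, so the two must touch at every non-ironed quantile; relaxing this (treating $\ironrevcurve_i$ and $\revcurve_i$ as independent, each satisfying its own inequality) would overcount instances and could spuriously lower the infimum. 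Second, your claim that ``the inscribed-triangle bound already pins $\revcurve_i=\ironrevcurve_i$ on $[\quant_2,1]$'' is not justified as stated --- the inscribed-triangle inequality anchors $\revcurve_i(\quant')$ only to the chord through $(\quant,\revcurve_i(\quant))$ and $(1,0)$, not to $\ironrevcurve_i$; equality on a tail interval requires an extra argument about where the envelope is attained. These are the concrete places where the outline stops short of a proof, consistent with the fact that the paper records this only as a conjecture.
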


Regarding {\BayesianOptimalUniformReserve}, we conjecture that there is also no separation between quasi-regularity and regularity in terms of the revenue approximation of {\BayesianOptimalUniformReserve} against {\BayesianOptimalMechanism} for asymmetric buyers. To prove this conjecture, one possible approach is to establish a reduction framework (similar to \Cref{lem:BOSP_BOUP:reduction_to_triangle}) from quasi-regular buyers to regular buyers. However, as we observed in \Cref{exp:BOM_BOUR:iid}, for symmetric quasi-regular buyers, there is a strict but bounded separation between quasi-regularity and regularity, and we conjecture that \Cref{exp:BOM_BOUR:iid} is indeed the worst-case instance.

\begin{conjecture}[$\BOM$ vs.\ $\BOUR$]
\label{conj:BOM_BOUR}
For asymmetric quasi-regular buyers, the tight approximation ratio between {\BayesianOptimalUniformPricing} and {\BayesianOptimalMechanism} is the same as the tight approximation ratio $\calC_{\BOM}^{\BOUR}\in[0.3817,0.4630]$ for asymmetric regular buyers \cite{HR09,JLTX20,JLQTX19survey}. 

For symmetric quasi-regular buyers, the tight approximation ratio between {\BayesianOptimalUniformPricing} and {\BayesianOptimalMechanism} is $\calC_{\BOM}^{\BOUR}=1 + (\LambertFunc(-\frac{1}{e^2}))^{-1}\approx 0.6822$.
\end{conjecture}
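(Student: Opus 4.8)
The statement bundles two independent claims --- the exact value $\calC_{\BOM}^{\BOUR}=1 + (\LambertFunc(-\frac{1}{e^{2}}))^{-1}$ of the tight ratio for i.i.d.\ quasi-regular buyers, and the coincidence of the tight ratios for asymmetric quasi-regular and asymmetric regular buyers --- and I would approach the two by different routes.

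\emph{The symmetric (i.i.d.) claim.} Its upper bound is already supplied by \Cref{exp:BOM_BOUR:iid} inside the proof of \Cref{prop:BOM_BOUR:iid}, so the remaining task is the matching lower bound: $\BOUR/\BOM \ge 1 + (\LambertFunc(-\frac{1}{e^{2}}))^{-1}$ for every $n$ and every i.i.d.\ quasi-regular $\prior$. The plan is to argue at the level of revenue curves. With $\revcurve$ the revenue curve of $\prior$, $\ironrevcurve$ its concave envelope, and $\optquant$ the monopoly quantile, revenue equivalence (\Cref{prop:revenue_equivalence}) writes $\BOM$ as an expectation of the positive part of $\ironrevcurve'$ against the quantile law of the largest of $n$ i.i.d.\ draws, while $\BOUR=\max_{\reserve}\SPA_{\reserve}$ dominates both the monopoly-reserve auction $\SPA_{\reserve=\optreserve}$ and the limiting auction $\SPA_{\reserve=\infty}$. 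The crux is an extremality reduction showing that, for a fixed monopoly revenue, the worst $\BOUR/\BOM$ ratio is attained by the two-piece revenue curve of \Cref{exp:BOM_BOUR:iid}: a decreasing ``equal-revenue-type'' linear segment on $[0,\optquant]$, and an increasing linear segment on $[\optquant,1]$ produced by a point mass at $\optreserve$. Heuristically, the inscribed-triangle characterization (Condition~\ref{condition:quasi-regular:revenue curve} of \Cref{prop:q-regular equivalent definition}) only forces $\revcurve$ to stay \emph{above} the triangle chord on $[\optquant,1]$ and above a linear equal-revenue segment on $[0,\optquant]$; pushing $\revcurve$ down toward these extremes degrades $\BOUR$ --- which past its reserve term feeds only on the heavily discounted second-price contribution --- more than it degrades $\BOM$. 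Restricted to this one-parameter family (parametrized by the point-mass location $a$), the closed-form expressions for $\BOM$ and $\BOUR=\max\{\SPA_{\reserve=a},\SPA_{\reserve=\infty}\}$ already worked out in \Cref{exp:BOM_BOUR:iid}, minimized over $a$ and over $n\to\infty$, return $1 + (\LambertFunc(-\frac{1}{e^{2}}))^{-1}$, the first-order condition being exactly $\frac{x}{1-x} + \ln(1-x) = 1$ (cf.\ \Cref{tab:simple-mechanism:BOUR}). The hard part is making the extremality step rigorous: excluding ``curvier'' revenue curves (several linear pieces, or probability mass genuinely spread between $\optreserve$ and the point-mass location) calls for a piece-by-piece ironing/exchange argument that tracks how $\BOUR/\BOM$ responds to downward deformations of $\revcurve$, together with care about the order-statistic weights and about the fact that \Cref{exp:BOM_BOUR:iid} is tight only asymptotically in $n$.

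\emph{The asymmetric claim.} Since $\regulardistspace\subseteq\quasiregulardistspace$ makes ``$\le$'' automatic, the plan for ``$\ge$'' is an approximation-preserving reduction from quasi-regular buyers to regular buyers for the pair $(\BOM,\BOUR)$, in the spirit of \Cref{lem:BOSP_BOUP:reduction_to_triangle}: given a quasi-regular instance $\priors$, fix $\reserve^{*}=\reserve^{*}(\priors)$ ($\BOUR$'s optimal uniform reserve) and construct a regular (triangular) instance $\auxpriors$ with $\BOUR(\auxpriors)\le\BOUR(\priors)$ and $\BOM(\auxpriors)\ge\BOM(\priors)$, so that the ratio at $\auxpriors$ is weakly smaller; the (still-open) tight bound of \cite{HR09,JLTX20,JLQTX19} for regular buyers then transfers verbatim. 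The obstruction --- and why this remains a conjecture --- is that neither endpoint is a ``per-buyer scalar'' object the way $\BOSP$ is for a fixed sequential mechanism (where it depends on buyer $i$ only through $\prob{\val_{i}\ge\price_{i}}$, which can be preserved while the distribution below $\price_{i}$ is shrunk to deflate the simple mechanism): $\BOUR$ depends on the joint law of the top two order statistics and $\BOM$ on the whole ironed revenue curve $\ironrevcurve_{i}$, and the two targets pull against each other, because preserving $\ironrevcurve_{i}$ forces the regular surrogate to first-order stochastically \emph{dominate} $\prior_{i}$, which can only \emph{raise} $\BOUR$, whereas shrinking $\prior_{i}$ to lower $\BOUR$ can only lower $\BOM$. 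I would try to break the deadlock with a two-scale surrogate: above $\reserve^{*}$, triangularize the conditional distribution $(\prior_{i}\given\val_{i}\ge\reserve^{*})$ so as to preserve $\prob{\val_{i}\ge\reserve^{*}}$ and dominate the competition term (keeping $\SPA_{\reserve^{*}}$, hence $\BOUR$, from rising); below $\reserve^{*}$, replace $\prior_{i}$ by the triangular distribution matching the right-derivative of $\revcurve_{i}$ at $\reserve^{*}$ (so that the part of $\ironrevcurve_{i}$ actually exploited by $\BOM$ is preserved or improved). Proving that this construction meets both inequalities simultaneously is exactly the main obstacle; absent a clean reduction, the fallback is to re-run \cite{JLQTX19}'s revenue-curve-based tight analysis directly for quasi-regular marginals (it should mesh well with the inscribed-triangle inequality) --- but because even the regular tight constant is unknown, this part of the statement asserts that two undetermined quantities coincide, and pinning down that identity, rather than any one computation, is where the difficulty really lies.
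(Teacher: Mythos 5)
This statement is a \emph{conjecture} in the paper, not a proved theorem; the paper explicitly lists it among the open problems in \Cref{sec:conclusion}, and supplies only the partial bounds $\calC_{\BOM}^{\BOUR}\in[0.2770,0.4630]$ for asymmetric quasi-regular buyers (\Cref{thm:BOM_BOUR}) and $\calC_{\BOM}^{\BOUR}\in[\frac{1}{2},0.6822]$ for symmetric quasi-regular buyers (\Cref{prop:BOM_BOUR:iid} via \Cref{exp:BOM_BOUR:iid}). Your proposal is therefore being measured not against a hidden proof but against what is actually known, and on that basis you handle it correctly: you explicitly frame what you wrote as a research plan rather than a proof, you accurately locate the known upper-bound construction in \Cref{exp:BOM_BOUR:iid}, and you cleanly separate the two independent claims in the conjecture.

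Your diagnosis of the obstacles is the most valuable part of the writeup, and it is accurate. For the asymmetric claim, the tension you identify --- that to preserve $\BOM$ you want the surrogate to FOSD-dominate $\prior_i$ (so as not to shrink $\ironrevcurve_i$), but dominance can only raise $\BOUR$, whereas deflating $\prior_i$ to suppress $\BOUR$ also suppresses $\BOM$ --- is exactly why \Cref{lem:BOSP_BOUP:reduction_to_triangle} does not transplant. That reduction works for $\BOSP$ because a fixed sequential posted-price mechanism cares about buyer $i$ only through the tail probability $\prob{\val_i\ge\price_i}$, so the triangularization below $\price_i$ is revenue-neutral for the benchmark; $\BOM$ has no such per-buyer scalar summary. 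Your observation that the conjecture asserts the coincidence of two \emph{undetermined} quantities (the regular tight bound itself is open \cite{HR09,JLTX20,JLQTX19}) is also exactly the right caution. For the symmetric claim, you correctly see that the missing piece is an extremality argument showing \Cref{exp:BOM_BOUR:iid} is worst-case among all i.i.d.\ quasi-regular instances, and that the inscribed-triangle characterization (Condition~\ref{condition:quasi-regular:revenue curve} of \Cref{prop:q-regular equivalent definition}) only pins $\revcurve$ from below, not from above, so the exchange/ironing step is genuinely nontrivial and not automatic. In short: no gap in the sense of an erroneous step, but also (as you acknowledge) no proof --- which matches the status of the statement in the paper.
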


\begin{comment}

\subsection{The Matroid Settings}
\label{sec:simple-mechanism:matroid}

In this subsection, we consider the matroid settings, which generalizes the single-item setting studied in \Cref{sec:simple-mechanism:single-item}.
The following \Cref{thm:BOM_BMR:quasi-regular} gives a (nontrivial) {\em approximation-preserving} generalization of \cite[Theorem~3.7]{HR09} from regular buyers to quasi-regular buyers.

\begin{theorem}[{\BOM} vs.\ $\BMR$]
% [{\BayesianOptimalMechanism} vs.\ {\BayesianMonopolyReserves}]
\label{thm:BOM_BMR:quasi-regular}
\begin{flushleft}
% In the matroid setting with independent (but possibly asymmetric) quasi-regular buyers,
For asymmetric quasi-regular buyers in a matroid setting, {\BayesianMonopolyReserves} achieves a tight $\calC_{\BOM}^{\BMR} = \frac{1}{2}$-approximation to {\BayesianOptimalMechanism}.\textsuperscript{\textnormal{\ref{footnote:upper_bound:regular}}}
\end{flushleft}
\end{theorem}

\begin{proof}
\green{proved}

This finishes the proof of \Cref{thm:BOM_BMR:quasi-regular}.
\end{proof}

\begin{remark}[Eager vs.\ Lazy Monopoly Reserves]
\label{rem:BOM_BMR:quasi-regular}
The exact simple mechanism studied in \Cref{thm:BOM_BMR:quasi-regular} is {\BayesianEagerMonopolyReserves}, while there is another ``mirror'' mechanism, {\BayesianLazyMonopolyReserves} (\yj{???}).

\cite[Theorem~3.17]{DRY15}

\yj{{\BayesianLazyMonopolyReserves} also achieves a tight $\calC_{\BOM}^{\BMR} = \frac{1}{2}$-approximation to {\BayesianOptimalMechanism}. We omit the detailed arguments for ease of presentation.}
\end{remark}

\end{comment}

\subsection{The Downward-Closed Settings}
\label{sec:simple-mechanism:downward-closed}

% Downward-Closed QMHR VCG with Monopoly Reserves

In this subsection, we consider the downward-closed settings, which generalize the single-item settings studied in \Cref{sec:simple-mechanism:single-item}.
The following \Cref{thm:BOM_BMR:quasi-mhr} gives a (nontrivial) generalization of \cite[Theorem~3.2]{HR09} and \cite[Theorem~3.11]{DRY15} from MHR buyers to quasi-MHR buyers.

\ignore{The generalization of \cite{HR09} is nontrivial. For example, their Lemma 3.1 does not hold for QMHR distributions.}

\begin{theorem}[{\BOM} vs.\ $\BMR$]
\label{thm:BOM_BMR:quasi-mhr}
\begin{flushleft}
For asymmetric quasi-MHR buyers in a downward-closed setting, {\BayesianMonopolyReserves}  achieves a tight $\calC_{\BOM}^{\BMR} = \frac{1}{e+1}$-approximation to {\BayesianOptimalMechanism}.
\end{flushleft}
\end{theorem}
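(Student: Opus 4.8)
\textbf{Proof plan for \Cref{thm:BOM_BMR:quasi-mhr}.}

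The plan is to follow the structure of the MHR proofs in \cite{HR09, DRY15} but replace each distributional fact about MHR distributions with its quasi-MHR counterpart from \Cref{lem:quasi-mhr:structural results}; the factor $\frac{1}{e}$ in the MHR argument gets replaced by $\frac{1}{e+1}$. I would handle the two versions of {\BayesianMonopolyReserves} separately and then observe they yield the same bound. For the \textbf{lazy} version ($\BLMR$): the key reduction (\`a la \cite{HR09}, \cite[Theorem~3.11]{DRY15}) is that running {\VCGAuction} after removing all buyers $i$ with $\val_i < \optreserve_i$ and then charging each winner $\max\{\optreserve_i, \text{VCG payment}\}$ earns revenue at least $\sum_{i \in W} \optreserve_i (1 - \cdf_i(\optreserve_i))$ worth of a ``reserve contribution'' plus the VCG-surplus contribution; more precisely, the lazy-reserve revenue is at least $\frac{1}{e+1}$ times the VCG welfare. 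The standard argument fixes the realized VCG-winning set $W$ of the no-reserve {\VCGAuction} and, buyer by buyer, compares the contribution $\val_i \cdot \indicator{i \in W}$ of buyer $i$ to the welfare against the expected payment $\optreserve_i \cdot \prob{\val_i \ge \optreserve_i \wedge i \in W}$-type term; here I would invoke Property~\ref{property:quasi-mhr:revenue approximate welfare} of \Cref{lem:quasi-mhr:structural results} (the $\frac{1}{e+1}$-bound, valid because $\threshold = \optreserve_i \ge \optreserve_i$) to say that posting $\optreserve_i$ recovers a $\frac{1}{e+1}$-fraction of the conditional welfare $\expect{\val_i \cdot \indicator{\val_i \ge \optreserve_i}}$. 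Summing over $i$ and taking expectations over the downward-closed feasibility (using that {\VCGAuction} maximizes welfare and the set system is downward-closed, so the ``reserve-restricted'' run is still feasible) gives $\BLMR \ge \frac{1}{e+1}\cdot\VCG \ge \frac{1}{e+1}\cdot\BOM$.

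For the \textbf{eager} version ($\BEMR$): here I would mimic \cite[Theorem~3.2]{HR09}. The revenue-equivalence expression (\Cref{prop:revenue_equivalence}) writes $\BOM(\priors) = \sum_i \expect{\ironvirtualval_i(\val_i)\cdot\alloc_i^*(\val_i)}$ where $\alloc^*$ is Myerson's allocation. The key lemma bounds, for each buyer $i$ and each quantile, the ironed virtual value against the monopoly revenue $\optrev_i$ and the value $\val_i$; then one shows that $\BEMR$ collects, on the event that buyer $i$ is the VCG-winner among the reserve-surviving buyers, at least $\optreserve_i$, and uses Property~\ref{property:quasi-mhr:monopoly quantile lower bound} ($\optquant_i \ge \frac{1}{e}$) together with Property~\ref{property:quasi-mhr:revenue approximate welfare} to convert this into a constant fraction of $\BOM$. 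Concretely, splitting $\BOM$'s virtual surplus into a ``high value'' part (values above the monopoly reserve, handled by the eager VCG-with-reserves run which charges at least the reserve and whose welfare dominates) and a ``low value'' part (values below, where ironed virtual values are bounded by monopoly revenues), each part is $\frac{1}{e+1}$-approximated; the worst case balances at $\frac{1}{e+1}$. I expect the arithmetic to mirror \cite{HR09} with $\frac{1}{e}$ everywhere replaced by $\frac{e}{e+1}$ in the welfare-to-revenue step, yielding $\frac{1}{e+1}$ overall.

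For \textbf{tightness}, I would exhibit a downward-closed (in fact single-item suffices, or a simple partition matroid) instance of asymmetric quasi-MHR buyers achieving $\calC_{\BOM}^{\BEMR} = \calC_{\VCG}^{\BLMR} = \frac{1}{e+1}$: one ``large'' deterministic-ish buyer together with a family of buyers drawn from the truncated exponential distribution $\cdf(\val) = 1 - e^{-\val/e}$ on $\supp = [1,\infty)$ (the tight instance already identified in the proof of Property~\ref{property:quasi-mhr:revenue approximate welfare}.ii of \Cref{lem:quasi-mhr:structural results}), arranged so that the monopoly reserves force the eager/lazy mechanism to leave exactly an $e$-to-$1$ welfare-revenue gap while {\BOM} extracts the full conditional virtual surplus. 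I would let the number of such buyers grow (as in the $\Omega(\log n)$-type constructions of \cite{HR09}) so that {\BOM}'s advantage concentrates, and verify by the direct computation already done in \Cref{lem:quasi-mhr:structural results} that the ratio tends to $\frac{1}{e+1}$.

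\textbf{Main obstacle.} The delicate point is the eager version: unlike in the MHR case, quasi-MHR distributions can have probability masses, so Myerson's allocation is randomized/ironed and the per-buyer charging argument of \cite{HR09} (which implicitly uses concavity of the revenue curve to say posting the reserve is optimal among $\price \ge \optreserve_i$) no longer collapses the two inequalities in Property~\ref{property:quasi-mhr:revenue approximate welfare}. I would need to carefully track the gap between ``posting exactly $\optreserve_i$'' and ``posting the best price $\ge \optreserve_i$'' and confirm that $\BEMR$ — which charges the \emph{reserve} $\optreserve_i$, not the conditionally-optimal price — still attains $\frac{1}{e+1}$ and not something weaker; this is exactly why the bound degrades from $\frac{1}{2}$ (MHR eager) to $\frac{1}{e+1}$, and getting the constant to come out precisely right, matching the lazy bound, is where the real work lies. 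I would also need \Cref{thm:order:quasi-mhr} (closure of quasi-MHR under order statistics) to be sure that intermediate ``residual'' distributions arising in the VCG analysis remain quasi-MHR.
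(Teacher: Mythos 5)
Your lazy-reserve argument is essentially the paper's: condition on $\vals_{-i}$ (not on the realized VCG-winning set $W$, which itself depends on $\val_i$), note that buyer $i$ wins and pays $\max\{\threshold_i,\optreserve_i\}$ where $\threshold_i$ is the VCG threshold induced by $\vals_{-i}$, and invoke Property~\ref{property:quasi-mhr:revenue approximate welfare}.ii of \Cref{lem:quasi-mhr:structural results} at the threshold $\max\{\threshold_i,\optreserve_i\}\ge\optreserve_i$ (not just at $\optreserve_i$). That gives $\BLMR \ge \tfrac{1}{e+1}\cdot\VCG$; the paper does the same.

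Where you diverge is the eager case, and the paper's route is both simpler and immune to the obstacle you correctly flag. You propose to mimic the virtual-surplus decomposition of \cite[Theorem~3.2]{HR09}, which does lean on the revenue curve being concave (so that posting the threshold is the optimal price above it) and on a pointwise virtual-value bound of the form $\virtualval(v)\ge v-\optreserve$ for $v\ge\optreserve$; both fail for generic quasi-MHR distributions. The paper avoids this entirely by never touching $\BOM$'s virtual surplus for the eager case. It runs the \emph{same} per-buyer argument as in the lazy case but against $\BEMR$'s \emph{own} welfare: conditioning on $\vals_{-i}$, buyer $i$'s critical value in $\BEMR$ is again some $\max\{\optreserve_i,\threshold_i''\}\ge\optreserve_i$, and Property~\ref{property:quasi-mhr:revenue approximate welfare}.ii gives $\BEMR \ge \tfrac{1}{e+1}\cdot\mathrm{Welfare}(\BEMR)$. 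Then, because $\BOM$ also allocates only to buyers with $\ironvirtualval_i(\val_i)\ge0$, hence with $\val_i\ge\optreserve_i$, and $\BEMR$'s VCG step maximizes welfare over exactly this class of reserve-restricted feasible allocations, $\mathrm{Welfare}(\BEMR)\ge\mathrm{Welfare}(\BOM)\ge\BOM$. No concavity, no $\virtualval$-vs-$\val$ comparison. (Your intuition that "posting $\optreserve_i$ vs.\ posting the best price $\ge\optreserve_i$" is the source of the degradation from the MHR constants is borne out: the paper's \Cref{thm:BOM_BMRPlus:quasi-mhr} shows that an "adaptive lazy" variant which posts the optimal price above the threshold recovers a $\tfrac{1}{3}$-approximation, matching Property~\ref{property:quasi-mhr:revenue approximate welfare}.i.)

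Two smaller points. Your tightness instance is in the right family — one big deterministic buyer plus many i.i.d.\ truncated-exponential buyers $\cdf(\val)=1-e^{-\val/e}$ on $[1,\infty)$ with a partition-matroid-style feasibility, tuned so the small-buyer side concentrates — and the paper's \Cref{exp:BOM_BMR:quasi-mhr} is exactly this, with the concentration handled via \Cref{fact:exp order statistic reformulation} and Bernstein. Finally, you do not need \Cref{thm:order:quasi-mhr}: the per-buyer argument fixes $\vals_{-i}$ and only uses that $\prior_i$ itself is quasi-MHR; no order-statistic or "residual" distribution ever needs to be shown quasi-MHR in this proof.
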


\Cref{thm:BOM_BMR:quasi-mhr} holds for both eager monopoly reserves and lazy monopoly reserves. Moreover, the same approximation guarantee holds even if we compare the expected revenue of {\BayesianMonopolyReserves} (using lazy monopoly reserves) with the optimal expected social surplus, i.e., the first best benchmark. 

The lower bound part of \Cref{thm:BOM_BMR:quasi-mhr} mainly utilizes our structural lemma \Cref{lem:quasi-mhr:structural results} established in \Cref{sec:structural}. The upper bound part utilizes \Cref{exp:BOM_BMR:quasi-mhr}.

\begin{comment}

\yj{The proof of the upper-bound part of \Cref{thm:BOM_BMR:quasi-mhr}.

\url{https://www.desmos.com/calculator/0gz0orkkzf}}

\begin{example}[Asymmetric Quasi-MHR Instances for $\BOM$ vs.\ $\BMR$]
    \label{exp:BOM_BMR:quasi-mhr}
    Fix any integers $k, m\geq 1$ and real number $\varepsilon > 0$. There are $n = \sum_{\ell \in[0:k]} m^\ell$ buyers. All buyers form a tree $T$ with $k + 1$ layers indexed by $[0:k]$ (i.e., tree $T$ has height $k + 1$). See Figure XYZ for a graphical illustration. Each non-leaf buyer (i.e., buyers on the first $k$ layers) has exactly $m$ children. The downward-closed feasibility is constructed such that each buyer is allocated only if none of her ancestors is allocated. Let $a^{(\ell)} \triangleq ((1 + e^{\frac{e-1}{e}})m)^{-\ell} + \varepsilon m^{-\ell}$ for every $\ell\in[0:k]$. The root buyer on layer 0 has deterministic value equal to $a^{(0)}$.
    For each layer $\ell\in[k]$, all buyers on layer $\ell$ have the same distribution~$\prior^{(\ell)}$ with cumulative density function $\cdf^{(\ell)}(\val) = 1 - e^{-\frac{\val}{e\cdot a^{(\ell)}}}$ and support $\supp(\prior^{(\ell)}) = [a^{(\ell)}, \infty)$.\footnote{Namely, distribution $\prior^{(\ell)}$ has probability mass $1-e^{-\frac{1}{e}}$ at value $a^{(\ell)}$.} 
    % where parameter $a^{(\ell)} = ((1 + e^{\frac{e-1}{e}})n)^{1-\ell}$. 
    In this instance, $\frac{\BMR}{\BOM} = \frac{1}{e + 1} - o_{m,k,\varepsilon}(1)$.
\end{example}
\end{comment}

\begin{example}[Asymmetric Quasi-MHR Instances for $\BOM$ vs.\ $\BMR$]
    \label{exp:BOM_BMR:quasi-mhr}
    Fix sufficiently large $m\in\naturals$ and let $\eps = \sqrt{\log(m)/m}$. There are $n = m + 1$ buyers. The first $m$ buyers have identical quasi-MHR distribution $\prior_i$ with support $\supp(\prior_i) = [1, \infty)$ and cumulative density function $\cdf_i(\val) = 1 - e^{-\frac{\val}{e}}$ for $\val\in[1,\infty)$. The last buyer $m + 1$ has a deterministic value equal to $((e+1)e^{-1/e} - \eps)m$. We refer to the first $m$ buyers as the ``small'' buyers and the last buyer $m + 1$ as the ``big'' buyer.
    We say a buyer subset is feasible if it (i) only contains the big buyer; or (ii) contains at most $\lceil(e^{-1/e}+\eps)m\rceil$ small buyers and does not contain the big buyer.
\end{example}

\begin{proof}[Proof of \Cref{thm:BOM_BMR:quasi-mhr}] 
    We first show the lower bound part utilizing \Cref{lem:quasi-mhr:structural results}, and then show the upper bound part by analyzing \Cref{exp:BOM_BMR:quasi-mhr}.
    
    \xhdr{Lower bounding $\calC_{\BOM}^{\BMR}$.} Here we present two arguments for lazy monopoly reserves and eager monopoly reserves separately. 
    
    For {\BayesianLazyMonopolyReserves}, we prove a stronger guarantee that its expected revenue is at least a $\frac{1}{e + 1}$-approximation to the optimal expected social welfare (i.e., the first best benchmark). Since we are in the downward-closed setting, the optimal expected social welfare is achieved by {\sf Vickrey-Clarke-Groves (VCG) Auction} with no reserves. Fix an arbitrary buyer $i$ and all other buyers' valuation profile $\vals_{-i}$. There exists a threshold $\threshold_i$ depending on $\vals_{-i}$ such that buyer $i$ becomes a winner (i.e., receives an item) in {\sf VCG Auction} if and only if her value $\vali$ weakly exceeds $\threshold_i$. Meanwhile, buyer $i$ becomes a winner and pays $\max\{\threshold_i,\optreserve_i\}$ if and only if her value $\vali$ weakly exceeds $\max\{\threshold_i,\optreserve_i\}$ in {\BayesianLazyMonopolyReserves}. Invoking Property~\ref{property:quasi-mhr:revenue approximate welfare} in \Cref{lem:quasi-mhr:structural results}, we know that conditioning on other buyers' valuation profile $\vals_{-i}$, the expected revenue contributed from buyer $i$ in {\BayesianLazyMonopolyReserves} is a $\frac{1}{e + 1}$-approximation to the expected social welfare contributed from buyer $i$ in {\sf VCG Auction}. Thus, taking the expectation over $\vals_{-i}$ and summing over all buyers finish the argument.

    For {\BayesianEagerMonopolyReserves}, we use a similar but slightly different argument. Similar to the above argument for lazy monopoly reserves, invoking Property~\ref{property:quasi-mhr:revenue approximate welfare} in \Cref{lem:quasi-mhr:structural results}, we know that the expected revenue in {\BayesianEagerMonopolyReserves} is a $\frac{1}{e + 1}$-approximation to its own expected social welfare. Thus, it suffices to prove that the expected social welfare in {\BayesianEagerMonopolyReserves} is at least the expected social welfare in {\BayesianOptimalMechanism}. To see this, note that {\BayesianEagerMonopolyReserves} maximizes the (ex post) social welfare among all mechanisms (including {\BayesianOptimalMechanism}) that allocate items to an buyer $i$ only if her value are at least her monopoly reserve, i.e., $\vali \geq \optreserve_i$.

    \xhdr{Upper bounding $\calC_{\BOM}^{\BMR}$.}
    We now analyze \Cref{exp:BOM_BMR:quasi-mhr} which shows the upper bound part of the approximation ratio, i.e., $\calC_{\BOM}^{\BMR} \leq \frac{1}{e+1}$. The quasi-MHR condition of the constructed distribution can be easily checked by algebra.

    We first lower bound the expected revenue of {\BayesianOptimalMechanism}. Since the big buyer $m + 1$ has deterministic value $((e+1)e^{-1/e} - \eps)m$, the optimal revenue is at least $((e+1)e^{-1/e} - \eps)m$, which can be achieved by posting a take-it-or-leave-it price to the big buyer $m + 1$, 
    \begin{align*}
        \BOM \geq ((e+1)e^{-1/e} - \eps)m
    \end{align*} 
    We next analyze the expected revenue of {\BayesianMonopolyReserves}. Note that the construction of valuation distributions ensures that the monopoly reserve of each buyer $i\in[n]$ is equal to the smallest possible value from distribution $\prior_i$. Thus, there is no difference between eager reserve and lazy reserve in this instance. At a high level, we argue that with high probability, $(e^{-1/e}+\eps)m$ small buyers receive items, and each of them pays $1$. Hence, we can upper bound the expected revenue of  {\BayesianMonopolyReserves} by 
    \begin{align}
    \label{eq:BMR UB example analysis}
        \BMR \leq \lceil(e^{-1/e}+\eps)m\rceil + O_m(1)
    \end{align}
    Putting the two pieces together, the upper bound part of $\calC_{\BOM}^{\BMR} = \frac{1}{e + 1}$ is shown as $m$ approaches infinity. In the remainder of the proof, we formally show inequality~\eqref{eq:BMR UB example analysis}.

    Let $X_{(i)}$ be the $i$-th highest value among $m$ small buyers for every $i \in[m]$ and let $k \triangleq \lceil(e^{-1/e}+\eps)m\rceil$. Consider the following two events:
    \begin{align*}
        \event_a \triangleq \left(\sum_{i\in[k]}X_{(i)} > ((e+1)e^{-1/e} - \eps)m\right)
        \;\;
        \mbox{and}
        \;\;
        \event_b \triangleq \left(X_{(k + 1)} = 1\right)
    \end{align*}
    Namely, event $\event_a$ ensures that the items are allocated to small buyers; while event $\event_b$ ensures that the VCG payment of each small buyer is at most $1$. Therefore, the expected revenue of {\BayesianMonopolyReserves} can be decomposed and upper bounded as 
    \begin{align*}
        \BMR &{}= \expect{\BMR\cdot \indicator{\event_a \cap \event_b}}
        +
        \expect{\BMR\cdot \indicator{\neg(\event_a \cap \event_b)}}
        \\
        &{}\leq 
        \expect{\BMR\cdot \indicator{\event_a \cap \event_b}}
        +
        \expect{\BMR\cdot \indicator{\neg\event_a}}
        +
        \expect{\BMR\cdot \indicator{\neg\event_b}}
    \end{align*}
    Below we bound the three terms separately.

    First, note that when both events $\event_a$ and $\event_b$ occur, the items are allocated to the $k$ highest small buyers, and each of them pays the VCG payment equal to $X_{(k + 1)} = 1$. Thus,
    \begin{align*}
        \expect{\BMR\cdot \indicator{\event_a \cap \event_b}}
        \leq 
        \expect{\BMR \given {\event_a , \event_b}}
        =
        k
    \end{align*}

    To upper bound the remaining two terms $\expect{\BMR\cdot \indicator{\neg\event_a}}$ and $\expect{\BMR\cdot \indicator{\neg\event_b}}$, we need to introduce the following auxiliary distribution and random variables $(\auxprior, \{Y_{(i)}\}_{i\in[m]}, \{Z_i\}_{i\in[m]})$: 
    \begin{itemize}
        \item Consider (auxiliary) exponential distribution $\auxprior$ with support $\supp(\auxprior) = [0, \infty)$ and cumulative density function $\auxprior(\val) = 1 - e^{-\val/e}$. By construction, auxiliary distribution $\auxprior$ is first-order stochastically dominated by the original distribution $\prior$ (truncated from below at value 1).
        \item Let random variables $\{Y_{(i)}\}_{i\in[m]}$ be the order statistics over $m$ i.i.d.\ samples from exponential distribution $\auxprior$. By the construction and \Cref{fact:order statistic dominance}, every auxiliary order statistic $Y_{(i)}$ is first-order stochastically dominated by original order statistic $X_{(i)}$. Moreover, for every index $i\in[m]$, $\prob{Y_{(i)} > \threshold} = \prob{X_{(i)} > \threshold}$ for all thresholds $\threshold \geq 1$.
        \item Let random variables $\{Z_{i}\}_{i\in[m]}$ be $m$ i.i.d.\ samples from exponential distribution $\auxprior$. Invoking \Cref{fact:exp order statistic reformulation} below, we can couple $\{Y_{(i)}\}_{i\in[m]}$ and $\{Z_{i}\}_{i\in[m]}$ such that random variables $Y_{(i)} = \sum_{j\in[i:m]}{Z_j}/{j}$ for every $i\in[m]$.
    \end{itemize}
    
    \begin{fact}[Section~2.5 in \cite{DN04}]
    \label{fact:exp order statistic reformulation}
        Given $m \geq 1$ many i.i.d.\ values from exponential distribution~$\auxprior$, every order statistic $Y_{(i:m)}$ can be expressed as $Y_{(i:m)} = \sum_{j\in[i:m]}{Z_j}/{j}$ for $i\in[m]$, where random variables $\{Z_i\}_{i\in[m]}$ are $m$ i.i.d.\ samples from the same exponential distribution $\auxprior$.
    \end{fact}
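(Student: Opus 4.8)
The plan is to establish the classical Rényi spacing representation of exponential order statistics. Fix the rate $\lambda = 1/e$ of $\auxprior$ (its exact value is immaterial), and keep the paper's convention $Y_{(1:m)} \geq Y_{(2:m)} \geq \cdots \geq Y_{(m:m)}$, so that $Y_{(m:m)}$ is the smallest of the $m$ i.i.d.\ draws. The first step is to pass to \emph{spacings}: put $D_m \triangleq Y_{(m:m)}$ and $D_i \triangleq Y_{(i:m)} - Y_{(i+1:m)}$ for $i \in [m-1]$, so that $Y_{(i:m)} = \sum_{j \in [i:m]} D_j$ for every $i \in [m]$. It therefore suffices to show that $D_1, \dots, D_m$ are mutually independent with $D_i \sim \mathrm{Exp}(i/e)$; setting $Z_i \triangleq i \cdot D_i$ then produces i.i.d.\ samples from $\auxprior = \mathrm{Exp}(1/e)$ with $Y_{(i:m)} = \sum_{j \in [i:m]} Z_j / j$, which is exactly the claim.

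The second step proves this distributional statement by induction on $m$, peeling off the minimum. The base case $m = 1$ is immediate. For the inductive step, $Y_{(m:m)}$ is the minimum of $m$ i.i.d.\ $\mathrm{Exp}(1/e)$ variables, hence $D_m = Y_{(m:m)} \sim \mathrm{Exp}(m/e)$; and by the memoryless property applied coordinatewise, conditioned on $Y_{(m:m)} = y$ and on which coordinate attains the minimum, the remaining $m-1$ values shifted down by $y$ are i.i.d.\ $\mathrm{Exp}(1/e)$ and independent of $y$. Those shifted values are precisely $\{Y_{(i:m)} - Y_{(m:m)}\}_{i \in [m-1]}$, whose order statistics coincide with the order statistics of $m-1$ i.i.d.\ $\mathrm{Exp}(1/e)$ variables; applying the induction hypothesis to them gives that $D_1, \dots, D_{m-1}$ are independent, $D_i \sim \mathrm{Exp}(i/e)$, and jointly independent of $D_m$. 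This closes the induction. (Alternatively one can change variables from $(Y_{(i:m)})_i$ to $(D_i)_i$ and compute the Jacobian directly from the joint density of the order statistics, but the memorylessness argument is cleaner.)

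There is no genuine obstacle here — the result is standard (cf.\ \cite{DN04}); the only point requiring care is the bookkeeping of the ordering convention and of the rates. Concretely, once the $m-i$ smallest values have been revealed there are $i$ values left, so by memorylessness the next gap $D_i$ is the minimum of $i$ i.i.d.\ $\mathrm{Exp}(1/e)$'s, i.e.\ $\mathrm{Exp}(i/e)$, matching $Z_i/i$ with $Z_i \sim \mathrm{Exp}(1/e)$; one should double-check this does not instead come out as $\mathrm{Exp}((m-i+1)/e)$. Finally it is worth recording that this is exactly the coupling used downstream in the proof of \Cref{thm:BOM_BMR:quasi-mhr}: writing $Y_{(k+1:m)} = \sum_{j \in [k+1:m]} Z_j/j$ as a sum of $m-k$ independent terms with summable variances $\sum_{j} (e/j)^2 = O(1)$ renders it amenable to a concentration inequality, which is precisely what is needed to bound $\expect{\BMR \cdot \indicator{\neg \event_a}}$ and $\expect{\BMR \cdot \indicator{\neg \event_b}}$ by $O_m(1)$ and thereby finish inequality~\eqref{eq:BMR UB example analysis}.
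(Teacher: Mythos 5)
Your proof is correct. The paper itself does not prove \Cref{fact:exp order statistic reformulation} — it is imported verbatim as a textbook fact from \cite{DN04} — so there is no paper argument to compare against; you have simply supplied the standard Rényi spacing derivation, and your bookkeeping of the rates is right. In particular, your sanity check resolves correctly: with the paper's "largest-first" indexing, when the gap $D_i = Y_{(i:m)} - Y_{(i+1:m)}$ opens up there are exactly $i$ residual exponentials still in play, so $D_i \sim \mathrm{Exp}(i\lambda)$ (not $\mathrm{Exp}((m-i+1)\lambda)$, which is what one would get under the opposite smallest-first convention); hence $Z_i \triangleq i D_i \sim \mathrm{Exp}(\lambda) = \auxprior$ as needed, and telescoping the spacings gives $Y_{(i:m)} = \sum_{j\in[i:m]} Z_j/j$. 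The memorylessness-plus-induction argument and the Jacobian alternative you mention are both standard and complete; the only informality is in the conditioning on the identity of the minimizing coordinate, but that is handled by exchangeability in the usual way and is not a gap. Your closing observation about how the coupling is used — that $Y_{(k+1:m)}$ becomes a sum of independent terms with variances $\sum_j e^2/j^2 = O(1)$, enabling Bernstein in the proof of \Cref{thm:BOM_BMR:quasi-mhr} — accurately describes the fact's role downstream.
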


    When event $\event_a$ fails, the item is allocated to the big buyer with payment equal to his deterministic value (as his monopoly reserve). Thus,
    \begin{align*}
        \expect{\BMR\cdot \indicator{\neg\event_a}} &{} =
        ((e+1)e^{-1/e} - \eps)m \cdot \prob{\neg\event_a} 
    \end{align*}
    where probability $\prob{\neg\event_a}$ can be further upper bounded as 
    \begin{align*}
        \prob{\neg\event_a} & {} = 
        \prob{\sum_{i\in[k]}X_{(i)} < ((e+1)e^{-1/e} - \eps)m} 
        \\
        &{} \overset{(a)}{\leq} 
        \prob{\sum_{i\in[k]}Y_{(i)} < ((e+1)e^{-1/e} - \eps)m} 
        \\
        &{} \overset{(b)}{=} 
        \prob{\sum_{i\in[k]}\sum_{j\in[i:m]}\frac{Z_j}{j}
        < ((e+1)e^{-1/e} - \eps)m
        }
        \\
        &{} = 
        \prob{\sum_{j\in[k]}Z_j + \sum_{j\in[k+1:m]}\frac{k}{j}\cdot Z_j
        < ((e+1)e^{-1/e} - \eps)m
        }
        \\
        &{} \overset{(c)}{\leq}
        O_m\left(\frac{1}{m}\right)
    \end{align*}
    where inequality~(a) and equality~(b) holds due to the construction of random variables $\{Y_{(i)}\}_{i\in[m]}$ and $\{Z_i\}_{i\in[m]}$; and inequality~(c) holds due to the Bernstein inequality. Therefore, we obtain 
    \begin{align*}
        \expect{\BMR\cdot \indicator{\neg\event_a}} = O_{m}(1)
    \end{align*}

    When event $\event_b$ fails, we upper bound the expected revenue by the social welfare. In particular,
    \begin{align*}
        \expect{\BMR\cdot \indicator{\neg\event_b}}
        &\leq 
        ((e+1)e^{-1/e} - \eps)m \cdot \prob{\neg\event_b}
        +
        \expect{\sum_{i\in[k]}X_{(i)}\cdot \indicator{\neg\event_b}}
        \\
        &=
        ((e+1)e^{-1/e} - \eps)m \cdot \prob{X_{(k + 1)} > 1}
        +
        \expect{\sum_{i\in[k]}X_{(i)}\cdot \indicator{X_{(k + 1)} > 1}}
    \end{align*}
    where the two terms on the right-hand side correspond to the maximum social welfare that can be achieved from the big buyer and small buyers, respectively. Note that probability $\prob{X_{(k + 1)} > 1}$ can be upper bounded (similar to $\prob{\neg\event_a}$) as follows:
    \begin{align*} 
        \prob{X_{(k+1)} > 1} 
        % \\
        % &{} 
        \overset{(a)}{=} 
        \prob{Y_{(k+1)} > 1} 
        % \\
        % &{} 
        \overset{(b)}{=} 
        \prob{\sum_{j\in[k+1:m]}\frac{Z_j}{j}
        > 1
        }
        % &{} 
        \overset{(c)}{\leq}
        O_m\left(\frac{1}{m}\right)
    \end{align*}
    where inequality~(a) and equality~(b) holds due to the construction of random variables $\{Y_{(i)}\}_{i\in[m]}$ and $\{Z_i\}_{i\in[m]}$; and inequality~(c) holds due to the Bernstein inequality. On the other side, note that
    \begin{align*}
        & \expect{\sum_{i\in[k]}X_{(i)}\cdot \indicator{X_{(k + 1)} > 1}}
        \\
        {}\overset{(a)}{=}{} &
        \expect{\sum_{i\in[k]}Y_{(i)}\cdot \indicator{Y_{(k + 1)} > 1}}
        \\
        {}\overset{(b)}{=}{} &
        \expect{\sum_{i\in[k]}\sum_{j\in[i:m]}\frac{Z_j}{j}\cdot \indicator{\sum_{j\in[k+1:m]}\frac{Z_j}{j}
        > 1}}
        \\
         {} ={} &
        \expect{\sum_{j\in[k]}{Z_j}\cdot \indicator{\sum_{j\in[k+1:m]}\frac{Z_j}{j}
        > 1
        }}
        +
        k\cdot \expect{\sum_{j\in[k+1:m]}\frac{Z_j}{j}\cdot \indicator{\sum_{j\in[k+1:m]}\frac{Z_j}{j}
        > 1}}
        \\
         \overset{(c)}{=}{} &
        \expect{\sum_{j\in[k]}{Z_j}}\cdot \prob{\sum_{j\in[k+1:m]}\frac{k\cdot Z_j}{j}
        > 1
        }
        +
        k\cdot \expect{\sum_{j\in[k+1:m]}\frac{Z_j}{j}\cdot \indicator{\sum_{j\in[k+1:m]}\frac{Z_j}{j}
        > 1}}
        \\
        \overset{(d)}{\leq}{} &
        k\cdot e \cdot O_m\left(\frac{1}{m}\right)
        +
        O_m(1)
        =
        O_m(1)
    \end{align*}
    where inequality~(a) and equality~(b) holds due to the construction of random variables $\{Y_{(i)}\}_{i\in[m]}$ and $\{Z_i\}_{i\in[m]}$; inequality~(c) holds due to the independence of $\{Z_j\}_{j\in[m]}$; and inequality~(d) holds due to the Bernstein inequality and the construction of $\{Z_j\}_{j\in[m]}$. Putting all the upper bounds together, we obtain
    \begin{align*}
        \expect{\BMR\cdot \indicator{\neg\event_b}} = O_m(1)
    \end{align*}
    which completes the verification of inequality~\eqref{eq:BMR UB example analysis} and concludes the proof of \Cref{thm:BOM_BMR:quasi-mhr}.
\end{proof}

For the special case of MHR distributions, the revenue tight approximation ratios of {\BayesianMonopolyReserves} is $\frac{1}{2}$ for eager reserves \cite[Theorem~3.2]{HR09} and $\frac{1}{e}$ for lazy reserves \cite[Theorem~3.11]{DRY15}, respectively. However, when we relax MHR to quasi-MHR distributions, one reason behind the degeneracy of the approximation ratio to $\frac{1}{e + 1}$ comes from the fact that the revenue curve of a quasi-MHR distribution is not necessarily concave (aka., the distribution is not necessarily regular -- see \Cref{prop:hierarchy} for an example). Due to the non-concavity of the revenue curve, in the VCG-type auctions, for any given buyer $i$,  when the VCG payment (aka., winning threshold) $\threshold_i$ (induced by all other buyers' valuation profile) is greater than the monopoly reserve $\optreserve$, the optimal price $\price^*_i(\threshold_i) \triangleq \argmax_{p\geq \threshold_i} \price(1 - \cdf_i(\price))$ may not be the threshold $\threshold_i$. Motivated by this, we can consider a natural variant of {\BayesianLazyMonopolyReserves} defined as follows:
\begin{definition}
    In the {\sf Bayesian Adaptive Lazy Monopoly Reserves} ($\BMR\primed$ for short) works as follows: (i) compute the VCG payment $\threshold_i$ and adaptive price $\price^*_i(\threshold_i) \triangleq \argmax_{p\geq \threshold_i} \price(1 - \cdf_i(\price))$ for each buyer $i\in[n]$; (ii) post take-it-or-leave-it price $\price^*_i$ to each buyer $i\in[n]$.
\end{definition}

Since the price posted to each buyer does not depend on the buyer's value, {\sf Bayesian Adaptive Lazy Monopoly Reserves} is truthful (i.e., DSIC). Moreover, when buyers have regular distribution, the concavity of revenue curves ensures that $\price^*_i(\threshold_i) = \threshold_i$ for every $\threshold_i \geq \optreserve_i$. Hence, {\sf Bayesian Adaptive Lazy Monopoly Reserves} recovers {\BayesianLazyMonopolyReserves} for regular buyers.

\begin{observation}
    {\sf Bayesian Adaptive Lazy Monopoly Reserves} is dominant strategy incentive compatible (DSIC). For regular buyers, it is equivalent to {\BayesianLazyMonopolyReserves}.
\end{observation}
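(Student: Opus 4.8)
The plan is to dispatch the two claims by elementary arguments, reusing the structural facts already recorded in the proof of \Cref{thm:BOM_BMR:quasi-mhr}. For the first claim, I would note that $\BMR\primed$ is a deterministic posted-price mechanism whose price offered to buyer $i$, namely $\price^*_i(\threshold_i) = \argmax_{\price \geq \threshold_i} \price(1 - \cdf_i(\price))$, is a function of the other buyers' reports $\vals_{-i}$ only: it depends on $\val_i$ neither through $\threshold_i$ (which, as observed in the proof of \Cref{thm:BOM_BMR:quasi-mhr}, is the critical value at which $i$ enters the welfare-maximizing VCG allocation, and hence depends only on $\vals_{-i}$) nor otherwise. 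First I would check feasibility of the induced allocation: since $\price^*_i(\threshold_i) \geq \threshold_i$, the set of accepting buyers $\{i : \val_i \geq \price^*_i(\threshold_i)\}$ is contained in the VCG winner set $\{i : \val_i \geq \threshold_i\}$, which is feasible in the downward-closed set system $\feasiblealloc$, and every subset of a feasible set is feasible. Then dominant-strategy incentive compatibility is immediate from Myerson's characterization: buyer $i$'s interim allocation is the step function $\indicator{\val_i \geq \price^*_i(\threshold_i)}$, weakly increasing in $\val_i$, and the payment is $\price^*_i(\threshold_i)$ in the event of allocation and $0$ otherwise, i.e., exactly the threshold times the allocation, so truthful reporting is a dominant strategy (and the mechanism is individually rational).

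For the second claim I would show that for a regular distribution $\cdf_i$ one has $\price^*_i(\threshold_i) = \max\{\threshold_i, \optreserve_i\}$, which by the proof of \Cref{thm:BOM_BMR:quasi-mhr} is precisely the take-it-or-leave-it price $\BLMR$ offers buyer $i$; equality of the price rules yields equality of both the allocation and the payment rules, hence equivalence of the two mechanisms. The identity $\price^*_i(\threshold_i) = \max\{\threshold_i, \optreserve_i\}$ follows once I establish that $\price \mapsto \price(1 - \cdf_i(\price))$ is weakly decreasing on $[\optreserve_i, \infty)$: writing $\quant = 1 - \cdf_i(\price)$, this function equals the revenue curve value $\revcurve_i(\quant)$, which by regularity (\Cref{lem:regular equivalent definition}) is concave and single-peaked at the monopoly quantile, attaining its global maximum there; increasing $\price$ past the monopoly reserve $\optreserve_i$ moves $\quant$ to the left of the peak, where $\revcurve_i$ is weakly increasing, so $\price(1-\cdf_i(\price))$ is weakly decreasing on $[\optreserve_i,\infty)$. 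Consequently, if $\threshold_i \geq \optreserve_i$ the maximizer of $\price(1-\cdf_i(\price))$ over $\price \geq \threshold_i$ is $\threshold_i$ itself, while if $\threshold_i < \optreserve_i$ the global maximizer $\optreserve_i$ lies in $[\threshold_i,\infty)$ and is therefore the maximizer; in both cases $\price^*_i(\threshold_i) = \max\{\threshold_i, \optreserve_i\}$.

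I do not expect a serious obstacle: the observation is genuinely a short consequence of already-established facts. The only care needed is bookkeeping. First, reconciling the informal phrase ``post price $\price^*_i$ to each buyer'' with feasibility in the downward-closed setting, which is handled above by the inclusion $\{i : \val_i \geq \price^*_i\} \subseteq \{i : \val_i \geq \threshold_i\}$. Second, a consistent tie-breaking rule in the $\argmax$ defining $\price^*_i(\threshold_i)$ for the case where the (concave) revenue curve has a flat segment: any fixed choice (e.g.\ the smallest maximizer) makes the equivalence with $\BLMR$ literal, and all choices yield the same revenue. The only real mathematical content is the monotonicity of $\price(1 - \cdf_i(\price))$ beyond the monopoly reserve for regular distributions, which is immediate from concavity of the revenue curve.
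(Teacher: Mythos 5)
Your proposal is correct and takes essentially the same route as the paper's (very brief, in-text) justification: DSIC follows because the posted price $\price^*_i(\threshold_i)$ is a function of $\vals_{-i}$ alone, and for regular buyers the concavity of the revenue curve makes $\price(1-\cdf_i(\price))$ weakly decreasing beyond the monopoly reserve, so $\price^*_i(\threshold_i) = \max\{\threshold_i, \optreserve_i\}$, which coincides with the effective price in {\BayesianLazyMonopolyReserves}. You add two details the paper glosses over — the feasibility check via the inclusion $\{i : \val_i \geq \price^*_i\} \subseteq \{i : \val_i \geq \threshold_i\}$ in the downward-closed system, and the explicit handling of the $\threshold_i < \optreserve_i$ case (the paper only states $\price^*_i(\threshold_i)=\threshold_i$ for $\threshold_i \geq \optreserve_i$) — both of which make the argument cleaner and fully rigorous.
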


The improved revenue approximation for this new variant is as follows. The revenue approximation holds even if we compare with the optimal expected social surplus, i.e., the first best benchmark. The analysis is similar to the one for \Cref{thm:BOM_BMR:quasi-mhr}.

\begin{theorem}[{\BOM} vs.\ $\BMR\primed$]
\label{thm:BOM_BMRPlus:quasi-mhr}
\begin{flushleft}
For asymmetric quasi-MHR buyers in a downward-closed setting, {\sf Bayesian Adaptive Lazy Monopoly Reserves} achieves a tight $\calC_{\BOM}^{\BMR\primed} = \frac{1}{3}$-approximation to {\BayesianOptimalMechanism}.
\end{flushleft}
\end{theorem}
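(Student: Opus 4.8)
The plan is to prove the lower bound $\calC_{\BOM}^{\BMR\primed} \geq \frac{1}{3}$ by a per-buyer charging argument against the \emph{first-best} (optimal welfare) benchmark, reusing Property~\ref{property:quasi-mhr:revenue approximate welfare} of \Cref{lem:quasi-mhr:structural results}, and then to establish matching tightness through a two-type instance built from the extremal distribution that already appears in the proof of that lemma. The overall structure mirrors the proof of \Cref{thm:BOM_BMR:quasi-mhr}; the only change is that the offered price $\price^*_i(\threshold_i) = \argmax_{\price \geq \threshold_i} \price(1 - \cdf_i(\price))$ now extracts the \emph{full} conditional monopoly revenue above the threshold, rather than $\max\{\threshold_i, \optreserve_i\}(1-\cdf_i(\cdot))$, which is why the constant improves from $\frac{1}{e+1}$ to $\frac{1}{3}$.

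For the lower bound I would first recall that in a downward-closed setting the reserve-free {\VCGAuction} is welfare-optimal, and that by the standard exchange argument each buyer $i$ has a threshold $\threshold_i(\vals_{-i}) \in \reals_+$ with $i$ lying in the welfare-maximizing feasible set iff $\vali \geq \threshold_i(\vals_{-i})$; hence the optimal welfare equals $\sum_{i} \expect[\vals_{-i}]{\expect[\vali \sim \priori]{\vali \cdot \indicator{\vali \geq \threshold_i(\vals_{-i})}}}$. The key point is that in $\BMR\primed$ the price offered to $i$, namely $\price^*_i(\threshold_i(\vals_{-i}))$, depends only on $\vals_{-i}$ and is at least $\threshold_i(\vals_{-i})$; therefore the purchasers always form a subset of the {\VCGAuction} winner set (so feasibility and DSIC are never violated), and the revenue contributed by $i$ conditioned on $\vals_{-i}$ equals exactly $\max_{\price \geq \threshold_i(\vals_{-i})} \price(1 - \cdf_i(\price))$. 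Plugging the first (threshold-$\threshold$) inequality of Property~\ref{property:quasi-mhr:revenue approximate welfare} with $\threshold = \threshold_i(\vals_{-i})$ into this identity, then summing over $i$ and taking expectations, yields $\BMR\primed(\priors) \geq \frac{1}{3}\cdot(\text{optimal welfare}) \geq \frac{1}{3}\cdot\BOM(\priors)$.

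For tightness I would take the ``small'' buyers $1, \dots, m$ to be i.i.d.\ copies of the distribution $\prior$ with $\supp(\prior) = [1, \infty)$, $\cdf(\val) = 1 - 1/\val$ on $[1,e]$ and $\cdf(\val) = 1 - e^{-\val/e}$ on $[e, \infty)$ -- exactly the quasi-MHR distribution used in the proof of \Cref{lem:quasi-mhr:structural results} to show tightness of Property~\ref{property:quasi-mhr:revenue approximate welfare}, for which $\expect[\val \sim \prior]{\val} = 3$ while $\max_\price \price(1 - \cdf(\price)) = 1$ (attained by \emph{every} $\price \in [1,e]$) -- together with one ``big'' buyer of deterministic value $V_m \triangleq (3 - \eps_m)m$, where $\eps_m \triangleq c\sqrt{\log m / m}$ for a large enough constant $c$, and a downward-closed family allowing any subset of the $m$ small buyers, or the big buyer alone. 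Then $\BOM \geq V_m = (3 - o(1))m$ by posting the price $V_m$ to the big buyer, while in $\BMR\primed$ each small buyer contributes at most the global monopoly revenue $1$ of $\prior$ (its offered price maximizes $\price(1-\cdf(\price))$ over $\price \geq \threshold_i$, hence $\leq 1$), for a total of $\leq m$; and the big buyer purchases only on the event $\sum_{i \in [m]} \vali \leq V_m$, whose probability is $m^{-\Omega(1)}$ by Bernstein's inequality (the $\vali - 3$ are mean-zero, sub-exponential, and $\eps_m m = \Theta(\sqrt{m \log m})$), contributing $\leq V_m \cdot m^{-\Omega(1)} = o(1)$. Hence $\BMR\primed \leq m + o(1)$ and $\BMR\primed / \BOM \to \frac{1}{3}$ as $m \to \infty$, matching the lower bound; since here $\BOM$ already nearly equals the optimal welfare, this also shows no better-than-$\frac13$ guarantee even against the first-best benchmark.

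The main obstacle is not the lower bound -- which is essentially a reweighting of \Cref{lem:quasi-mhr:structural results} -- but engineering the tight instance so that the extra adaptivity of $\BMR\primed$ buys it nothing: the {\VCGAuction} thresholds of the small buyers must, with overwhelming probability, land inside the \emph{flat} portion $[1,e]$ of $\prior$'s revenue curve (where $\price(1-\cdf(\price)) \equiv 1$), which forces calibrating the big buyer's deterministic value to sit just below the typical total welfare $\approx 3m$ of the small buyers, while the optimal mechanism still captures almost the entire welfare through that same deterministic buyer. Verifying this -- together with the attendant concentration bounds, exactly as in \Cref{exp:BOM_BMR:quasi-mhr} -- is the bulk of the work, and I would also carefully re-check the feasibility/DSIC bookkeeping of $\BMR\primed$ (that the purchasers always lie in the {\VCGAuction} winner set) that was only sketched when the mechanism was introduced.
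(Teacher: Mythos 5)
Your proposal is correct and follows essentially the same two-step structure as the paper's proof: the lower bound applies the $\max_{\price \geq \threshold}\price(1-\cdf(\price)) \geq \tfrac{1}{3}\expect{\val\cdot\indicator{\val\geq\threshold}}$ inequality from \Cref{lem:quasi-mhr:structural results} per buyer against the first-best welfare benchmark, and the upper bound uses the identical instance (\Cref{exp:BOM_BMRPlus:quasi-mhr}) with $m$ i.i.d.\ small buyers having the flat-revenue quasi-MHR distribution and one deterministic big buyer of value $(3-\eps)m$, bounded via Bernstein. The only cosmetic difference is that you separately cap each small buyer's contribution by the global monopoly revenue $1$ rather than conditioning on the event $\sum_i \val_i \geq \val_{m+1}$ as the paper does, but this leads to the same $\BMR\primed \leq m + o(m)$ estimate.
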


\begin{proof}
    We first show the lower bound part utilizing \Cref{lem:quasi-mhr:structural results}, and then show the upper bound part by analyzing \Cref{exp:BOM_BMRPlus:quasi-mhr}.
    
    \xhdr{Lower bounding $\calC_{\BOM}^{\BMR\primed}$.} Here we prove a stronger guarantee that its expected revenue is at least a $\frac{1}{3}$-approximation to the optimal expected social welfare (i.e., the first best benchmark). Since we are in the downward-closed setting, the optimal expected social welfare is achieved by {\sf Vickrey-Clarke-Groves (VCG) Auction} with no reserves. Fix an arbitrary buyer $i$ and all other buyers' valuation profile $\vals_{-i}$. There exists a threshold $\threshold_i$ (aka., VCG payment) depending on $\vals_{-i}$ such that buyer $i$ becomes a winner (i.e., receives an item) in {\sf VCG Auction} if and only if her value $\vali$ weakly exceeds $\threshold_i$. Meanwhile, buyer $i$ receives the item and pays $\price_i^*(\threshold_i)$ if and only if her value $\vali$ weakly exceeds $\price_i^*(\threshold_i)$ in {\sf Bayesian Adaptive Lazy Monopoly Reserves}. Invoking Property~\ref{property:quasi-mhr:revenue approximate welfare} in \Cref{lem:quasi-mhr:structural results}, we know that conditioning on other buyers' valuation profile $\vals_{-i}$, the expected revenue contributed from buyer $i$ in {\sf Bayesian Adaptive Lazy Monopoly Reserves} is a $\frac{1}{3}$-approximation to the expected social welfare contributed from buyer $i$ in {\sf VCG Auction}. Thus, taking the expectation over $\vals_{-i}$ and summing over all buyers finish the argument.

    \xhdr{Upper bounding $\calC_{\BOM}^{\BMR\primed}$.}
    We now analyze \Cref{exp:BOM_BMRPlus:quasi-mhr} which shows the upper bound part of the approximation ratio, i.e., $\calC_{\BOM}^{\BMR\primed} \leq \frac{1}{3}$. The quasi-MHR condition of the constructed distribution can be easily checked by algebra. 

    \begin{example}[Asymmetric Quasi-MHR Instances for $\BOM$ vs.\ $\BMR\primed$]
        \label{exp:BOM_BMRPlus:quasi-mhr}
        Fix sufficiently large $m\in\naturals$ and let $\eps = \sqrt{\log(m)/m}$. There are $n = m + 1$ buyers. The first $m$ buyers have identical quasi-MHR distribution $\prior_i$ with support $\supp(\prior_i) = [1, \infty)$ and cumulative density function $\cdf_i(\val) = 1 - \frac{1}{v}$ for $\val\in[1, e]$ and $\cdf_i(\val) = 1 - e^{-\frac{\val}{e}}$ for $\val\in[e,\infty)$. The last buyer $m + 1$ has a deterministic value equal to $(3 - \eps)m$. We refer to the first $m$ buyers as the ``small'' buyers and the last buyer $m + 1$ as the ``big'' buyer.
        We say a buyer subset is feasible if it does not contain both the big and smaller buyers.
    \end{example}

    We first lower bound the expected revenue of {\BayesianOptimalMechanism}. Since the big buyer $m + 1$ has deterministic value $(3 - \eps)m$, the optimal revenue is at least $(3 - \eps)m$, which can be achieved by posting a take-it-or-leave-it price to the big buyer $m + 1$, 
    \begin{align*}
        \BOM \geq (3 - \eps)m
    \end{align*} 
    We next analyze the expected revenue of {\sf Bayesian Adaptive Lazy Monopoly Reserves}. Note that the construction of valuation distributions ensures that the monopoly reserve of each buyer $i\in[n]$ is equal to the smallest possible value from distribution $\prior_i$. Note that
    \begin{align*}
        \BMR\primed &= \expect{\BMR\primed \cdot \indicator{\sum_{i\in[m]}\val_i \geq \val_{m + 1}}}
        +
        \expect{\BMR\primed \cdot \indicator{\sum_{i\in[m]}\val_i < \val_{m + 1}}}
        \\
        &\leq 
        m\cdot \prob{{\sum_{i\in[m]}\val_i \geq \val_{m + 1}}}
        +
        \val_{m + 1}
        \cdot \prob{{\sum_{i\in[m]}\val_i < \val_{m + 1}}}
        \\
        &=
        m + O_m(1)
    \end{align*}
    where the last equality holds due to the Bernstein inequality. This concludes the proof of \Cref{thm:BOM_BMRPlus:quasi-mhr}.
\end{proof}

\section{Revenue Approximation by Prior-Independent Mechanisms}
% \section{Generalized Bulow-Klemperer Theorems}
\label{sec:duplicate}

\begin{table}[t]
    {\centering\small
    \begin{tabular}{|l|>{\arraybackslash}p{4.75cm}|>{\arraybackslash}p{4.45cm}|>{\arraybackslash}p{3.9cm}|}
        \hline
        \rule{0pt}{13pt} & asymmetric multi-buyer & symmetric multi-buyer & single-buyer \\ [2pt]
        \hline
        \rule{0pt}{13pt}$\regulardistspace$ & \multicolumn{3}{c|}{\multirow{2}{*}{\rule{0pt}{15pt}$\mathrm{TB} = 1$ \cite[Thm~1]{BK96} {\bf [\Cref{thm:BOUR_SPAone}]}}} \\ [2pt]
        \cline{1-1}
        \rule{0pt}{13pt}$\quasiregulardistspace$ & \multicolumn{3}{c|}{} \\ [2pt]
        \hline
        \rule{0pt}{13pt}$\generaldistspace$ & \multicolumn{3}{c|}{no revenue guarantee} \\ [2pt]
        \hline
    \end{tabular}
    \par}
    \caption{{\oneDuplicateSecondPriceAuction} vs.\ {\BayesianOptimalUniformReserve}, the revenue guarantees in various single-item settings. We mark our results in {\bf bold}.}
    \label{tab:duplicate:BOUR}
    \vspace{.1in}
    {\centering\small
    \begin{tabular}{|l|>{\arraybackslash}p{4.75cm}|>{\arraybackslash}p{4.45cm}|>{\arraybackslash}p{3.9cm}|}
        \hline
        \rule{0pt}{13pt} & asymmetric multi-buyer & symmetric multi-buyer & single-buyer \\ [2pt]
        \hline
        \rule{0pt}{13pt}\multirow{2}{*}{\rule{0pt}{15pt}$\regulardistspace$} & $\mathrm{LB} \gtrsim 0.3817$ {\bf [\Cref{thm:BOUR_SPAone}]} & \multicolumn{2}{c|}{\multirow{2}{*}{\rule{0pt}{15pt}$\mathrm{TB} = 1$ \cite[Thm~1]{BK96}}} \\ [2pt]
        \rule{0pt}{13pt} & $\mathrm{UB} \lesssim 0.6365$ {\bf [\Cref{exp:BOM_SPAone:asymmetric}]} & \multicolumn{2}{c|}{} \\ [2pt]
        \hline
        \rule{0pt}{13pt}\multirow{2}{*}{\rule{0pt}{15pt}$\quasiregulardistspace$} & $\mathrm{LB} \gtrsim 0.2770$ {\bf [\Cref{thm:BOUR_SPAone}]} & $\mathrm{LB} \gtrsim 0.2770$ {\bf [implication]} & \multirow{2}{*}{\rule{0pt}{15pt}$\mathrm{TB} = 1$ \cite[Lem~D.1]{HR14}} \\ [2pt]
        \rule{0pt}{13pt} & $\mathrm{UB} \lesssim 0.6365$ {\bf [implication]} & $\mathrm{UB} \lesssim 0.6822$ {\bf [\Cref{exp:BOM_SPAone:iid}]} & \\ [2pt]
        \hline
        \rule{0pt}{13pt}$\generaldistspace$ & \multicolumn{3}{c|}{no revenue guarantee} \\ [2pt]
        \hline
    \end{tabular}
    \par}
    \caption{{\oneDuplicateSecondPriceAuction} vs.\ {\BayesianOptimalMechanism}, the revenue guarantees in various single-item settings. We mark our results in {\bf bold}.}
    \label{tab:duplicate:BOM}
\end{table}

The celebrated paper \cite{BK96} shows that in the single-item setting, for any symmetric regular buyers instance, the expected revenue of {\SecondPriceAuction} (which requires no information about buyers' prior and hence is prior-independent) after adding one additional buyer (hereafter, {\oneDuplicateSecondPriceAuction}), is at least the expected revenue of {\BayesianOptimalMechanism} (which may heavily rely on the knowledge of buyers' prior and hence is prior-dependent) for the original instance (without duplicates). 

In this section, we study the extensions of \cite{BK96} for more general settings. In \Cref{subsec:duplicate:single_item}, we study the revenue approximation of {\oneDuplicateSecondPriceAuction} for (possibly asymmetric) quasi-regular buyers in the single-item environment. In \Cref{subsec:duplicate:downward-closed}, we study the revenue approximation of {\nDuplicateVCGAuction} for (possibly asymmetric) quasi-MHR buyers in the downward-closed settings.

\subsection{The Single-Item Setting}
\label{subsec:duplicate:single_item}

In this subsection, we consider the extensions of \cite{BK96} to asymmetric quasi-regular buyers. Specifically, we are interested in the revenue approximation guarantee of the prior-independent {\SecondPriceAuction} after adding one duplicate buyer (i.e., {\oneDuplicateSecondPriceAuction}) against some revenue benchmark of the original instance (without duplicates). Here we allow the seller to decide which buyer to duplicate based on their valuation distributions.

Recall that when all buyers are symmetric (i.e., have the same valuation distribution) and regular, {\BayesianOptimalMechanism} and {\BayesianOptimalUniformReserve} are equivalent. However, for more general settings considered in this subsection, these two mechanisms are no longer equivalent, but have a bounded gap as we shown in \Cref{sec:simple-mechanism:single-item}. Because of this, we present revenue approximations against both benchmarks. All results in this subsection are summarized in \Cref{tab:duplicate:BOUR} and \Cref{tab:duplicate:BOM}.

We first compare {\oneDuplicateSecondPriceAuction} and {\BayesianOptimalUniformReserve}. \Cref{thm:BOUR_SPAone} gives a (nontrivial) {\em approximation-preserving} generalization of \cite{BK96} from symmetric regular buyers to asymmetric quasi-regular buyers.

\begin{theorem}[$\BOUR$ vs.\ $\SPAone$]
\label{thm:BOUR_SPAone}
\begin{flushleft}
For asymmetric quasi-regular buyers, {\oneDuplicateSecondPriceAuction} revenue-surpasses {\BayesianOptimalUniformReserve} (without duplicates).\footnote{The upper bound holds even for symmetric regular buyers.}
\end{flushleft}
\end{theorem}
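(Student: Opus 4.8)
The plan is to fix an optimal uniform reserve $\reserve^*$ for $\priors$, so that $\BOUR(\priors) = \SPA_{\reserve^*}(\priors)$, and then to run a reduction-plus-Bulow--Klemperer argument in the spirit of \Cref{lem:BOSP_BOUP:reduction_to_triangle} and \Cref{thm:BOSP_BOUP}. The key first observation is that the revenue $\SPA_{\reserve^*}(\priors)$ of second-price with the \emph{fixed} reserve $\reserve^*$ is a function of the restrictions $\prior_i|_{[\reserve^*,\infty)}$ only (the mass above $\reserve^*$ and the conditional law above it), and is insensitive to how each $\prior_i$ behaves below $\reserve^*$.

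\textbf{Step 1 (reduction below the reserve).} For each buyer $i$, replace $\prior_i$ on $[0,\reserve^*)$ by the triangular distribution with CDF $\auxcdf_i(\val) = \val/(\val + c_i)$, where $c_i \triangleq \reserve^*(1-\cdf_i(\reserve^*))/\cdf_i(\reserve^*)$ is chosen so that $\auxcdf_i$ is continuous at $\reserve^*$, and leave $\prior_i$ unchanged on $[\reserve^*,\infty)$. By the probability-tail-bound characterization of quasi-regularity (Condition~\ref{condition:quasi-regular:cdf} of \Cref{prop:q-regular equivalent definition}, anchored at $\reserve^*$) we get $\cdf_i(\val) \le \auxcdf_i(\val)$ for all $\val < \reserve^*$, so $\prior_i$ first-order stochastically dominates $\auxprior_i$. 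Consequently: (i)~$\SPA_{\reserve^*}(\auxpriors) = \SPA_{\reserve^*}(\priors)$, since nothing changed above $\reserve^*$; (ii)~$\reserve^*$ is still an optimal uniform reserve for $\auxpriors$ -- reserves $r \ge \reserve^*$ see unchanged distributions and give no more than $\SPA_{\reserve^*}(\priors)$, while reserves $r < \reserve^*$ can only become worse once the mass below $\reserve^*$ is shrunk in the stochastic order -- hence $\BOUR(\auxpriors) = \BOUR(\priors)$; and (iii)~$\SPAone(\auxpriors) \le \SPAone(\priors)$, since each $\prior_i$ (including whichever one gets duplicated) first-order stochastically dominates $\auxprior_i$ and so, by \Cref{fact:order statistic dominance}, every order statistic of the duplicated profile only shrinks. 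It therefore suffices to prove $\SPAone(\auxpriors) \ge \BOUR(\auxpriors)$, i.e.\ the theorem for distributions that are genuinely triangular (in particular regular) below $\reserve^*$.

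\textbf{Step 2 (Bulow--Klemperer for the reduced instance).} I would use the reformulation that duplicating buyer $j$ and running second-price is the same as running second-price on the original $n$ bidders with a random lazy reserve $\reserve' \sim \auxprior_j$; a short computation then yields
\[
\SPA_{\oplus j}(\auxpriors) \;=\; \SPA_{0}(\auxpriors) + \expect[\vals \sim \auxpriors]{\int_{\val_{(2:n)}}^{\val_{(1:n)}}\bigl(1 - \auxcdf_j(s)\bigr)\,\d s},
\]
while $\BOUR(\auxpriors) = \SPA_{\reserve^*}(\auxpriors)$ can likewise be written as $\SPA_{0}(\auxpriors)$ plus a ``reserve premium'' supported on the event $\{\val_{(2:n)} < \reserve^* \le \val_{(1:n)}\}$. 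The goal is to choose the duplicated buyer $j$ so that its integral term dominates this premium; the natural choice is the buyer most responsible for the reserve -- the one with the smallest $\cdf_j(\reserve^*)$, equivalently the largest $c_j$ -- and one lower-bounds its integral term by restricting the integral to $s \le \reserve^*$, where $1 - \auxcdf_j(s) = c_j/(s+c_j)$ is explicit, and then invokes the inscribed-triangle inequality once more (now for the single buyer $j$, anchored at $\reserve^*$) together with a first-order optimality condition for $\reserve^*$ (which forces the competition- and density-weighted virtual values at $\reserve^*$ to sum to zero across buyers). For i.i.d.\ buyers this collapses to the classical computation of \cite{BK96} (and, for $n = 1$, to \cite[Lemma~D.1]{HR14}); the footnoted non-improvability holds because $\BOUR = \BOM$ for i.i.d.\ regular buyers and \cite{BK96} is tight there.

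\textbf{Main obstacle.} The hard part is Step 2 under asymmetry. Unlike in the classical proof, second-price on $n+1$ bidders is \emph{not} the revenue-optimal always-allocating mechanism when buyers are asymmetric, so the clean ``extension trick'' does not transfer directly; one must identify the right buyer to duplicate and control the reserve premium against $\SPA_{\reserve^*}$, and it is precisely here that the triangular structure secured below $\reserve^*$ in Step 1, combined with the inscribed-triangle inequality, is expected to do the heavy lifting. A lesser nuisance is that the distributions here generically have atoms, so the identity for $\SPA_{\oplus j}$ and the optimality characterization of $\reserve^*$ must be stated with one-sided limits.
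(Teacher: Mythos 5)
Your Step~1 reduction is correct (and the composite distributions remain quasi-regular, though they need not be regular, so the parenthetical "(in particular regular)" overclaims), but Step~2 is left as a plan rather than a proof, and the route you envision is both harder and not the one the paper takes. The paper never compares a "duplicate premium" against a "reserve premium" on the retained tail, and it never uses first-order optimality of $\reserve^*$. Instead, fixing the critical index $k=\argmin_i\cdf_i(\reserve)$ (the same choice you identify), it writes $\SPA(\priors\otimes\prior_k)-\SPA_{\reserve}(\priors)$ as a sum of a piece over $[\reserve,\infty)$ and a piece over $[0,\reserve]$. The tail piece
\[
\int_{\reserve}^{\infty}\bigl(\cdf_{(2:n)}(\val)-\cdf_{(2:n+1)}(\val)\bigr)\,\d\val
\]
is \emph{pointwise} nonnegative, since the second order statistic of the $n+1$ values first-order stochastically dominates that of the $n$ values -- so the tail is disposed of outright, with no reduction and no use of quasi-regularity. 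What remains is the single inequality
\[
\int_0^{\reserve}\bigl(1-\cdf_{(2:n+1)}(\val)\bigr)\,\d\val\ \geq\ \reserve\,\bigl(1-\cdf_{(1:n)}(\reserve)\bigr),
\]
and only here do the triangular auxiliaries enter -- and they are \emph{fully truncated}, supported on $[0,\reserve]$ with a point mass at $\reserve$, not spliced to the original tail. Quasi-regularity (your Condition~\ref{condition:quasi-regular:cdf}) gives $\cdf_i\leq\auxcdf_i$ on $[0,\reserve)$, so the left side only shrinks under the replacement, and after replacement it equals $\SPA(\auxpriors\otimes\auxprior_k)$ exactly because all mass lies in $[0,\reserve]$. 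Revenue equivalence then closes it in one line: each auxiliary virtual value equals $\reserve$ at the point mass and the constant $-\reserve\tfrac{1-\cdf_i(\reserve)}{\cdf_i(\reserve)}\geq-\reserve\tfrac{1-\cdf_k(\reserve)}{\cdf_k(\reserve)}$ below, and plugging these two bounds into the expected virtual surplus gives $\reserve\bigl(1-\prod_i\cdf_i(\reserve)\bigr)=\reserve\bigl(1-\cdf_{(1:n)}(\reserve)\bigr)$.

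The concrete gap in your proposal, then, is twofold. First, by keeping the original quasi-regular tails above $\reserve^*$ you set yourself the asymmetric premium-comparison problem you flag as the main obstacle, whereas the order-statistic FOSD observation eliminates the tail before that problem arises. Second, Step~2 as written names the ingredients (restricting to $s\leq\reserve^*$, the inscribed-triangle inequality, the first-order condition at $\reserve^*$) but never combines them into a closed chain of inequalities; in particular the first-order optimality of $\reserve^*$ plays no role in the paper's argument, which in fact establishes $\SPA(\priors\otimes\prior_{k(\reserve)})\geq\SPA_{\reserve}(\priors)$ for \emph{every} reserve $\reserve$, not just the optimal one. Your choice of which buyer to duplicate and your use of the tail bound are right; the missing move is the tail cancellation plus the fully-truncated Myerson computation.
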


As mentioned in \Cref{sec:intro:contribution}, the special case $n = 1$ (essentially symmetric buyers) of \Cref{thm:BOUR_SPAone} has been established in \cite[Lemma~D.1]{HR14}. Nonetheless, the proof of the general case $n \geq 1$ is technically nontrivial.

\begin{proof}[Proof of \Cref{thm:BOUR_SPAone}]
Fix an arbitrary $n\geq 1$ and an arbitrary group of $n$ asymmetric buyers with quasi-regular distributions $\priors = \{\prior_i\}_{i\in[n]}$. Let $\reserve$ denote the uniform reserve used in {\BayesianOptimalUniformReserve} (without duplicates). 

Define critical buyer index $k\triangleq \argmin_{i\in[n]} \cdf_i(\reserve)$. In the remainder of the analysis, we prove the theorem statement by arguing that {\sf Second Price Auction} given one additional duplicate buyer with prior $\prior_k$ revenue-surpasses {\BayesianOptimalUniformReserve}, i.e.,
\begin{align*}
    \SPAone(\priors)\geq \SPA(\priors\otimes\prior_k) \geq \BOUR(\priors)
\end{align*}
We start our argument by presenting the closed-form characterizations of $\BOUR(\priors)$ as follows:
\begin{align*}
    \BOUR(\priors) &{} =
    \reserve\cdot \prob[\vals\sim\priors]{\val_{(1:n)}\geq \reserve}
    +
    \expect[\vals\sim\priors]{\val_{(2:n)} - \reserve\given \val_{(2:n)} \geq \reserve}
    \cdot 
     \prob[\vals\sim\priors]{\val_{(2:n)}\geq \reserve}
     \\
     &{}=
     \reserve\cdot \left(1 - \cdf_{(1:n)}(\reserve)\right)
     +
     \displaystyle\int_{\reserve}^{\infty}1 - \cdf_{(2:n)}(\val)\cdot\d \val
     % \\
     % &{}
     % =
     % \reserve\cdot\left(1 - \prod_{i\in[n]}\cdf_i(\reserve)\right)
     % +
     % \displaystyle
     % \int_{\reserve}^{\infty}
     % \left(
     % 1 - \prod_{i\in[n]}\cdf_i(\val)
     % \left(1 + \sum_{j\in[n]}\frac{1-\cdf_j(\val)}{\cdf_j(\val)}\right)
     % \right)\cdot\d\val
\end{align*}
where $\cdf_{(1:n)}$ and $\cdf_{(2:n)}$ are the first and second order statistics distributions induced by $n$ distributions $\priors$, and all three equalities hold by the definition. Similarly, we can characterize $\SPA(\priors\otimes\prior_k)$ as follows:
\begin{align*}
    \SPA(\priors\otimes\prior_k) 
    & {} = 
    \expect[\vals\sim\priors\otimes\prior_k]{\val_{(2:n+1)}}
    =
    \displaystyle\int_0^{\infty}1 - \cdf_{(2:n + 1)}(\val)\cdot\d\val
    % =
    %  \displaystyle
    %  \int_{0}^{\infty}
    %  \left(
    %  1 - \cdf_k(\val)\prod_{i\in[n]}\cdf_i(\val)
    %  \left(\frac{1}{\cdf_k(\val)}+\sum_{j\in[n]}\frac{1-\cdf_j(\val)}{\cdf_j(\val)}\right)
    %  \right)\cdot\d\val
\end{align*}
where $\cdf_{(2:n+1)}$ is the second order statistics distributions induced by $n+1$ distributions $\priors\otimes\prior_k$.
Note that 
\begin{align*}
    &\SPA(\priors\otimes\prior_k)  - \BOUR(\priors)
    \\
    ={}&
    % \left(
    \displaystyle\int_0^{\reserve}1 - \cdf_{(2:n + 1)}(\val)\cdot\d\val
    -
    \reserve\cdot \left(1 - \cdf_{(1:n)}(\reserve)\right)
    % \right)
    +
    % \left(
    \displaystyle\int_{\reserve}^{\infty}1 - \cdf_{(2:n + 1)}(\val)\cdot\d\val
    -
    \displaystyle\int_{\reserve}^{\infty}1 - \cdf_{(2:n)}(\val)\cdot\d \val
    % \right)
\end{align*}
Since order statistics distribution $\prior_{(2:n+1)}$ first order stochastically dominates order statistics distribution $\prior_{(2:n)}$, the term $
\int_{\reserve}^{\infty}1 - \cdf_{(2:n + 1)}(\val)\cdot\d\val
-
\int_{\reserve}^{\infty}1 - \cdf_{(2:n)}(\val)\cdot\d \val
$ in the right-hand side is non-negative. 
Therefore, it suffices to verify that
\begin{align*}
    \displaystyle\int_0^{\reserve}1 - \cdf_{(2:n + 1)}(\val)\cdot\d\val
    -
    \reserve\cdot \left(1 - \cdf_{(1:n)}(\reserve)\right)
    \geq 0
\end{align*}
To show this, we construct the following auxiliary distributions $\auxpriors=\{\auxprior_i\}_{i\in[n]}$: for each buyer $i\in[n]$, define distribution $\auxprior_i$ with support $\supp(\auxprior_i) = [0, \reserve]$ and following cumulative density function
\begin{align*}
    \auxcdf_i(\val) = 
    \left\{
    \begin{array}{ll}
     \frac{\val}{\val + \frac{1 - \cdf_i(\reserve)}{\cdf_i(\reserve)}\reserve}    & \text{~ if $\val \geq \reserve$} \\
    1   & \text{~ if $\val > \reserve$} 
    \end{array}
    \right.
\end{align*}
As a sanity check, by construction, auxiliary distribution $\auxprior_i$ is regular and $\auxcdf_i(\reserve) = \cdf_i(\reserve)$.
Moreover, since distribution $\prior_i$ is quasi-regular, invoking Condition~\ref{condition:quasi-regular:cdf}, the constructed auxiliary distribution $\auxprior_i$ is first order stochastically dominated by distribution $\prior_i$. Therefore, we have 
\begin{align*}
    &\displaystyle\int_0^{\reserve}1 - \cdf_{(2:n + 1)}(\val)\cdot\d\val
    \geq
    \displaystyle\int_0^{\reserve}1 - \auxcdf_{(2:n + 1)}(\val)\cdot\d\val
    \;\;\mbox{and}\;\;
    \reserve\cdot \left(1 - \cdf_{(1:n)}(\reserve)\right)
    =
    \reserve\cdot \left(1 - \auxcdf_{(1:n)}(\reserve)\right)
\end{align*}
Note that by construction, 
\begin{align*}
    \displaystyle\int_0^{\reserve}1 - \cdf_{(2:n + 1)}(\val)\cdot\d\val
    =
    \expect[\vals\sim\auxpriors\otimes \auxprior_k]{\val_{(2:n+1)}}
    =
    \SPA(\auxpriors\otimes \auxprior_k)
\end{align*}
Let $\virtualval_i\primed(\cdot)$ be the virtual value function of auxiliary distribution $\auxprior_i$. By construction, for every buyer index $i\in[n]$, $\virtualval_i\primed(\reserve) = \reserve$. Moreover, since critical buyer index $k = \argmin_{i\in[n]}\cdf_i(\reserve) = \argmin_{i\in[n]}\auxcdf_i(\reserve)$, we have $\virtualval_i\primed(\val) \geq - \frac{1 - \auxcdf_k(\reserve)}{\auxcdf_k(\reserve)}\reserve$ for every buyer index $i\in[n]$ and value $\val \in [0,\reserve)$. Thus, invoking \Cref{prop:revenue_equivalence}, we can lower bound $\SPA(\auxpriors\otimes \auxprior_k)$ as follows:
\begin{align*}
    \SPA(\auxpriors\otimes \auxprior_k) &{}\geq 
    \reserve\cdot \left(1  - \auxcdf_k(\reserve) \prod_{i\in[n]}\auxcdf_i(\reserve)\right)
    -
    \frac{1 - \auxcdf_k(\reserve)}{\auxcdf_k(\reserve)}\reserve
    \cdot
    \auxcdf_k(\reserve)\prod_{i\in[n]}\auxcdf_i(\reserve)
    \\
    &{}=
    \reserve\cdot \left(1  - \prod_{i\in[n]}\auxprior_i(\reserve)\right)
    =
    \reserve\cdot \left(1 - \auxcdf_{(1:n)}(\reserve)\right)
    =
    \reserve\cdot \left(1 - \cdf_{(1:n)}(\reserve)\right)
\end{align*}
which finishes the proof of \Cref{thm:BOUR_SPAone} as desired.
\end{proof}

Combining \Cref{thm:BOUR_SPAone} with results established in \Cref{sec:simple-mechanism:single-item}, we obtain revenue approximation guarantees of {\oneDuplicateSecondPriceAuction} against {\BayesianOptimalMechanism} in \Cref{thm:BOM_SPAone} and \Cref{thm:BOM_SPAone iid} for asymmetric buyers and symmetric buyers, respectively.

\begin{theorem}[{\BOM} vs.\ $\SPAone$ -- Asymmetric Buyers]
\label{thm:BOM_SPAone}
% \begin{flushleft}
For asymmetric quasi-regular or regular buyers,
{\oneDuplicateSecondPriceAuction} achieves a tight $\calC_{\BOM}^{\SPAone} \in [0.2770, 0.6365]$- or $\calC_{\BOM}^{\SPAone} \in [0.3817, 0.6365]$- approximation to {\BayesianOptimalMechanism} (without duplicates), respectively.
% \end{flushleft}
\end{theorem}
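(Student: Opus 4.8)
\textbf{Proof proposal for \Cref{thm:BOM_SPAone}.}

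The plan is to obtain both the lower and upper bounds by combining \Cref{thm:BOUR_SPAone} with the revenue gaps between {\BayesianOptimalMechanism} and {\BayesianOptimalUniformReserve} established in \Cref{sec:simple-mechanism:single-item}, and then to sharpen the upper bound with a dedicated impossibility instance. For the lower bound, the key observation is the multiplicative chain $\SPAone \geq \BOUR$ (\Cref{thm:BOUR_SPAone}, which holds for both regular and quasi-regular buyers) composed with $\BOUR \geq \calC_{\BOM}^{\BOUR} \cdot \BOM$. In the quasi-regular case I would invoke \Cref{thm:BOM_BOUR}, giving $\calC_{\BOM}^{\BOUR} \gtrsim 0.2770$, hence $\calC_{\BOM}^{\SPAone} \gtrsim 0.2770$; in the regular case I would instead invoke \Cref{wisdom:BOM_BOUR} (or equivalently the regular-buyer special case of the same chain, since $\calC_{\BOM}^{\BOUP} \approx 0.3817$ and $\BOUR \geq \BOUP$), giving $\calC_{\BOM}^{\SPAone} \gtrsim 0.3817$. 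Both lower bounds are immediate implications once \Cref{thm:BOUR_SPAone} is in hand; no new argument is needed beyond bookkeeping the constants.

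For the upper bound $\calC_{\BOM}^{\SPAone} \lesssim 0.6365$, which must hold already for asymmetric \emph{regular} buyers (and therefore a fortiori for quasi-regular buyers), I would construct an explicit instance witnessing this ratio. The natural starting point is the impossibility instance from \cite[Remark~5.7]{FLR19}, which yields the weaker bound $\ln 2 \approx 0.6931$; I would modify it — presumably by adjusting the number and scaling of the ``small'' equal-revenue-type buyers and the single ``big'' high-value buyer, and taking an appropriate limit — so that the ratio $\SPAone/\BOM$ converges to the constant $\approx 0.6365$. This is the example referred to as \Cref{exp:BOM_SPAone:asymmetric} in \Cref{tab:duplicate:BOM}. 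The computation would proceed by writing $\BOM$ (or a lower bound on it, e.g.\ the revenue extractable from the big buyer alone plus the Myerson revenue of the small buyers) and $\SPAone$ (the expected second-highest value in the duplicated $(n{+}1)$-buyer profile, optimized over which buyer to duplicate) in closed form as functions of the instance parameters, then optimizing.

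The main obstacle is the upper-bound construction and its tight analysis: unlike the lower bounds, which are one-line compositions of prior results, pinning down the precise constant $0.6365$ requires identifying the right family of instances, computing the order-statistic integrals exactly, and verifying that the chosen duplication rule is indeed revenue-optimal for the seller (so that the bound cannot be evaded). A secondary subtlety is that regularity of the modified FLR instance must be checked — the small buyers should be (truncated) equal-revenue / triangular distributions so that \Cref{lem:regular equivalent definition} applies — and that the instance also certifies the bound for the quasi-regular class, which follows trivially since $\regulardistspace \subsetneq \quasiregulardistspace$. I do not expect any conceptual difficulty beyond these calculations; the theorem is essentially a corollary of \Cref{thm:BOUR_SPAone} together with a concrete example.
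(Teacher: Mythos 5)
Your lower bound is exactly the paper's: chain $\calC_{\BOM}^{\SPAone}\geq \calC_{\BOUR}^{\SPAone}\cdot\calC_{\BOM}^{\BOUR}$, invoke \Cref{thm:BOUR_SPAone} for the first factor and \Cref{thm:BOM_BOUR} (quasi-regular) or \cite[Theorem~1]{JLQTX19} (regular) for the second. Nothing to add there.

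For the upper bound your framework is correct — exhibit a bad asymmetric regular instance, which a fortiori certifies the bound for $\quasiregulardistspace\supsetneq\regulardistspace$ — but you leave the construction as a plan, and the structure you conjecture overcomplicates it. You anticipate a FLR-style family with a variable number of ``small'' equal-revenue buyers, a single deterministic ``big'' buyer, and a limiting argument. The actual witness (\Cref{exp:BOM_SPAone:asymmetric}) is a fixed two-buyer instance with no limit taken: Buyer~1 is deterministic at value $a$, Buyer~2 has the equal-revenue distribution $\cdf(\val)=\tfrac{\val}{\val+1}$ on $[0,\infty)$. One shows $\BOM = 1+a$ (sell to Buyer~2 at a price $H\to\infty$, fall back to Buyer~1 at price $a$), and computes $\SPAone$ by taking the max over which buyer to duplicate: duplicating Buyer~1 gives exactly $a$; duplicating Buyer~2 gives, by an order-statistic integral split at $a$,
\[
\int_0^a \val\cdot 2\cdf(\val)\pdf(\val)\,\d\val + 2a\,\cdf(a)(1-\cdf(a)) + \int_a^\infty \val\cdot 2(1-\cdf(\val))\pdf(\val)\,\d\val = 2\ln(1+a)-1+\tfrac{2}{a+1}.
\]
The ratio $\max\!\bigl\{a,\; 2\ln(1+a)-1+\tfrac{2}{a+1}\bigr\}/(1+a)$ is then minimized over $a$, attaining $\approx 0.6365$ near $a\approx 1.75$. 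Your note that the duplication choice must be optimized so the bound ``cannot be evaded'' is exactly the role of the $\max\{\cdot,\cdot\}$ above. So the divergence is only that you expected to vary the number of small buyers and take a limit; the paper's instance collapses the ``small'' part to a single equal-revenue buyer and optimizes one scalar. Before this could count as a proof you would need to carry out that calculation; as written, the upper bound remains a sketch.
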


We note that for the special case of asymmetric regular buyers, \Cref{thm:BOM_SPAone} improves the state-or-art: the lower bound is increased from 0.108 \cite[Theorem~3.1]{FLR19} to 0.3817 and the upper bound is decreased from $\ln(2)\approx 0.6931$ \cite[Remark~5.7]{FLR19} to 0.6365. In particular, the upper bound is analyzed using the following example.

\begin{example}[Asymmetric Regular Instances for $\BOM$ vs.\ $\SPAone$] 
\label{exp:BOM_SPAone:asymmetric}
Fix any $a \geq 0$. There are two asymmetric regular buyers. Buyer 1 has deterministic value $a$. Buyer 2 has valuation distribution $\prior$ with cumulative density function $\cdf(\val) = \frac{\val}{\val + 1}$ and support $\supp(\prior) = [0,\infty)$. 
In this instance, $\frac{\SPAone}{\BOM} = {\max\{a, 2\ln(1+a)-1+\frac{2}{a+1}\}}/({1+a})$, which is minimized at approximately 0.6365 by setting $a \approx 1.75088$.\ignore{\yfnote{\url{https://www.desmos.com/calculator/5v6rpaxfmi}}}
\end{example}

\begin{proof}[Proof of \Cref{thm:BOM_SPAone}]
    For the lower bound part, note that $\calC_{\BOM}^{\SPAone}\geq \calC_{\BOUR}^{\SPAone} \cdot \calC_{\BOM}^{\BOUR}$. Therefore, invoking the results that $\calC_{\BOUR}^{\SPAone} = 1$ (\Cref{thm:BOUR_SPAone}) and $\calC_{\BOM}^{\BOUR} \geq 0.2770$ (\Cref{thm:BOM_BOUR}) or $\calC_{\BOM}^{\BOUR} \geq 0.3817$ for asymmetric regular buyers \cite[Theorem~1]{JLQTX19} finishes the lower bound analysis.

    For the upper bound part, we analyze \Cref{exp:BOM_SPAone:asymmetric}. The regularity of the constructed distributions can be easily checked by algebra. Below we verify the approximation ratio stated in this example.

    We first lower bound the expected revenue in {\BayesianOptimalMechanism}. Note that the seller can first sell the item to Buyer 2 with a take-it-or-leave-it price $H$, and get expected revenue $\frac{H}{H + 1}$. With probability $\frac{1}{H + 1}$, Buyer 2 does not purchase and then the item is sold to Buyer 1 with price~$a$. Letting price $H$ approach infinite, we obtain a lower bound of $1 + a$ for the expected revenue in {\BayesianOptimalMechanism}.

    Next we analyze the expected revenue in {\oneDuplicateSecondPriceAuction}. By duplicating Buyer~1, the ex post revenue is always equal to $a$. Next, we compute the expected revenue by duplicating Buyer 2. In this case, the expected revenue can be expressed as 
    \begin{align*}
        &\displaystyle\int_0^a \val \cdot 2\cdf(\val) \pdf(\val)\cdot\d\val
        +
        2a \cdf(a)(1 - \cdf(a)) 
        +
        \displaystyle\int_a^\infty \val\cdot 2(1 - \cdf(\val))  \pdf(\val)\cdot\d\val
        % \\
        =
        2\ln(1 + a) - 1 + \frac{2}{a + 1}
    \end{align*}
    where three terms on the left-hand side correspond to the revenue contribution when (i) both Buyer~2 and her duplicate have values at most $a$, (ii) one of Buyer 2 and her duplicate has a value at most $a$, (iii) both Buyer 2 and her duplicate have values at least $a$, respectively.

    Putting all the pieces together, we obtain the approximation ratio stated in \Cref{exp:BOM_SPAone:asymmetric} as desired.
\end{proof}

For symmetric quasi-regular buyers, we formalize the revenue approximation in the following theorem. Its upper bound utilizes \Cref{exp:BOM_SPAone:iid}, which is the same as \Cref{exp:BOM_BOUR:iid} in \Cref{sec:simple-mechanism:single-item}.

\begin{theorem}[{\BOM} vs.\ $\SPAone$ -- Symmetric Buyers]
\label{thm:BOM_SPAone iid}
For symmetric quasi-regular buyers,
{\oneDuplicateSecondPriceAuction} achieves a tight $\calC_{\BOM}^{\SPAone} \in [0.2770, 1 + (\LambertFunc(-\frac{1}{e^2}))^{-1}]$-approximation to {\BayesianOptimalMechanism}, where the upper bound of $1 + (\LambertFunc(-\frac{1}{e^2}))^{-1}\approx 0.6822$.
\end{theorem}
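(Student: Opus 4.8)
The plan is to prove the two bounds separately, each by reduction to results already established. For the lower bound I would invoke \Cref{thm:BOUR_SPAone}, which asserts $\calC_{\BOUR}^{\SPAone}=1$ for asymmetric quasi-regular buyers, hence in particular for symmetric ones: on every i.i.d.\ quasi-regular instance $\priors=\{\prior\}^{\otimes n}$ we have $\SPAone(\priors)\ge\BOUR(\priors)$. Combining this instance-wise inequality with the i.i.d.\ bound $\calC_{\BOM}^{\BOUR}\ge\tfrac12$ of \Cref{prop:BOM_BOUR:iid} (or, more conservatively, with $\calC_{\BOM}^{\BOUR}\ge 0.2770$ from \Cref{thm:BOM_BOUR}) gives $\frac{\SPAone(\priors)}{\BOM(\priors)}\ge\frac{\BOUR(\priors)}{\BOM(\priors)}\ge\calC_{\BOM}^{\BOUR}$ on every such instance, so $\calC_{\BOM}^{\SPAone}\ge 0.2770$ (indeed $\ge\tfrac12$).

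For the upper bound I would analyze the symmetric quasi-regular family of \Cref{exp:BOM_SPAone:iid}, which coincides with \Cref{exp:BOM_BOUR:iid}: $n$ i.i.d.\ buyers with $\cdf(\val)=\frac{\val}{\val+1/n}$ on $[a,\infty)$ (a point mass $\frac{a}{a+1/n}$ at $a$ together with a thin, heavy tail above it), for $a\ge 0$ and $n\to\infty$ chosen afterward; quasi-regularity of this distribution is immediate from \Cref{prop:q-regular equivalent definition}, and it is irregular, which is precisely what yields a separation from the regular case. The proof of \Cref{prop:BOM_BOUR:iid} already shows $\BOM=1+a-\tfrac1n$ on this family. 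It remains to compute $\SPAone=\expect[\val_{(2:n+1)}\sim\prior_{(2:n+1)}]{\val_{(2:n+1)}}$, the expected runner-up value among $n+1$ i.i.d.\ draws (a plain {\SecondPriceAuction} has no effective reserve here since all values are at least $a$). Writing $\SPAone=a+\int_a^\infty\bigl(1-\cdf_{(2:n+1)}(t)\bigr)\,\d t$ and using $1-\cdf(t)=\frac{1/n}{t+1/n}$, the same Poissonization of the tail used for $\SPA_{\reserve=a}$ in the proof of \Cref{prop:BOM_BOUR:iid} gives $\SPAone\to ae^{-1/a}+1$ as $n\to\infty$; the point is that the extra duplicate buyer is asymptotically negligible, since with $p(t)=\frac{1/n}{t+1/n}$ the telescoping identity for the exceedance probabilities $1-(1-p)^m-mp(1-p)^{m-1}$ yields $\SPA_{n+1}-\SPA_{n}=\int_a^\infty n\,p(t)^2\bigl(1-p(t)\bigr)^{n-1}\,\d t\le\int_a^\infty\frac{1/n}{(t+1/n)^2}\,\d t=\frac{1}{na+1}\to 0$, where $\SPA_{m}$ denotes {\SecondPriceAuction} run on $m$ i.i.d.\ copies of $\prior$. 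Hence $\frac{\SPAone}{\BOM}\to\frac{ae^{-1/a}+1}{a+1}$, and a one-variable first-order condition shows this is minimized over $a\ge 0$ at $a=-(\LambertFunc(-\tfrac1{e^2})+2)^{-1}\approx 0.8725$, with value $1+(\LambertFunc(-\tfrac1{e^2}))^{-1}\approx 0.6822$, establishing $\calC_{\BOM}^{\SPAone}\le 1+(\LambertFunc(-\tfrac1{e^2}))^{-1}$.

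The main obstacle is the asymptotics of $\SPAone$ on this family: one must confirm both that the runner-up statistic converges to the claimed limit and that duplication contributes only an $O(1/n)$ correction, rather than an extra $\Theta(1)$ term coming from the heavy tail (where one buyer already contributes $\approx\tfrac1n$ to the monopoly revenue and $n$ buyers jointly contribute $\approx 1$, which is exactly how $\BOM$ picks up its ``$+1$''). The perturbation estimate $\SPA_{n+1}-\SPA_{n}=O(1/n)$ above is the crux; beyond it, everything reduces to the integral and limit computations already carried out for \Cref{prop:BOM_BOUR:iid} together with a routine optimization over $a$. I would present the lower and upper bounds as \Cref{thm:BOUR_SPAone}-plus-\Cref{prop:BOM_BOUR:iid} and \Cref{exp:BOM_SPAone:iid} respectively, matching the structure already used for the asymmetric statement \Cref{thm:BOM_SPAone}.
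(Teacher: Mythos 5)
Your proof is correct and follows essentially the same two-piece structure as the paper's: the lower bound is obtained by composing $\calC_{\BOUR}^{\SPAone}=1$ (\Cref{thm:BOUR_SPAone}) with a $\BOUR$-versus-$\BOM$ revenue gap, and the upper bound by analyzing \Cref{exp:BOM_SPAone:iid} (which coincides with \Cref{exp:BOM_BOUR:iid}), yielding $\frac{ae^{-1/a}+1}{a+1}$ and optimizing over $a$. Two differences are worth noting. First, for the upper bound you compute $\SPAone = \expect{\val_{(2:n+1)}}$ directly and add the telescoping estimate $\int_a^\infty n\,p(t)^2(1-p(t))^{n-1}\,\dd t\le\frac{1}{na+1}$ (with $p(t)=\frac{1/n}{t+1/n}$) certifying that the duplicate buyer's contribution to the runner-up statistic vanishes; the paper instead bounds $\SPAone$ above by $\BOUR$ on the duplicated $(n+1)$-buyer instance and recycles the closed-form $\SPA_{\reserve=a}$ expression from the proof of \Cref{prop:BOM_BOUR:iid}. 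Both routes give the same asymptotics, but your perturbation argument addresses head-on the concern that the heavy tail could let one extra draw move the limit, which the paper leaves implicit. Second, and more substantively, your lower bound is in fact \emph{sharper} than the one in the theorem statement: by invoking the symmetric bound $\calC_{\BOM}^{\BOUR}\ge\frac{1}{2}$ of \Cref{prop:BOM_BOUR:iid} rather than the asymmetric $\calC_{\BOM}^{\BOUR}\ge 0.2770$, you establish $\calC_{\BOM}^{\SPAone}\ge\frac{1}{2}$ for symmetric quasi-regular buyers, whereas the paper's proof cites only the asymmetric \Cref{thm:BOM_SPAone} and therefore stops at $0.2770$. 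This observation shows the interval in the theorem (and the corresponding entry of \Cref{tab:intro:duplicate}) can be tightened to $\left[\frac{1}{2}, 0.6822\right]$ with no extra work.
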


\begin{example}[Symmetric Quasi-Regular Instances for $\BOM$ vs.\ $\SPAone$]
\label{exp:BOM_SPAone:iid}
Fix any $a \geq 0$.
There are $n$ symmetric buyers with independent and identical valuation distributions
$\priors = \{\prior\}^{\otimes n}$, where distribution $\prior$ has cumulative density function $\cdf(\val) = \frac{\val}{\val + {1}/{n}}$ and support $\supp(\prior) = [a, \infty)$. 
In this instance, $\frac{\SPAone}{\BOUR} = \frac{ae^{-{1}/{a}} + 1}{a +1} + o_n(1)$, which is minimized at $1 + (\LambertFunc(-\frac{1}{e^2}))^{-1}\approx 0.6822$ by setting $a  = -(\LambertFunc(-\frac{1}{e^2}) + 2)^{-1} \approx 0.8725$ and letting $n$ approach infinity.
\end{example}

\begin{proof}[Proof of \Cref{thm:BOM_SPAone iid}]
    The lower bound part is a direct implication of \Cref{thm:BOM_SPAone}.

    For the upper bound part, we analyze \Cref{exp:BOM_SPAone:iid}. Since \Cref{exp:BOM_SPAone:iid} is the same as \Cref{exp:BOM_BOUR:iid}, using the analysis of \Cref{prop:BOM_BOUR:iid}, we know the constructed distribution is quasi-regular, and the expected revenue of {\BayesianOptimalMechanism} is equal to 
    $$1 + a - \frac{1}{n}$$
    Moreover, the expected revenue of {\oneDuplicateSecondPriceAuction} is at most {\BayesianOptimalUniformReserve} with one duplicate buyer, which is 
    $$a - \frac{((n+1)a + 1)^{n+1}((n+1)a-n)-((n+1)a)^{n+1}((n+1)a + 1)}{(n+1)((n+1)a + 1)^{n+1}} = ae^{\frac{1}{a}} + 1 + o_n(1)$$
    Putting two pieces together, we obtain
    \begin{align*}
        \frac{\SPAone}{\BOM} = 
        \frac{ae^{-\frac{1}{a}} + 1}{a +1} + o_n(1)
    \end{align*}
    By considering the first-order condition of $\frac{ae^{-\frac{1}{a}} + 1}{a + 1}$, we know that the ratio is minimized at $1 + (\LambertFunc(-\frac{1}{e^2}))^{-1}\approx 0.6822$ by setting $a  = -(\LambertFunc(-\frac{1}{e^2}) + 2)^{-1} \approx 0.8725$.
\end{proof}

\begin{comment}

\subsection{The Matroid Settings}
\label{subsec:duplicate:matroid}

\begin{theorem}[{\BOM} vs.\ $\VCGn$]
% [{\BayesianOptimalMechanism} vs.\ {\nDuplicateVCGAuction}]
\label{thm:BOM_VCGn:quasi-regular}
\begin{flushleft}
% In the matroid setting with independent (but possibly asymmetric) quasi-regular buyers,
For quasi-regular buyers in a matroid setting,
{\nDuplicateVCGAuction} achieves a tight $\calC_{\BOM}^{\VCGn} \in [\frac{1}{2}, \frac{3}{4}]$-approximation to {\BayesianOptimalMechanism} (without duplicates).\textsuperscript{\textnormal{\ref{footnote:upper_bound:regular}}}
\end{flushleft}
\end{theorem}

\begin{proof}
\green{proved}

This finishes the proof of \Cref{thm:BOM_VCGn:quasi-regular}.
\end{proof}

\begin{remark}[Tightness of ]
the best known impossibility result (i.e., the upper bound) is $\frac{3}{4}$ \cite{HR09}
\end{remark}

\end{comment}

\subsection{The Downward-Closed Settings}
\label{subsec:duplicate:downward-closed}

In this subsection, we consider the downward-closed settings, which generalize the single-item settings studied in \Cref{subsec:duplicate:single_item}. Instead of duplicating a single buyer, we study the revenue approximation of {\SecondPriceAuction} which has a duplicate for each of $n$ buyers. Specifically, in the duplicated environment, each buyer is replaced by a pair of buyers with the same valuation distribution. The feasible sets of the duplicated environment are defined such that (i) at most one buyer from each pair can be allocated, and (ii) the set of winners (who receive items) when naturally interpreted as a set of buyers from the original instance, is a feasible set in that instance.

The following \Cref{thm:BOM_VCGn:quasi-mhr} gives a (nontrivial) {\em approximation-preserving} generalization of \cite[Theorem~4.2]{HR09} from MHR buyers to quasi-MHR buyers.

\begin{theorem}[{\BOM} vs.\ $\VCGn$]
\label{thm:BOM_VCGn:quasi-mhr}
% \begin{flushleft}
For quasi-MHR buyers in a downward-closed setting,
{\nDuplicateVCGAuction} achieves a tight $\calC_{\BOM}^{\VCGn} = \frac{1}{3}$-approximation to {\BayesianOptimalMechanism} (without duplicates).
% \footnote{
(The upper bound holds even for asymmetric MHR buyers.)
% }
% \end{flushleft}
\end{theorem}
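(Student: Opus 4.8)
The plan is to obtain the $\tfrac13$ lower bound from a per-pair analysis of $\VCGn$ via revenue equivalence, and to inherit the matching upper bound from the MHR case. Since $\mhrdistspace\subsetneq\quasimhrdistspace$ (\Cref{prop:hierarchy}), the tight $\tfrac13$-impossibility instance for asymmetric MHR buyers underlying \Cref{wisdom:BOM_VCGn} (i.e., \cite[Theorem~4.2]{HR09}) is already a valid quasi-MHR impossibility instance, so $\calC_{\BOM}^{\VCGn}\leq\tfrac13$ needs no new construction. One may equally give a self-contained instance in the style of \Cref{exp:BOM_BMR:quasi-mhr}: one deterministic ``big'' buyer competing against $m$ i.i.d.\ exponential ``small'' buyers in a suitable downward-closed structure, where the constant $3 = \expect{\val_{(1:2)}}/\expect{\virtualval(\val_{(1:2)})}$ for the exponential distribution (see the tightness remark after \Cref{lem:quasi-mhr:structural results}) emerges from concentration of the small buyers' aggregate welfare against their aggregate VCG payments.

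For the lower bound, fix asymmetric quasi-MHR buyers $\priors=\{\prior_i\}_{i\in[n]}$ and a downward-closed feasibility system; in the duplicated instance each buyer $i$ is replaced by an interchangeable pair $\{i,i'\}$ with common distribution $\prior_i$. First I would note that the welfare-maximizing VCG allocation is monotone in each coordinate (an exchange argument: raising one buyer's value cannot evict it from an optimal set), so by interchangeability within a pair the event ``pair $i$ receives a slot'' depends only on the other pairs' values and on $\val_{(1:2)}^{(i)}\triangleq\max\{\val_i,\val_{i'}\}$, and is of the threshold form $W_i\triangleq\{\val_{(1:2)}^{(i)}\geq\threshold_i\}$ for some $\threshold_i$ independent of $(\val_i,\val_{i'})$; moreover the allocated member of a winning pair is the higher-value one. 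Applying revenue equivalence (\Cref{prop:revenue_equivalence}) to the truthful mechanism $\VCGn$ and grouping the two copies of buyer $i$, the revenue contribution of pair $i$ equals
\begin{align*}
    \expect[\vals_{-i},\vals'_{-i}]{\ \expect[\val_i,\val_{i'}\sim\prior_i]{\virtualval_i(\val_{(1:2)}^{(i)})\cdot\indicator{\val_{(1:2)}^{(i)}\geq\threshold_i}}\ }.
\end{align*}
(For quasi-MHR $\prior_i$ with probability masses one uses the ironed-virtual-value/revenue-curve form of revenue equivalence, or approximates by atomless distributions; this does not affect the bound.)

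Next I would invoke Property~\ref{property:quasi-mhr:revenue approximate welfare duplicating} of \Cref{lem:quasi-mhr:structural results} with threshold $\threshold_i$, giving $\expect{\virtualval_i(\val_{(1:2)}^{(i)})\given\val_{(1:2)}^{(i)}\geq\threshold_i}\geq\tfrac13\expect{\val_{(1:2)}^{(i)}\given\val_{(1:2)}^{(i)}\geq\threshold_i}$; multiplying by $\prob{\val_{(1:2)}^{(i)}\geq\threshold_i}$ and taking the outer expectation shows that the revenue contribution of pair $i$ is at least $\tfrac13$ of its welfare contribution $\expect{\val_{(1:2)}^{(i)}\cdot\indicator{W_i}}$. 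Summing over $i$ yields $\VCGn\geq\tfrac13\cdot\mathrm{SW}(\VCGn)$, where $\mathrm{SW}(\VCGn)$ is the expected social welfare of $\VCGn$. Finally $\mathrm{SW}(\VCGn)$ equals the optimal welfare of the duplicated instance, which is at least the optimal welfare of the original instance (any feasible set there, realized by the original copies, is feasible and has the same welfare in the duplicated instance), which in turn is at least $\BOM(\priors)$ since revenue never exceeds welfare. Chaining these gives $\VCGn\geq\tfrac13\,\BOM(\priors)$.

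The hard part will not be the arithmetic but the careful bookkeeping for two points: (i) justifying the single-threshold structure of $\VCGn$ in a fully general downward-closed environment with ties and possible mass points of the $\prior_i$, so that revenue equivalence decomposes cleanly over pairs; and (ii) applying the conditional form of Property~\ref{property:quasi-mhr:revenue approximate welfare duplicating} with exactly the conditioning event $\{\val_{(1:2)}^{(i)}\geq\threshold_i\}$. Note that, unlike the order-statistic arguments elsewhere in the paper, this proof does not use \Cref{thm:order:quasi-mhr}: each pair is handled in isolation using only the single-distribution structural inequality, which is why the $\tfrac13$ bound is preserved verbatim from MHR to quasi-MHR.
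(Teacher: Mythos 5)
Your proof is correct and follows essentially the same route as the paper: the lower bound argues per pair via the VCG threshold structure and Property~\ref{property:quasi-mhr:revenue approximate welfare duplicating} of \Cref{lem:quasi-mhr:structural results}, then chains $\VCGn$-revenue $\geq \tfrac13\,\mathrm{SW}(\VCGn) \geq$ optimal welfare of the original instance $\geq \BOM$, and the upper bound is inherited from the MHR impossibility instance of \cite[Example~4.3]{HR09} since $\mhrdistspace\subsetneq\quasimhrdistspace$. The additional bookkeeping you flag (monotonicity of the VCG allocation justifying the single-threshold form, and applying the structural lemma conditionally on the winning event) is implicitly present in the paper's proof as well and does not represent a divergence.
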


The approximation guarantee in \Cref{thm:BOM_VCGn:quasi-mhr} holds even if we compare the expected revenue of {\nDuplicateVCGAuction} with the optimal expected social surplus, i.e., the first best benchmark.
Since it is an approximation preserving generalization of \cite{HR09}, the upper bound for MHR buyers naturally implies the upper bound for quasi-MHR buyers. Meanwhile, similar to the high-level proof idea in \cite{HR09}, the lower bound argument heavily relies on Property~\ref{property:quasi-mhr:revenue approximate welfare duplicating} of \Cref{lem:quasi-mhr:structural results} established in \Cref{sec:structural}. 

\begin{proof}[Proof of \Cref{thm:BOM_VCGn:quasi-mhr}]
The upper-bound part follows from \cite[Example~4.3]{HR09}. For the lower bound part, we prove a stronger claim that the expected revenue of {\nDuplicateVCGAuction} is a $\frac{1}{3}$ approximation to the optimal expected social surplus, i.e., the first best benchmark. To show this claim, consider an arbitrary buyer $i$ and her duplicate. Denote the maximum value between buyer $i$ and her duplicate by $\val_{(1:2)}^{(i)}$. Fix any valuation profile of all other buyers and their duplicates. In {\nDuplicateVCGAuction}, there exists a threshold $\threshold$ such that the buyer $i$ or her duplicate receives an item if and only if $\val_{(1:2)}^{(i)} \geq \threshold$. Invoking \Cref{prop:revenue_equivalence} and Property~\ref{property:quasi-mhr:revenue approximate welfare duplicating}, the conditional expected revenue contribution from buyer $i$ and her duplicate is a $\frac{1}{3}$-approximation to the conditional expected social welfare contribution from buyer $i$ and her duplicate. Taking expectation over threshold $\threshold$ (i.e., other buyers and their duplicates' valuation profile) and summing over all buyers and their duplicates, we know that the expected revenue of {\nDuplicateVCGAuction} is a $\frac{1}{3}$-approximation to the expected social welfare of {\nDuplicateVCGAuction}. Since {\nDuplicateVCGAuction} maximizes the expected social welfare in the duplicated environment, its expected social welfare is at least the optimal social welfare for the original instance (without duplicates).
\end{proof}

% \newpage

\section{Revenue Approximation by a Single Sample}
\label{sec:single-sample}

In this section, we study the revenue maximization problem with a single sample access. A rapidly growing literature \cite{DRY15,FILS15,HMR18,BGMM18,DZ20,FHL21,ABB22} has been established in this direction. An important motivation in this area is to identify and characterize the ``minimum information'' needed for achieving a meaningful revenue approximation guarantee.

We focus on the single-item setting for multiple buyers. In this model, the seller does not know the buyers' valuation distributions $\priors$, but has a single sample access of the first order statistic of the valuation profile.
% \footnote{Recall that given a realized valuation profile $\vals = \{\vali\}_{i\in[n]}$ drawn from joint distribution $\priors=\{\prior_i\}_{i\in[n]}$ independently, the first order statistic $\val_{(1:n)} \triangleq \max_{i\in[n]} \vali$ is defined as the maximum value, and follows distribution $\prior_{(1:n)}$ with cumulative density function $\cdf_{(1:n)}(\val) \triangleq \prod_{i\in[n]} \cdf_i(\val)$.} 
Specifically, while neither the valuation profile $\vals = \{\vali\}_{i\in[n]}$ nor the valuation distributions $\priors = \{\priori\}_{i\in[n]}$ is known by the seller, she has a single sample $\firstsample\sim\firstprior$ drawn from the (unknown) first order statistic (i.e., highest value) distribution $\firstprior$ and is independent with the realized valuation profiles $\{\vali\}_{i\in[n]}$. 

% \begin{remark}
We remark that the single sample access of the first order statistic can be viewed as the ``minimal prior information'', since no bounded approximation guarantee can be achieved when the seller has no information about distributions $\priors$, or has a sample access of $k$-th order statistic with $k\geq 2$. 
% \end{remark}

Motivated by the approximate optimality of {\BayesianOptimalUniformPricing} in the Bayesian mechanism design setting (\Cref{thm:BOM_BOUP}), we are interested in \emph{\IdentityPricing} that posts the realized sample $\firstsample\sim\firstprior$ from the first order statistic distribution as a take-it-or-leave-it price to all buyers. We present revenue approximation of {\IdentityPricing} against both {\BayesianOptimalUniformReserve} and {\BayesianOptimalMechanism}, both of which heavily rely on the knowledge of buyers distributions. All results in this section are summarized in \Cref{tab:IP vs BOUR} and \Cref{tab:IP vs BOM}.

\begin{table}[ht]
    {\centering\small
    \begin{tabular}{|l|>{\arraybackslash}p{4.45cm}|>{\arraybackslash}p{4.45cm}|>{\arraybackslash}p{4.15cm}|}
        \hline
        \rule{0pt}{13pt} & asymmetric multi-buyer & symmetric multi-buyer & single-buyer \\ [2pt]
        \hline
        \rule{0pt}{13pt}$\regulardistspace$ & \multicolumn{3}{c|}{\multirow{2}{*}{\rule{0pt}{15pt}$\mathrm{TB} = \frac{1}{2}$ \cite[Thm~3.2]{DRY15} {\bf [\Cref{thm:BOUR_IP}]}}} \\ [2pt]
        \cline{1-1}
        \rule{0pt}{13pt}$\quasiregulardistspace$ & \multicolumn{3}{c|}{} \\ [2pt]
        \hline
        \rule{0pt}{13pt}$\generaldistspace$ & \multicolumn{3}{c|}{no revenue guarantee} \\ [2pt]
        \hline
    \end{tabular}
    \par}
    \caption{{\IdentityPricing} vs.\ {\BayesianOptimalUniformReserve}, the revenue guarantees in various single-item settings. We mark our results in {\bf bold}.}
    \label{tab:IP vs BOUR}
    \vspace{.1in}
    {\centering\small
    \begin{tabular}{|l|>{\arraybackslash}p{4.45cm}|>{\arraybackslash}p{4.45cm}|>{\arraybackslash}p{4.15cm}|}
        \hline
        \rule{0pt}{13pt} & asymmetric multi-buyer & symmetric multi-buyer & single-buyer \\ [2pt]
        \hline
        \rule{0pt}{13pt}\multirow{2}{*}{\rule{0pt}{15pt}$\regulardistspace$} & $\mathrm{LB} \gtrsim 0.1908$ {\bf [\Cref{thm:BOM_IP:asymmetric}]} & \multirow{2}{*}{\rule{0pt}{15pt}$\mathrm{TB} = \frac{1}{2}$ {\bf [\Cref{thm:BOM_IP:iid:regular}]}} & \multirow{2}{*}{\rule{0pt}{15pt}$\mathrm{TB} = \frac{1}{2}$ \cite[Thm~3.2]{DRY15}} \\ [2pt]
        \rule{0pt}{13pt} & $\mathrm{UB} \lesssim 0.3750$ {\bf [\Cref{exp:single_sample:asymmetric:MA}]} & & \\ [2pt]
        \hline
        \rule{0pt}{13pt}\multirow{2}{*}{\rule{0pt}{15pt}$\quasiregulardistspace$} & $\mathrm{LB} \gtrsim 0.1385$ {\bf [\Cref{thm:BOM_IP:asymmetric}]} & $\mathrm{LB} \gtrsim 0.1385$ {\bf [implication]} & \multirow{2}{*}{\rule{0pt}{15pt}$\mathrm{TB} = \frac{1}{2}$ {\bf [implication]}} \\ [2pt]
        \rule{0pt}{13pt} & $\mathrm{UB} \lesssim 0.3750$ {\bf [implication]} & $\mathrm{UB} \lesssim 0.3978$ {\bf [\Cref{exp:single_sample:iid:MA}]} & \\ [2pt]
        \hline
        \rule{0pt}{13pt}$\generaldistspace$ & \multicolumn{3}{c|}{no revenue guarantee} \\ [2pt]
        \hline
    \end{tabular}
    \par}
    \caption{{\IdentityPricing} vs.\ {\BayesianOptimalMechanism}, the revenue guarantees in various single-item settings. We mark our results in {\bf bold}.}
    \label{tab:IP vs BOM}
\end{table}

\xhdr{Revenue Approximation against $\BOUR$.}
We first compare {\IdentityPricing} and {\BayesianOptimalUniformReserve} in \Cref{thm:BOUR_IP}. We note that the revenue approximation $\calC_{\BOUR}^{\IP} = \frac{1}{2}$ of {\IdentityPricing} stated in this theorem is optimal among all deterministic mechanisms even for instances with a single regular buyer \cite[Table~1]{ABB22}. 

\begin{theorem}[$\BOUR$ vs.\ {\IP}]
\label{thm:BOUR_IP}
\begin{flushleft}
For asymmetric quasi-regular buyers,
% Given $n \geq 1$ many independent (but possibly asymmetric) quasi-regular buyers,
{\IdentityPricing} achieves a tight $\calC_{\BOUR}^{\IP} = \frac{1}{2}$-approximation to {\BayesianOptimalUniformReserve}.
% (This guarantee is the best possible among deterministic $1$-sample mechanisms and is tight even for a single regular buyer.)
\end{flushleft}
\end{theorem}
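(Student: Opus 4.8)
The plan is to reduce the comparison to a Bulow--Klemperer-type inequality on two i.i.d.\ copies of the first-order-statistic distribution. Write $H \eqdef \prior_{(1:n)}$, which is quasi-regular by \Cref{thm:order:quasi-regular}. First I would record the identity linking {\IdentityPricing} to a plain {\SecondPriceAuction}: let $s,w$ be i.i.d.\ samples from $H$. Since {\IdentityPricing} posts a fresh sample $s$ (independent of the realized values) as a uniform price and sells whenever the realized highest value --- distributed as $w$ --- clears it, $\IP(\priors)=\expect{s\cdot\indicator{w\ge s}}=\expect{s\cdot\indicator{s\le w}}$. On the other hand, relabeling the two i.i.d.\ buyers, $\SPA(H\otimes H)=\expect{\min\{s,w\}}=\expect{s\cdot\indicator{s\le w}}+\expect{s\cdot\indicator{s<w}}\le 2\,\expect{s\cdot\indicator{s\le w}}=2\,\IP(\priors)$. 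Hence $\IP(\priors)\ge\frac12\,\SPA(H\otimes H)$, where $\SPA(H\otimes H)=\int_0^\infty(1-H(v))^2\,\dd v$ is the second-price revenue of two i.i.d.\ buyers with distribution $H$ and no reserve.

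It then remains to prove $\SPA(H\otimes H)\ge\BOUR(\priors)$, a nontrivial generalization of \cite[Lemma~D.1]{HR14} (the case $n=1$, where $H=\prior_1$ and $\BOUR(\prior_1)=\BOM(\prior_1)$). Using the closed form $\BOUR(\priors)=\reserve(1-H(\reserve))+\int_\reserve^\infty(1-\cdf_{(2:n)}(v))\,\dd v$ from the proof of \Cref{thm:BOUR_SPAone} --- valid for every uniform reserve $\reserve$, hence for the optimal one --- I would write $\SPA(H\otimes H)-\BOUR(\priors)$ as $\big(\int_0^\reserve(1-H(v))^2\,\dd v-\reserve(1-H(\reserve))\big)+\int_\reserve^\infty\big[(1-H(v))^2-(1-\cdf_{(2:n)}(v))\big]\dd v$ and show both summands are non-negative. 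For the tail integral, the pointwise bound $(1-H(v))^2\ge 1-\cdf_{(2:n)}(v)$ holds for \emph{arbitrary} independent buyers: writing $q_i=\prob{\vali>v}$, it is equivalent to $\prob{\text{exactly one }\vali>v}\ge\prob{\text{some }\vali>v}\cdot\prob{\text{no }\vali>v}$, which after dividing by $\prod_i(1-q_i)$ becomes $\sum_i\frac{q_i}{1-q_i}\ge 1-\prod_i(1-q_i)$ and follows from $\frac{q_i}{1-q_i}\ge q_i$ and the union bound. For the reserve summand, quasi-regularity of $H$ enters through Condition~\ref{condition:quasi-regular:cdf} of \Cref{prop:q-regular equivalent definition} with anchor $\reserve$: it gives $1-H(v)\ge\frac{c\reserve}{v+c\reserve}$ on $[0,\reserve]$ with $c=\frac{1-H(\reserve)}{H(\reserve)}$, and integrating the square yields exactly $\int_0^\reserve(1-H(v))^2\,\dd v\ge\frac{c\reserve}{1+c}=(1-H(\reserve))\reserve$. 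Together these give $\SPA(H\otimes H)\ge\BOUR(\priors)$, and therefore $\IP(\priors)\ge\frac12\,\BOUR(\priors)$.

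For tightness I would take the single regular buyer $n=1$ with $\cdf(v)=\frac{v}{v+1}$ on $[0,\infty)$: then $\BOUR=\BOM=1$ while $\IP=\int_0^1\revcurve(\quant)\,\dd\quant=\int_0^1(1-\quant)\,\dd\quant=\frac12$, so the factor $\frac12$ is attained --- already within $\regulardistspace$, matching \cite[Table~1]{ABB22}. (The further claim in \Cref{result:single-sample} that $\frac12$ is optimal among \emph{deterministic} mechanisms is the single-buyer impossibility of \cite{DRY15,HMR18,ABB22}, inherited here.)

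The hard part is the tail integral. The crude estimate $1-\cdf_{(2:n)}(v)\le 1-\cdf_{(1:n)}(v)=1-H(v)$ is far too weak: it makes the difference $\SPA(H\otimes H)-\BOUR(\priors)$ vacuous whenever $H=\prior_{(1:n)}$ has a heavy tail (e.g.\ of equal-revenue type), even though the true ratio comfortably exceeds $\frac12$ there. What rescues the argument is that {\IdentityPricing}'s revenue ``sees'' the \emph{squared} tail $(1-H)^2$ of the first-order statistic rather than its linear tail, and that this squared tail pointwise dominates the survival function of the second-order statistic for \emph{every} product distribution; establishing that quadratic bound --- and then needing quasi-regularity only for the lower-order ``reserve'' correction --- is where the real content sits.
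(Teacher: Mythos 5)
Your argument is correct, and it takes a genuinely different route from the paper's.

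The paper decomposes $\IP$ directly via the revenue curve $\revcurve_{(1:n)}$ of the top order statistic, upper bounds $\BOUR$ using the logarithmic second-order-statistic bound from \cite{JLTX20} (i.e., $\cdf_{(2:n)}(\val) \geq \cdf_{(1:n)}(\val)(1 - \ln \cdf_{(1:n)}(\val))$), and then verifies two integral inequalities --- one by Condition~\ref{condition:quasi-regular:revenue curve} of \Cref{prop:q-regular equivalent definition} and one by a differential-element (``strip'') argument matching pieces of area under $\revcurve_{(1:n)}$ against pieces of the tail integral. You instead factor through the intermediate quantity $\SPA(H \otimes H)$ where $H = \prior_{(1:n)}$: the Bulow--Klemperer-style inequality $\IP(\priors) \geq \frac12 \SPA(H\otimes H)$ is a one-line symmetry argument, and $\SPA(H\otimes H) \geq \BOUR(\priors)$ splits into a tail part handled by the elementary pointwise bound $(1-H(\val))^2 \geq 1 - \cdf_{(2:n)}(\val)$ (valid for arbitrary independent buyers, via $\sum_i \frac{q_i}{1-q_i}\geq\sum_i q_i\geq 1-\prod_i(1-q_i)$) and a below-reserve part handled by Condition~\ref{condition:quasi-regular:cdf} of \Cref{prop:q-regular equivalent definition}. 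The squared-tail bound you use is weaker than the logarithmic bound from \cite{JLTX20} (e.g.\ $2/e > (1/e)(2 - 1/e)$ at $\cdf_{(1:n)} = 1/e$), but it is exactly what $\SPA(H\otimes H)$ supplies and is sufficient here; in exchange you avoid the differential-element machinery entirely. Your approach also recovers \Cref{thm:SPA_IP} for free by setting $\reserve = 0$, where the reserve summand vanishes and quasi-regularity is not needed --- the paper reaches that corollary by the same observation about its inequality~\eqref{eq:IP approximates BOUR part II}. Both proofs use the same tight instance ($n=1$, $\cdf(\val) = \val/(\val+1)$).

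One cosmetic point: the BK-type step $\IP \geq \frac12 \SPA(H\otimes H)$ is in fact an equality when $H$ is atomless, with strict inequality only when ties occur; the paper's \Cref{lem:IP revneue expression} carries the same point-mass caveat, so nothing is lost either way.
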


Here we sketch the high-level proof idea behind \Cref{thm:BOUR_IP}. In the first step, we characterize the expected revenue of both  {\BayesianOptimalUniformReserve} and {\IdentityPricing} as functions of the first order statistic distribution $\prior_{(1:n)}$ (\Cref{lem:BOUR revenue expression,fact:second order statistic lower bound,lem:IP revneue expression}). Equipped with those characterizations, we then divide the expected revenue of each mechanism into two pieces. Utilizing the Condition~\ref{condition:quasi-regular:revenue curve} of quasi-regularity established in \Cref{prop:q-regular equivalent definition} and the quasi-regularity guarantee of order statistic in \Cref{thm:order:quasi-regular}, we argue that for each piece, {\IdentityPricing} is a 2-approximation to {\BayesianOptimalUniformReserve}. Below we first list the technical lemmas used in the first step and then provide the formal proof of \Cref{thm:BOUR_IP}.

\begin{lemma}[\cite{CGM15}]
\label{lem:BOUR revenue expression}
    For $n\geq1$ asymmetric buyers and any price $\price\geq 0$, the expected revenue of {\SecondPriceAuction} with uniform reserve $\reserve = \price$ is 
    \begin{align*}
        \SPA_{\reserve = \price} = \price \cdot (1 - \cdf_{(1:n)}(\price)) + \displaystyle\int_{\price}^\infty (1 - \cdf_{(2:n)}(\val))\cdot \d\val
    \end{align*}
    where $\cdf_{(1:n)}$ and $\cdf_{(2:n)}$ are the distributions of the first and second order statistic, respectively.
\end{lemma}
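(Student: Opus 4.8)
The plan is to compute the expected revenue of {\SecondPriceAuction} with uniform reserve $\reserve = \price$ directly from its allocation/payment rule, and then rewrite the two resulting terms via tail (layer-cake) integrals. Recall the outcome: the item is sold if and only if the highest value clears the reserve, i.e.\ $\val_{(1:n)} \geq \price$, in which case the highest-value buyer wins and pays $\max\{\price, \val_{(2:n)}\}$ (with an arbitrary but fixed tie-breaking rule, which does not affect the revenue). Hence the expected revenue equals $\expect{\max\{\price, \val_{(2:n)}\}\cdot \indicator{\val_{(1:n)} \geq \price}}$, and the first step is to split this expectation according to whether $\val_{(2:n)} < \price$ or $\val_{(2:n)} \geq \price$.

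On the event $\{\val_{(1:n)} \geq \price > \val_{(2:n)}\}$ the payment is exactly $\price$, contributing $\price\cdot\bigl(\cdf_{(2:n)}(\price) - \cdf_{(1:n)}(\price)\bigr)$, where I use $\prob{\val_{(1:n)} \geq \price > \val_{(2:n)}} = \prob{\val_{(1:n)} \geq \price} - \prob{\val_{(2:n)} \geq \price}$ since $\{\val_{(2:n)} \geq \price\} \subseteq \{\val_{(1:n)} \geq \price\}$. On the complementary event $\{\val_{(2:n)} \geq \price\}$ the payment is $\val_{(2:n)}$, contributing $\expect{\val_{(2:n)}\cdot\indicator{\val_{(2:n)} \geq \price}}$. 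For this second contribution I would invoke the standard identity: for any nonnegative random variable $X$, $\expect{X\cdot\indicator{X \geq \price}} = \price\cdot\prob{X \geq \price} + \int_{\price}^{\infty}\prob{X \geq \val}\cdot\d\val$ (which follows from Tonelli's theorem, writing $X\cdot\indicator{X\geq\price} = \int_0^{\infty}\indicator{X\geq\price}\cdot\indicator{t \leq X}\cdot\d t$ and integrating). Taking $X = \val_{(2:n)}$ gives $\expect{\val_{(2:n)}\cdot\indicator{\val_{(2:n)} \geq \price}} = \price\cdot(1 - \cdf_{(2:n)}(\price)) + \int_{\price}^{\infty}(1 - \cdf_{(2:n)}(\val))\cdot\d\val$.

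Adding the two contributions, the terms $\price\cdot\cdf_{(2:n)}(\price)$ and $-\price\cdot\cdf_{(2:n)}(\price)$ cancel, leaving $\price\cdot(1 - \cdf_{(1:n)}(\price)) + \int_{\price}^{\infty}(1 - \cdf_{(2:n)}(\val))\cdot\d\val$, which is the claimed expression. There is no real obstacle here; the only points that need care are (i) correctly identifying the winner's payment as $\max\{\price,\val_{(2:n)}\}$ on the sale event and noting that tie-breaking is revenue-irrelevant, and (ii) justifying the interchange of expectation and integral in the layer-cake step, which is legitimate because all integrands are nonnegative (and the stated formula is finite precisely when $\expect{\val_{(2:n)}} < \infty$). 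As a sanity check, setting $\price = 0$ recovers the familiar fact that the reserve-free {\SecondPriceAuction} has expected revenue $\expect{\val_{(2:n)}}$, and letting $\price\to\infty$ drives the revenue to $0$, both consistent with the formula.
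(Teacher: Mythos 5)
Your proof is correct. The paper does not actually prove this lemma---it cites it to Cesa-Bianchi, Gentile, and Mansour \cite{CGM15}---so there is no in-paper argument to compare against, but your derivation stands on its own: the decomposition of $\max\{\price,\val_{(2:n)}\}\cdot\indicator{\val_{(1:n)}\geq\price}$ by whether $\val_{(2:n)}\geq\price$, the observation that $\{\val_{(2:n)}\geq\price\}\subseteq\{\val_{(1:n)}\geq\price\}$, and the layer-cake identity for $\expect{\val_{(2:n)}\indicator{\val_{(2:n)}\geq\price}}$ combine exactly as you say, with the $\price\cdot\cdf_{(2:n)}(\price)$ terms cancelling. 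Your sanity checks at $\price=0$ and $\price\to\infty$ are also apt, and your implicit treatment of $n=1$ (where $\val_{(2:1)}\equiv 0$ makes the integral term vanish) is consistent with the formula.
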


\begin{fact}[\cite{JLTX20}]
\label{fact:second order statistic lower bound}
    Given $n\geq 1$ many independent (but possibly asymmetric) distributions~$\priors = \{\prior_i\}_{i\in[n]}$, for every value $\val\in\reals_+$,
    \begin{align*}
        \cdf_{(2:n)}(\val) \geq \cdf_{(1:n)}(\val) \cdot (1 - \ln \cdf_{(1:n)}(\val))
    \end{align*}
    where $\cdf_{(1:n)}$ and $\cdf_{(2:n)}$ are the distributions of the first and second order statistic, respectively.
\end{fact}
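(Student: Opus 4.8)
The plan is to fix the value $\val$ and pass to the marginal CDF values $p_i \triangleq \cdf_i(\val) \in [0,1]$, exploiting the independence of the $\prior_i$ to get a clean combinatorial formula for $\cdf_{(2:n)}(\val)$. If some $p_i = 0$ then $\cdf_{(1:n)}(\val) = \prod_{i \in [n]} p_i = 0$ and, under the usual convention $x \ln x \to 0$ as $x \to 0^+$, the asserted bound is just $\cdf_{(2:n)}(\val) \geq 0$, which is immediate; so I would assume henceforth that $p_i > 0$ for every $i \in [n]$. Setting $P \triangleq \cdf_{(1:n)}(\val) = \prod_{i \in [n]} p_i$, the event $\{\val_{(2:n)} \leq \val\}$ is exactly the event that at most one of the $n$ independent samples lies strictly above $\val$, which yields
\begin{align*}
    \cdf_{(2:n)}(\val) \;=\; P + \sum_{j \in [n]} (1 - p_j) \prod_{i \neq j} p_i \;=\; P \cdot \Bigl( 1 + \sum_{j \in [n]} \tfrac{1 - p_j}{p_j} \Bigr).
\end{align*}

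Next I would rewrite the target. Since $\ln P = \sum_{j \in [n]} \ln p_j$, dividing the desired inequality $\cdf_{(2:n)}(\val) \geq P \cdot (1 - \ln P)$ through by $P > 0$ reduces it to
\begin{align*}
    \sum_{j \in [n]} \frac{1 - p_j}{p_j} \;\geq\; \sum_{j \in [n]} \bigl( - \ln p_j \bigr),
\end{align*}
which I would establish term by term: for each $j \in [n]$, applying the elementary inequality $\ln x \leq x - 1$ (valid for all $x > 0$) with $x = 1 / p_j$ gives $- \ln p_j = \ln(1/p_j) \leq 1/p_j - 1 = (1 - p_j)/p_j$. Summing over $j$ finishes the proof.

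I do not expect a real obstacle here: the content is the factorization of $\cdf_{(2:n)}(\val)$ afforded by independence, together with the coordinatewise reduction to the one-line inequality $\ln x \le x - 1$. The only point requiring a little care is the degenerate case $\cdf_{(1:n)}(\val) = 0$, which should be dispatched up front via the limiting convention for $x \ln x$ so that the right-hand side is well-defined (equivalently, one may prove the bound for all $\val$ with $\cdf_{(1:n)}(\val) > 0$ and then take limits, using monotonicity/continuity of both sides in $\val$).
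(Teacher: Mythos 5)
Your proof is correct. The paper cites this fact from \cite{JLTX20} without reproducing a proof, so there is nothing internal to compare against; but your argument is the natural one and is complete: independence gives the factorization $\cdf_{(2:n)}(\val) = P + \sum_j (1-p_j)\prod_{i\neq j} p_i = P\bigl(1 + \sum_j (1-p_j)/p_j\bigr)$ where $P = \prod_i p_i$, dividing through by $P>0$ reduces the claim to $\sum_j (1-p_j)/p_j \geq -\sum_j \ln p_j$, and this holds coordinatewise by $\ln x \leq x-1$ at $x = 1/p_j$. Your handling of the boundary case $P = 0$ (where $P(1-\ln P)\to 0$) is also the right thing to say, since the inequality degenerates to $\cdf_{(2:n)}(\val)\geq 0$.
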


\begin{lemma}
\label{lem:IP revneue expression}
    For $n\geq1$ asymmetric buyers, the expected revenue of {\IdentityPricing} is 
    \begin{align*}
        \IP \geq \displaystyle\int_{0}^1 \revcurve_{(1:n)}(\quant)\cdot \d\quant
    \end{align*}
    where $\revcurve_{(1:n)}$ is the revenue curve induced by the first order statistic distribution $\prior_{(1:n)}$, i.e., $\revcurve_{(1:n)}(\quant) \triangleq \quant\cdot \cdf_{(1:n)}^{-1}(\quant)$ for every quantile $\quant\in[0, 1]$. When there is no point mass in distribution $\prior_{(1:n)}$, the above inequality holds with equality.
\end{lemma}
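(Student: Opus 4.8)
The plan is to compute the expected revenue of \IdentityPricing{} in closed form as a functional of the first-order-statistic distribution $\prior_{(1:n)}$, and then match it against $\int_0^1 \revcurve_{(1:n)}(\quant)\,\d\quant$ by a change of variables from quantile space to value space, handling atoms separately. First I would use that the posted price in \IdentityPricing{} is the sample $\firstsample \sim \prior_{(1:n)}$, which is independent of the realized valuation profile $\{\vali\}_{i\in[n]}$, while the item is sold exactly when $\max_{i\in[n]}\vali \geq \firstsample$. Since $\max_{i\in[n]}\vali \sim \prior_{(1:n)}$ as well, this gives
\begin{align*}
    \IP \;=\; \expect[s,\,w\,\overset{\text{i.i.d.}}{\sim}\,\prior_{(1:n)}]{\,s\cdot\indicator{w \geq s}\,} \;=\; \expect[s\sim\prior_{(1:n)}]{\,s\cdot\bigl(1-\cdf_{(1:n)}(s^-)\bigr)\,},
\end{align*}
where $\cdf_{(1:n)}(s^-)$ denotes the left limit, so that $1-\cdf_{(1:n)}(s^-)=\prob[w\sim\prior_{(1:n)}]{w\geq s}$ correctly includes a possible atom at the sampled price $s$.

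Next I would interpret $\int_0^1 \revcurve_{(1:n)}(\quant)\,\d\quant$ as averaging, over a uniform random quantile, the revenue of posting the corresponding price to a single buyer with value $\sim\prior_{(1:n)}$. As a candidate price $p$ sweeps the support of $\prior_{(1:n)}$, its sale probability $1-\cdf_{(1:n)}(p^-)$ decreases from (essentially) $1$ down to $0$, continuously on the part of the support where $\cdf_{(1:n)}$ is continuous and with a jump across each atom. On the continuous part, the substitution $\quant = 1-\cdf_{(1:n)}(p)$ (so the price at quantile $\quant$ is $p$ and $\revcurve_{(1:n)}(\quant)=\quant\cdot p$) identifies the contribution to $\int_0^1 \revcurve_{(1:n)}(\quant)\,\d\quant$ with the contribution to $\expect[s\sim\prior_{(1:n)}]{s(1-\cdf_{(1:n)}(s))}$, hence to $\IP$. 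For an atom located at a value $s_0$, the revenue curve restricted to the quantile interval $\bigl[\,1-\cdf_{(1:n)}(s_0),\,1-\cdf_{(1:n)}(s_0^-)\,\bigr]$ is the linear map $\quant\mapsto s_0\quant$; writing $a\triangleq 1-\cdf_{(1:n)}(s_0)$, $b\triangleq 1-\cdf_{(1:n)}(s_0^-)$ and $\mu\triangleq b-a$ for the atom mass, this atom contributes $s_0\cdot\tfrac{b^2-a^2}{2}=\mu\,s_0\cdot\tfrac{a+b}{2}$ to the integral but $\mu\,s_0\cdot b$ to $\IP$. Since $a\le b$, we have $\mu\,s_0\,\tfrac{a+b}{2}\le \mu\,s_0\,b$, with equality iff the atom is trivial. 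Summing the continuous part and all atoms yields $\IP\ge\int_0^1 \revcurve_{(1:n)}(\quant)\,\d\quant$, with equality precisely when $\prior_{(1:n)}$ has no point mass.

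The routine parts are the change of variables and the bookkeeping of contributions; the one delicate point — and the source of the inequality rather than equality — is the treatment of atoms, both in writing the sale probability at the sampled price as $1-\cdf_{(1:n)}(s^-)$ and in identifying the quantile interval over which $\revcurve_{(1:n)}$ is linear. This is exactly the elementary phenomenon that flattening an atom into a uniform slab of the same mass strictly lowers its contribution to the revenue-curve integral. I note that this argument is entirely distribution-free: quasi-regularity plays no role in this lemma and enters only when the resulting closed form for $\IP$ is subsequently compared with {\BayesianOptimalUniformReserve} (via \Cref{thm:order:quasi-regular} and Condition~\ref{condition:quasi-regular:revenue curve} of \Cref{prop:q-regular equivalent definition}).
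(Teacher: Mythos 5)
Your proposal is correct and follows essentially the same quantile-space route as the paper: sample $\firstsample\sim\prior_{(1:n)}$ corresponds to a uniform quantile, the sale event is $\max_i\vali\geq\firstsample$, and the expected revenue is compared to $\int_0^1 \revcurve_{(1:n)}(\quant)\,\d\quant$ via the revenue-curve identification, with atoms producing the strict inequality. The paper states this in three sentences and leaves the atom bookkeeping implicit; your explicit computation (the $\mu s_0 b$ vs.\ $\mu s_0 \tfrac{a+b}{2}$ comparison at each atom, and the matching of contributions on the continuous part via the substitution $\quant=1-\cdf_{(1:n)}(p)$) is a more careful rendering of the same argument and is the content the paper is gesturing at.
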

\begin{proof}
    Consider {\IdentityPricing} in the quantile space. Recall that drawing a sample from the a given distribution (e.g., the first order statistic distribution $\prior_{(1:n)}$) is equivalent to drawn a uniform quantile. Moreover, in {\IdentityPricing}, the purchase happens if and only if the first order statistic from the realized valuation profile is at least than the realized sample. Thus, by the definition of revenue curve $\revcurve_{(1:n)}$, the expected revenue of {\IdentityPricing} is at least (resp.\ exactly equal to) $\expect[\quant]{\revcurve_{(1:n)}(\quant)}=\int_{0}^1 \revcurve_{(1:n)}(\quant)\cdot \d\quant$ (resp.\ if there is no point mass in distribution $\prior_{(1:n)}$).
\end{proof}

Now we are ready to prove \Cref{thm:BOUR_IP}.

\begin{proof}[Proof of \Cref{thm:BOUR_IP}]
    In this analysis, with slight abuse of notations, we shorthand $\prior_{(1)}$, $\revcurve_{(1)}$ for $\prior_{(1:n)}$, $\revcurve_{(1:n)}$, respectively.
    Let $\reserve$ be the uniform reserve used in {\BayesianOptimalUniformReserve}. Invoking \Cref{lem:BOUR revenue expression} and \Cref{fact:second order statistic lower bound}, the expected revenue of {\BayesianOptimalUniformReserve} can be lower bounded by
    \begin{align*}
        \BOUR \leq \reserve \cdot (1 - \cdf_{(1)}(\reserve)) + \displaystyle\int_{\reserve}^\infty \left(1 - \cdf_{(1)}(\val) \cdot (1 - \ln \cdf_{(1)}(\val))\right)\cdot \d\val
    \end{align*}
    Let $\quant_{\reserve} = 1 - \cdf_{(1)}(\reserve)$ be the quantile of reserve $\reserve$ in the first order statistic distribution $\prior_{(1)}$. Invoking \Cref{lem:IP revneue expression}, the expected revenue of {\IdentityPricing} can be expressed as
    \begin{align*}
        \IP
        &{} \geq \displaystyle\int_{0}^1 \revcurve_{(1)}(\quant)\cdot \d\quant
        \\
        &{} =
        \int_0^{\quant_{\reserve}} \left(\revcurve_{(1)}(\quant) - \frac{\quant}{\quant_{\reserve}}\cdot \revcurve_{(1)}(\quant_{\reserve})
        \right)
        \cdot \d\quant 
        +
        \int_{\quant_{\reserve}}^1 \left(\revcurve_{(1)}(\quant) - \frac{1-\quant}{1-\quant_{\reserve}}\cdot \revcurve_{(1)}(\quant_{\reserve})
        \right) \cdot \d\quant
        +
        \frac{1}{2} \revcurve_{(1)}(\quant_{\reserve})
        \\
        &{} =
        \int_0^{\quant_{\reserve}} \left(\revcurve_{(1)}(\quant) - \quant \reserve
        \right)
        \cdot \d\quant 
        +
        \int_{\quant_{\reserve}}^1 \left(\revcurve_{(1)}(\quant) - \frac{1-\quant}{1-\quant_{\reserve}}\cdot \revcurve_{(1)}(\quant_{\reserve})
        \right) \cdot \d\quant
        +
        \frac{1}{2} \revcurve_{(1)}(\quant_{\reserve})
    \end{align*}
    where the second equality holds by algebra and the third equality holds since $\reserve=\frac{\revcurve_{(1)}(\quant_\reserve)}{\quant_{\reserve}}$ by construction. Putting two pieces together, it suffices to bound the approximation of {\IdentityPricing} against {\BayesianOptimalUniformReserve} by pieces as follows
    \begin{align}
    \label{eq:IP approximates BOUR part I}
        &\displaystyle\int_{\quant_{\reserve}}^1 \left(\revcurve_{(1)}(\quant) - \frac{1-\quant}{1-\quant_{\reserve}}\cdot \revcurve_{(1)}(\quant_{\reserve})
        \right) \cdot \d\quant
        +
        \frac{1}{2} \revcurve_{(1)}(\quant_{\reserve}) 
        \geq 
        \frac{1}{2} \cdot \reserve \cdot (1 - \cdf_{(1)}(\reserve))
        \\
    \label{eq:IP approximates BOUR part II}
        &\int_0^{\quant_{\reserve}} \left(\revcurve_{(1)}(\quant) - \quant\reserve
        \right)
        \cdot \d\quant 
        \geq 
        \frac{1}{2}\cdot \displaystyle\int_{\reserve}^\infty \left(1 - \cdf_{(1)}(\val) \cdot (1 - \ln \cdf_{(1)}(\val))\right)\cdot \d\val
    \end{align}
    
    \xhdr{Verify Inequality~\eqref{eq:IP approximates BOUR part I}.} Since all buyers are quasi-regular, \Cref{thm:order:quasi-regular} ensures the first order statistic distribution $\prior_{(1)}$ is also quasi-regular. Combining with Condition~\ref{condition:quasi-regular:revenue curve} for quasi-regularity in \Cref{prop:q-regular equivalent definition}, we have $\revcurve_{(1)}(\quant) - \frac{1-\quant}{1-\quant_{\reserve}}\cdot \revcurve_{(1)}(\quant_{\reserve}) \geq 0$ for every quantile $\quant\in[\quant_{\reserve}, 1]$. Thus, inequality~\eqref{eq:IP approximates BOUR part I} holds as
    \begin{align*}
        \displaystyle\int_{\quant_{\reserve}}^1 \left(\revcurve_{(1)}(\quant) - \frac{1-\quant}{1-\quant_{\reserve}}\cdot \revcurve_{(1)}(\quant_{\reserve})
        \right) \cdot \d\quant
        +
        \frac{1}{2} \revcurve_{(1)}(\quant_{\reserve}) 
        \geq 
        \frac{1}{2} \revcurve_{(1)}(\quant_{\reserve}) 
        =
        \frac{1}{2} \cdot \reserve \cdot (1 - \cdf_{(1)}(\reserve))
    \end{align*}
    where the equality is due to the definition of quantile $\quant_{\reserve}$ and revenue curve $\revcurve_{(1)}$.
    
    \xhdr{Verify Inequality~\eqref{eq:IP approximates BOUR part II}.} We prove this inequality for any $\reserve \geq 0$ (not necessarily the uniform reserve used in {\BayesianOptimalUniformReserve}) and any distribution (not necessarily quasi-regular). Moreover, to simplify the presentation, we use a differential element method argument. 

    Fix an infinitesimal $\detquant \geq 0$. 
    Let $\mathcal{Q}=\{\quant_\reserve - \ell\detquant \given \ell\in[0:\frac{\quant_\reserve}{\detquant}]\}$.
    For any quantile $\quant\in[0, 1]$,  let quantile $\quant\primed = \quant - \detquant$. Additionally, we denote by $\val = \frac{\revcurve_{(1)}(\quant)}{\quant}$ and $\val\primed = \frac{\revcurve_{(1)}(\quant\primed)}{\quant\primed} $ be their corresponding value (with respect to the first order statistic distribution $\prior_{(1)}$), respectively. Define $A(\quant)$ as 
    \begin{align*}
        A(\quant) \triangleq \displaystyle\int_{\val}^{\val\primed} \left(1 - \cdf_{(1)}(\val) \cdot (1 - \ln \cdf_{(1)}(\val))\right)\cdot \d\val
    \end{align*} 
    By definition, we have
    \begin{align*}
        \frac{1}{2}\cdot \sum_{\quant\in\calQ} A(\quant) = \text{right-hand side of inequality~\eqref{eq:IP approximates BOUR part II}}
    \end{align*}
    Meanwhile, note that the left-hand side of inequality~\eqref{eq:IP approximates BOUR part II} is essentially the area of the region between the area under revenue curve $\revcurve_{(1)}$ and the area above straight line cross origin with slope $\reserve$. We can partition this area into $|\calQ|$ strips. Specifically, for each quantile $\quant\in \calQ$, consider a strip that is the area enclosed by revenue curve $\revcurve_{(1)}$ and two straight lines cross origin with slopes $\val$ and $\val\primed$, respectively. (Recall $\val = {\revcurve_{(1)}(\quant)}/{\quant}$, $\val\primed = {\revcurve_{(1)}(\quant\primed)}/{\quant\primed}$ and $\quant\primed = \quant - \detquant$.) See \Cref{fig:BOUR_IP_analysis} for a graphical illustration. Let $B(\quant)$ be the area of this strip. By definition, we have
    \begin{align*}
        \sum_{\quant\in\calQ} B(\quant) = \text{left-hand side of inequality~\eqref{eq:IP approximates BOUR part II}}
    \end{align*}
    Putting two pieces together, to verify inequality~\eqref{eq:IP approximates BOUR part II}, it suffices to show for every quantile $\quant\in\calQ$:
    \begin{align*}
        \frac{B(\quant)}{A(\quant)} \geq \frac{1}{2} - o_{\detquant}(1)
    \end{align*}
    To see this, note that since $\detquant$ is sufficiently small, we can express $A(\quant)$ as 
    \begin{align*}
       A(\quant) 
       &{}= 
       (\val - \val\primed)\cdot  \left(1 - \cdf_{(1)}(\val) \cdot (1 - \ln \cdf_{(1)}(\val))\right)\cdot \left(1 \pm o_{\detquant}(1)\right)
       \\
       &{}= 
       \left(\frac{\revcurve_{(1)}(\quant) - \detquant \revcurve_{(1)}'(\quant)}{\quant - \detquant} - \frac{\revcurve_{(1)}(\quant)}{\quant}\right)\cdot  \left(1 - (1-\quant) \cdot (1 - \ln (1-\quant))\right)\cdot \left(1 \pm o_{\detquant}(1)\right)
       \\
       &{}=
       \frac{\revcurve_{(1)}(\quant) - \revcurve_{(1)}'(\quant)\cdot \quant}{\quant^2}\cdot \detquant
       \cdot  \left(1 - (1-\quant) \cdot (1 - \ln (1-\quant))\right)\cdot \left(1 \pm o_{\detquant}(1)\right)
    \end{align*}
    where the second equality holds by construction and the third equality holds by algebra.
    Similarly, we can express $B(\quant)$ as 
    \begin{align*}
        B(\quant) 
        &{}= 
        \frac{1}{2}\cdot \quant\primed \cdot \left(
        \revcurve_{(1)}(\quant\primed) - \val\quant\primed
        \right)
        +
        \frac{1}{2}\cdot \left(\quant - \quant\primed\right)
        \cdot \left(
        \revcurve(\quant\primed) - \val\quant\primed
        \right)\cdot (1 \pm o_{\detquant}(1))
        \\
        &{}=
        \frac{1}{2}\cdot \left(\revcurve_{(1)}(\quant) - \revcurve_{(1)}'(\quant)\cdot \quant\right)
        \cdot \detquant \cdot (1 \pm o_{\detquant}(1))
    \end{align*}
    where the first equality holds by definition (see \Cref{fig:BOUR_IP_analysis} for a graphical illustration), and the second equality holds by construction.
    Combining the above two expressions, we have
    \begin{align*}
        \frac{B(\quant)}{A(\quant)} 
        &{} =
        \frac{\frac{1}{2}\cdot \left(\revcurve_{(1)}(\quant) - \revcurve_{(1)}'(\quant)\cdot \quant\right)
        \cdot \detquant \cdot (1 \pm o_{\detquant}(1))}{\frac{\revcurve_{(1)}(\quant) - \revcurve_{(1)}'(\quant)\cdot \quant}{\quant^2}\cdot \detquant
       \cdot  \left(1 - (1-\quant) \cdot (1 - \ln (1-\quant))\right)\cdot \left(1 \pm o_{\detquant}(1)\right)}
       \\
       &{} = 
       \frac{\quant^2}{2(1 - \ln (1-\quant))}\cdot \left(1 \pm o_{\detquant}(1)\right)
       \\
       &{} \geq 
       \frac{1}{2}\cdot \left(1 \pm o_{\detquant}(1)\right)
    \end{align*}
    which finishes the argument for inequality~\eqref{eq:IP approximates BOUR part II}. Here the inequality holds since term $\frac{\quant^2}{2(1 - \ln (1-\quant))}$ is decreasing in $\quant$ and is equal to $\frac{1}{2}$ for $\quant = 1$.

    Finally, to see this tightness of the revenue approximation ratio, consider a single buyer with regular distribution $\prior(\val) = \frac{\val}{\val + 1}$. For this instance, the expected revenue in {\BayesianOptimalUniformReserve} is 1, while the expected revenue in {\IdentityPricing} is $\frac{1}{2}$. 
\end{proof}

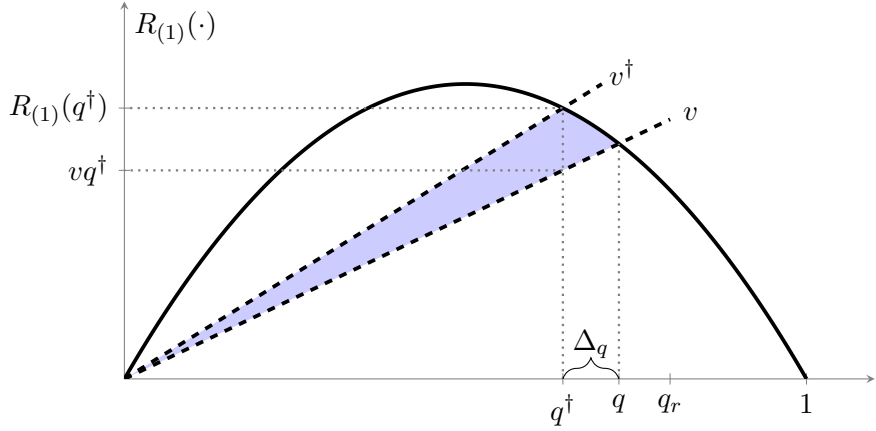
\begin{figure}
    \centering
    \begin{tikzpicture}[scale=1, transform shape]
\begin{axis}[
axis line style=gray,
axis lines=middle,
% xlabel = $\quant$,
ylabel = $\revcurve_{(1)}(\cdot)$,
xtick={0, 0.642857, 0.725, 0.8, 1},
ytick={0, 0.176785675, 0.229591877551, 1},
xticklabels={0, $\quant\primed$, $\quant$, $\quant_\reserve$, 1},
yticklabels={0, $\val\quant\primed$, $\revcurve_{(1)}(\quant\primed)$, $1$},
xmin=0,xmax=1.1,ymin=-0.0,ymax=0.32,
width=0.7\textwidth,
height=0.4\textwidth,
samples=500]

\fill[blue!20] (0,0) -- (0.642857, 0.229591877551) -- (0.725, 0.199375) -- cycle;

\addplot[domain=0:1, black!100!white, line width=0.5mm] (x, {x * (1-x)});

\addplot[line width=0.5mm, dashed] (0, 0) -- (0.8, 0.22);

\addplot[line width=0.5mm, dashed] (0, 0) -- (0.7, 0.25);

\addplot[dotted, gray, line width=0.3mm] (0.725, 0) -- (0.725, 0.199375);% -- (0, 0.199375);

\addplot[dotted, gray, line width=0.3mm] (0.642857, 0.176785675) -- (0, 0.176785675);

\addplot[dotted, gray, line width=0.3mm] (0.642857, 0) -- (0.642857, 0.229591877551) -- (0, 0.229591877551);

\draw (0.73, 0.26) node {$\val\primed$};
\draw (0.83, 0.225) node {$\val$};

\draw[decorate,decoration={brace,amplitude=8pt}] (0.642857,0) -- (0.725,0) node[midway,above=4pt] {$\Delta_\quant$};

\end{axis}

\end{tikzpicture}
    \caption{Graphical illustration of the analysis in \Cref{thm:BOUR_IP}. The blue area represents $B(\quant)$, which is used to characterize the left-hand side of inequality~\eqref{eq:IP approximates BOUR part II}.}
    \label{fig:BOUR_IP_analysis}
\end{figure}

For general instances with irregular buyers, the revenue gap between {\IdentityPricing} and {\BayesianOptimalUniformReserve} (or even {\BayesianOptimalUniformPricing}) can be unbounded.\footnote{To see this, consider a single-buyer instance, where the buyer has value $1$ with probability $\varepsilon$ for sufficiently small~$\varepsilon$.} However, the analysis of \Cref{thm:BOUR_IP} can be reused to obtain the following theorem which compares {\IdentityPricing} with {\SecondPriceAuction} for asymmetric irregular buyers.

\begin{theorem}[{\SPA} vs.\ {\IP}]
\label{thm:SPA_IP}
\begin{flushleft}
For arbitrary (independent) buyers,
{\IdentityPricing} achieves a tight $\calC_{\SPA}^{\IP} = \frac{1}{2}$-approximation to {\SecondPriceAuction}.
% (This guarantee is the best possible among deterministic $1$-sample mechanisms and is tight even for two i.i.d.\ regular buyers.)
\end{flushleft}
\end{theorem}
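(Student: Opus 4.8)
The plan is to reuse, essentially verbatim, the machinery developed for \Cref{thm:BOUR_IP}, specialized to a zero reserve. Observe that {\SecondPriceAuction} is exactly {\BayesianOptimalUniformReserve} with reserve $\reserve = 0$, so by \Cref{lem:BOUR revenue expression} with $\price = 0$ we have $\SPA = \int_0^\infty \bigl(1 - \cdf_{(2:n)}(\val)\bigr)\cdot\d\val = \expect[\val_{(2:n)}\sim\prior_{(2:n)}]{\val_{(2:n)}}$, where $\cdf_{(1:n)}, \cdf_{(2:n)}$ are the first- and second-order-statistic distributions of $\priors$. Applying the pointwise bound $\cdf_{(2:n)}(\val)\geq \cdf_{(1:n)}(\val)\bigl(1 - \ln\cdf_{(1:n)}(\val)\bigr)$ from \Cref{fact:second order statistic lower bound} gives the upper estimate
\[
    \SPA \;\leq\; \displaystyle\int_0^\infty \Bigl(1 - \cdf_{(1:n)}(\val)\bigl(1 - \ln\cdf_{(1:n)}(\val)\bigr)\Bigr)\cdot\d\val .
\]
On the other side, \Cref{lem:IP revneue expression} gives $\IP \geq \int_0^1 \revcurve_{(1:n)}(\quant)\cdot\d\quant$, where $\revcurve_{(1:n)}$ is the revenue curve induced by $\prior_{(1:n)}$.

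The crucial point is that inequality~\eqref{eq:IP approximates BOUR part II} in the proof of \Cref{thm:BOUR_IP} --- namely $\int_0^{\quant_\reserve}\bigl(\revcurve_{(1:n)}(\quant)-\quant\reserve\bigr)\cdot\d\quant \geq \frac12\int_\reserve^\infty\bigl(1-\cdf_{(1:n)}(\val)(1-\ln\cdf_{(1:n)}(\val))\bigr)\cdot\d\val$ --- was proven there for \emph{every} reserve $\reserve\geq 0$ and \emph{every} (possibly irregular) distribution, without invoking quasi-regularity at all; quasi-regularity was needed only for the companion inequality~\eqref{eq:IP approximates BOUR part I}, which becomes vacuous once $\reserve = 0$ (then $\quant_\reserve = 1$, so both sides of~\eqref{eq:IP approximates BOUR part I} vanish). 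Instantiating~\eqref{eq:IP approximates BOUR part II} at $\reserve = 0$ and chaining the three displays yields
\[
    \IP \;\geq\; \displaystyle\int_0^1 \revcurve_{(1:n)}(\quant)\cdot\d\quant \;\geq\; \frac12\displaystyle\int_0^\infty\Bigl(1-\cdf_{(1:n)}(\val)\bigl(1-\ln\cdf_{(1:n)}(\val)\bigr)\Bigr)\cdot\d\val \;\geq\; \frac12\,\SPA ,
\]
the desired $\frac12$-approximation. A possible point mass of $\prior_{(1:n)}$ at $0$ only pushes $\quant_\reserve$ below $1$ while contributing nothing to either integral, so it needs no separate treatment.

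For tightness I would take a simple symmetric family, e.g.\ $n$ i.i.d.\ buyers uniform on $[0,1]$ with $n\to\infty$: then $\SPA = \expect{\val_{(2:n)}} = \frac{n-1}{n+1}$, whereas posting the sampled top value $\sample$ (whose density is $n\sample^{n-1}$ on $[0,1]$) earns $\IP = \int_0^1 \sample(1-\sample^n)\cdot n\sample^{n-1}\cdot\d\sample = \frac{n^2}{(n+1)(2n+1)}$, so $\IP/\SPA = \frac{n^2}{(2n+1)(n-1)}\to\frac12$; since the uniform distribution is MHR, this shows the bound is unimprovable even for symmetric MHR buyers. There is no genuinely hard obstacle here beyond bookkeeping: the only step meriting a second look is the invocation of~\eqref{eq:IP approximates BOUR part II} at $\reserve = 0$ for arbitrary (hence possibly heavy-tailed) buyers --- one should note that its differential-element proof goes through verbatim, and that if $\expect{\val_{(2:n)}} = \infty$ the same chain still gives $\IP = \infty$, so the claimed inequality holds trivially --- together with checking the two closed forms in the tightness instance.
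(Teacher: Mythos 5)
Your lower-bound argument is exactly the paper's: both specialize inequality~\eqref{eq:IP approximates BOUR part II} (which the proof of \Cref{thm:BOUR_IP} explicitly establishes for arbitrary $\reserve\geq 0$ and arbitrary distributions) to $\reserve=0$, and your observation that \eqref{eq:IP approximates BOUR part I} trivializes there is a correct, if redundant, elaboration. For tightness the paper instead uses two i.i.d.\ buyers uniform on $[1,1+\varepsilon]$ with $\varepsilon\to 0$, where $\SPA\to 1$ and $\IP\to\tfrac12$; your $n$ i.i.d.\ uniform $[0,1]$ buyers with $n\to\infty$ (giving $\IP/\SPA = n^2/((2n+1)(n-1))\to\tfrac12$) is an equally valid, also regular/MHR, witness.
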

\begin{proof}
For the lower bound part, note that the argument for verifying inequality~\eqref{eq:IP approximates BOUR part II} (in the proof of \Cref{thm:BOUR_IP}) works for {\SecondPriceAuction} (with $\reserve = 0$) and general valuation distributions. For the upper bound, consider the following instance with two symmetric regular buyers. Let $\varepsilon$ be a sufficiently small constant. Both buyers have valuation distribution $\prior$ with cumulative density function $\cdf(\val) = \frac{\val - 1}{\varepsilon}$ and support $\supp(\prior) = [1, 1 + \varepsilon]$. Invoking \Cref{lem:BOUR revenue expression} and \Cref{lem:IP revneue expression}, the expected revenues of {\SecondPriceAuction} and {\IdentityPricing} are $1 + o_\varepsilon(1)$ and $\frac{1}{2} + o_{\varepsilon}(1)$, respectively. Thus, letting $\varepsilon$ approach zero finishes the proof.
\end{proof}

\xhdr{Revenue Approximation against $\BOM$.} Combining \Cref{thm:BOUR_IP} with the revenue approximations obtained in \Cref{sec:simple-mechanism:single-item}, we obtain the revenue approximations of {\IdentityPricing} against {\BayesianOptimalMechanism} for asymmetric quasi-regular/regular buyers (\Cref{thm:BOM_IP:asymmetric}), symmetric regular buyers (\Cref{thm:BOM_IP:iid:regular}), and symmetric quasi-regular buyers (\Cref{thm:BOM_IP:iid:quasi-regular}). 
It is worthwhile to highlight that \Cref{thm:BOM_IP:iid:regular} is a (nontrivial) {\em approximation-preserving} generalization of \cite{DRY15} from a single regular buyer to multiple symmetric regular buyers.

\begin{theorem}[{\BOM} vs.\ {\IP}]
\label{thm:BOM_IP:iid:regular}
\begin{flushleft}
For symmetric regular buyers,
{\IdentityPricing} achieves a tight $\calC_{\BOM}^{\IP} = \frac{1}{2}$-approximation to {\BayesianOptimalMechanism}.
(This guarantee is the best possible among deterministic sample-based mechanisms and is tight even for a single regular buyer.)
\end{flushleft}
\end{theorem}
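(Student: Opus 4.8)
The plan is to derive \Cref{thm:BOM_IP:iid:regular} as a corollary of \Cref{thm:BOUR_IP} via Myerson's characterization. First I would invoke the classical degeneracy: for $n$ i.i.d.\ regular buyers with common distribution $\prior$, {\BayesianOptimalMechanism} coincides with {\BayesianOptimalUniformReserve} using the common monopoly reserve $\optreserve$ \cite{M81}. Indeed, regularity makes the shared virtual value function $\virtualval$ weakly increasing, so awarding the item to the buyer with the highest nonnegative ironed virtual value is exactly running {\SecondPriceAuction} with reserve $\optreserve$ (with threshold payment $\max\{\optreserve,\val_{(2:n)}\}$); by revenue equivalence (\Cref{prop:revenue_equivalence}) the two mechanisms raise the same expected revenue, i.e.\ $\BOM(\{\prior\}^{\otimes n}) = \BOUR(\{\prior\}^{\otimes n})$.

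The lower bound is then immediate: a symmetric regular instance is in particular an asymmetric quasi-regular instance, so \Cref{thm:BOUR_IP} gives $\IP \geq \tfrac12\,\BOUR = \tfrac12\,\BOM$, i.e.\ $\calC_{\BOM}^{\IP} \geq \tfrac12$. For the matching upper bound I would use the single-buyer case $n = 1$, where {\BayesianOptimalUniformPricing}, {\BayesianOptimalUniformReserve}, {\BayesianOptimalSequentialPricing}, and {\BayesianOptimalMechanism} all coincide, with the distribution $\prior$ on $[0,\infty)$ given by $\cdf(\val) = \frac{\val}{\val+1}$ --- the same instance used at the end of the proof of \Cref{thm:BOUR_IP}. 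Its revenue curve is $\revcurve(\quant) = 1 - \quant$ (posting price $p$ sells with probability $\quant = \frac{1}{p+1}$, earning $p\quant = 1-\quant$), so $\BOM = \optrev = \sup_{\quant}\revcurve(\quant) = 1$, whereas by \Cref{lem:IP revneue expression} {\IdentityPricing} earns $\int_0^1\revcurve(\quant)\,\d\quant = \int_0^1(1-\quant)\,\d\quant = \tfrac12$. Thus $\calC_{\BOM}^{\IP} \leq \tfrac12$, and since this instance already has a single buyer the bound is tight even for one regular buyer; that $\tfrac12$ cannot be beaten by any deterministic sample-based mechanism is precisely \Cref{wisdom:DRY15-HMR18} (\cite[Lemma~3.6]{DRY15}, \cite[Theorem~6.1]{HMR18}).

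The substance of the argument lives entirely in \Cref{thm:BOUR_IP}, whose proof (in this excerpt) does the real work: characterizing $\BOUR$ and $\IP$ through the first-order statistic $\prior_{(1:n)}$, using closure of quasi-regularity under order statistics from \Cref{thm:order:quasi-regular}, and a two-piece $2$-approximation split. None of that needs to be revisited here. The only point worth checking in the present reduction is that the degeneracy $\BOM = \BOUR$ holds cleanly --- with the reserve being exactly the monopoly reserve, so atoms and tie-breaking raise no issue --- which is standard and follows from \Cref{prop:revenue_equivalence} together with concavity of $\revcurve$ for regular $\prior$ (\Cref{lem:regular equivalent definition}). I do not anticipate a genuine obstacle: once \Cref{thm:BOUR_IP} is in hand, this theorem is essentially a corollary.
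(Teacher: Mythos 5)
Your proposal is correct and matches the paper's proof in essence: both reduce the lower bound to \Cref{thm:BOUR_IP} via the classical equivalence $\BOM = \BOUR$ for i.i.d.\ regular buyers, and both obtain tightness from the single-buyer equal-revenue-type instance $\cdf(\val) = \val/(\val+1)$ that already appears at the end of the proof of \Cref{thm:BOUR_IP}. You flesh out the degeneracy argument and the revenue-curve computation in more detail, but there is no substantive difference in approach.
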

\begin{proof}
    For symmetric regular buyers, {\BayesianOptimalUniformReserve} and {\BayesianOptimalMechanism} are the same. Thus, invoking \Cref{thm:BOUR_IP} finishes the lower bound part. For the upper bound part, note that the instance constructed in the proof of \Cref{thm:BOUR_IP} is also symmetric regular.
\end{proof}

\begin{theorem}[{\BOM} vs.\ {\IP}]
\label{thm:BOM_IP:asymmetric}
\begin{flushleft}
For asymmetric quasi-regular or regular buyers,
{\IdentityPricing} achieves a tight $\calC_{\BOM}^{\IP} \in [0.1385, 0.3750]$- or $\calC_{\BOM}^{\IP} \in [0.1908, 0.3750]$-approximation to {\BayesianOptimalMechanism}, respectively. 
\end{flushleft}
\end{theorem}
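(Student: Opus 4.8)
The plan is to follow the two‑part template already used for \Cref{thm:BOM_BOUP} and \Cref{thm:BOM_SPAone}: obtain the lower bound by composing approximation ratios that are established elsewhere in the paper, and obtain the upper bound from an explicit extremal construction.

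For the lower bound, observe that on every instance $\IP \ge \calC_{\BOUR}^{\IP}\cdot\BOUR$ with $\calC_{\BOUR}^{\IP}=\tfrac12$ (\Cref{thm:BOUR_IP}), and $\BOUR \ge \calC_{\BOM}^{\BOUR}\cdot\BOM$, so that $\calC_{\BOM}^{\IP}\ge \calC_{\BOUR}^{\IP}\cdot\calC_{\BOM}^{\BOUR}=\tfrac12\,\calC_{\BOM}^{\BOUR}$. For asymmetric regular buyers $\calC_{\BOM}^{\BOUR}\ge 0.3817$ (\Cref{wisdom:BOM_BOUR}, via \cite{JLQTX19}), which gives $\calC_{\BOM}^{\IP}\gtrsim 0.1908$; for asymmetric quasi-regular buyers $\calC_{\BOM}^{\BOUR}\ge 0.2770$ (\Cref{thm:BOM_BOUR}), which gives $\calC_{\BOM}^{\IP}\gtrsim 0.1385$. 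These match the lower entries in \Cref{tab:IP vs BOM}.

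For the upper bound, I would analyze the family of asymmetric regular instances \Cref{exp:single_sample:asymmetric:MA}; since these are regular they are a fortiori quasi-regular, so one construction settles both $0.3750$ upper bounds. The steps are: (i) verify regularity by inspecting the induced revenue curve; (ii) express $\IP$ through \Cref{lem:IP revneue expression} as $\int_0^1 \revcurve_{(1:n)}(q)\,\dd q$, the integral of the revenue curve of the first‑order statistic $\prior_{(1:n)}$ (arranging the instance so that $\prior_{(1:n)}$ is atomless, whence this holds with equality, or else separately controlling the contribution of any atom so as to keep an \emph{upper} bound on $\IP$); (iii) compute $\BOM$ from revenue equivalence (\Cref{prop:revenue_equivalence}) together with the Myerson allocation rule; and (iv) optimize the resulting ratio $\IP/\BOM$ over the instance's free parameter, which should drive it down to $\approx 0.3750$. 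Combined with the lower bound, this yields the two stated intervals.

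The hard part is the upper bound; the lower bound is a routine chaining of ratios already proven. The real work is to pin down the extremal instance: it must be shaped so that the sampled price is frequently ``too high'' for {\IP} — a spiky first‑order‑statistic revenue curve, which forces the $\tfrac12$‑type loss intrinsic to single‑sample pricing — while the discriminatory {\BOM} still extracts markedly more than {\IP} through a buyer with high virtual value, so that the two sources of loss partly compound; balancing these two effects is precisely what should produce the clean constant $3/8$ rather than the crude product $\tfrac12\cdot0.3817$. A secondary nuisance is the book‑keeping around a possible atom of $\prior_{(1:n)}$, which inflates the true revenue of {\IP} above the revenue‑curve integral and must therefore be handled carefully when one wants an upper bound on $\IP$.
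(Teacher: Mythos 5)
Your proposal is correct and takes essentially the same approach as the paper: the lower bound is obtained by chaining $\calC_{\BOM}^{\IP}\ge\calC_{\BOUR}^{\IP}\cdot\calC_{\BOM}^{\BOUR}$ with \Cref{thm:BOUR_IP} and \Cref{thm:BOM_BOUR} (or \cite{JLQTX19} for the regular case), and the upper bound is obtained by analyzing the two-buyer instance of \Cref{exp:single_sample:asymmetric:MA} and minimizing $\IP/\BOM = \frac{a^2+a+1}{2(a+1)^2}+o_\varepsilon(1)$ over $a$. The only cosmetic difference is that the paper computes $\IP$ by direct conditioning on which buyer supplies the first-order-statistic sample rather than via the $\int_0^1 \revcurve_{(1:n)}(q)\,\dd q$ formula, but since both buyers' distributions are atomless for $\varepsilon>0$ these computations coincide and your caution about atoms is a non-issue in this construction.
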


\begin{example}[Asymmetric Regular Instances for $\BOM$ vs.\ $\IP$]
\label{exp:single_sample:asymmetric:MA}
Fix any $a\geq 0$ and $\varepsilon > 0$. 
There are two regular asymmetric buyers. Buyer 1 has distribution $\prior_1$ with cumulative density function $\cdf_1(\val) = \frac{\val - 1}{\varepsilon}$ and support $\supp(\prior_1) = [1, 1 + \varepsilon]$. Buyer 2 has distribution $\prior_2$ with cumulative density function $\cdf_2(\val) = \frac{\val}{\val + a}$ and support $\supp(\prior_2) = [0, \infty)$. In this instance, $\frac{\IP}{\BOM} = \frac{a^2 + a + 1}{2(a+1)^2} + o_\varepsilon(1)$, which is minimized at $\frac{3}{8}= 0.375$ by setting $a = 1$ and letting $\varepsilon$ approach zero.\footnote{A slightly improved upper bound can be obtained by a carefully designed three-buyer instance with triangular distributions. We omit it since the improvement is negligible.}
\end{example}

\begin{proof}[Proof of \Cref{thm:BOM_IP:asymmetric}]
    For the lower bound part, note that $\calC_{\BOM}^{\IP}\geq \calC_{\BOUR}^{\IP} \cdot \calC_{\BOM}^{\BOUR}$. Therefore, invoking the results that $\calC_{\BOUR}^{\IP} = \frac{1}{2}$ (\Cref{thm:BOUR_IP}) and $\calC_{\BOM}^{\BOUR} \geq 0.2770$ (\Cref{thm:BOM_BOUR}) or $\calC_{\BOM}^{\BOUR} \geq 0.3817$ for asymmetric regular buyers \cite[Theorem~1]{JLQTX19} finishes the lower bound analysis.

    For the upper bound part, we analyze \Cref{exp:single_sample:asymmetric:MA}. The regularity of the constructed distributions can be easily checked by algebra. Below we verify the approximation ratio stated in this example.

    We first lower bound the expected revenue in {\BayesianOptimalMechanism}. Note that the seller can first sell the item to Buyer 2 with a take-it-or-leave-it price $H$, and get expected revenue $\frac{aH}{H + 1}$. With probability $\frac{1}{H + 1}$, Buyer 2 does not purchase and then the item can be sold to Buyer 1 with price~$1$. Letting price $H$ approach infinite, we obtain a lower bound of $1 + a$ for the expected revenue in {\BayesianOptimalMechanism}.

    For {\IdentityPricing}, by straightforward calculation, its expected revenue can be expressed as 
    \begin{align*}
        \IP = \frac{1}{2}\frac{a}{1+a}\left(\frac{a}{1 + a} + a\right)
        +
        \frac{1}{1+a}\left(\frac{1}{2} + \frac{1}{2}\frac{a}{1+a}\right)
        +
        o_\varepsilon(1)
    \end{align*}
    where the first term (up to $o_\varepsilon(1)$ additive difference) is the expected revenue contribution from the event that the first order statistic sample is from Buyer 2's distribution $\prior_2$, and the second term (up to $o_\varepsilon(1)$ additive difference) is the expected revenue contribution from the event that the first order statistic sample is from Buyer 1's distribution $\prior_1$.

    Putting all the pieces together, we obtain the approximation ratio stated in \Cref{exp:single_sample:asymmetric:MA}.
\end{proof}

\begin{theorem}[{\BOM} vs.\ {\IP}]
\label{thm:BOM_IP:iid:quasi-regular}
\begin{flushleft}
For symmetric quasi-regular buyers,
{\IdentityPricing} achieves a tight $\calC_{\BOM}^{\IP} \in[0.1385,0.3978]$-approximation to {\BayesianOptimalMechanism}.
(This guarantee is the best possible among deterministic $1$-sample mechanisms and is tight even for a single regular buyer.)
\end{flushleft}
\end{theorem}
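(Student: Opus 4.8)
The plan is to pin down the two ends of the interval by separate arguments: the lower bound $\calC_{\BOM}^{\IP}\ge 0.1385$ by composing two guarantees already established, and the upper bound $\calC_{\BOM}^{\IP}\le 0.3978$ by analyzing the explicit i.i.d.\ family of \Cref{exp:single_sample:iid:MA}.

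For the lower bound, note that instance by instance $\IP\ge\calC_{\BOUR}^{\IP}\cdot\BOUR$ and $\BOUR\ge\calC_{\BOM}^{\BOUR}\cdot\BOM$, hence $\calC_{\BOM}^{\IP}\ge\calC_{\BOUR}^{\IP}\cdot\calC_{\BOM}^{\BOUR}$. By \Cref{thm:BOUR_IP} (which applies to asymmetric, hence to i.i.d., quasi-regular buyers) $\calC_{\BOUR}^{\IP}=\tfrac12$, and by \Cref{thm:BOM_BOUR} $\calC_{\BOM}^{\BOUR}\ge 0.2770$; multiplying gives $\calC_{\BOM}^{\IP}\ge\tfrac12\cdot 0.2770=0.1385$. (Invoking instead the i.i.d.-specific bound $\calC_{\BOM}^{\BOUR}\ge\tfrac12$ from \Cref{prop:BOM_BOUR:iid} would even give $\calC_{\BOM}^{\IP}\ge\tfrac14$.)

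For the upper bound I would analyze the symmetric quasi-regular family of \Cref{exp:single_sample:iid:MA}, whose revenue curve (see \Cref{fig:BOM_IP:iid}) is built from a vanishing point mass $\mu$ at a large top value (the steep initial segment), a decreasing linear piece $\revcurve(\quant)=\tfrac{a}{n}(1-\quant)$, and a nearly-diagonal concave arc $\revcurve(\quant)=\quant(1+\varepsilon(1-\quant))$ carrying the bulk of the mass, with a free parameter $a>0$. The steps are: (i) check $\prior\in\quasiregulardistspace$ via Condition~\ref{condition:quasi-regular:revenue curve} of \Cref{prop:q-regular equivalent definition} at the anchor quantile $\quant=0$, a routine algebraic check depicted by the dashed line in \Cref{fig:BOM_IP:iid}; (ii) evaluate $\BOM$ by revenue equivalence (\Cref{prop:revenue_equivalence}): since the single-item i.i.d.\ $\BOM$ equals $\expect{\plus{\max_{i}\ironrevcurve'(\quant_i)}}$ over i.i.d.\ uniform quantiles $\quant_i$, with $\ironrevcurve$ the concave envelope of $\revcurve$ (which irons away the upward kink between the linear piece and the arc), one computes, exactly as in the analysis of \Cref{exp:BOM_BOUR:iid}, that as $\varepsilon\to 0$ and $n\to\infty$ the point mass contributes $\to a$ and the remainder contributes $\to 1$, so $\BOM\to 1+a$; (iii) evaluate $\IP$ by \Cref{lem:IP revneue expression}, $\IP=\int_0^1\revcurve_{(1:n)}(\quant)\,\d\quant$, using that the first-order statistic $\prior_{(1:n)}$ is again quasi-regular by \Cref{thm:order:quasi-regular}; a change of variables in the limit $\varepsilon\to 0$, $n\to\infty$ should leave $\IP\to a\int_0^a\frac{e^{-x}-e^{-2x}}{x}\,\d x+e^{-a}-\tfrac12 e^{-2a}$; (iv) optimize over $a$: a first-order condition shows the ratio $\IP/\BOM$ is minimized at $a\approx 0.6016$, where it equals $\approx 0.3978$.

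The main obstacle will be steps (iii)--(iv): getting the correct closed form for $\IP$ requires carefully tracking the revenue curve $\revcurve_{(1:n)}$ of the maximum across all three regimes of $\prior$ --- in particular the heavy-tail regime $\quant\in[\mu,\quant^{\ast}]$, where after a change of variables the integrand contributes the $a\int_0^a\frac{e^{-x}-e^{-2x}}{x}\,\d x$ term --- and then verifying that the resulting ratio is genuinely minimized at $a\approx 0.6016$ with value $\approx 0.3978$, not something smaller. A secondary point to record is the theorem's parenthetical claim that $\IP$ is optimal among deterministic $1$-sample mechanisms: this is inherited from \Cref{thm:BOM_IP:iid:regular} together with \cite{DRY15,HMR18}, whose $\tfrac12$ impossibility for deterministic sample-based mechanisms (tight already for a single regular buyer) applies verbatim here.
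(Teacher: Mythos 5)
Your proposal matches the paper's proof: the lower bound is obtained by composing \Cref{thm:BOUR_IP} with \Cref{thm:BOM_BOUR} (which is exactly how \Cref{thm:BOM_IP:asymmetric}, the result the paper cites for this step, is itself proved), and the upper bound analyzes \Cref{exp:single_sample:iid:MA} via \Cref{lem:IP revneue expression} and revenue equivalence, with your closed form for $\IP$ agreeing with the paper's $\int_0^{1-e^{-a}}-\frac{a\quant}{\ln(1-\quant)}\,\d\quant + \frac12\bigl(1-(1-e^{-a})^2\bigr)$ after the substitution $x=-\ln(1-\quant)$. Your parenthetical observation that the i.i.d.-specific \Cref{prop:BOM_BOUR:iid} already yields the stronger lower bound $\calC_{\BOM}^{\IP}\ge\frac14$ for symmetric quasi-regular buyers is correct and is simply left on the table by the paper's choice to cite the asymmetric bound.
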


\begin{example}[Symmetric Quasi-Regular Instances for $\BOM$ vs.\ $\IP$]
\label{exp:single_sample:iid:MA}
Fix any $a \geq 0$ and $\varepsilon > 0$. There are $n$ symmetric regular buyer with independent and identical valuation distributions
$\priors = \{\prior\}^{\otimes n}$, where distribution $\prior$ has support $\supp(\prior) = [1, \infty)$ and cumulative density function 
\begin{align*}
\cdf(\val) = \left\{
\begin{array}{ll}
 \frac{\val - 1}{\varepsilon}    &  \text{~~if $\val \leq \val\primed$}\\
 \frac{n\val}{n\val + a}    & \text{~~if $\val \geq \val\primed$}
\end{array}
\right.
\end{align*}
where $\val\primed$ is the unique positive solution of $\frac{\val - 1}{\varepsilon} = \frac{n\val}{n\val + a}$. In this instance, $\frac{\IP}{\BOM} = \auxfunc(a) + o_{n,\varepsilon}(1)$. Here auxiliary function $\auxfunc$ is defined as
\begin{align*}
    \auxfunc(a) = \frac{\displaystyle\int_0^{1 - e^{-a}}-\frac{a\quant}{\ln(1-\quant)}\cdot \d\quant + \frac{1}{2}\left(1 - \left(1 - e^{-a}\right)^2\right)}{a + 1}
\end{align*}
which attains its unique minimum $\min_{a\geq 0}\auxfunc(a) \approx 0.3978$ at $a\approx 0.6016$.
\ignore{\yfnote{\url{https://www.desmos.com/calculator/lbrzxi79cb}}}
\end{example}

\begin{proof}[Proof of \Cref{thm:BOM_IP:iid:quasi-regular}]
    The lower bound part is a direct implication of \Cref{thm:BOM_IP:asymmetric}. For the upper bound part, we analyze \Cref{exp:single_sample:iid:MA}. The regularity of the constructed distributions can be easily checked by algebra. Also see \Cref{fig:BOM_IP:iid} for a graphical illustration. Below we verify the approximation ratio stated in this example.

    We first compute the expected revenue of {\BayesianOptimalMechanism}. Let $\auxprior_{(n:n)}$ be the $n$-th order statistic of $n$ i.i.d.\ uniform distribution with support $[0, 1]$. As a sanity check, the quantile of largest value $\val_{(1:n)} \sim \prior$ drawn i.i.d.\ from distribution $\prior$ is exactly $\quant_{(n:n)}\sim \auxprior_{(n:n)}$. Let $\revcurve$ and $\ironrevcurve$ be the revenue curve and ironed revenue curve induced by distribution $\prior$. Invoking \Cref{prop:revenue_equivalence}, the expected revenue of {\BayesianOptimalMechanism} can be expressed as 
    \begin{align*}
        \BOM &{}= \expect
        {\ironrevcurve'(\quant_{(n:n)})}
        % \\
        % &{}
        =
        \expect{\ironrevcurve'(0)\cdot \indicator{\quant_{(0:0)} = 0}}
        +
        \expect{\ironrevcurve'(\quant_{(n:n)})\cdot \indicator{\quant_{(0:0)}>0}}
        % \\
        % &{}
        % \overset{(*)}{=}
        =
        a + 1 - \frac{a}{n}
    \end{align*}
    where all expectations are taken over $\quant_{(n:n)}\sim \auxprior_{(n:n)}$. The last equality holds by the facts that $\expect{\ironrevcurve'(0)\cdot \indicator{\quant_{(0:0)}=0}} = a$ (implied by $\revcurve(0) = \frac{a}{n}$) and $\ironrevcurve'(\quant) = 1 - \frac{a}{n}$ for every $\quant\in(0, 1)$.

    We next compute the expected revenue of {\IdentityPricing}. Recall that $\val\primed$ is the unique positive solution of $\frac{\val - 1}{\varepsilon} = \frac{n\val}{n\val + a}$. Let $\quant\primed \triangleq 1 - \cdf(\val\primed)$. By the construction of distribution $\prior$, revenue curve $\revcurve_{(1:n)}$ induced by the first order statistic distribution $\prior_{(1:n)}$ is
    \begin{align*}
        \forall \quant \in [0, 1 - (1 - \quant\primed)^n]:&
        \qquad
        \revcurve_{(1:n)}(\quant) = \frac{a}{n}\frac{(1-\quant)^{\frac{1}{n}}}{1 - (1-\quant)^{\frac{1}{n}}}\cdot \quant + o_\varepsilon(1)
        \\
        \forall \quant \in [1 - (1 - \quant\primed)^n, 1]:&
        \qquad
        \revcurve_{(1:n)}(\quant) = \quant + o_\varepsilon(1)
    \end{align*}
    Invoking \Cref{lem:IP revneue expression}, since the first order statistic distribution $\prior_{(1:n)}$ has no point mass, the expected revenue of {\IdentityPricing} is 
    \begin{align*}
        \IP 
        &{} = \displaystyle\int_{0}^{1}\revcurve_{(1:n)}(\quant)\cdot\d\quant
        \\
        &{} =
        \displaystyle\int_{0}^{1 - (1 - \quant\primed)^n}
        \frac{a}{n}\frac{(1-\quant)^{\frac{1}{n}}}{1 - (1-\quant)^{\frac{1}{n}}}\cdot \quant\cdot \d\quant 
        +
        \displaystyle\int_{1 - (1 - \quant\primed)^n}^1
        \quant\cdot\d\quant 
        +
        o_\varepsilon(1)
        \\
        &{} = 
        {\displaystyle\int_0^{1 - e^{-a}}-\frac{a\quant}{\ln(1-\quant)}\cdot \d\quant + \frac{1}{2}\left(1 - \left(1 - e^{-a}\right)^2\right)}
        +
        o_{n,\varepsilon}(1)
    \end{align*}
    where the last equality holds since $(1 - \quant\primed)^n = e^{-a} + o_{n,\varepsilon}(1)$ and $\frac{1}{n}\frac{(1-\quant)^{\frac{1}{n}}}{1 - (1-\quant)^{\frac{1}{n}}} = \frac{1}{-\ln(1-\quant)} + o_{n,\varepsilon}(1)$.

    Putting all the pieces together, we obtain the approximation ratio stated in \Cref{exp:single_sample:iid:MA}.
\end{proof}

\xhdr{Connection to Sample-Based Prophet Inequality.} We conclude this section by discussing the connection between our results and the prophet inequality literature \cite{Luc17,CFHOV21}. In the classic prophet inequality problem where the decision maker has full knowledge about the valuation distributions, there is a simple reduction based on virtual value interpretation that converts the revenue maximization objective into the welfare maximization objective \cite{Luc17}. 

In recent years, there has been a rapidly growing literature studying the sample-based prophet inequality problem, e.g., \cite{CDFS19,RWW20,GLT22,CCES24,DKLRS24,FLTWWZ24}. To the best of our knowledge, all prior works in this direction focus on the welfare maximization problem (except for the single period (agent) problem studied in \cite{DRY15}). Note that the aforementioned reduction from revenue maximization to welfare maximization cannot be applied in this model, since the virtual value calculation requires full knowledge about the valuation distributions. In this sense, our results (\Cref{thm:BOM_IP:asymmetric,thm:BOM_IP:iid:regular,thm:BOM_IP:iid:quasi-regular})
can be viewed as the first nontrivial competitive ratio guarantee for the revenue-maximization sample-based prophet inequality. 

In the single-item welfare-maximization sample-based prophet inequality, posting the first order statistic sample achieves the optimal competitive ratio of $\frac{1}{2}$ \cite{RWW20}. However, \Cref{exp:single_sample:asymmetric:MA} (with two asymmetric regular buyers) gives a competitive ratio upper bound of $\frac{3}{8}$ for posting  the first order statistic sample in the single-item revenue-maximization sample-based prophet inequality. This indicates that though revenue maximization and welfare maximization are essentially equivalent in the classic prophet inequality, there might exist a separation between these two objectives for the sample-based prophet inequality, and the revenue maximization is strictly harder (in terms of the competitive ratio).
% \newpage

\section{The Sample Complexity of Revenue Maximization}
\label{sec:sample-complexity}

In this section, we revisit the sample complexity of revenue maximization. A rapid growing literature \cite{CR14, MR15, CGM15, MM16, RS16, DHP16,  GN17, S17, GHZ19, HT19, CHMY23, LSTW23, JLX23} has been established in this direction. We focus on two important mechanisms -- {\EmpiricalUniformReserve} and {\EmpiricalMyersonAuction} and study the number of value samples needed to obtain an $(1-\varepsilon)$-approximation under quasi-MHR and quasi-regular distributions.

Unlike the single-sample setting (\Cref{sec:single-sample}) where the goal is to design mechanisms with constant approximation, we address the more difficult task of {\em PAC-learnability of Bayesian mechanisms}.
To this end, the first-order-statistic sample access $\sample_{(1:n)} \sim \prior_{(1:n)}$ is insufficient and, thus, must be replaced with the {\em full sample access} $\samples \sim \priors$.

\subsection{Learning Empirical Uniform Reserve}
\label{sec:sample-complexity:EUR}

In this subsection, we present the sample complexity of {\EmpiricalUniformReserve}. Specifically, the goal is to learn a $(1 - \eps)$-approximation to {\BayesianOptimalUniformReserve} (that fully leverages the Bayesian information) using a finite number of full samples $\samples \sim \priors$. 

\begin{theorem}[The Sample Complexity of {\sf Empirical Uniform Reserve}]
\label{thm:sample-complexity:EUR}
\begin{flushleft}
% Given any $\eps \in (0, 1)$ and $n \geq 1$ many independent (but possibly asymmetric) buyers,
Given any $\eps \in (0, 1)$ and $n \geq 1$ many asymmetric buyers,
there exists an algorithm that accesses $m$ samples and, with probability at least $(1 - \delta)$, returns an $(1 - \eps)$-approximate {\EmpiricalUniformReserve}, when:
\begin{enumerate}[label=(\roman*)]
    \item the buyers are quasi-regular and $m = O(\eps^{-3} \log(\delta^{-1} \eps^{-1})) = \Tilde{O}(\eps^{-3})$; or
    \item the buyers are quasi-MHR and $m = O(\eps^{-2} \log(\delta^{-1} \eps^{-1})) = \Tilde{O}(\eps^{-2})$.
\end{enumerate}
\end{flushleft}
\end{theorem}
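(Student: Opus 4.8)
The plan is to reduce the learnability of \EmpiricalUniformReserve{} to two ingredients: (a) the fact that the Bayesian optimal uniform reserve $\reserve^*$ is, up to an $\eps$-fraction of the revenue, attained by a \emph{bounded} reserve lying in a range $[\optreserve_{\min}, \optreserve_{\max}]$ with $\optreserve_{\max}/\optreserve_{\min}$ controlled by the distributional family; and (b) a uniform-convergence argument over the candidate reserves. For step (a) I would use the tail bounds established earlier in the paper: Condition~\ref{condition:quasi-regular:cdf} in \Cref{prop:q-regular equivalent definition} for quasi-regular buyers, and Condition~\ref{condition:quasi-MHR:cdf} in \Cref{prop:q-MHR equivalent definition} for quasi-MHR buyers, applied to the first-order statistic distribution $\cdf_{(1:n)}$ (which stays in the same family by \Cref{thm:order:quasi-regular,thm:order:quasi-mhr}). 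Concretely, normalizing so that $\BOUR = 1$, the revenue contribution from reserves that are ``too small'' (below some $\Theta(\eps)$-dependent threshold) or ``too large'' (above some $\Theta(1/\eps)$ threshold for quasi-regular, $\Theta(\log(1/\eps))$ for quasi-MHR) is at most $\eps$: the small-reserve loss follows because the item still sells at a decent price, and the large-reserve loss follows from the tail bound, which says the first-order statistic rarely exceeds a moderate multiple of the monopoly reserve. This is exactly the ``revenue loss bounds of certain truncation schemes'' alluded to around \Cref{wisdom:sample-complexity}.

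Once the relevant reserves are confined to a multiplicative window of width $W$ (where $W = \poly(1/\eps)$ for quasi-regular and $W = \polylog(1/\eps)$ effectively for quasi-MHR, but what matters is $\log W = O(\log(1/\eps))$ in both cases), the second step is a standard empirical-process bound. For a fixed reserve $\reserve$, the empirical revenue $\widehat{\SPA}_{\reserve}$ computed from $m$ i.i.d.\ full samples $\samples^{(1)},\dots,\samples^{(m)}\sim\priors$ is an average of i.i.d.\ bounded random variables (after the truncation of step (a) caps the per-sample revenue at $O(W)$ in the normalization $\BOUR=1$, or more carefully at $O(1/\eps)$ resp.\ $O(\log(1/\eps))$). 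A Bernstein/Chernoff bound gives $\prob{|\widehat{\SPA}_{\reserve} - \SPA_{\reserve}| > \eps/2} \leq 2\exp(-\Omega(\eps^2 m / \text{(variance proxy)}))$. For quasi-regular distributions the variance proxy is $\Theta(1/\eps)$ (heavy tail, second moment blows up), giving $m = \Tilde O(\eps^{-3})$; for quasi-MHR distributions the variance proxy is $\Theta(1)$ (exponential tail, bounded second moment after truncation), giving $m = \Tilde O(\eps^{-2})$. Then I would discretize the reserve window into a grid of $\poly(1/\eps)$ points — the revenue curve of the first-order statistic is Lipschitz enough on the truncated window that a geometric grid with ratio $1+\Theta(\eps)$ suffices, so the grid has $O(\eps^{-1}\log W) = \Tilde O(\eps^{-1})$ points — apply a union bound over the grid (contributing only the $\log(\delta^{-1}\eps^{-1})$ factor), and output the grid reserve maximizing empirical revenue. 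Standard ``compare to the best on the grid, then to the true optimum'' bookkeeping closes the argument.

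The main obstacle, and the part I would spend the most care on, is step (a) for quasi-regular buyers: unlike regular distributions, a quasi-regular $\cdf_{(1:n)}$ may carry many interior point masses, so the revenue curve $\revcurve_{(1:n)}$ is not concave and the monopoly reserve need not coincide with a ``nice'' quantile. I would handle this by working with the inscribed-triangle inequality $\revcurve_{(1:n)}(\quant') \geq \frac{1-\quant'}{1-\quant}\revcurve_{(1:n)}(\quant)$ directly: it says the revenue curve dominates the chord from $(\quant,\revcurve_{(1:n)}(\quant))$ to $(1,0)$, which simultaneously (i) lower-bounds revenue at small reserves in terms of the monopoly revenue and (ii), via the equivalent CDF tail bound, upper-bounds $\cdf_{(1:n)}^{-1}(\eps)$ by $O(\optreserve/\eps)$, pinning down the large-reserve truncation. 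The quasi-MHR case is cleaner because the exponential tail bound $\cdf_{(1:n)}(\val') \le 1-(1-\cdf_{(1:n)}(\val))^{\val'/\val}$ gives a genuinely light tail, so both the truncation threshold and the variance proxy improve, yielding the $\eps^{-2}$ rate. Finally, for the quasi-MHR item (ii) I would also invoke Property~\ref{property:quasi-mhr:monopoly quantile lower bound} of \Cref{lem:quasi-mhr:structural results} ($\optquant \ge 1/e$) to bound the monopoly reserve from above in terms of the mean, which keeps the normalization constants explicit. The matching lower bounds are not claimed in this theorem statement, so I would not address them here beyond citing \cite{HMR18,GHZ19,JLX23} as in \Cref{wisdom:sample-complexity}.
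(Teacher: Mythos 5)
Your proposal is correct in spirit and identifies exactly the same key technical substitution that the paper relies on: the probability-tail bounds for quasi-regular/quasi-MHR distributions (\Cref{prop:q-regular equivalent definition}, Condition~\ref{condition:quasi-regular:cdf}, and \Cref{prop:q-MHR equivalent definition}, Condition~\ref{condition:quasi-MHR:cdf}), combined with the order-statistic closure theorems (\Cref{thm:order:quasi-regular,thm:order:quasi-mhr}), stand in for the regularity/MHR conditions. However, your route is genuinely different from the paper's. The paper proves \Cref{thm:sample-complexity:EUR} as a one-line black-box reduction: the algorithm of \cite{JLX23} already works for general buyers, and a line-by-line inspection of their analysis shows that the distributional assumption is used in exactly one place (their Fact~4.13 for the regular case, their Lemma~4.16, Item~1 for the MHR case); the paper simply swaps in Conditions~\ref{condition:quasi-regular:cdf}/\ref{condition:quasi-MHR:cdf} together with the order-statistic closure, and the rest of \cite{JLX23} goes through verbatim. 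You instead reconstruct a \cite{HMR18}-style argument from scratch (truncate to a bounded reserve window, discretize on a geometric grid, apply Bernstein plus a union bound). The paper's approach is more economical and avoids re-deriving the careful bookkeeping behind the exponents $\eps^{-3}$ and $\eps^{-2}$; yours is more self-contained and makes the role of the tail bounds and the variance proxy transparent. One piece of your sketch that would need tightening if you wanted to make it fully rigorous without leaning on \cite{JLX23}: your Bernstein argument hides a nontrivial point, namely that the empirical SPA revenue at a fixed reserve is a sum of i.i.d.\ \emph{heavy-tailed} (for quasi-regular) random variables whose tail is only controlled by the first-order-statistic bound, and that the grid argument must cope with the fact that the SPA-revenue-as-a-function-of-reserve is not continuous when the first-order statistic carries interior point masses (which quasi-regular distributions genuinely allow). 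The geometric-grid trick does survive this because $(1 - \cdf_{(1:n)})$ is monotone, but the one-sided comparison against the left grid endpoint is worth stating explicitly. Neither of these is a fatal gap, but both are exactly the kind of detail that \cite{JLX23} already handles and which the paper therefore does not re-prove.
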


\Cref{thm:sample-complexity:EUR} is a \emph{sample-complexity-preserving} generalization of \cite{JLX23} from MHR (resp.\ regular) distributions to quasi-MHR (resp.\ quasi-regular) distributions. Moreover, the optimal sample complexity bound of quasi-regular is tight (up to logarithm factor), where the lower bound is achieved by regular distributions \cite{HMR18,GHZ19}.
Both the algorithm and the analysis follow \cite{JLX23} with the observation that the distributional properties (i.e., probability tail bounds) of MHR and regular distributions used in the prior work also hold for quasi-MHR and quasi-regular distributions. 
\begin{proof}[Proof (Sketched) of \Cref{thm:sample-complexity:EUR}]
The algorithm in \cite[Section~3]{JLX23} works for general buyers;\footnote{Indeed, the algorithm in \cite[Section~3]{JLX23} works even for {\em correlated} buyers.}
different distributional assumptions are utilized only to establish different sample complexity bounds.

To obtain the $\Tilde{O}(\eps^{-3})$ sample complexity bound for quasi-regular buyers, consider the analysis in \cite{JLX23} of the same bound for regular buyers. In their analysis, the regularity condition is only used in \cite[Fact~4.13]{JLX23}, which can be replaced with Condition~\ref{condition:quasi-regular:cdf} (in combination with \Cref{thm:order:quasi-regular}).

To obtain the $\Tilde{O}(\eps^{-2})$ sample complexity bound for quasi-MHR buyers, consider the analysis in \cite{JLX23} of the same bound for MHR buyers. In their analysis, the MHR condition is only used in \cite[Lemma~4.16, Item~1]{JLX23}, which can be replaced with Condition~\ref{condition:quasi-MHR:cdf} (in combination with \Cref{thm:order:quasi-mhr}).
% 
% This finishes the proof of \Cref{thm:sample-complexity:EUR}.
\end{proof}

\subsection{Learning Empirical Myerson Auction}
\label{sec:sample-complexity:BOM}

In this subsection, we present the sample complexity of {\EmpiricalMyersonAuction}. Specifically, the goal is to learn a $(1 - \eps)$-approximation to {\BayesianOptimalMechanism} (that fully leverages the Bayesian information) using a finite number of full samples $\samples \sim \priors$. We first discuss the result for quasi-MHR distributions and then discuss the result for quasi-regular distributions.

\xhdr{Quasi-MHR Buyers.} We first present the result for quasi-MHR buyers.
\begin{theorem}[The Sample Complexity of {\EmpiricalMyersonAuction} for Quasi-MHR Buyers]
\label{thm:sample-complexity:EMA:q-MHR}
\begin{flushleft}
% Given any $\eps \in (0, 1)$ and $n \geq 1$ many independent (but possibly asymmetric) buyers,
Given any $\eps \in (0, 1)$ and $n \geq 1$ many asymmetric quasi-MHR buyers,
there exists an algorithm that accesses $m = O(n \eps^{-2} \log(\eps^{-1}) \log(n \eps^{-1}) \log(n \delta^{-1} \eps^{-1})) = \Tilde{O}(n \eps^{-2})$ samples and, with probability at least $(1 - \delta)$, returns an $(1 - \eps)$-approximate {\EmpiricalMyersonAuction}.
% , when:
% \begin{enumerate}[label=(\roman*)]
%     \item the buyers are quasi-regular and $m = O(n \eps^{-3} \log(n \eps^{-1}) \log(n \delta^{-1} \eps^{-1})) = \Tilde{O}(n \eps^{-3})$; or
%     \item the buyers are quasi-MHR and $m = O(n \eps^{-2} \log(\eps^{-1}) \log(n \eps^{-1}) \log(n \delta^{-1} \eps^{-1})) = \Tilde{O}(n \eps^{-2})$.
% \end{enumerate}
\end{flushleft}
\end{theorem}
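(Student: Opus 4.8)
The plan is to run the learning algorithm of \cite{GHZ19} essentially verbatim --- it draws $m$ i.i.d.\ full samples $\samples^{(1)},\dots,\samples^{(m)} \sim \priors$, forms per-buyer empirical distributions $\hat\prior_i$, applies a truncation and a multiplicative discretization to each $\hat\prior_i$, and outputs the Myerson auction with respect to the resulting empirical product distribution --- and then to re-derive its correctness guarantee, replacing every appeal to the MHR hypothesis by the corresponding quasi-MHR statement from \Cref{sec:structural} (this is the same strategy used for {\EmpiricalUniformReserve} in \Cref{thm:sample-complexity:EUR}). After an overall rescaling we may assume each buyer's monopoly revenue $\optrev_i$ lies in a normalized range. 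The analysis of \cite{GHZ19} is organized around three reductions --- truncation, discretization, and concentration --- and I would treat them in that order.

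\textbf{Truncation.} The first step restricts each buyer $i$'s support to a multiplicative window $[\optreserve_i / \poly(n \eps^{-1}),\ \optreserve_i \cdot \poly(n \eps^{-1})]$, losing at most an $\eps$-fraction of $\OPT$. For MHR buyers this uses (i) the monopoly-quantile lower bound $\optquant_i \geq 1/e$ and (ii) the exponential upper-tail decay $1 - \cdf_i(\val) \leq (1 - \cdf_i(\optreserve_i))^{\val / \optreserve_i}$. For quasi-MHR buyers, (i) is exactly Property~\ref{property:quasi-mhr:monopoly quantile lower bound} of \Cref{lem:quasi-mhr:structural results}, and (ii) is Condition~\ref{condition:quasi-MHR:cdf} of \Cref{prop:q-MHR equivalent definition} with anchor $\val = \optreserve_i$; in fact quasi-MHR distributions admit precisely the same exponential tail as MHR distributions, so this reduction transfers with no loss. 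The complementary lower-truncation (removing values well below $\optreserve_i$) only uses that $\revcurve_i$ vanishes at the quantile endpoints and that $\optquant_i$ is bounded away from $0$ and $1$, again supplied by \Cref{lem:quasi-mhr:structural results}.

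\textbf{Discretization.} Rounding all truncated values down to the nearest point of a geometric grid of ratio $1 + \Theta(\eps)$ inside the above window introduces $O(\eps^{-1}\log(n\eps^{-1}))$ candidate values per buyer and loses at most $\eps \cdot \OPT$; this is a distribution-free monotonicity argument and needs no modification. It remains to show $m = \tilde O(n\eps^{-2})$ samples suffice for the empirical ironed revenue curves, evaluated on the discretized quantile grid, to be close enough that the induced empirical Myerson auction collects $(1-\eps)\OPT$. Here one uses revenue equivalence (\Cref{prop:revenue_equivalence}) to attribute revenue to buyers, and the key point is that the estimated quantities --- certain values of $\cdf_i$, and of the distributions of the winning thresholds, which are order statistics of the $\ironvirtualval_j$'s --- concentrate at a Bernstein rate whose variance proxy is itself controlled by the same exponential tail bound. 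Beyond per-distribution tail bounds, the only structural input is closure of the relevant order-statistic distributions under the MHR-type tail; for \emph{asymmetric} buyers this fails for MHR but holds for quasi-MHR by \Cref{thm:order:quasi-mhr}, whose conclusion feeds back into Condition~\ref{condition:quasi-MHR:cdf} for the order statistic. With these substitutions the per-buyer, per-grid-point failure probability is $\exp(-\Omega(m\eps^2/n))$, and a union bound over the $n$ buyers and $O(\eps^{-1}\log(n\eps^{-1}))$ grid points, plus the extra logarithmic factors from \cite{GHZ19}'s peeling over value scales, yields $m = O(n\eps^{-2}\log(\eps^{-1})\log(n\eps^{-1})\log(n\delta^{-1}\eps^{-1}))$.

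The main obstacle I anticipate is not any single estimate but the audit: one must verify that \emph{every} use of ``MHR'' inside \cite{GHZ19}'s concentration analysis is of one of the two admissible forms --- a per-distribution exponential tail/mass bound, or a closure-under-order-statistics statement --- and never a statement that genuinely exploits concavity of the revenue curve (regularity), since quasi-MHR distributions are in general irregular (\Cref{prop:hierarchy}). Once that check is complete the argument is a mechanical transcription; the favorable exponential tail of \Cref{prop:q-MHR equivalent definition} is exactly what makes the quasi-MHR case as tractable as the MHR case, in contrast to the quasi-regular case where only polynomial tails are available and the multi-buyer analysis of \cite{GHZ19} does not transfer as cleanly.
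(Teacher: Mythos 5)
Your proposal takes essentially the same approach as the paper: reuse the \cite{GHZ19} algorithm and re-audit its guarantee, replacing the MHR hypothesis with quasi-MHR substitutes, with the key inputs being the exponential tail bound in Condition~\ref{condition:quasi-MHR:cdf} of \Cref{prop:q-MHR equivalent definition} and the closure under order statistics in \Cref{thm:order:quasi-mhr}. The paper's proof sketch is more targeted, however: it observes that \cite{GHZ19}'s analysis invokes the MHR condition in \emph{only one place} --- their Lemma~17, an extreme-value/truncation lemma --- and then isolates the precise quasi-MHR surrogate as \Cref{lem:truncation:quasi-mhr} (truncation at $\threshold = 9\ln(2/\eps)\cdot\BOM(\priors)$ loses at most an $\eps$-fraction of revenue, proved using the quasi-MHR exponential tail of the first order statistic). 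Your anticipated audit of the discretization and concentration steps is therefore unnecessary --- those are distribution-free in \cite{GHZ19} --- which means the ``main obstacle'' you flag dissolves once one knows the MHR usage is confined to the truncation lemma; your description of that truncation step, and the tools it relies on, is accurate.
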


\Cref{thm:sample-complexity:EMA:q-MHR} is a \emph{sample-complexity-preserving} generalization of \cite{GHZ19} from MHR distributions to quasi-MHR distributions. 
% Moreover, this sample complexity bound of quasi-MHR is tight (up to logarithm factor), where the lower bound are achieved by MHR distributions \cite{GHZ19}.
Both the algorithm and the analysis follow \cite{GHZ19} with the observation that the following distributional property of MHR distributions used in the prior work also hold for quasi-MHR distributions. 

\begin{lemma}[Truncations of Quasi-MHR Distributions]
\label{lem:truncation:quasi-mhr}
% \begin{flushleft}
% Given any $\eps \in (0, 1)$ and $n \geq 1$ many independent (but possibly asymmetric) quasi-MHR distributions $\priors = \{\prior_{i}\}_{i \in [n]}$,
Given any $\eps \in (0, 1)$ and $n \geq 1$ many asymmetric quasi-MHR distributions $\priors = \{\prior_{i}\}_{i \in [n]}$, 
consider distributions $\Tilde{\priors} = \{\Tilde{\prior}_{i}\}_{i \in [n]}$ derived from truncating $\priors = \{\prior_{i}\}_{i \in [n]}$ to the bounded interval $[0, \threshold]$, where the truncation point $\threshold \triangleq 9\ln(2 / \eps) \cdot {\sf BOM}(\priors)$. The expected revenue of {\BayesianOptimalMechanism} under truncated distributions $\Tilde{\priors}$ is an $(1-\varepsilon)$-approximation to the expected revenue under original distributions $\priors$, i.e., $\BOM(\Tilde{\priors}) \geq (1 - \eps) \cdot \BOM(\priors)$.
% \end{flushleft}
\end{lemma}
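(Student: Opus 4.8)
The plan is to lower-bound $\BOM(\tilde\priors)$ by the revenue of the $\priors$-optimal mechanism run on truncated values, and then to control the revenue lost to truncation by an exponential tail bound for the first-order statistic $\prior_{(1:n)}$, which is itself quasi-MHR by \Cref{thm:order:quasi-mhr}. Concretely, couple $\tilde\vals\sim\tilde\priors$ with $\vals\sim\priors$ via $\tilde v_i=\min(v_i,\threshold)$. Since $M^\ast$, the optimal (Myerson) mechanism for $\priors$, is dominant-strategy incentive compatible and individually rational — properties of its allocation and payment rules that do not reference the underlying distribution — it remains a legitimate mechanism when the buyers' values lie in $[0,\threshold]$; hence $\BOM(\tilde\priors)\ge \expect[\tilde\vals\sim\tilde\priors]{\text{payment of }M^\ast(\tilde\vals)}=\expect[\vals\sim\priors]{\text{payment of }M^\ast(\min(v_1,\threshold),\dots,\min(v_n,\threshold))}$. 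On the event $\{v_{(1:n)}\le\threshold\}$ the capped and uncapped profiles coincide, so this payment equals $\text{payment of }M^\ast(\vals)$; on the complement it is at least $0$. Since a single-item mechanism charges the (unique) winner at most their value, hence at most $v_{(1:n)}$, we get
\[
    \BOM(\tilde\priors)\ \ge\ \expect[\vals\sim\priors]{\text{payment of }M^\ast(\vals)\cdot\indicator{v_{(1:n)}\le\threshold}}\ \ge\ \BOM(\priors)-\expect[\vals\sim\priors]{v_{(1:n)}\cdot\indicator{v_{(1:n)}>\threshold}}.
\]

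It remains to bound $\expect{v_{(1:n)}\cdot\indicator{v_{(1:n)}>\threshold}}$. By \Cref{thm:order:quasi-mhr}, $v_{(1:n)}\sim\prior_{(1:n)}$ is quasi-MHR; its monopoly revenue is $\rho\triangleq\max_{p}p\,(1-\cdf_{(1:n)}(p))=\BOUP(\priors)\le\BOM(\priors)$ by the single-item hierarchy. Posting price $e\rho$ to the highest value yields a uniform-pricing revenue $e\rho\,(1-\cdf_{(1:n)}(e\rho))\le\rho$, so $1-\cdf_{(1:n)}(e\rho)\le 1/e$. Invoking Condition~\ref{condition:quasi-MHR:cdf} of \Cref{prop:q-MHR equivalent definition} for $\prior_{(1:n)}$ with anchor value $v$ and inner point $e\rho$ gives $1-\cdf_{(1:n)}(v)\le\bigl(1-\cdf_{(1:n)}(e\rho)\bigr)^{v/(e\rho)}\le e^{-v/(e\rho)}$ for all $v\ge e\rho$. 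Because $\threshold=9\ln(2/\eps)\cdot\BOM(\priors)\ge e\,\BOM(\priors)\ge e\rho$, this tail bound holds on $[\threshold,\infty)$, whence
\[
    \expect{v_{(1:n)}\cdot\indicator{v_{(1:n)}>\threshold}}=\threshold\bigl(1-\cdf_{(1:n)}(\threshold)\bigr)+\int_{\threshold}^{\infty}\bigl(1-\cdf_{(1:n)}(v)\bigr)\,\d v\ \le\ \bigl(\threshold+e\rho\bigr)e^{-\threshold/(e\rho)}\ \le\ \bigl(\threshold+e\,\BOM(\priors)\bigr)e^{-\threshold/(e\,\BOM(\priors))}.
\]
Substituting $\threshold=9\ln(2/\eps)\cdot\BOM(\priors)$, the right-hand side equals $(9\ln(2/\eps)+e)(\eps/2)^{9/e}\cdot\BOM(\priors)$, and since the function $g(\eps)\triangleq(9\ln(2/\eps)+e)\,\eps^{9/e-1}$ is increasing on $(0,1]$ with $g(1)=9\ln 2+e<2^{9/e}$, this is at most $\eps\cdot\BOM(\priors)$ — this is exactly why the constant $9$ is chosen. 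Combining with the first display yields $\BOM(\tilde\priors)\ge(1-\eps)\BOM(\priors)$.

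The quasi-MHR hypothesis enters in only two places: (i) that the first-order statistic of asymmetric quasi-MHR distributions is again quasi-MHR (\Cref{thm:order:quasi-mhr}), which cleanly substitutes for the statement ``the maximum of MHR distributions is MHR'' that fails for asymmetric distributions, and (ii) the probability tail bound of Condition~\ref{condition:quasi-MHR:cdf}; these are precisely the quasi-MHR analogues of the MHR facts used in the proof of \cite{GHZ19}. I expect the main obstacle to be the incentive-compatibility bookkeeping in the first step — pinning down that ``$M^\ast$ evaluated on the capped profile'' is a valid IR/IC mechanism for $\tilde\priors$ and that its revenue is exactly the capped expectation written above — together with verifying the constant cleanly over the full range $\eps\in(0,1)$; the tail computation itself is routine once \Cref{thm:order:quasi-mhr} is in hand.
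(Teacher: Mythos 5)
Your proof is correct and follows essentially the same route as the paper's: both reduce the truncation loss to the tail expectation $\expect{v_{(1:n)}\cdot\indicator{v_{(1:n)}>\threshold}}$, and both control that tail by first establishing $1-\cdf_{(1:n)}(e\cdot\BOM(\priors))\le 1/e$ from monopoly revenue and then invoking \Cref{thm:order:quasi-mhr} together with Condition~\ref{condition:quasi-MHR:cdf} of \Cref{prop:q-MHR equivalent definition} to get the exponential tail $1-\cdf_{(1:n)}(v)\le e^{-v/(e\,\BOM(\priors))}$. Your version is slightly more explicit in two harmless places — it spells out the coupling $\tilde v_i=\min(v_i,\threshold)$ behind the inequality $\expect{\BOM(\vals)\cdot\indicator{\vals\in[0,\threshold]^n}}\le\BOM(\tilde\priors)$, and it anchors the tail bound at $e\rho$ with $\rho=\max_p p(1-\cdf_{(1:n)}(p))\le\BOM(\priors)$ rather than normalizing $\BOM(\priors)=1$ and anchoring at $e$, which amounts to the same estimate after using $\rho\le\BOM(\priors)$ — but the decomposition, the two key quasi-MHR ingredients, and the final arithmetic verifying that $(9\ln(2/\eps)+e)(\eps/2)^{9/e}\le\eps$ all match the paper's.
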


\begin{proof}[Proof of \Cref{lem:truncation:quasi-mhr}]
Without loss of generality, we normalize the distributions $\priors = \{\prior_{i}\}_{i \in [n]}$ such that $\BOM(\priors) = 1$. Note the expected revenue of {\BayesianOptimalMechanism} under distributions $\priors$ can be decomposed by 
\begin{align*}
    \BOM(\priors) = \expect[\vals\sim\priors]{\BOM(\vals)\cdot \indicator{\vals\in[0, \threshold]^n}}
    +
    \expect[\vals\sim\priors]{\BOM(\vals)\cdot \indicator{\vals\not\in[0, \threshold]^n}}
\end{align*}
Due to the construction of truncated distribution $\tilde{\priors}$, we have
\begin{align*}
    \expect[\vals\sim\priors]{\BOM(\vals)\cdot \indicator{\vals\in[0, \threshold]^n}}
    \leq 
    \BOM(\tilde\priors)
\end{align*}
Meanwhile, we have
\begin{align*}
    \expect[\vals\sim\priors]{\BOM(\vals)\cdot \indicator{\vals\not\in[0, \threshold]^n}}
    &{}\leq 
    \expect[\vals\sim\priors]{\val_{(1:n)}\cdot \indicator{\vals\not\in[0, \threshold]^n}}
    =
    \displaystyle\int_\threshold^\infty 
    \val\cdot\d\cdf_{(1:n)}(\val)
\end{align*}
We next upper bound $\cdf_{(1:n)}(\val)$ for all value $\val \geq \threshold$. Since the normalize that $\BOM(\priors) = 1$, the expected revenue of posting uniform price $\price = e$ is at most 1, i.e., $e \cdot (1-\cdf_{(1:n)}(e)) \leq \BOM(\priors) \leq 1$, and thus $\cdf_{(1:n)}(e) \geq 1 - \frac{1}{e}$. Since all buyers are quasi-regular, by \Cref{thm:order:quasi-mhr}, the first order statistic distribution $\prior_{(1:n)}$ is also quasi-regular. Invoking Condition~\ref{condition:quasi-MHR:cdf} in \Cref{prop:q-MHR equivalent definition} with anchor value $\val = e$, we have $\cdf_{(1:n)}(\val\primed) \geq 1 - e^{-{\val\primed}/{e}}$ for every value $\val\primed \geq \threshold = 9\ln(2/\varepsilon) > e$. Therefore,
\begin{align*}
    \expect[\vals\sim\priors]{\BOM(\vals)\cdot \indicator{\vals\not\in[0, \threshold]^n}} \leq 
    \displaystyle\int_\threshold^\infty 
    \val\cdot\d\cdf_{(1:n)}(\val)
    \leq 
    \displaystyle\int_\threshold^\infty 
    \val\cdot\d(1-e^{-\val/e})
    \leq \varepsilon
    = 
    \varepsilon \cdot 
    \expect[\vals\sim\priors]{\BOM(\vals)}
\end{align*}
where the last inequality holds by algebra, and the last equality holds since the normalization that $\expect[\vals\sim\priors]{\BOM(\vals)} = 1$.

Combining all results above, we obtain
\begin{align*}
    \BOM(\tilde\priors) \geq (1 - \varepsilon) \cdot \expect[\vals\sim\priors]{\BOM(\vals)}
\end{align*}
which finishes the lemma analysis as desired.
\end{proof}

Now we sketch the proof of \Cref{thm:sample-complexity:EMA:q-MHR}.
\begin{proof}[Proof (Sketched) of \Cref{thm:sample-complexity:EMA:q-MHR}]
The algorithm in \cite[Section~3.1]{GHZ19} works for general distributions, and 
different distributional assumptions are utilized only to establish different sample complexity bounds. Specifically, in \cite{GHZ19}'s proof of the bound $\Tilde{O}(n \eps^{-2})$ for MHR buyers, the MHR condition is only used in \cite[Lemma~17]{GHZ19} (which summarizes the extreme value theorems by \cite{CD15,DHP16,MR15}); that lemma considers certain truncations of MHR distributions and upper bounds the revenue loss in {\sf Bayesian Myerson Auction}.
\Cref{lem:truncation:quasi-mhr} generalizes it to quasi-MHR distributions and is a surrogate to obtain the same bound $\Tilde{O}(n \eps^{-2})$ for quasi-MHR buyers.
\end{proof}

\xhdr{Quasi-Regular Buyers.} As we mentioned in the previous part (for quasi-MHR buyers), the algorithm developed in \cite[Section~3.1]{GHZ19} works for general distributions. To obtain the sample complexity bound, the authors consider certain truncations of the original distributions and bound the revenue losses due to this truncation utilizing different distributional properties of MHR or regular distributions. For quasi-regular buyers, we conjecture that the revenue loss of truncating quasi-regular distributions has the same order as the loss of truncating regular distributions \cite[Lemma~6.6]{DHP16}. We formalize this conjecture in \Cref{lem:truncation:quasi-regular} below. We prove this conjecture for single-buyer instances (i.e., $n = 1$) and leave the proof for the multi-buyer instances as a future direction.

\begin{conjecture}[Truncations of Quasi-Regular Distributions]
\label{lem:truncation:quasi-regular}
% \begin{flushleft}
% Given any $\eps \in (0, 1)$ and $n \geq 1$ many independent (but possibly asymmetric) quasi-regular distributions $\priors = \{\prior_{i}\}_{i \in [n]}$,
Given any $\eps \in (0, 1)$ and $n \geq 1$ many asymmetric quasi-regular distributions $\priors = \{\prior_{i}\}_{i \in [n]}$, 
consider distributions $\Tilde{\priors} = \{\Tilde{\prior}_{i}\}_{i \in [n]}$ derived from truncating $\priors = \{\prior_{i}\}_{i \in [n]}$ to the bounded interval $[0, \threshold]$, where the truncation point $\threshold \triangleq (1 / \eps) \cdot \BOM(\priors)$. The expected revenue of {\BayesianOptimalMechanism} under truncated distributions $\Tilde{\priors}$ is an $(1-\varepsilon)$-approximation to the expected revenue under original distributions $\priors$, i.e., $\BOM(\Tilde{\priors}) \geq (1 - O(\eps)) \cdot \BOM(\priors)$.
% \end{flushleft}
\end{conjecture}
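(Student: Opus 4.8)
The plan is to establish the single-buyer case $n=1$ (the instance of the conjecture we actually prove) and then describe the route — and the obstruction — for general $n$.

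For $n=1$, normalize so that $\BOM(\prior)=1$; then $\threshold = 1/\eps$, and the monopoly reserve $\optreserve$ (the optimal posted price) satisfies $\optreserve\,(1-\cdf(\optreserve)) = 1$. I would split on whether $\optreserve\le\threshold$. If $\optreserve\le\threshold$, then the truncated distribution $\Tilde{\prior}$ still permits posting the price $\optreserve$, and since truncation only adds mass at $\threshold\ge\optreserve$ we get $\BOM(\Tilde{\prior})\ge\optreserve\,(1-\cdf(\optreserve))=1$, so truncation loses nothing. If instead $\optreserve>\threshold$, then $\optreserve>1$ so $\cdf(\optreserve)=1-1/\optreserve>0$, and I apply the probability-tail characterization of quasi-regularity, Condition~\ref{condition:quasi-regular:cdf} of \Cref{prop:q-regular equivalent definition}, with anchor value $\optreserve$ and $\val\primed=\threshold<\optreserve$:
\begin{align*}
1-\cdf(\threshold) \;\ge\; \frac{\frac{1-\cdf(\optreserve)}{\cdf(\optreserve)}\,\optreserve}{\threshold+\frac{1-\cdf(\optreserve)}{\cdf(\optreserve)}\,\optreserve} \;=\; \frac{1/\cdf(\optreserve)}{\threshold+1/\cdf(\optreserve)} ,
\end{align*}
where the equality uses $\optreserve\,(1-\cdf(\optreserve))=1$. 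Posting the price $\threshold$ in $\Tilde{\prior}$ then gives
\begin{align*}
\BOM(\Tilde{\prior}) \;\ge\; \threshold\,(1-\cdf(\threshold)) \;\ge\; \frac{\threshold}{\threshold\,\cdf(\optreserve)+1} \;=\; \frac{1}{\cdf(\optreserve)+\eps} \;\ge\; \frac{1}{1+\eps} \;\ge\; 1-\eps ,
\end{align*}
using $\threshold=1/\eps$ and $\cdf(\optreserve)\le 1$. In both cases $\BOM(\Tilde{\prior})\ge(1-\eps)\,\BOM(\prior)$; the only remaining work is the routine handling of degeneracies (a point mass exactly at $\optreserve$ or $\threshold$, or a supremum in the definition of $\optreserve$ that is only approached), which costs at most an extra $o(\eps)$ factor absorbed into $O(\eps)$.

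For $n\ge 2$ the natural first move is the decomposition from the proof of \Cref{lem:truncation:quasi-mhr}: $\BOM(\priors)\le\BOM(\Tilde{\priors})+\int_{\threshold}^{\infty}\val\cdot\d\cdf_{(1:n)}(\val)$. For quasi-MHR buyers the tail integral is controlled by the exponential-type tail; for quasi-regular buyers it can be \emph{infinite} (e.g.\ when $\cdf_{(1:n)}$ is equal-revenue-type), so this welfare bound is vacuous and one must argue via revenue rather than welfare, in the spirit of \cite[Lemma~6.6]{DHP16} for regular distributions. Since a single buyer's mechanism is feasible among all $n$, we have $\threshold=(1/\eps)\,\BOM(\priors)\ge(1/\eps)\,\BOM(\prior_i)$ for every $i$, so the plan would be to run the single-buyer argument ``per winning buyer'': condition on the identity of the winner in {\BayesianOptimalMechanism}, truncate only that buyer's contribution at $\threshold$, and bound the lost revenue using Condition~\ref{condition:quasi-regular:cdf} together with the closure of quasi-regularity under order statistics (\Cref{thm:order:quasi-regular}). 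The main obstacle — and the reason this remains a conjecture — is that the inscribed-triangle and probability-tail properties constrain the \emph{un-ironed} revenue curve, whereas multi-buyer Myerson is driven by the \emph{ironed} virtual values, and for genuinely irregular (quasi-regular) distributions ironing is nontrivial; transporting the revenue-based tail estimate of \cite{DHP16} across this gap is the missing ingredient.
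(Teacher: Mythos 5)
Your proof of the single-buyer case is correct and follows essentially the same route as the paper: normalize $\BOM(\prior)=1$, split on $\optreserve\lessgtr\threshold$, and in the hard case apply Condition~\ref{condition:quasi-regular:cdf} of \Cref{prop:q-regular equivalent definition} with anchor $\optreserve$ evaluated at $\val\primed=\threshold$; your arithmetic chain $\BOM(\tilde\prior)\ge\frac{1}{\cdf(\optreserve)+\eps}\ge\frac{1}{1+\eps}\ge 1-\eps$ is in fact marginally tighter than the paper's $1-\frac{\eps}{1-\eps}$, though both are $1-O(\eps)$. Your closing observation about why $n\ge 2$ is hard --- the welfare-based tail decomposition used for quasi-MHR fails because quasi-regular distributions can have infinite mean, and the \cite{DHP16}-style revenue argument would have to contend with ironing --- goes beyond what the paper says explicitly and correctly identifies the obstruction.
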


\begin{proof}[Proof of \Cref{lem:truncation:quasi-regular} for $n = 1$.]
Since we are considering the single-buyer case, we drop the buyer index $i$ for all notations in this argument. Without loss of generality, we normalize the distribution $\prior$ such that $\max_{\price} \price(1 - \cdf(\price)) = \BOM(\prior) = 1$. 

Let $\optreserve$ be the monopoly reserve of distribution $\prior$. If truncation point $\threshold \geq \optreserve$, $\BOM(\Tilde{\prior}) = \BOM(\prior)$ and thus the theorem statement holds. In the remainder of the analysis, we assume truncation point $\threshold < \optreserve$. Since $\BOM(\prior) = 1$, $\cdf(\optreserve) = 1 - 1/\optreserve$. Since distribution $\prior$ is quasi-regular, invoking Condition~\ref{condition:quasi-regular:cdf} in \Cref{prop:q-regular equivalent definition} with anchor value $\val = \optreserve$, we have $\cdf(\threshold) \leq \frac{\threshold}{\threshold + \optreserve/(\optreserve - 1)}$ since truncation point $\threshold< \optreserve$. Thus, the expected revenue under the truncated distribution can be lower bounded as 
\begin{align*}
    \BOM(\tilde\prior)
    \geq
    \threshold\left(1 - \tilde\cdf(\threshold)\right)
    =
    \threshold(1 - \cdf(\threshold))
    \geq \threshold \left(
    1 - \frac{\threshold}{\threshold + \frac{\optreserve}{\optreserve - 1}}
    \right)
    \geq 1 - \frac{\eps}{1 - \eps}
    =
    (1 - O(\eps))\cdot \BOM(\prior)
\end{align*}
where the last inequality holds by algebra and the construction that $\threshold = \frac{1}{\eps}$, and the last equality holds since the normalization that $\expect[\vals\sim\priors]{\BOM(\vals)} = 1$.
\end{proof}

The following sample complexity bound for quasi-regular buyers holds under \Cref{lem:truncation:quasi-regular}. 

\begin{theorem}[The Sample Complexity of {\EmpiricalMyersonAuction} for Quasi-Regular Buyers]
\label{thm:sample-complexity:EMA:q-regular}
\begin{flushleft}
Assuming \Cref{lem:truncation:quasi-regular} holds, given any $\eps \in (0, 1)$ and $n \geq 1$ many asymmetric quasi-regular buyers,
there exists an algorithm that accesses $m = O(n \eps^{-3} \log(n \eps^{-1}) \log(n \delta^{-1} \eps^{-1})) = \Tilde{O}(n \eps^{-3})$ samples and, with probability at least $(1 - \delta)$, returns an $(1 - \eps)$-approximate {\EmpiricalMyersonAuction}.
\end{flushleft}
\end{theorem}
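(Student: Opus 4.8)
The plan is to follow the same template as the proof of \Cref{thm:sample-complexity:EMA:q-MHR}, replacing the single distributional ingredient on which the regular-buyer analysis of \cite{GHZ19} relies. Recall that the learning algorithm of \cite[Section~3.1]{GHZ19} is stated for arbitrary (bounded-support) product distributions: it first truncates each $\prior_i$ to a bounded range and rescales, and then learns an empirical Myerson auction over the truncated instance via a concentration / pseudo-dimension argument that is entirely oblivious to the shape of the distributions. The distributional hypothesis (regularity, or MHR) is invoked \emph{only} to control the revenue lost in the truncation step, i.e.\ to certify that $\BOM(\Tilde{\priors}) \geq (1 - O(\eps)) \cdot \BOM(\priors)$ for a truncation point $\threshold$ that is only $\poly(\eps^{-1})$ times $\BOM(\priors)$.

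Concretely, for regular buyers \cite{GHZ19} use (the multi-buyer form of) \cite[Lemma~6.6]{DHP16}: truncating at $\threshold = \Theta(\eps^{-1}) \cdot \BOM(\priors)$ costs only an $O(\eps)$ fraction of the optimal revenue. First I would substitute this by \Cref{lem:truncation:quasi-regular}, which asserts exactly the same statement for quasi-regular buyers with the same order of truncation point $\threshold = \eps^{-1} \cdot \BOM(\priors)$. (For the single-buyer case this is already proven in the excerpt via Condition~\ref{condition:quasi-regular:cdf} of \Cref{prop:q-regular equivalent definition}, anchored at the monopoly reserve; the general $n$ case is the content of the conjecture.) Since the truncation point has the same asymptotic dependence on $\eps$ and $\BOM(\priors)$ as in the regular case, every downstream quantity in \cite{GHZ19} --- the rescaled support, the discretization granularity, the net size / pseudo-dimension, and hence the sample bound $\Tilde{O}(n \eps^{-3})$ --- is unchanged. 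Assembling the pieces, the \cite{GHZ19} algorithm with $m = O(n \eps^{-3} \log(n \eps^{-1}) \log(n \delta^{-1} \eps^{-1}))$ samples outputs, with probability at least $1 - \delta$, a mechanism whose revenue on $\priors$ is within $(1 - \eps)$ of $\BOM(\priors)$.

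The step I expect to require the most care is auditing the regular-buyer analysis of \cite{GHZ19} to confirm that regularity is genuinely used \emph{nowhere else} --- in particular that any ``tail bound of order statistics'' they invoke is either a consequence of boundedness after truncation or itself follows from \Cref{lem:truncation:quasi-regular} together with \Cref{thm:order:quasi-regular} (which guarantees that order statistics of quasi-regular distributions remain quasi-regular, so that probability-tail bounds of the form in Condition~\ref{condition:quasi-regular:cdf} transfer to $\prior_{(1:n)}$). Granting this audit and \Cref{lem:truncation:quasi-regular}, the remainder is a verbatim instantiation of \cite{GHZ19}, which is precisely why the theorem is stated conditionally and the proof is only sketched.
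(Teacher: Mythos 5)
Your proposal matches the paper's own (sketched) proof: both identify that the GHZ19 algorithm is distribution-agnostic, that regularity enters only through the truncation-loss bound (\cite[Lemma~16]{GHZ19}, which invokes \cite[Lemma~6.6]{DHP16}), and that substituting \Cref{lem:truncation:quasi-regular} with the same $\Theta(\eps^{-1})$ truncation point leaves the downstream $\Tilde{O}(n\eps^{-3})$ bound intact. Your added remark about auditing GHZ19 and the role of \Cref{thm:order:quasi-regular} is reasonable due-diligence but does not change the argument.
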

\begin{proof}[Proof (Sketched) of \Cref{thm:sample-complexity:EMA:q-regular}]
The algorithm in \cite[Section~3.1]{GHZ19} works for general distributions, and 
different distributional assumptions are utilized only to establish different sample complexity bounds. Specifically, in \cite{GHZ19}'s proof of the bound $\Tilde{O}(n \eps^{-3})$ for regular buyers, the regularity condition is only used in \cite[Lemma~16]{GHZ19} (which directly invokes \cite[Lemma~6.6]{DHP16}); that lemma considers certain truncations of regular distributions and upper bounds the revenue loss in {\sf Bayesian Myerson Auction}.
\Cref{lem:truncation:quasi-regular} generalizes it to quasi-regular distributions and is a surrogate to obtain the same bound $\Tilde{O}(n \eps^{-3})$ for quasi-regular buyers.
\end{proof}

\begin{comment}

In more details, in their proof of the bound $\Tilde{O}(n \eps^{-3})$ for regular buyers, the regularity condition is only used in \cite[Lemma~16]{GHZ19} (which just invokes \cite[Lemma~6.6]{DHP16}); that lemma considers certain truncations of regular distributions and upper bounds the revenue loss in {\sf Bayesian Myerson Auction}.
Soon after, we will generalize it to the following \Cref{lem:truncation:quasi-regular}.
Hence, using our \Cref{lem:truncation:quasi-regular} as a surrogate gives the same bound $\Tilde{O}(n \eps^{-3})$ for quasi-regular buyers.

Likewise, in their proof of the bound $\Tilde{O}(n \eps^{-2})$ for MHR buyers, the MHR condition is only used in \cite[Lemma~17]{GHZ19} (which summarizes the extreme value theorems by \cite{CD15,DHP16,MR15}); that lemma considers certain truncations of MHR distributions and upper bounds the revenue loss in {\sf Bayesian Myerson Auction}.
Soon after, we can generalize it to the following \Cref{lem:truncation:quasi-mhr}.
Thus, using our \Cref{lem:truncation:quasi-mhr} as a surrogate gives the same bound $\Tilde{O}(n \eps^{-2})$ for quasi-regular buyers.

This finishes the proof of \Cref{thm:sample-complexity:EMA}.

In the rest of this subsection, we accomplish the proofs of \Cref{lem:truncation:quasi-regular,lem:truncation:quasi-mhr}, one by one.

\begin{proof}[Proof of \Cref{lem:truncation:quasi-regular}]
\yj{I will write this proof.}

This finishes the proof of \Cref{lem:truncation:quasi-regular}.
\end{proof}

\end{comment}

\section{Conclusions and Future Directions}
\label{sec:conclusion}

In this work, we introduce the families of quasi-regular distributions $\quasiregulardistspace$, and quasi-MHR distributions $\quasimhrdistspace$, as generalizations of the classic regular $\regulardistspace$ and MHR distributions $\mhrdistspace$, respectively. Many structural properties of the original distribution families $\regulardistspace$, $\mhrdistspace$, such as probability tail bounds, extend naturally to these new families $\quasiregulardistspace$, $\quasimhrdistspace$. Additionally, we demonstrate that while the original families are closed under the operation of the ``i.i.d.\ order statistics'' operation, the new families maintain closure under the ``order statistic'' operation.

Leveraging these structural properties, we extend various important and seminal results in single-parameter revenue maximization---revenue approximation by Bayesian simple mechanisms, by prior-independent mechanisms, by a single sample, and the sample-complexity of revenue maximization---to the new distribution families. Many of these results are approximation-preserving generalizations, requiring technically challenging analysis, and some also advance the state-of-the-art even for the original distribution family $\regulardistspace$.

There are many interesting directions for future research. First, an important next step would be to rigorously prove all conjectures posed in this work, particularly those that specify the tight bounds for various approximations. See \Cref{conj:BOM_BOUP,conj:BOM_BOUR,lem:truncation:quasi-regular} and \Cref{tab:intro:simple-mechanism,tab:intro:duplicate,tab:intro:single-sample,tab:intro:sample-complexity} for details. Second, it would also be worthwhile to investigate whether other important revenue approximation results in the literature can be extended to these new distribution families. For instance, it would be valuable to extend the 2-approximation results of {\nDuplicateVCGAuction} and {\BayesianMonopolyReserves} from regular buyers, as established in \cite[Theorems~3.7 and 4.4]{HR09}, to quasi-regular buyers within matroid environments, and to adapt the randomization techniques developed in \cite{ABB22} to improve the revenue approximation by a single sample for quasi-regular buyers. Third, leveraging the structural results on order statistics developed in this work, it may be possible to achieve improved (or even tight) competition complexity bounds \cite{EFFTW17a} for multi-item auctions with unit-demand buyers. Finally, it would be worthwhile to explore whether these new distribution families exhibit additional desirable economic and mathematical properties. For example, are the new distribution families close under the ``sum'' or ``i.i.d.\ sum" operations?

\section*{Acknowledgements}
We are grateful to Zhiyi Huang for invaluable discussions, Hu Fu, Benjamin Golub, Jason Hartline, Pinyan Lu, Rad Niazadeh, and Tim Roughgarden for helpful comments on a preliminary version of this paper, and anonymous reviewers for their dedication to reading this paper.

\begin{flushleft}
\bibliographystyle{alphaurl}
\bibliography{main}
\end{flushleft}

\appendix

\newcommand{\cost}{c}

\section{Cost Prophet Inequalities}
\label{sec:prophet}

In addition to the Bayesian mechanism design problem, distributional assumptions such as regularity or the MHR condition are also crucial in other Bayesian models. In this section, we revisit the \emph{cost prophet inequality problem} \cite{LM24} and demonstrate how an approximation-preserving extension can be made from MHR distributions to quasi-MHR distributions.

\xhdr{Model.} In the stationary cost prophet inequality problem, there are $n$ rounds. In each round $i \in [n]$, a random cost $\cost_i$ is drawn i.i.d.\ from distribution $\prior$. Upon observing the realized cost $\cost_i$, the decision maker must immediately and irrevocably decide whether to accept this cost and stop, or skip it and move to the next round. If it is the final round $n$, the decision maker must accept the realized cost $\cost_n$. The objective is to design an online algorithm that minimizes the competitive ratio, which is defined as the ratio between the expected cost accepted by the online algorithm and the expected minimum cost (i.e., the prophet’s expected cost).

\xhdr{Results and Analysis.}
Before presenting the result, we first revisit the meta-theorem developed in \cite{LM24}.\footnote{The work \cite{LM24} claimed their Theorem~3.3 for a large enough $n \in \naturals$. However, after checking their proofs step by step, we have confirmed that their Theorem~3.3 actually holds for every $n \in \naturals$.}

\begin{theorem}[{\cite[Theorem~3.3]{LM24}}]
\label{thm:prophet:meta-theorem}
%     In the stationary prophet inequality problem, for any entire\footnote{See \cite[Definitions 2.3 and 2.4]{LM24} for the formal definition of the Puiseux series expansion and the entire property. We omit their formal definitions as we use this meta-theorem in a black-box fashion.} distribution $\prior$ with cumulative hazard rate function $\cumhazardrate$, which has a Puiseux series expansion $\cumhazardrate(\cost) = \sum_{\ell\in\naturals}a_\ell\cdot \cost^{d_{\ell}}$,
%     for large enough $n\in\naturals$,
%     there exists a $\lambda(d_1)$-competitive online algorithm, where
%     \begin{align*}
%         \lambda(d_1) \triangleq \frac{(1 + 1 / d_1)^{1 / d_1}}{\int_0^{\infty} y^{1/d_1}\cdot e^{-y}\cdot \d y}
%     \end{align*}
% \end{theorem}
In the stationary prophet inequality problem, for any entire\footnote{See \cite[Definitions 2.3 and 2.4]{LM24} for the formal definition of the Puiseux series expansion and the entire property. We omit their formal definitions here as we apply this meta-theorem in a black-box fashion.} distribution $\prior$ with cumulative hazard rate function $\cumhazardrate$, which has a Puiseux series expansion $\cumhazardrate(\cost) = \sum_{\ell\in\naturals}a_\ell \cdot \cost^{d_{\ell}}$,\ignore{ for large enough $n \in \naturals$,} there exists a $\lambda(d_1)$-competitive online algorithm, where 
    \begin{align*}
        \lambda(d_1) \triangleq \frac{(1 + 1 / d_1)^{1 / d_1}}{\int_0^{\infty} y^{1/d_1}\cdot e^{-y}\cdot \d y}
    \end{align*}
\end{theorem}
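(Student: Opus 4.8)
The plan is to exhibit an explicit, horizon-dependent threshold policy whose only tuning parameter is read off from the leading Puiseux exponent $d_1$, and then to show by a single change of variables that its competitive ratio equals $\lambda(d_1)$. Throughout, write $H$ for the cumulative hazard rate function of $\prior$ (the $\cumhazardrate$ of the statement), so that $1 - \cdf(\cost) = e^{-H(\cost)}$, and by the Puiseux hypothesis $H(\cost) = a_1 \cost^{d_1}(1 + o(1))$ as $\cost \to 0^+$, with $a_1 > 0$ and $0 < d_1 < d_2 < \cdots$. For an $n$-round instance, let $\mathrm{ALG}_n$ be the policy that, upon reaching round $i \le n - 1$ with $r \triangleq n - i + 1$ rounds left, accepts the observed cost $\cost_i$ iff $\cost_i \le \tau_r$, where $H(\tau_r) \triangleq (1 + 1/d_1)/r$, and accepts $\cost_n$ unconditionally at the last round. (Real-analyticity of $H$ makes it strictly increasing and invertible on the relevant range, so $\tau_r$ is well defined.) The constant $1 + 1/d_1$ will reappear at the end as the minimizer of a one-variable optimization.

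First I would pin down the benchmark. Since the costs are i.i.d.,
\[
    \expect{\min_{i \in [n]} \cost_i} \;=\; \int_0^\infty \bigl(1 - \cdf(t)\bigr)^n \, \d t \;=\; \int_0^\infty e^{- n H(t)} \, \d t .
\]
Substituting $t = (y / (n a_1))^{1/d_1}$ rewrites this as $(n a_1)^{-1/d_1}\!\int_0^\infty e^{-n H((y/(na_1))^{1/d_1})} \tfrac{1}{d_1} y^{1/d_1 - 1}\,\d y$, and here $n H((y/(na_1))^{1/d_1}) = y(1 + o(1))$ because each subleading Puiseux term picks up a factor $n^{1 - d_j/d_1} \to 0$, while the portion of the integral away from $y = 0$ is exponentially small in $n$ (this is where the entireness / a finite-mean-type growth hypothesis on $H$ enters). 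Hence $\expect{\min_{i\in[n]}\cost_i} = (n a_1)^{-1/d_1}\Gamma(1 + 1/d_1)(1 + o_n(1))$, whose $\Gamma$-factor equals $\int_0^\infty y^{1/d_1} e^{-y}\,\d y$, the denominator of $\lambda(d_1)$.

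Next I would analyze $\mathrm{ALG}_n$. The probability of reaching a round with $r$ rounds remaining is $\prod_{r' = r+1}^{n} e^{-H(\tau_{r'})} = e^{-(1 + 1/d_1)\sum_{r' = r+1}^{n} 1/r'} = (r / n)^{1 + 1/d_1}(1 + o(1))$; conditioned on reaching it, the policy accepts with probability $1 - e^{-H(\tau_r)} = H(\tau_r)(1 + o(1))$; and the one genuinely local estimate — a second-order expansion of $\int_0^{\tau} e^{-H(t)}\,\d t$ for small $\tau$ — gives $\expect{\cost \given \cost \le \tau} = \tfrac{d_1}{d_1 + 1}\tau(1 + o(1))$, with $\tau_r = \bigl((1 + 1/d_1)/(a_1 r)\bigr)^{1/d_1}(1 + o(1))$. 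Multiplying these three factors, summing over $r = 1, \dots, n$, and reading the result as a Riemann sum in $x = r/n$ gives
\[
    \mathrm{ALG}_n \;=\; \Bigl(1 + \tfrac1{d_1}\Bigr)\cdot\frac{d_1}{d_1 + 1}\cdot\Bigl(\tfrac{1 + 1/d_1}{a_1 n}\Bigr)^{1/d_1}\!\!\int_0^1 x^{(1 + 1/d_1) - 1 - 1/d_1}\,\d x\cdot\bigl(1 + o_n(1)\bigr) \;=\; \bigl(1 + \tfrac1{d_1}\bigr)^{1/d_1}(n a_1)^{-1/d_1}\bigl(1 + o_n(1)\bigr),
\]
where the power of $x$ is $0$ (so the integral is $1$) exactly because the threshold constant was taken to be $1 + 1/d_1$, and where rounds with small $r$ (hence $\tau_r$ of constant order) contribute $o\bigl((n a_1)^{-1/d_1}\bigr)$ because $(r/n)^{1 + 1/d_1}$ is negligible there. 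Dividing, the competitive ratio of $\mathrm{ALG}_n$ tends to $(1 + 1/d_1)^{1/d_1}/\Gamma(1 + 1/d_1) = \lambda(d_1)$; running the same computation with an arbitrary constant $\beta > 1/d_1$ in place of $1 + 1/d_1$ gives ratio $\tfrac{d_1}{d_1+1}\beta^{1 + 1/d_1}/\bigl((\beta - 1/d_1)\Gamma(1 + 1/d_1)\bigr)$, and minimizing $\beta^{1 + 1/d_1}/(\beta - 1/d_1)$ over $\beta > 1/d_1$ shows $\beta = 1 + 1/d_1$ is optimal, so no tuning within this family does better.

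The main obstacle is to upgrade the two $o_n(1)$'s above into estimates uniform in $n$, so that the bound holds for \emph{every} $n \in \naturals$ as the footnote asserts. This requires: (i) bounding two-sidedly the subleading Puiseux contribution $a_2 t^{d_2} + \cdots$ on a shrinking interval around $0$ and the tail $\int_\delta^\infty e^{-n H(t)}\,\d t$ — precisely the role of the entire / Puiseux structure, which supplies convergent and explicitly controllable series; (ii) comparing the Riemann sum for $\mathrm{ALG}_n$ with its integral via explicit error terms and splitting off the rounds with $r \le n^{\theta}$ for a suitable $\theta < 1$, which one handles with the crude bounds $\expect{\cost \given \cost \le \tau_r} \le \expect{\cost} < \infty$ and $\sum_{r \le n^{\theta}} (r/n)^{1 + 1/d_1}/r = o(n^{-1/d_1})$. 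None of these steps is conceptually deep, but keeping every error term simultaneously under control across all rounds and all $n$ is the delicate part; an appealing shortcut, if it can be made to work, is to show that the competitive ratio of $\mathrm{ALG}_n$ is monotone in $n$ and pass to the limit.
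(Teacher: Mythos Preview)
This statement is not proved in the paper at all: it is \cite[Theorem~3.3]{LM24}, cited verbatim and used as a black box to derive \Cref{thm:prophet:quasi-mhr}. The footnote even says explicitly that the authors ``apply this meta-theorem in a black-box fashion,'' and the only additional remark they make is that after re-reading \cite{LM24}'s proof they confirmed the result holds for every $n$, not just large $n$. So there is no ``paper's own proof'' to compare your proposal against.

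That said, your sketch is a plausible reconstruction of what the \cite{LM24} argument presumably looks like: a threshold policy with $H(\tau_r)$ proportional to $1/r$, asymptotic identification of both $\expect{\min_i \cost_i}$ and the algorithm's cost via the leading Puiseux term, and an optimization over the threshold constant. The weak spot you yourself flag --- making the $o_n(1)$ terms uniform in $n$ --- is exactly the point the paper's footnote alludes to when it says \cite{LM24} only claimed the result for large $n$ but that it in fact holds for all $n$; your proposal does not resolve this, and neither does the present paper (it defers to \cite{LM24}'s proof). If you were asked to prove this theorem here, the honest answer is that the paper does not, and you should simply cite \cite{LM24}.
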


The main result of this section is as follows.
\begin{theorem}
\label{thm:prophet:quasi-mhr}
    In the stationary prophet inequality problem, for quasi-MHR and entire distribution~$\prior$,\ignore{ for large enough $n \in \naturals$,} there exists a 2-competitive online algorithm.
\end{theorem}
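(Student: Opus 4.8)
The plan is to invoke the meta-theorem (\Cref{thm:prophet:meta-theorem}) in a black-box fashion, so the only real work is to show that every quasi-MHR (and entire) distribution $\prior$ has a leading Puiseux exponent $d_1 \geq 1$ in the expansion of its cumulative hazard rate function $\cumhazardrate(\cost) = \sum_{\ell \in \naturals} a_\ell \cdot \cost^{d_\ell}$, with $0 < d_1 < d_2 < \cdots$. Granting this, monotonicity of $\lambda$ on $(0,\infty)$ together with $\lambda(1) = \frac{(1+1)^{1}}{\int_0^\infty y \cdot e^{-y}\,\d y} = \frac{2}{1} = 2$ yields a $\lambda(d_1) \leq 2$-competitive algorithm, which is exactly the claim. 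So the whole theorem reduces to the single structural fact: \emph{quasi-MHR $\Rightarrow d_1 \geq 1$.}

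First I would translate the quasi-MHR condition into a statement about $\cumhazardrate$. By \Cref{prop:q-MHR equivalent definition}, Condition~\ref{condition:quasi-MHR:cumhazardrate}, quasi-MHR is equivalent to $\cehazardrate(\cost) = \cumhazardrate(\cost)/\cost$ being weakly increasing in $\cost$ on $\supp(\prior)$; equivalently, for all $0 \le \cost' \le \cost$, $\cumhazardrate(\cost') \le \frac{\cost'}{\cost}\cumhazardrate(\cost)$. In particular, near the left endpoint of the support (which for an entire distribution with a Puiseux expansion we may take to be $0$), this forces $\cumhazardrate(\cost)/\cost$ to have a finite limit as $\cost \to 0^+$ — indeed a nondecreasing function bounded below by $0$ must converge — i.e. $\cumhazardrate(\cost) = O(\cost)$ near $0$. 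Plugging the Puiseux expansion $\cumhazardrate(\cost) = a_1 \cost^{d_1}(1 + o(1))$ with $a_1 > 0$, the bound $\cumhazardrate(\cost) = O(\cost)$ as $\cost \to 0^+$ immediately gives $d_1 \geq 1$. (If $d_1 < 1$ then $\cumhazardrate(\cost)/\cost \to \infty$, contradicting monotonicity/finiteness.) This is the heart of the argument and I expect it to be short once the definitions are lined up.

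Second, I would handle the two mild technical points needed to make the black-box application legitimate. (i) One must confirm the support normalization: if $\supp(\prior) = [a,\infty)$ with $a > 0$ there is a probability mass at $a$, but the entire/Puiseux hypothesis of \Cref{thm:prophet:meta-theorem} already presumes a distribution whose $\cumhazardrate$ admits such an expansion (so $a = 0$ and $\cumhazardrate(0) = 0$); I would simply note that we apply the meta-theorem exactly to the distributions it is stated for, and the quasi-MHR hypothesis is an additional constraint on those. (ii) One must check $\lambda$ is (weakly) decreasing on $(0,\infty)$ so that $d_1 \geq 1$ gives $\lambda(d_1) \leq \lambda(1) = 2$. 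Writing $\lambda(d) = \frac{(1 + 1/d)^{1/d}}{\Gamma(1 + 1/d)}$, this is a one-variable calculus fact about the Gamma function that I would verify (or cite from \cite{LM24}, since their MHR result is the $d_1 = 1$ case and they implicitly use the same monotonicity); I will not grind through it here.

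The main obstacle — to the extent there is one — is purely expository: making sure the passage from "$\cehazardrate$ weakly increasing" to "$d_1 \geq 1$" is airtight given that the Puiseux expansion is only asymptotic near $0$ and the quasi-MHR inequality is global. The clean way is to use only the local consequence $\limsup_{\cost \to 0^+} \cumhazardrate(\cost)/\cost < \infty$ (which follows because a nondecreasing nonnegative function on $(0,\cost_0]$ is bounded on any such interval) and match it against the leading term $a_1 \cost^{d_1}$, $a_1 > 0$. Everything else is a direct invocation of \Cref{thm:prophet:meta-theorem} and the monotonicity of $\lambda$.
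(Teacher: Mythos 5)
Your proof is correct and arrives at the same key conclusion ($d_1 \geq 1$, hence $\lambda(d_1) \leq \lambda(1) = 2$) as the paper, but by a genuinely different route to that conclusion. The paper starts from the quasi-MHR condition in its derivative form $(\cumhazardrate(\cost)/\cost)' \geq 0$, differentiates the Puiseux series term-by-term to get $\sum_\ell a_\ell (d_\ell - 1)\cost^{d_\ell - 2} \geq 0$, and then argues by dominance of the leading term near $0$ that $d_1 > 1$ when $d_1 \neq 1$. You instead work at zeroth order: quasi-MHR says $\cumhazardrate(\cost)/\cost$ is weakly increasing, hence (being nonnegative) bounded as $\cost \to 0^+$, and matching this against the leading asymptotics $\cumhazardrate(\cost)/\cost \sim a_1 \cost^{d_1 - 1}$ with $a_1 > 0$ forces $d_1 \geq 1$. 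Your version is arguably cleaner: it sidesteps the need to justify term-by-term differentiation of the Puiseux expansion and the first-term-dominates-the-tail argument, replacing both with a single monotone-bounded-limit observation, and it avoids the paper's slightly awkward case split on $d_1 = 1$ vs.\ $d_1 \neq 1$. The paper's version has the minor virtue of staying closer to the literal "$(\cumhazardrate/\cost)' \geq 0$" phrasing of the definition. You correctly flag the two outstanding technicalities (support normalization, monotonicity of $\lambda$) that both proofs inherit from the black-box use of the meta-theorem; the paper handles the latter implicitly, exactly as you anticipated.
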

\cite{LM24} proves the special case of MHR and entire distributions using \Cref{thm:prophet:meta-theorem}. They also provide an asymptotically matching lower bound by considering exponential distributions, which are MHR and therefore quasi-MHR.

\begin{proof}[Proof of \Cref{thm:prophet:quasi-mhr}]
    Since distribution $\prior$ is quasi-MHR, for every cost $\cost\in[0, \infty)$
    \begin{align*}
        \left(\frac{\cumhazardrate(\cost)}{\cost}\right)'
        \geq 0
    \end{align*}
    Replacing the cumulative hazard rate function $\cumhazardrate$ with its Puiseux series expansion, we obtain
    \begin{align*}
        \left(\frac{\sum\nolimits_{\ell\in\naturals}a_\ell \cdot \cost^{d_{\ell}}}{\cost}\right)'
        =
        \sum\limits_{\ell\in\naturals}
        a_\ell\cdot (d_{\ell} - 1)\cdot\cost ^{d_{\ell}-2}
        \geq 0
    \end{align*}
    By \cite[Observation 2.3]{LM24}, since the distribution $\prior$ is entire, we have that $a_1 > 0$ and $d_\ell$'s are all positive and increasing in $\ell$. If $d_1 = 1$, note that $\lambda(1) = 2$, and thus invoking \Cref{thm:prophet:meta-theorem} directly proves the theorem statement. Therefore, it remains to analyze the case where $d_1 \neq 1$. In this case, to ensure that $\sum_{\ell \in \naturals} a_\ell \cdot (d_\ell - 1) \cdot \cost^{d_\ell - 2} \geq 0$ for sufficiently small cost $\cost > 0$ such that the first term, $a_1(d_1 - 1) \cdot \cost^{d_1 - 2}$, dominates the sum $\sum_{\ell \geq 2} a_\ell \cdot (d_\ell - 1) \cdot \cost^{d_\ell - 2}$, we conclude that $d_1 > 1$, since $a_1 > 0$, as argued earlier. Thus, the competitive ratio is $\lambda(d_1) \leq \lambda(1) = 2$, which completes the analysis as desired.
\end{proof}

\end{document}